\newtheorem{theorem}{Theorem}[section]
\newtheorem{observation}[theorem]{Observation}
\newtheorem{claim}[theorem]{Claim}
\newtheorem{lemma}[theorem]{Lemma}
\newtheorem{corollary}[theorem]{Corollary}
\newtheorem{problem}{Problem}
\theoremstyle{definition}
\newtheorem{definition}[theorem]{Definition}
\crefname{theorem}{Theorem}{Theorems}
\crefname{observation}{Observation}{Observations}
\crefname{claim}{Claim}{Claims}
\crefname{condition}{Condition}{Conditions}
\crefname{example}{Example}{Examples}
\crefname{fact}{Fact}{Facts}
\crefname{lemma}{Lemma}{Lemmas}
\crefname{corollary}{Corollary}{Corollaries}
\crefname{definition}{Definition}{Definitions}
\crefname{remark}{Remark}{Remarks}
\crefname{proposition}{Proposition}{Propositions}
\crefname{property}{Property}{Properties}
\crefname{problem}{Problem}{Problems}
\crefname{section}{Section}{Sections}
\crefname{algocf}{alg.}{algs.}
\title{Faster Fixed Parameter Tractable Algorithms for Counting Markov Equivalence Classes with Special Skeletons}
\author{}
 \author{Vidya Sagar Sharma\thanks{Tata Institute of Fundamental Research,
     Mumbai. Email: \texttt{vidyasagartifr@gmail.com}.} 
}
\date{}
\newcommand{\cp}[0]{chordless path}
\newcommand{\cps}[0]{chordless paths}
\newcommand{\cc}[0]{chordless cycle}
\newcommand{\ucc}[0]{undirected connected component}
\newcommand{\uccs}[0]{undirected connected components}
\newcommand{\skel}[1]{\textup{skeleton}({#1})}
\newcommand{\setofpartialMECs}[1]{\text{PMEC}(#1)}
\newcommand{\setofMECs}[1]{\text{MEC}(#1)}
\newcommand{\tfp}[0]{\text{triangle-free path}}
\newcommand{\tfps}[0]{\text{triangle-free paths}}
\newcounter{casenum}
\newcommand*{\defeq}{\mathrel{\rlap{\raisebox{0.3ex}{$\m@th\cdot$}}\raisebox{-0.3ex}{$\m@th\cdot$}}=}
\newcommand*{\eqdef}{=
  \mathrel{\rlap{\raisebox{0.3ex}{$\m@th\cdot$}}\raisebox{-0.3ex}{$\m@th\cdot$}}}
\renewcommand{\emptyset}[0]{\varnothing}
\begin{document}

\maketitle
\begin{abstract}
  The structure of Markov equivalence classes (MECs) of causal DAGs has been studied extensively. A natural question in this regard is to algorithmically find the number of MECs with a given skeleton. Until recently, the known results for this problem were in the setting of very special graphs (such as paths, cycles and star graphs). More recently, a fixed-parameter tractable (FPT) algorithm was given for this problem which, given an input graph $G$, counts the number of MECs with the skeleton $G$ in $O(n(2^{O(d^4k^4)} + n^2))$ time, where $n$, $d$, and $k$, respectively, are the numbers of nodes, the treewidth, and the degree of $G$.

We give a faster FPT algorithm that solves the problem in $O(n(2^{O(d^2k^2)} + n^2))$ time when the input graph is chordal. Additionally, we show that the runtime can be further improved to polynomial time when the input graph $G$ is a tree.
\end{abstract} 
\section{Introduction}
 A graphical model is a graphical way to represent the conditional dependence between the random variables. Both the directed and the undirected versions of the graphical models are well-studied in the literature. The directed graphical models are of two kinds: cyclic and acyclic. We study the directed acyclic graphical model, also known as the Bayesian network. 

 In the Bayesian network, a directed acyclic graph (DAG) with vertex set $V$ represents a probability distribution on the set of random variables $V$ if for each node $w$ in the DAG, the probability distribution satisfies the conditional independence relation that $w$ is independent of its non-descendants given its parent. A DAG entails a conditional independence relation if all the probability distribution that the DAG represents satisfies the conditional independence relation. There can be more than one DAG that entails the same set of conditional independence relations. Such DAGs are said to be Markov equivalent. \cite{verma1990equivalence} show the graphical resemblance between two Markov equivalent DAGs. They show that two DAGs are Markov equivalent if, and only if, both have the same skeleton (underlying undirected graph), and both have the same set of v-structures (induced subgraphs of the form $a\rightarrow b \leftarrow c$). Markov equivalence classes (MECs) are used to partition the DAGs. Two Markov equivalent DAGs belong to the same MEC. In other words, an MEC is a maximal size set of Markov equivalent DAGs.

 An MEC entails the same set of conditional independence relations between the random variables that are entailed by the DAGs it contains. An MEC is graphically represented by the union of DAGs it contains. We treat an MEC and its graphical representation as the same. \cite{andersson1997characterization} give necessary and sufficient conditions for a graph to be an MEC (\Cref{thm:nes-and-suf-cond-for-chordal-graph-to-be-an-MEC}).

\cite{meek1995causal} gives a method to construct an MEC given the set of conditional independence relations entailed by the MEC. He shows that two random variables $A$ and $B$ are not adjacent in the MEC if, and only if, the MEC entails a conditional independence relation of the form $A \perp B \mid S$ for some $S \subseteq V \setminus{\{A, B \}}$, where $V$ is the set of nodes of the MEC (\cite[p.~3]{meek1995causal}).
Suppose $M_1$ and $M_2$ are two MECs with the same skeleton and entail the set of conditional independence relations, respectively, $\mathcal{M}_1$ and $\mathcal{M}_2$, on the set of random variables $V$.
\cite{meek1995causal}'s result shows that for two random variables $A$ and $B$, $\mathcal{M}_1$ contains $A \perp B \mid S_1$ for some $S_1 \subseteq V\setminus{\{A, B\}}$ if, and only if, $\mathcal{M}_2$ contains $A \perp B \mid S_2$ for some $S_2 \subseteq V\setminus{\{A, B\}}$ ($S_1$ and $S_2$ may not be the same). 

This shows that the two MECs with the same skeleton have not only a graphical relation but also a statistical one. This connection between MECs with the same skeleton gives us the motivation to solve the problem of counting MECs having the same skeleton.  In particular, a natural question to ask is: given an undirected graph $G$, how many MECs have the same skeleton as $G$?

\paragraph{Related Work} Counting MECs is a well-studied problem. \cite{perlman2001graphical} written a computer program that for input $n$, outputs the number of MECs with $n$ nodes. \cite{gillispie2002size} written a computer program that for input $n$, enumerates the MECs. They also studied the distribution of the size of MECs. They observed that the ratio of DAGs to the number of MECs asymptotically converges to 3.7. \cite{steinsky2003enumeration} gives a recursive formula that counts the number of MECs of size 1. \cite{gillispie2006formulas} come up with a recursive formula that counts MECs of any size. \cite{schmid2022number} show that as $n$ tends to infinity the ratio of the number of DAGs with $n$ nodes and the number of MECs with $n$ nodes approaches a positive constant.

The above results were on the classes of MECs with a given number of nodes. We now focus on the results of the classes of MECs with a fixed skeleton. \cite{verma1990equivalence}'s result shows that each MEC has a unique skeleton and a set of v-structures. \cite{radhakrishnan2016counting} work on counting MECs with the same skeleton. They partitioned the set of MECs based on the number of v-structures in them and constructed a generating function for counting MECs. They show experimentally that even for the graphs with the same number of nodes, the generating function varies. Their further work on the problem of counting MECs with the same skeleton explored generating functions for specific graph structures such as path graphs, cycle graphs, star graphs, and bi-star graphs (\cite{radhakrishnan2018counting}). Their result also provides a tight lower and upper bound for the number of MECs with a tree skeleton. Recently, \cite{sharma2023fixedparameter} give a fixed parameter tractable algorithm that for an input graph $G$, counts the number of MECs with skeleton $G$ in time $O(n(2^{O(k^4d^4)}+n^2))$, where $n$, $k$ and $d$ are, respectively, the number nodes of $G$, the treewidth of $G$, and the degree of $G$.

\paragraph{Our Contribution and Organization of the Paper}
We provide a faster-fixed parameter tractable algorithm (\Cref{alg:counting-MEC-chordal}) that for a chordal graph $G$, counts MECs with skeleton $G$ in time $O(n(2^{O(k^2d^2)}+n^2))$, where $n$, $k$ and $d$ are, respectively, the number nodes of $G$, the treewidth of $G$, and the degree of $G$. We also provide a polynomial algorithm (\Cref{alg:counting-MEC-of-tree}) that for a tree graph $G$, counts MECs with skeleton $G$ in time $O(n^2d)$, where $n$ and $d$ are, respectively, the number of nodes of $G$, and the degree of $G$.  Our result is a significant improvement over the present best known solution given by \cite{sharma2023fixedparameter} for the chordal and tree graphs. The current best result for counting MECs with a tree skeleton is by \cite{sharma2023fixedparameter} with the time complexity exponential in the degree of the tree graph, whereas our algorithm for counting MECs with a tree skeleton is polynomial in the size of the input graph.

We define the terminologies used in the paper in \cref{sec:preliminary}. In \cref{sec:algorithm-for-counting-MECs}, we formally define the problem of counting MECs with the same skeleton (\cref{prob:counting-MEC}). In \cref{subsection:tree}, we solve \cref{prob:counting-MEC} when the skeleton is a tree graph. In \cref{subsection:chordal-graph}, we solve \cref{prob:counting-MEC} when the skeleton is a chordal graph. \cref{sec:time-complexity} deals with the time complexity of the algorithms studied in the paper. All the missing proofs of the main section are provided in the Supplementary Material. 
\section{Preliminary}
\label{sec:preliminary}
\paragraph{Graph Terminologies} We follow the graph theory terminologies used by \cite{andersson1997characterization} and \cite{sharma2023fixedparameter}. A graph $G = (V, E)$ is a tuple, where $V$ is the set of nodes, and $E \subseteq V \times V$ is the set of edges of $G$. For a graph $G$, we denote the set of nodes of $G$ by $V_G$ and the set of edges of $G$ by $E_G$. An edge $(u,v)$ is said to be an \emph{undirected edge} of $G$, denoted as $u-v \in E_G$, if both $(u,v)$ and $(v,u)$ are in $E_G$. An edge $(u,v)$ is said to be a \emph{directed edge} of $G$, denoted as $u\rightarrow v \in E_G$, if $(u,v) \in E_G$ and $(v,u) \notin E_G$.  A graph is said to be an \emph{undirected graph} if each edge of the graph is undirected. A graph is said to be a \emph{directed graph} if each edge of the graph is directed. A sequence of distinct nodes $(u_1, u_2, \ldots, u_l)$ is said to be a \emph{path} of a graph $G$ if for each $1\leq i < l$, $(u_i, u_{i+1}) \in E_G$.
$P$ is said to be a path from $(u,v)$ to $(x,y)$ if the first two nodes of $P$ are $u$ and $v$ and the last two nodes are $x$ and $y$. Similarly, $P$ is said to be a path from $(u,v)$ to $w$ if the first two nodes of the path are $u$ and $v$ and the last node of the path is $w$.
A sequence $(u_1, u_2, \ldots, u_l, u_{l+1} = u_1)$ is said to be a \emph{cycle} if $u_1, u_2, \ldots, u_l$ are distinct and for each $1\leq i \leq l$, $(u_i, u_{i+1}) \in E_G$. A cycle of a graph is said to be a \emph{directed cycle} of the graph if one edge of the cycle is a directed edge in the graph. A cycle $(u_1, u_2, \ldots, u_l, u_{l+1} = u_1)$ of a graph is said to be a \emph{chordless cycle} of the graph if there does not exist any edge between any non-adjacent nodes of the cycle. A graph that does not have any directed cycle is said to be a \emph{chain graph}. A graph is said to be a \emph{chordal graph} if each cycle of the graph is chordless. A \emph{skeleton} of a graph is the underlying undirected graph of it, i.e., we get the skeleton of a graph by replacing each directed edge $u\rightarrow v$ of the graph with an undirected edge $u-v$.
$\skel{G}$ denotes the the skeleton of $G$.
A \emph{v-structure} of a graph is an induced subgraph of the form $a\rightarrow b \leftarrow c$. For a graph $G$, $\mathcal{V}(G)$ is denoted as the set of v-structures of $G$. $N(X, G)$ denotes the set of nodes of $G$ that are a neighbor of a node in $X$. A clique of an undirected graph $G$ is a set of nodes $C$ of $G$ such that for each pair of nodes $u,v \in C$, $(u,v) \in E_G$.

\begin{definition}[\textbf{Union of graphs}]
\label{def:union-of-graphs}
Let $G_1$ and $G_2$ be two graphs. $G$ is said to be the \emph{union} of $G_1$ and $G_2$ if $V_G = V_{G_1}\cup V_{G_2}$ and $E_G = E_{G_1} \cup E_{G_2}$.
\end{definition}

\paragraph{Markov Equivalence Class (MEC)} We have defined MEC in the introduction. \cite{andersson1997characterization} give the following necessary and sufficient conditions for a graph to be an MEC (\Cref{thm:nes-and-suf-cond-for-chordal-graph-to-be-an-MEC}).
  \begin{figure}[ht]
    \begin{center}
    \begin{tabular}{ c c c c }
 (a): &
 \begin{tikzpicture}
     \label{fig-a:strongly-protected-edge}
    \node[](u){$u$};
    \node[](v)[right= 1.5 of u]{$v$};
    \node[](w)[below right = 0.8 and 0.6 of u]{$w$};
    \draw[->](u)--(v);
    \draw[->](w)--(u);
    \end{tikzpicture}
    & (b): &
    \begin{tikzpicture}
    \node[](u){$u$};
    \node[](v)[right= 1.5 of u]{$v$};
    \node[](w)[below right = 0.8 and 0.6 of u]{$w$};
    \draw[->](u)--(v);
    \draw[->](w)--(v);
     \label{fig-b:strongly-protected-edge}
    \end{tikzpicture}
    \\ 
 (c): &
 \begin{tikzpicture}
    \node[](u){$u$};
    \node[](v)[right= 1.5 of u]{$v$};
    \node[](w)[below right = 0.8 and 0.6 of u]{$w$};
    \draw[->](u)--(v);
    \draw[->](u)--(w);
    \draw[->](w)--(v);
     \label{fig-c:strongly-protected-edge}
    \end{tikzpicture}
    & (d): &
    \begin{tikzpicture}
    \node[](u){$u$};
    \node[](v)[right= 1.5 of u]{$v$};
\node[](b)[below right=0.5 and 0.5 of u]{$w'$};
    \node[](a)[above right=0.5 and 0.5 of u]{$w$};
    \draw[->](u)--(v);
    \draw[-](u)--(b);
    \draw[->](b)--(v);
    \draw[-](a)--(u);
    \draw[->](a)--(v);
    \label{fig-d:strongly-protected-edge}
    \end{tikzpicture}
    \\     
\end{tabular}
\end{center}
\caption{Strongly protected $u \rightarrow v$.}
\label{fig:strongly-protected-edge}
\end{figure}
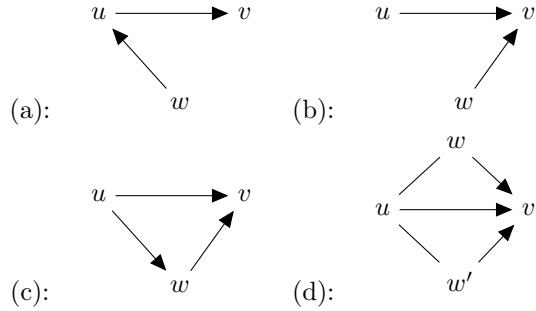  \begin{theorem}[\cite{andersson1997characterization}]
\label{thm:nes-and-suf-cond-for-chordal-graph-to-be-an-MEC}
A graph $G$ is an MEC if, and only if, 
\begin{enumerate}
    \item
    \label{item-1-theorem-nec-suf-cond-for-MEC}
    $G$ is a chain graph.
    \item
    \label{item-2-theorem-nec-suf-cond-for-MEC}
    For every chain component $\tau$ of $G$, $G_{\tau}$ is chordal, i.e., every \ucc{} of $G$ is chordal.
    \item
    \label{item-3-theorem-nec-suf-cond-for-MEC}
    The configuration $a\rightarrow b - c$ does not occur as an induced subgraph of $G$.
    \item
    \label{item-4-theorem-nec-suf-cond-for-MEC}
    Every directed edge $u\rightarrow v \in G$ is strongly protected in $G$, i.e., $u\rightarrow v$ is a part of at least one of the induced subgraphs of $G$ of the form as shown in \Cref{fig:strongly-protected-edge}.
\end{enumerate}
\end{theorem}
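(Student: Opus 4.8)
The plan is to prove both directions of \Cref{thm:nes-and-suf-cond-for-chordal-graph-to-be-an-MEC} using the description of an MEC as the union (\Cref{def:union-of-graphs}) of all DAGs that share a fixed skeleton and a fixed set of v-structures (the graphical characterisation of Markov equivalence). Throughout, write $\mathcal{E}$ for the set of DAGs with skeleton $\skel{G}$ and v-structure set $\mathcal{V}(G)$, so that $G = \bigcup_{D \in \mathcal{E}} D$; an edge $u - v$ of $G$ is directed $u \to v$ precisely when every $D \in \mathcal{E}$ orients it that way, and is undirected precisely when both orientations occur among the members of $\mathcal{E}$. (For the necessity direction $\mathcal{E}$ is, by Verma--Pearl, exactly the equivalence class whose union is $G$; for the sufficiency direction we define $\mathcal{E}$ from $\skel{G}$ and $\mathcal{V}(G)$ and must show both that it is nonempty and that its union is $G$.)

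\textbf{Necessity.} Assume $G$ is an MEC. For item~\ref{item-3-theorem-nec-suf-cond-for-MEC}, if $a \rightarrow b - c$ were an induced subgraph then some $D \in \mathcal{E}$ orients $b - c$ as $c \to b$; since $a \to b$ in every member of $\mathcal{E}$ and $a \not\sim c$, this $D$ contains the v-structure $a \to b \leftarrow c$, which is then common to all of $\mathcal{E}$ and so appears in $G$, forcing $c \to b$ in $G$ --- a contradiction. For item~\ref{item-1-theorem-nec-suf-cond-for-MEC}, a partially directed cycle in $G$ can be closed up into a directed cycle inside some $D \in \mathcal{E}$ by orienting its undirected edges around the cycle --- a choice that only extends existing arrowheads and hence, by the already-established item~\ref{item-3-theorem-nec-suf-cond-for-MEC}, introduces no new v-structure --- contradicting acyclicity of $D$. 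For item~\ref{item-2-theorem-nec-suf-cond-for-MEC}, a chordless undirected cycle of length $\ge 4$ in a chain component, restricted to any $D \in \mathcal{E}$, is an acyclic orientation of a chordless $\ge 4$-cycle and hence has a sink with two non-adjacent in-neighbours, i.e.\ a v-structure; this v-structure is shared by all of $\mathcal{E}$, directing two of the cycle's edges in $G$ --- a contradiction. For item~\ref{item-4-theorem-nec-suf-cond-for-MEC}, every directed edge $u \to v$ of $G$ is orientation-invariant across $\mathcal{E}$, and I would prove the contrapositive: if $u \to v$ lies in none of the configurations of \Cref{fig:strongly-protected-edge}, then from any $D \in \mathcal{E}$ one can reverse $u \to v$ (possibly with a short cascade of further reversals forced by acyclicity and by the v-structure constraints) and stay inside $\mathcal{E}$, so $u - v$ would be undirected in $G$.

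\textbf{Sufficiency.} Assume $G$ satisfies items~\ref{item-1-theorem-nec-suf-cond-for-MEC}--\ref{item-4-theorem-nec-suf-cond-for-MEC}. First build a DAG $D_0$ consistent with $G$: retain all directed edges of $G$ and orient each chain component $\tau$ --- chordal by item~\ref{item-2-theorem-nec-suf-cond-for-MEC} --- along a perfect elimination ordering, directing every edge from its later to its earlier endpoint. Then $D_0$ is acyclic, since the directed edges respect the partial order on chain components (item~\ref{item-1-theorem-nec-suf-cond-for-MEC}) and each $\tau$ is oriented acyclically; and $\mathcal{V}(D_0) = \mathcal{V}(G)$, because the earlier neighbours of a vertex in a perfect elimination ordering form a clique (no v-structure arises inside a $\tau$) and item~\ref{item-3-theorem-nec-suf-cond-for-MEC} rules out the only way a v-structure could straddle two components, provided each elimination ordering is chosen so that a vertex of $\tau$ receiving an arrowhead from outside is eliminated before its $\tau$-neighbours. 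Hence $D_0 \in \mathcal{E}$, $\mathcal{E} \neq \emptyset$, and $\bigcup_{D \in \mathcal{E}} D$ is an MEC with skeleton $\skel{G}$; it remains to check it equals $G$ orientation by orientation. A directed edge $u \to v$ of $G$ keeps its orientation throughout $\mathcal{E}$: the configurations of \Cref{fig:strongly-protected-edge} are exactly the situations in which reversing $u \to v$ would create a directed cycle or create or destroy a v-structure, so an induction along a derivation of $G$'s arrows from its v-structures (a Meek-style closure argument) shows the edge is forced. An undirected edge $u - v$ of $G$ is genuinely undirected in the MEC: inside its chordal component $\tau$ one passes, through a sequence of edge reversals at simplicial vertices --- each a covered-edge reversal, hence preserving Markov equivalence and the v-structure set --- to a DAG orienting $u - v$ the other way, leaving the rest of $D_0$ untouched because no v-structure crosses the boundary of $\tau$. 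Together these give $G = \bigcup_{D \in \mathcal{E}} D$, an MEC.

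\textbf{Main obstacle.} The delicate step is the final claim of the sufficiency argument --- that every edge of a chordal chain component of $G$ can be reoriented within the Markov equivalence class. This is in essence a statement about the acyclic, v-structure-free orientations of a chordal graph: that this set is connected under reversals at simplicial vertices and rich enough to flip any prescribed edge, together with the bookkeeping that such local moves never disturb the inter-component arrows (guaranteed by item~\ref{item-3-theorem-nec-suf-cond-for-MEC}). I expect to discharge it with standard chordal-graph machinery --- perfect elimination orderings, existence of simplicial sources, clique trees --- essentially the tools used to enumerate the DAGs of an MEC with a prescribed chordal skeleton. The necessity direction is comparatively routine once the subroutine ``orient undirected edges along a cycle without introducing new v-structures'' is in place.
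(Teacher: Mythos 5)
The paper offers no proof of this statement: \Cref{thm:nes-and-suf-cond-for-chordal-graph-to-be-an-MEC} is imported verbatim from \cite{andersson1997characterization} and used as a black box, so there is no in-paper argument to compare yours against. Judged on its own terms, your outline follows the standard strategy of that reference (work with the class $\mathcal{E}$ of DAGs sharing $\skel{G}$ and $\mathcal{V}(G)$; prove necessity condition by condition; prove sufficiency by building a consistent extension and showing its essential graph equals $G$), and two pieces are genuinely complete: your necessity arguments for \cref{item-3-theorem-nec-suf-cond-for-MEC} and \cref{item-2-theorem-nec-suf-cond-for-MEC} are correct as written.

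The remaining steps, however, have gaps precisely where the theorem is hard. For necessity of \cref{item-1-theorem-nec-suf-cond-for-MEC}, you cannot ``orient the undirected edges around the cycle'' inside a member of $\mathcal{E}$: each $D \in \mathcal{E}$ already orients those edges however it does, and the orientation you construct is by design cyclic, hence not a DAG and not a member of $\mathcal{E}$, so no contradiction with acyclicity is reached without the orientation-propagation lemmas that the actual proof relies on. For necessity of \cref{item-4-theorem-nec-suf-cond-for-MEC}, the claim that a non-strongly-protected arrow can be reversed ``possibly with a short cascade of further reversals'' while staying in $\mathcal{E}$ \emph{is} the content of that direction; nothing in the sketch shows the cascade terminates, preserves acyclicity, or preserves the v-structure set. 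In the sufficiency direction, your ``induction along a derivation of $G$'s arrows from its v-structures'' presupposes that the protecting configurations of \Cref{fig:strongly-protected-edge} can be ordered so that each arrow is protected only by arrows already derived; \cref{item-4-theorem-nec-suf-cond-for-MEC} alone permits circular systems of arrows that protect one another with no v-structure at the base, and excluding these is where \cite{andersson1997characterization} do much of their work. The non-essentiality of the undirected edges (connectivity of the v-structure-free acyclic orientations of a chordal component under covered-edge reversals, without disturbing the arrows entering the component) is likewise a theorem rather than a remark, as you acknowledge. Finally, the side condition you impose on the perfect elimination orderings is unnecessary --- \cref{item-1-theorem-nec-suf-cond-for-MEC,item-3-theorem-nec-suf-cond-for-MEC} already force every external parent of a vertex of a chain component to be a parent of the entire component, so no straddling v-structure can arise for any ordering --- and as stated it may not be simultaneously satisfiable.
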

For an MEC $M$ with skeleton $G$, we say $M$ is an MEC of $G$. We denote the set of MECs of $G$ with $\setofMECs{G}$.

The following definitions and results come from \cite{sharma2023fixedparameter}.

\begin{definition}[\emph{Partial MEC}, \cite{sharma2023fixedparameter}]
\label{def:partial-MEC}
    A graph $G$ is said to be a partial MEC if it obeys \cref{item-1-theorem-nec-suf-cond-for-MEC,item-2-theorem-nec-suf-cond-for-MEC,item-3-theorem-nec-suf-cond-for-MEC} of \cref{thm:nes-and-suf-cond-for-chordal-graph-to-be-an-MEC}.
    A partial MEC $O$ is said to be a partial MEC of $G$, if $\skel{O} = G$.
    $\setofpartialMECs{G}$ denotes the set of partial MECs with skeleton $G$.
\end{definition}

\begin{observation}[\cite{sharma2023fixedparameter}]
\label{obs:induced-subgraph-of-MEC-is-PMEC}
    For an MEC $G$, its induced subgraphs are partial MECs.
\end{observation}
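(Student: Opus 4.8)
The plan is to take the MEC $G$ together with an arbitrary vertex set $S \subseteq V_G$, write $H \defeq G[S]$, and verify the three defining properties of a partial MEC (\cref{def:partial-MEC}), namely \cref{item-1-theorem-nec-suf-cond-for-MEC,item-2-theorem-nec-suf-cond-for-MEC,item-3-theorem-nec-suf-cond-for-MEC} of \cref{thm:nes-and-suf-cond-for-chordal-graph-to-be-an-MEC}, for $H$. The one elementary fact that drives everything is that $(u,v)\in E_H$ if and only if $u,v\in S$ and $(u,v)\in E_G$; hence a surviving edge keeps its type (directed or undirected), and two vertices of $S$ that are non-adjacent in $G$ remain non-adjacent in $H$. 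So every forbidden configuration appearing in $H$ lifts verbatim to the same configuration in $G$.

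First I would dispatch \cref{item-1-theorem-nec-suf-cond-for-MEC} and \cref{item-3-theorem-nec-suf-cond-for-MEC}, which are immediate from this fact. If $H$ contained a directed cycle, the same vertex sequence is a cycle in $G$ and its distinguished edge is still directed in $G$, contradicting that $G$ is a chain graph; thus $H$ is a chain graph. If $a\rightarrow b-c$ occurred as an induced subgraph of $H$, then $a\rightarrow b$ and $b-c$ are edges of $G$ of the same respective types and $a$ is non-adjacent to $c$ in $G$, so $a\rightarrow b-c$ would be an induced subgraph of $G$ — contradicting \cref{item-3-theorem-nec-suf-cond-for-MEC} for $G$.

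The property needing slightly more care is \cref{item-2-theorem-nec-suf-cond-for-MEC}. Since the undirected edges of $H$ form a subset of the undirected edges of $G$, any undirected path of $H$ is an undirected path of $G$; hence each chain component $\tau$ of $H$ is contained in a single chain component $\tau'$ of $G$, and inside $\tau'$ every edge of $G$ is undirected. I would then use $H_\tau = H[\tau] = G[\tau]$, which is an induced subgraph of $G_{\tau'}$; the latter is chordal by \cref{item-2-theorem-nec-suf-cond-for-MEC} applied to the MEC $G$, and chordality is hereditary under taking induced subgraphs, so $H_\tau$ is chordal. With all three conditions verified, $H$ is a partial MEC (of its own skeleton), which is exactly the claim. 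I do not expect a genuine obstacle here: the only mild subtlety is that a single chain component of $G$ may split into several chain components of $H$ once some of its vertices are deleted, but each resulting piece is still an induced subgraph of a chordal graph, so this does not affect the argument.
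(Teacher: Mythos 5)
Your proof is correct. The paper itself gives no argument for this observation --- it is imported verbatim from \cite{sharma2023fixedparameter} --- so there is nothing to compare against, but your verification is the natural one: all three defining conditions of a partial MEC (\cref{item-1-theorem-nec-suf-cond-for-MEC,item-2-theorem-nec-suf-cond-for-MEC,item-3-theorem-nec-suf-cond-for-MEC} of \cref{thm:nes-and-suf-cond-for-chordal-graph-to-be-an-MEC}) are hereditary under taking induced subgraphs, because forbidden configurations (a directed cycle, an induced $a\rightarrow b - c$) lift verbatim from $G[S]$ to $G$, and chordality of chain components survives because each chain component of $G[S]$ sits inside a chain component of $G$ and chordality is closed under induced subgraphs. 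You also correctly flagged and dispatched the only real subtlety, namely that a chain component of $G$ may fragment into several chain components of $G[S]$.
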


\begin{definition}[\emph{Projection}, \cite{sharma2023fixedparameter}]
\label{def:projection}
    Suppose $G$ is an undirected graph, $G'$ is an induced subgraph of $G$, and $M$ and $M'$ are MECs of, respectively, $G$ and $G'$. We say $M'$ is a projection of $M$ on $Y = V_{G'}$ if the set of v-structures of $M'$ and the set of v-structures of $M[Y]$ are the same, i.e.,  $\mathcal{V}(M') =  \mathcal{V}(M[Y])$.  We denote the projection of $M$ on $Y$ with $\mathcal{P}(M, Y)$. With a slight abuse of notation, we define $\mathcal{P}(M, Y_1, Y_2) =  (\mathcal{P}(M, Y_1), \mathcal{P}(M, Y_2))$.
\end{definition}

\begin{observation}[\cite{sharma2023fixedparameter}]
    \label{obs:there-exists-a-unique-projection}
    Let $M$ be an MEC of $G$. For all $Y \subseteq V_G$, there exists a unique projection of $M$ on $Y$, i.e., $\mathcal{P}(M, Y)$ is unique.
\end{observation}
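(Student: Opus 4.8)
The plan is to prove uniqueness and existence separately, both resting on the characterisation of \cite{verma1990equivalence} that the Markov equivalence class of a DAG is determined by its skeleton together with its set of v-structures.

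For uniqueness, I would argue as follows. Suppose $M_1'$ and $M_2'$ are both projections of $M$ on $Y$. By \cref{def:projection} each of them is an MEC of $G[Y]$, so they have the common skeleton $G[Y]$, and by the definition of projection they have the common v-structure set $\mathcal{V}(M[Y])$. Picking DAGs $D_1 \in M_1'$ and $D_2 \in M_2'$, the DAGs $D_1$ and $D_2$ have the same skeleton and the same set of v-structures, hence are Markov equivalent; therefore $M_1' = M_2'$.

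For existence, I would realise the projection through a representative DAG. Fix a DAG $D$ in the class $M$. Since acyclicity is inherited by induced subgraphs, $D[Y]$ is again a DAG, with skeleton $G[Y]$; let $M'$ be the MEC containing $D[Y]$, which is an MEC of $G[Y]$. It remains to show $\mathcal{V}(M') = \mathcal{V}(M[Y])$. The first ingredient is the elementary fact that the graph of an MEC has exactly the same v-structures as every DAG it contains: all DAGs of a class share their v-structures by \cite{verma1990equivalence}, and an edge is directed in the union graph of the class (\cref{def:union-of-graphs}) precisely when it is oriented the same way in every DAG of the class. Applying this to $D$ gives $\mathcal{V}(M) = \mathcal{V}(D)$, and applying it to $D[Y]$ gives $\mathcal{V}(M') = \mathcal{V}(D[Y])$, so it suffices to prove $\mathcal{V}(D[Y]) = \mathcal{V}(M[Y])$. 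For $\mathcal{V}(D[Y]) \subseteq \mathcal{V}(M[Y])$: a v-structure $a\rightarrow b\leftarrow c$ of $D[Y]$ has $a,b,c \in Y$, and since $D[Y]$ is an induced subgraph, $a$ and $c$ are already non-adjacent in $D$, so $a\rightarrow b\leftarrow c$ is a v-structure of $D$, hence of $M$; thus $a\rightarrow b$ and $c\rightarrow b$ are directed in $M$ and $a,c$ are non-adjacent in $M$, and restricting to $Y$, which contains $a,b,c$, preserves both orientations and the non-adjacency, so $a\rightarrow b\leftarrow c \in \mathcal{V}(M[Y])$. The reverse inclusion is symmetric: a v-structure of $M[Y]$ has both its edges directed in $M$, hence oriented the same way in $D$, hence present and directed in $D[Y]$, while non-adjacency of the two tails persists throughout. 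Combining, $\mathcal{V}(M') = \mathcal{V}(M[Y])$ and $\skel{M'} = G[Y]$, so $M'$ is a projection of $M$ on $Y$.

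The statement is not deep, and the one point I would be most careful about is the translation between the equivalence class and a single representative: that a directed edge of the graph $M$ is oriented identically in the chosen DAG $D$, and conversely that the v-structures of $D$ coincide with those of $M$. Both follow at once from \cite{verma1990equivalence} together with \cref{def:union-of-graphs}; everything else is bookkeeping with induced subgraphs, using repeatedly that two vertices of $Y$ are adjacent in $G[Y]$ if and only if they are adjacent in $G$.
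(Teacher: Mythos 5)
The paper does not actually prove this observation: it is imported verbatim from \cite{sharma2023fixedparameter} as a known fact, and no argument for it appears in the main text or the Supplementary Material, so there is no in-paper proof to compare against. Your proof is correct and self-contained. The uniqueness half is exactly the argument the paper itself invokes informally elsewhere (``there cannot be two distinct MECs with the same skeleton and the same set of v-structures''), applied to the common skeleton $G[Y]$ and the common v-structure set $\mathcal{V}(M[Y])$. The existence half is the standard construction via a representative DAG, and the two translation facts you isolate are indeed the only places where care is needed: that $\mathcal{V}(M)=\mathcal{V}(D)$ for any $D\in M$, and that forming induced subgraphs commutes with passing between $M$ and $D$ on directed edges and on non-adjacency. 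Both follow from \cite{verma1990equivalence} together with the paper's convention that $u\rightarrow v\in E_M$ means $(u,v)\in E_M$ and $(v,u)\notin E_M$ in the union graph, which forces every DAG of the class to orient that edge as $u\rightarrow v$. Your inclusion arguments for $\mathcal{V}(D[Y])=\mathcal{V}(M[Y])$ correctly track both the orientation of the two arcs and the non-adjacency of the two tails under restriction to $Y$, so the projection you construct satisfies \cref{def:projection}. No gaps.
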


\begin{lemma}[\cite{sharma2023fixedparameter}]
\label{lem:directed-edge-in-projection-implies-directed-edge-in-MEC}
\label{lem:directed-edge-is-same-in-projected-MEC}
    Let $M$ be a MEC, and $M'$ is a projection of $M$. If $u\rightarrow v  \in M'$ then $u\rightarrow v \in M$.
\end{lemma}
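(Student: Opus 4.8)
The plan is to pass to the level of the underlying DAGs. Recall from \cref{def:union-of-graphs} and the description of an MEC as the union of the DAGs it contains (together with the fact that DAGs have no undirected edges and, by \cite{verma1990equivalence}, that all DAGs of an MEC share one skeleton and one v-structure set) that, for an MEC $N$ and an edge $x-y$ of $\skel{N}$, the edge is directed $x\to y$ in $N$ if and only if \emph{every} DAG in $N$ contains $x\to y$; the union description likewise shows that $\mathcal{V}(N)$ equals the common v-structure set of the DAGs of $N$. Hence it suffices to prove that $u\to v$ occurs in every DAG of $M$.

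Let $Y$ be the vertex set of $M'$, so that $M'$ is the projection of $M$ on $Y$ and $\skel{M'}=\skel{M}[Y]$. Fix an arbitrary DAG $D\in M$; the key claim is $D[Y]\in M'$. Now $D[Y]$ is a DAG (an induced subgraph of the DAG $D$) with skeleton $\skel{M}[Y]=\skel{M'}$, so by \cite{verma1990equivalence} it lies in $M'$ as soon as we verify $\mathcal{V}(D[Y])=\mathcal{V}(M')$. By \cref{def:projection} we have $\mathcal{V}(M')=\mathcal{V}(M[Y])$, so it is enough to prove $\mathcal{V}(D[Y])=\mathcal{V}(M[Y])$. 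For any graph $H$ and any $a,b,c\in Y$, the configuration $a\to b\leftarrow c$ is an induced subgraph of $H[Y]$ exactly when it is an induced subgraph of $H$ (restricting to an induced subgraph that contains $a,b,c$ changes neither the edges among them nor the non-adjacency of $a$ and $c$); consequently $\mathcal{V}(D[Y])=\{\,\sigma\in\mathcal{V}(D):\text{all nodes of }\sigma\text{ lie in }Y\,\}$, and similarly $\mathcal{V}(M[Y])=\{\,\sigma\in\mathcal{V}(M):\text{all nodes of }\sigma\text{ lie in }Y\,\}$. Since $\mathcal{V}(D)=\mathcal{V}(M)$ by the first paragraph, these restricted sets coincide, giving $\mathcal{V}(D[Y])=\mathcal{V}(M[Y])$ and hence $D[Y]\in M'$.

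To conclude: $u\to v$ is directed in $M'$, so by the first paragraph it occurs in every DAG of $M'$, in particular in $D[Y]$; since $u,v\in Y$ this yields $u\to v\in D$. As $D$ was an arbitrary DAG of $M$, every DAG of $M$ contains $u\to v$, so by the first paragraph $u\to v\in M$, as required. I expect the only delicate point to be the identity $\mathcal{V}(D[Y])=\mathcal{V}(M[Y])$: one must be sure that passing to $Y$ neither destroys nor manufactures a v-structure relative to what the equivalence class already forces — this is precisely the invariance of v-structures (and of their forced orientations) across an MEC, the same principle that underlies \cref{def:projection} and \cref{obs:there-exists-a-unique-projection}. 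The remaining steps are routine.
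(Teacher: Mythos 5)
The paper does not actually prove this lemma: it is imported verbatim from \cite{sharma2023fixedparameter} (it appears in the preliminaries as a quoted result), so there is no in-paper proof to compare against. Judged on its own, your argument is correct and self-contained. The two load-bearing facts are exactly the right ones: (i) under the union representation of an MEC (\cref{def:union-of-graphs} together with the convention that an MEC is identified with the union of its DAGs), an edge is directed $x\to y$ in the class graph if and only if every member DAG orients it that way --- this uses that all members share the skeleton, so ``no DAG contains $y\to x$'' is equivalent to ``every DAG contains $x\to y$''; and (ii) $\mathcal{V}(D)=\mathcal{V}(M)$ for every DAG $D$ in the class, which follows from \cite{verma1990equivalence} plus the same union argument. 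Your restriction identity $\mathcal{V}(H[Y])=\{\sigma\in\mathcal{V}(H):\text{nodes of }\sigma\subseteq Y\}$ is sound because induced subgraphs preserve both the edges (with orientation) and the non-adjacencies among vertices of $Y$, so combined with (ii) and \cref{def:projection} you correctly conclude $D[Y]\in M'$ by the Verma--Pearl characterization and the maximality of $M'$ as an equivalence class. The final step ($u\to v$ directed in $M'$ forces $u\to v$ in $D[Y]$, hence in $D$, for every $D\in M$) then closes the argument. The only point I would tighten is that step (i) implicitly needs the MEC to be nonempty so that ``some DAG contains $(x,y)$'' holds; this is automatic since an MEC is a maximal nonempty set of Markov equivalent DAGs.
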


\begin{definition}[\emph{Clique Tree Representation}, \cite{blair1993introduction}]
\label{def:clique-tree}
For a chordal graph, a clique tree representation of a chordal graph $G$ is a tree graph $T$ such that (a) $V_T = \Pi(G)$, a set of maximal cliques of $G$, 
and (b) $T$ obeys the \emph{clique intersection property}: for $x, y \in V_T$, if $z$ is a node along the path between $x$ and $y$ in $T$ then $x\cap y \subseteq z$. 
\end{definition}
\begin{lemma}[\cite{blair1993introduction}]
    \label{lem:every-chordal-graph-has-clique-tree}
    For every chordal graph $G$, a clique tree representation of $G$ exists.
\end{lemma}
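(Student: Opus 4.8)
The plan is to prove the statement by induction on $|V_G|$, using the classical fact (Dirac's lemma) that every chordal graph with at least one vertex has a \emph{simplicial} vertex, i.e.\ a vertex $v$ whose set of neighbours $N(v)$ induces a clique. The base case $|V_G|\le 1$ is immediate: take the empty tree, or the tree with a single node labelled by the unique maximal clique.

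For the inductive step I would take a simplicial vertex $v$ of $G$ and set $G' = G - v$. Since an induced subgraph of a chordal graph is chordal, $G'$ is chordal, so by the inductive hypothesis it has a clique tree $T'$ with node set $\Pi(G')$. Writing $K = N(v)\cup\{v\}$, the first step is to record two structural facts. (i) $K$ is the unique maximal clique of $G$ that contains $v$: any clique containing $v$ is contained in $N(v)\cup\{v\} = K$, and $K$ is maximal because a vertex adjacent to every vertex of $K$ is in particular adjacent to $v$ and hence lies in $N(v)\subseteq K$. (ii) A maximal clique $C$ of $G'$ remains maximal in $G$ unless $C\subseteq N(v)$, and because $N(v)$ is itself a clique of $G'$ the inclusion $C\subseteq N(v)$ forces $C = N(v)$; conversely every maximal clique of $G$ other than $K$ is a maximal clique of $G'$. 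Hence $K\in\Pi(G)$, $K\notin\Pi(G')$, and $\Pi(G)\setminus\{K\}$ equals $\Pi(G')\setminus\{N(v)\}$ if $N(v)\in\Pi(G')$ and equals $\Pi(G')$ otherwise.

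Guided by (ii), I would build $T$ from $T'$ by a local surgery. If $N(v)$ is a maximal clique of $G'$, keep the tree shape of $T'$ and relabel its node $N(v)$ as $K$ (that is, enlarge that bag by $v$); this is legitimate because $K\notin\Pi(G')$. If $N(v)$ is not a maximal clique of $G'$, pick a maximal clique $C\in\Pi(G')$ with $N(v)\subsetneq C$ and attach $K$ to $T'$ as a new leaf adjacent to $C$. In both cases $V_T = \Pi(G)$ by (ii), and it remains to verify the clique intersection property. The only bag of $T$ containing $v$ is $K$, so for distinct nodes $x,y$ of $T$ with $x\neq K\neq y$ the set $x\cap y$ is unchanged from $T'$ and every node $z$ on the $x$--$y$ path carries a bag containing its $T'$-bag, so the property is inherited. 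If $x = K$, then $K\cap y = N(v)\cap y$ for every $y$ since $v\notin y$; in the relabelling case $N(v)\cap y$ was already contained in every bag along the old $N(v)$--$y$ path and those bags only grew, while in the new-leaf case the $K$--$y$ path enters $T'$ at $C$, so $N(v)\cap y\subseteq C$ and for every subsequent node $z$ we get $N(v)\cap y\subseteq C\cap y\subseteq z$ from the clique intersection property of $T'$. This closes the induction.

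The step I expect to be the main obstacle is the bookkeeping in (ii) — correctly identifying how the family of maximal cliques changes when a simplicial vertex is removed, split into the cases ``its neighbourhood was already a maximal clique'' and ``it was not'' — together with the verification that the corresponding surgery on $T'$ preserves the clique intersection property; everything else is routine. (An alternative, also in the literature, is to let $T$ be any maximum-weight spanning tree of the complete graph on $\Pi(G)$ with edge weights $|C\cap C'|$, but establishing that such a tree obeys the clique intersection property is no less work.)
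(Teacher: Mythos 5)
Your argument is correct. Note that the paper itself offers no proof of this lemma at all: it is imported verbatim as a citation to Blair and Peyton, whose own treatment characterizes clique trees as the maximum-weight spanning trees of the clique-intersection graph (the alternative you mention in your closing parenthesis). Your simplicial-vertex induction is the more elementary and directly constructive route, and the two delicate points are both handled properly: the bookkeeping in your step (ii) correctly isolates the only way a maximal clique of $G'=G-v$ can fail to survive in $G$ (namely $C\subseteq N(v)$, which by maximality of $C$ inside the clique $N(v)$ forces $C=N(v)$), and the verification of the clique intersection property in the leaf-attachment case correctly chains $K\cap y=N(v)\cap y\subseteq C\cap y\subseteq z$ using the inductive hypothesis on $T'$. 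The only external ingredient is Dirac's lemma that every nonempty chordal graph has a simplicial vertex, which is standard and reasonable to assume; with that granted, the induction closes, including the degenerate cases ($N(v)=\emptyset$, disconnected $G$), since attaching $K$ as a leaf to any maximal clique containing $N(v)$ still satisfies the intersection property. What your approach buys is a short self-contained existence proof; what the spanning-tree approach buys is an exact characterization of \emph{all} clique trees of $G$, which is more than this lemma needs.
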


\begin{definition}[\emph{LBFS odering}, \cite{rose1976algorithmic}]
    \label{def:LBFS}
    An LBFS ordering $\tau$ of a chordal graph $G$ is an ordering of the nodes of $G$ with the property that for any node $u$ of the graph, the neighbors of $u$ that comes before $u$ in $\tau$ form a clique, i.e., if $x-u, y-u \in E_G$, $\tau(x) < \tau(u)$, and $\tau(y) < \tau(u)$ then $x-y \in E_G$.
\end{definition}
\begin{lemma}[\cite{rose1976algorithmic}]
For a chordal $G$, for any clique $C$, there exists an LBFS ordering that starts with $C$.
\end{lemma}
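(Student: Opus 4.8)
The plan is to prove the statement directly from the standard \emph{partition-refinement} description of LBFS, rather than from the vertex-labelling description, since the former makes the tie-breaking argument transparent. Recall that LBFS can be run by maintaining an \emph{ordered partition} $(B_1, B_2, \ldots, B_t)$ of the not-yet-visited vertices: one starts with the single block $(V_G)$; at each step one removes some vertex $v$ from the first block $B_1$, outputs it as the next vertex of the ordering $\tau$, and then \emph{refines} every block $B_j$ into the two (possibly empty) blocks $B_j \cap N(v, G)$ and $B_j \setminus N(v, G)$, putting the neighbour-part in front of the non-neighbour-part. Any sequence of choices made this way produces an ordering in which the earlier neighbours of every vertex form a clique, i.e.\ an LBFS ordering in the sense of \Cref{def:LBFS} (this is exactly the content of \cite{rose1976algorithmic} that LBFS on a chordal graph yields the reverse of a perfect elimination ordering). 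So it suffices to exhibit one run of this process whose first $|C|$ outputs are precisely the vertices of $C$.

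First I would fix an arbitrary enumeration $c_1, \ldots, c_m$ of $C$ (with $m = |C|$) and run the process choosing the pivot $v = c_i$ at step $i$ for $i = 1, \ldots, m$, breaking ties arbitrarily thereafter. The only thing that has to be checked is that this is a \emph{legal} run, i.e.\ that at step $i \le m$ the vertex $c_i$ actually lies in the current first block. I would prove, by induction on $i$, the invariant that just before step $i$ the set $C \setminus \{c_1, \ldots, c_{i-1}\}$ is contained in the first block of the current ordered partition. It holds before step $1$ because the partition is $(V_G)$. For the inductive step: since $C$ is a clique, every vertex of $C \setminus \{c_1, \ldots, c_i\}$ is a neighbour of the pivot $c_i$ and hence, after $c_i$ is removed and the refinement is applied, lands in the neighbour-part of whichever block contained it; by the induction hypothesis that block was $B_1$, so all these vertices end up in the new first block, which is the invariant for step $i+1$. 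In particular $c_i$ itself is in the first block at step $i$, so the run is legal, and after $m$ steps the vertices output so far are exactly $c_1, \ldots, c_m$; thus $\tau$ starts with $C$.

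I do not expect a serious obstacle; the only delicate point is setting up the partition-refinement formulation of LBFS correctly (which block counts as ``first'', and checking that deleting the pivot from $B_1$ does not break the invariant) and invoking the right form of \cite{rose1976algorithmic} to conclude that orderings produced this way satisfy the clique property of \Cref{def:LBFS}. As a sanity check that avoids LBFS machinery entirely, one can instead build the ordering by repeated simplicial-vertex elimination: by Dirac's theorem a chordal graph that is not complete has two non-adjacent simplicial vertices, so as long as the remaining graph strictly contains $C$ it has a simplicial vertex outside $C$ (if the remaining graph is complete this is immediate, and otherwise the two non-adjacent simplicial vertices cannot both lie in the clique $C$); eliminating such vertices one at a time until only $C$ remains, and then eliminating the vertices of $C$ in any order, yields a perfect elimination ordering whose last $|C|$ vertices are $C$, and reversing it gives an ordering with the required clique property that starts with $C$.
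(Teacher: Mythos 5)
The paper never proves this lemma---it is imported wholesale from \cite{rose1976algorithmic}---so there is no in-paper argument to compare against; what you have written is a correct, self-contained proof. Your two arguments lean on different things. The partition-refinement run needs two imported facts: that every ordering obtainable by partition refinement with arbitrary pivot choices is a genuine LBFS ordering, and the Rose--Tarjan--Lueker theorem that every LBFS ordering of a chordal graph reverses a perfect elimination ordering. Granting those, your invariant is exactly right: since $C$ is a clique, every unvisited vertex of $C$ is a neighbour of the pivot $c_i$, hence lands in the neighbour-part of the old first block, which is the new first block, so each $c_i$ is an eligible pivot and the run is legal. The simplicial-elimination argument is actually the better fit for this paper, because \Cref{def:LBFS} defines an LBFS ordering purely combinatorially (the earlier neighbours of every vertex form a clique, i.e.\ the ordering is the reverse of a perfect elimination ordering); Dirac's theorem guarantees a simplicial vertex outside $C$ as long as anything outside $C$ remains (two non-adjacent simplicial vertices cannot both lie in the clique $C$), so you obtain a perfect elimination ordering ending with $C$ and reverse it, with no LBFS machinery and no appeal to the genericity theorem of \cite{rose1976algorithmic}. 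Either route establishes the lemma.
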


\begin{definition}[\textbf{Synchronous Graphs}, \cite{sharma2023fixedparameter}]
\label{def:synchronous-graphs} 
Two graphs $G$ and $H$ are said to be \emph{synchronous graphs} if there do not exist vertices $x, y \in V_{G}\cap V_{H}$ such that $x\rightarrow y \in E_G$ and $y\rightarrow x \in E_H$.
\end{definition}

\begin{definition}[\textbf{Markov Union of Synchronous Graphs}, \cite{sharma2023fixedparameter}]
\label{def:Markov-union-of-graphs} 
    Let $G_1, G_2, \ldots, G_l$ be pairwise synchronous graphs. The \emph{Markov union} of $G_1, G_2, \ldots, G_l$ is a graph $G$, denoted by $G = U_M(G_1, G_2, \dots, G_l)$, such that $\skel{G}$ is the skeleton of the union of $G_1, G_2, \ldots, G_l$ (see \cref{def:union-of-graphs} for the definition of the graph union), and for any edge $u-v \in \skel{G}$, $u\rightarrow v \in G$ if $u\rightarrow v \in G_i$ for any $G_i$. More formally, $V_G = \bigcup_{i=1}^{l} V_{G_i}$, the set of directed edges of $G$ is $D_G = \{u\rightarrow v: u\rightarrow v \in E_{G_i}$ for some $1 \leq i \leq l\}$, and the set of undirected edges of $G$ is $U_G = \{u-v: u-v \in E_{G_i}$ for some $1 \leq i \leq l$, and there does not exist a $j$ such that $u\rightarrow v \in E_{G_j}$ or $v\rightarrow u \in E_{G_j}\}$. In other words, if $u\rightarrow v$ is a directed edge in any graph $G_i$, then $u\rightarrow v$ is also a directed edge in the Markov union graph, and an undirected edge $u-v$ of $G_i$ becomes an undirected edge of $G$ only if neither $u\rightarrow v$ nor $v\rightarrow u$ is part of any graph in the union. Since the graphs are pairwise synchronous, if one graph $G_i$ contains a directed edge $u\rightarrow v$, then no other graph $G_j$ can contain the directed edge $v\rightarrow u$.
\end{definition}

All proofs omitted from the main text are provided in the Supplementary Material.
 
\section{Counting MECs with a tree or a chordal graph skeleton}
We start with a formal definition of the problem.

\begin{problem}[Counting MECs of an undirected graph]
\label{prob:counting-MEC}
\textbf{Input:} An undirected graph $G$. 
\textbf{Output:} The number of MECs of $G$, i.e., $|\setofMECs{G}|$.
\end{problem}

We solve the problem when $G$ is a tree graph or a chordal graph. 

\subsection{Counting MECs of a tree graph}
\label{subsection:tree}
In this subsection, we solve \cref{prob:counting-MEC} when $G$ is a tree graph.  We provide an algorithm that for a tree graph $G$, counts the MECs of $G$.
We start with the tree version of \cref{thm:nes-and-suf-cond-for-chordal-graph-to-be-an-MEC}.
\begin{theorem}
    \label{thm:nes-and-suf-cond-for-tree-graph-to-be-an-MEC}
    A graph $G$ with a tree skeleton is an MEC if, and only if,
    \begin{enumerate}
        \item
        \label{item-1-of-thm:nes-and-suf-cond-for-tree-graph-to-be-an-MEC}
        $G$ does not have an induced subgraph of the form $a\rightarrow b -c$.
        \item 
        \label{item-2-of-thm:nes-and-suf-cond-for-tree-graph-to-be-an-MEC}
        For each directed edge $u\rightarrow v$, there exists a node $w$ such that $u\rightarrow v$ is part of an induced subgraph of $G$ of the form either (a) $w\rightarrow u \rightarrow v$, or (b) $u\rightarrow v \leftarrow w$.
    \end{enumerate}
\end{theorem}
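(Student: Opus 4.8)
The plan is to derive this directly from \Cref{thm:nes-and-suf-cond-for-chordal-graph-to-be-an-MEC}, by checking that for a graph $G$ whose skeleton is a tree, two of the four conditions in that theorem become vacuously true and the remaining two collapse exactly to items~\ref{item-1-of-thm:nes-and-suf-cond-for-tree-graph-to-be-an-MEC} and \ref{item-2-of-thm:nes-and-suf-cond-for-tree-graph-to-be-an-MEC}. So the whole argument is a case-by-case matching of the two theorems' hypotheses.

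First I would dispatch the two conditions that hold for free. Since $\skel{G}$ is a tree it contains no cycle, so $G$ contains no directed cycle in the sense of the preliminaries, hence $G$ is a chain graph and condition~\ref{item-1-theorem-nec-suf-cond-for-MEC} of \Cref{thm:nes-and-suf-cond-for-chordal-graph-to-be-an-MEC} is automatic. Similarly, any induced subgraph of a tree is a forest, which is (vacuously) chordal, so for every chain component $\tau$ the subgraph $G_\tau$ is chordal and condition~\ref{item-2-theorem-nec-suf-cond-for-MEC} is automatic. Condition~\ref{item-3-theorem-nec-suf-cond-for-MEC} is word-for-word item~\ref{item-1-of-thm:nes-and-suf-cond-for-tree-graph-to-be-an-MEC}, so nothing is needed there.

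The only condition carrying content is condition~\ref{item-4-theorem-nec-suf-cond-for-MEC} (strong protection). Here I would observe that configurations (c) and (d) of \Cref{fig:strongly-protected-edge} each force a set of three mutually adjacent vertices in the skeleton --- $\{u,v,w\}$ in (c), and both $\{u,v,w\}$ and $\{u,v,w'\}$ in (d) --- i.e. a triangle, which is a $3$-cycle and hence cannot occur in the tree $\skel{G}$. Therefore an induced subgraph of $G$ witnessing strong protection of a directed edge $u\rightarrow v$ must be of type (a), namely $w\rightarrow u\rightarrow v$, or of type (b), namely $u\rightarrow v\leftarrow w$. Consequently condition~\ref{item-4-theorem-nec-suf-cond-for-MEC}, restricted to tree skeletons, is precisely item~\ref{item-2-of-thm:nes-and-suf-cond-for-tree-graph-to-be-an-MEC}. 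Combining the four equivalences (two trivial, two exact restatements) yields the claim.

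I do not expect a genuine obstacle; the only point that deserves explicit care is the elimination of configurations (c) and (d), i.e.\ spelling out that each of them forces three pairwise-adjacent vertices and therefore a cycle in $\skel{G}$, contradicting that $\skel{G}$ is a tree. Everything else is bookkeeping against \Cref{thm:nes-and-suf-cond-for-chordal-graph-to-be-an-MEC}.
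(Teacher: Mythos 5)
Your proposal is correct and follows essentially the same route as the paper's own proof: both reduce to \Cref{thm:nes-and-suf-cond-for-chordal-graph-to-be-an-MEC}, observe that \cref{item-1-theorem-nec-suf-cond-for-MEC,item-2-theorem-nec-suf-cond-for-MEC} hold automatically because a tree skeleton admits no cycles, that \cref{item-3-theorem-nec-suf-cond-for-MEC} is verbatim \cref{item-1-of-thm:nes-and-suf-cond-for-tree-graph-to-be-an-MEC}, and that configurations (c) and (d) of \Cref{fig:strongly-protected-edge} are ruled out, leaving only (a) and (b) as witnesses of strong protection. Your explicit remark that (c) and (d) each force a triangle in the skeleton is a slightly more careful justification of the step the paper states without elaboration.
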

\begin{proof}
    A graph with a tree skeleton cannot have a cycle. Therefore, it neither has any directed cycle nor any chordless cycle. Consequently, the graph must be a chain graph. Additionally, each \ucc{} of the graph must be chordal. This implies that for a graph with a tree skeleton, \cref{item-1-theorem-nec-suf-cond-for-MEC,item-2-theorem-nec-suf-cond-for-MEC} of \cref{thm:nes-and-suf-cond-for-chordal-graph-to-be-an-MEC} is always true.
    
    Also, in a graph with a tree skeleton, there cannot be any induced subgraph of the form as shown in \cref{fig:strongly-protected-edge}.c and \cref{fig:strongly-protected-edge}.d. Therefore, if an edge of the graph is strongly protected then it must be due to part of an induced subgraph of the form  either \cref{fig:strongly-protected-edge}.a, or \cref{fig:strongly-protected-edge}.b.
\end{proof}

To count MECs of a tree graph $G$, we arbitrarily pick a node $r_1$ of $G$ as its root node. We initially partition the set of MECs of $G$ into two parts: (a) the set of MECs that have an incoming edge adjacent to $r_1$, denoted as $\setofMECs{G, r_1, 1}$, and (b) the set of MECs that have no incoming edge adjacent to $r_1$, denoted as $\setofMECs{G, r_1, 0}$. For simplicity, for any tree graph $H$ with root node $r$, for $i\in \{0,1\}$, we denote $|\setofMECs{H, r, i}|$ by $n_i(H, r)$.
This implies
\begin{equation}
\label{eq:counting-MEC-of-tree-1}
    |\setofMECs{G}| = n_0(G, r_1) + n_1(G, r_1) 
\end{equation}

Let the degree of $r_1$ in $G$ is $\delta$. We partition $\setofMECs{G, r_1, 0}$ into $\setofMECs{G, r_1, 0, 0}$, $\setofMECs{G, r_1, 0, 1}$, \ldots, $\setofMECs{G, r_1, 0, \delta}$, where $\setofMECs{G, r_1, 0, i}$ is the set of MECs of $G$ for which out of $\delta$ edges adjacent to $r_1$, $i$ edges are undirected and the remaining edges are outgoing edges. \Cref{fig:MEcs-of-tree} is given for better understanding. For simplicity, for any tree graph $H$ with root node $r$, we denote $|\setofMECs{H, r, 0, i}|$ by $n_0^i(H, r)$. This implies 
\begin{equation}
\label{eq:counting-MEC-of-tree-2}
    n_0(G, r_1) = \sum_{i = 0}^{\delta}{n_0^i(G, r_1)}
\end{equation}

 \begin{figure}[ht]
    \begin{center}
     \begin{tikzpicture}
    \node[](r){$r$};
    \node[](a)[below left= 0.5 and 0.3 of r]{$a$};
    \node[](b)[below right = 0.5 and 0.3 of r]{$b$};
    \node[](c)[below left= 0.5 and 0.3 of a]{$c$};
    \node[](d)[below right = 0.5 and 0.3 of a]{$d$};
    \node[](G)[below left = 0.2 and 0.0 of d]{$G$};
    
    \draw[-](r)--(a);
    \draw[-](r)--(b);
    \draw[-](a)--(c);
    \draw[-](a)--(d);

    \node[](r1)[right = 2.2 of r]{$r$};
    \node[](a1)[below left= 0.5 and 0.3 of r1]{$a$};
    \node[](b1)[below right = 0.5 and 0.3 of r1]{$b$};
    \node[](c1)[below left= 0.5 and 0.3 of a1]{$c$};
    \node[](d1)[below right = 0.5 and 0.3 of a1]{$d$};
    \node[](G1)[below left = 0.2 and -0.8 of d1]{MEC $M_1$};
    
    \draw[<-](r1)--(a1);
    \draw[<-](r1)--(b1);
    \draw[<-](a1)--(c1);
    \draw[<-](a1)--(d1);

    \node[](r2)[right = 2.2 of r1]{$r$};
    \node[](a2)[below left= 0.5 and 0.3 of r2]{$a$};
    \node[](b2)[below right = 0.5 and 0.3 of r2]{$b$};
    \node[](c2)[below left= 0.5 and 0.3 of a2]{$c$};
    \node[](d2)[below right = 0.5 and 0.3 of a2]{$d$};
    \node[](G2)[below left = 0.2 and -0.8 of d2]{MEC $M_2$};
    
    \draw[->](r2)--(a2);
    \draw[-](r2)--(b2);
    \draw[<-](a2)--(c2);
    \draw[<-](a2)--(d2);
    \end{tikzpicture}
\end{center}
\caption{ $G$ is a tree graph with root node $r$. $M_1$ and $M_2$ are MECs (both obey \cref{item-1-of-thm:nes-and-suf-cond-for-tree-graph-to-be-an-MEC,item-2-of-thm:nes-and-suf-cond-for-tree-graph-to-be-an-MEC} of \cref{thm:nes-and-suf-cond-for-tree-graph-to-be-an-MEC}) with skeleton $G$. $M_1$ belongs to $\setofMECs{G, r, 1}$ as there is an edge $a\rightarrow r$ incoming towards $r$. $M_2$ belongs to $\setofMECs{G, r, 0}$ as none of the edges adjacent to $r$ in $M_2$ is incoming towards $r$. $M_2$ also belongs to $\setofMECs{G, r, 0, 1}$ as it belongs to $\setofMECs{G, r, 0}$ and one edge adjacent to $r$ in $M_2$ is undirected.}
\label{fig:MEcs-of-tree}
\end{figure}
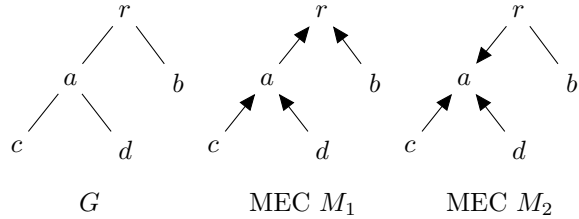

We give a recursive algorithm to compute $n_0^i(G, r_1)$ and $n_1(G, r_1)$. Pick a neighbor $r_2$ of $r_1$ in $G$. Cut the edge $r_1-r_2$ in $G$. This will give us two induced subgraphs of $G$: $G_1$ (containing node $r_1$) and $G_2$ (containing node $r_2$). The degree of $r_1$ in $G_1$ must be $\delta_1  = \delta -1$, where $\delta$ is the degree of $r_1$ in $G$. Let the degree of $r_2$ in $G_2$ be $\delta_2$. \Cref{lem:recursive-method-to-count-MECs-of-tree} shows that if we have the knowledge of $n_1(G_1, r_1)$, $n_0^0(G_1, r_1)$, $n_0^1(G_1, r_1)$, \ldots, $n_0^{\delta_1}(G_1, r_1)$, and
$n_1(G_2, r_2)$, $n_0^0(G_2, r_2)$, $n_0^1(G_2, r_2)$, \ldots, $n_0^{\delta_2}(G_2, r_2)$ then we can compute $n_1(G, r_1)$, $n_0^0(G, r_1)$, $n_0^1(G, r_1)$, \ldots, $n_0^{\delta}(G, r_1)$.

\begin{lemma}
\label{lem:recursive-method-to-count-MECs-of-tree}
Let $G$ be a tree graph, and $(r_1,r_2)\in E_G$ be an edge of $G$. By cutting the edge $(r_1,r_2)$, we get two induced subgraphs of $G$: $G_1$ (containing node $r_1$) and $G_2$ (containing node $r_2$). Then,
\begin{enumerate}
    \item
    \label{item-1-of-recursion}
\begin{equation*}
        n_1(G, r_1) = n_1(G_1,r_1) \times [2\cdot n_1(G_2,r_2)  +  \sum_{j}{(j+2)\cdot n_0^j(G_2,r_2)}] + \sum_i{n_0^i(G_1,r_1)\times[(i+1)\cdot n_1(G_2,r_2)+i\cdot {n_0(G_2,r_2)}]}
    \end{equation*}

    \item 
    \label{item-2a-of-recursion}
\begin{equation*}
     n_0^0(G,r_1)= n_0^0(G_1,r_1)\times [n_1(G_2,r_2)+\sum_j{j.n_0^j(G_2,r_2)}]
 \end{equation*}

 \item 
  \label{item-2-of-recursion}
 For $1\leq i \leq \delta$, 
\begin{equation*}
     n_0^i(G,r_1)= n_0^i(G_1,r_1)\times [n_1(G_2,r_2)+\sum_j{j.n_0^j(G_2,r_2)}]+ n_0^{i-1}(G_1,r_1)\times {n_0(G_2,r_2)}
 \end{equation*}
\end{enumerate}

\end{lemma}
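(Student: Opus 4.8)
The strategy is to set up a bijection between MECs of $G$ (split by the state of the edge $r_1-r_2$ and by the incoming/undirected pattern at $r_1$) and pairs of MECs $(M_1, M_2)$ of $G_1$ and $G_2$, counted with an appropriate multiplicity that records how the edge $r_1-r_2$ can be oriented while preserving the conditions of \Cref{thm:nes-and-suf-cond-for-tree-graph-to-be-an-MEC}. Since $G$ is a tree and $r_1-r_2$ is a bridge, cutting it leaves $G_1$ and $G_2$ as induced subgraphs whose only interaction inside $G$ is through that single edge. Using \Cref{obs:induced-subgraph-of-MEC-is-PMEC} and the projection machinery (\Cref{def:projection}, \Cref{obs:there-exists-a-unique-projection}), every MEC $M$ of $G$ restricts to MECs $M[V_{G_1}]$ and $M[V_{G_2}]$, and the orientations of all edges of $G_1$ (resp. $G_2$) in $M$ agree with those in the restriction by \Cref{lem:directed-edge-is-same-in-projected-MEC}; conversely, given $M_1, M_2$ and a choice of orientation for $r_1-r_2$, one reconstructs a candidate graph $M$ and must check the two MEC conditions. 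So the core of the proof is a careful case analysis, for each of the three edge-states of $r_1-r_2$ (namely $r_1 \to r_2$, $r_2 \to r_1$, $r_1 - r_2$), of exactly when the reconstructed $M$ is an MEC, phrased in terms of the local pattern at $r_1$ in $M_1$ and at $r_2$ in $M_2$.

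First I would record the local feasibility rules. Condition~1 ($a\to b-c$ forbidden) is a purely local constraint around the endpoints $r_1$ and $r_2$: orienting $r_1 \to r_2$ is forbidden iff $r_2$ has an undirected edge in $M_2$; orienting $r_2 \to r_1$ is forbidden iff $r_1$ has an undirected edge in $M_1$; and leaving $r_1-r_2$ undirected is forbidden iff either endpoint has an incoming edge. Condition~2 (strong protection) must then be checked for the newly oriented edge and also re-checked for edges incident to $r_1$ or $r_2$ that were already directed in $M_1$ or $M_2$ — the subtlety being that an edge $x\to r_1$ that was strongly protected inside $G_1$ remains so inside $G$ (its witnessing configuration survives), so only the new edge can fail, and it is protected exactly when it participates in one of the two tree-legal configurations $w\to u\to v$ or $u\to v\leftarrow w$. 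Translating "there is such a $w$" into the counts $n_1$, $n_0^j$ at the far endpoint is what produces the coefficients.

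Then I would assemble the three contributions. For $n_1(G,r_1)$ — MECs with an incoming edge at $r_1$ — I split on whether that incoming edge is $r_2 \to r_1$ or comes from the $G_1$-side. If it comes from the $G_1$-side, then $M[V_{G_1}]$ already lies in $\setofMECs{G_1, r_1, 1}$ (contributing the $n_1(G_1,r_1)$ factor), and I count the admissible orientations of $r_1-r_2$ together with the required protection witnesses on the $G_2$ side, giving the bracket $2\cdot n_1(G_2,r_2) + \sum_j (j+2)\, n_0^j(G_2,r_2)$; the "$+2$" and the "$2\cdot$" come from the extra freedom when $r_2$ already has an incoming edge, which lets $r_1\to r_2$ be protected via $r_2$'s v-structure. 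If the unique incoming edge at $r_1$ is $r_2\to r_1$, then all $G_1$-edges at $r_1$ are outgoing, so $M[V_{G_1}] \in \setofMECs{G_1,r_1,0,i}$ for some $i$, and I count over $i$ and over the states at $r_2$ (it cannot be a fresh source at $r_1$ unless $r_2\to r_1$ is itself protected), yielding $\sum_i n_0^i(G_1,r_1)\,[(i+1) n_1(G_2,r_2) + i\, n_0(G_2,r_2)]$. The $n_0^i(G,r_1)$ recurrences are analogous but easier: an MEC counted in $n_0^i(G,r_1)$ either has $r_1-r_2$ undirected (forcing $M[V_{G_1}] \in \setofMECs{G_1,r_1,0,i-1}$ and $r_2$ to have no incoming edge, i.e. the $n_0^{i-1}(G_1,r_1)\cdot n_0(G_2,r_2)$ term) or has $r_1\to r_2$ (forcing $M[V_{G_1}]\in\setofMECs{G_1,r_1,0,i}$ and $r_2$ to have something — an incoming edge or an undirected edge — to protect $r_1\to r_2$, i.e. the $n_0^i(G_1,r_1)\cdot[n_1(G_2,r_2) + \sum_j j\, n_0^j(G_2,r_2)]$ term), with the $i=0$ case dropping the undirected-$r_1-r_2$ possibility.

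The main obstacle I anticipate is verifying that the reconstruction is genuinely a bijection — in particular, that re-checking Condition~2 never forces a constraint that "leaks" across the bridge in a way not captured by the endpoint-state bookkeeping. Concretely I must argue: (i) any strong-protection witness for an edge of $M_1$ lies entirely within $V_{G_1}$ (true because $G$ is a tree, so the only $M$-edge leaving $V_{G_1}$ is at $r_1$, and a witness configuration using $r_2$ would need a second edge from $V_{G_1}$ to $r_2$, impossible), so no edge of $M_1$ or $M_2$ can lose its protection when glued; and (ii) the new edge's protection depends only on the already-tracked state (incoming edge / undirected edge / outgoing edges) at the opposite endpoint, because in a tree the witness $w$ for the new edge $r_1\to r_2$ or $r_2\to r_1$ must be a neighbor of one specific endpoint. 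Once these two localization facts are nailed down, the coefficient computation is just enumerating the at most a handful of admissible $(\text{state at }r_1,\text{state at }r_2,\text{orientation of }r_1-r_2)$ combinations, and I would present that enumeration as a short table-driven case analysis rather than writing each verification out longhand.
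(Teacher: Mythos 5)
Your overall strategy --- classifying the MECs $M$ of $G$ by the pair $(M_1,M_2)=\mathcal{P}(M,V_{G_1},V_{G_2})$, splitting into cases according to the local state at $r_1$ in $M_1$ and at $r_2$ in $M_2$, and counting the MECs associated with each pair --- is the same as the paper's. However, the plan has a genuine gap in the step that is supposed to produce the coefficients. You analyze feasibility ``locally'' by asking which of the three states of $r_1-r_2$ are compatible with $M_1$ and $M_2$, e.g.\ ``orienting $r_1\to r_2$ is forbidden iff $r_2$ has an undirected edge in $M_2$.'' This treats $M_1$ and $M_2$ as if they were the induced subgraphs $M[V_{G_1}]$ and $M[V_{G_2}]$; but by \cref{def:projection} they are only the projections, and $M[V_{G_2}]$ may contain strictly more directed edges than $M_2$. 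Taken literally, your rule would forbid $r_1\to r_2$ whenever $M_2\in\setofMECs{G_2,r_2,0,j}$ with $j\ge 1$, leaving at most one admissible state per pair and contradicting the coefficient $j+2$ that you later write down. The correct picture is that orienting $r_1\to r_2$ is permitted but forces every undirected edge $y-r_2$ of $M_2$ to become directed in $M$; at most one such $y$ may point into $r_2$ (two would create a v-structure inside $V_{G_2}$ absent from $M_2$, violating the projection condition), and the choice of ``none'' versus ``which one'' is exactly what yields $j+1$ distinct MECs for that single orientation, hence $j+2$ in total. The multiplicity counts distinct v-structure sets of $M$, not orientations of the bridge edge, and can exceed three.

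This omission propagates into the verification you defer. Once an undirected edge at $r_2$ is forced to become $r_2\to y'$, the pattern $r_2\to y'-z$ is again forbidden, so orientations cascade down the entire undirected component of $M_2$ containing $r_2$; the paper's constructions make this explicit (initialize $M=M_1\cup M_2\cup\{r_1\to r_2\}$, then direct every undirected edge reachable from $r_2$ by an undirected path avoiding the chosen $y$). Your localization facts (i) and (ii) certify strong protection only for edges already directed in $M_1$ or $M_2$ and for the bridge edge; they say nothing about these cascaded edges, each of which needs its own protection witness ($w\to u\to v$ along the propagation path) and must be shown not to create unintended v-structures. Without constructing $M$ explicitly, verifying both conditions of \cref{thm:nes-and-suf-cond-for-tree-graph-to-be-an-MEC} for the cascaded orientations, and showing that distinct choices give distinct MECs while no other MEC projects to $(M_1,M_2)$, the table-driven enumeration you propose does not close.
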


\begin{algorithm}[ht]
\caption{Counting-MEC-Tree($G,r_1$)}
\label{alg:counting-MEC-of-tree}
\SetAlgoLined
\SetKwInOut{KwIn}{Input}
\SetKwInOut{KwOut}{Output}
\SetKwFunction{MEC-Construction}{MEC-Construction}
\KwIn{A tree graph $G$, a node $r_1\in V_G$}
    \KwOut{ $(n_1(G,r_1), n_0^0(G,r_1), n_0^1(G,r_1), \ldots, n_0^{\delta}(G,r_1))$, where $\delta$ is the degree of $r_1$ in $G$.}

    $\delta \leftarrow $ degree of $r_1$ in $G$ \label{alg:counting-MEC-of-tree:delta_intro}
    
    \If{$\delta$ = 0 \label{alg:countingMEC-tree-if-start}}{return (0,1) \label{alg:countingMEC-tree-base-case-return}}\label{alg:countingMEC-tree-if-end}
    
    Pick an edge $(r_1,r_2)\in E_G$. \label{alg:counting-MEC-of-tree:pick-edge-r1-r2}

    Cut the edge $(r_1, r_2)$. \label{alg:counting-MEC-of-tree:cut-edge-r1-r2}
    
    $G_1 \leftarrow$ subtree of $G$ containing $r_1$ after cutting the edge $(r_1,r_2)$ of $G$. \label{alg:counting-MEC-of-tree:G1-intro}
    
    $G_2 \leftarrow$ subtree of $G$ containing $r_2$ after cutting the edge $(r_1,r_2)$ of $G$. \label{alg:counting-MEC-of-tree:G2-intro}

    $\delta_1 \leftarrow$ degree of $r_1$ in $G_1$ \label{alg:counting-MEC-of-tree:delta-1-intro}
    
    $\delta_2 \leftarrow$ degree of $r_2$ in $G_2$ \label{alg:counting-MEC-of-tree:delta-2-intro}
    
    $(b_1, c_1^0, c_1^1, \ldots, c_1^{\delta_1}) \leftarrow$ Counting-MEC-Tree($G_1,r_1)$ \label{alg:counting-MEC-of-tree:count-MEC-call-for-G1}
    
     $(b_2, c_2^0, c_2^1, \ldots, c_2^{\delta_2}) \leftarrow$ Counting-MEC-Tree($G_2,r_2)$ \label{alg:counting-MEC-of-tree:count-MEC-call-for-G2}

     $N \leftarrow c_2^0+c_2^1+\ldots +c_2^{\delta_2}$ \label{alg:counting-MEC-of-tree:N-intro}
    
    $b \leftarrow b_1 \times [2\cdot b_2 + \sum_{j}{(j+2)\cdot c_2^j}]+ \sum_{i}{c_1^i\times[(i+1)\cdot b_2+i\cdot N]}$ \label{alg:counting-MEC-of-tree:b-intro}
    
    $c_0 \leftarrow c_1^0\times [b_2 + \sum_{j}{j\cdot c_2^j}]$ \label{alg:counting-MEC-of-tree:c0-intro}

    $i \leftarrow 1$ \label{alg:counting-MEC-of-tree:i-intro}

    \While{$i\leq \delta$ \label{alg:counting-MEC-of-tree:while-start}}
    {
        $c_i\leftarrow c_1^i \times [b_2 + \sum_{j}{j\cdot c_2^j}] + c_1^{i-1} \times N$ \label{alg:counting-MEC-of-tree:ci-intro}

        $i = i + 1$ \label{alg:counting-MEC-of-tree:i-update}
    }\label{alg:counting-MEC-of-tree:while-end}

    \KwRet $(b,  c_0, c_1, \ldots, c_{\delta})$ \label{alg:counting-MEC-of-tree:return}
\end{algorithm} 
\begin{proof}[Proof sketch of \cref{item-1-of-recursion} of \cref{lem:recursive-method-to-count-MECs-of-tree}]
From \cref{obs:there-exists-a-unique-projection}, for each MEC $M$ of $G$, $(M_1, M_2) = \mathcal{P}(M, V_{G_1}, V_{G_2})$ is unique. This implies that for each MEC $M \in \setofMECs{G, r_1, 1}$, there exists a unique pair $(M_1, M_2)$ such that $(M_1, M_2) = \mathcal{P}(M, V_{G_1}, V_{G_2})$. The pair $(M_1, M_2)$ can fall into four categories: either (a) $M_1 \in \setofMECs{G_1, r_1, 1}$ and $M_2 \in \setofMECs{G_2, r_2, 1}$, or (b) $M_1 \in \setofMECs{G_1, r_1, 1}$ and $M_2 \in \setofMECs{G_2, r_2, 0, j}$ for some $0 \leq j \leq \delta_2$, or (c) $M_1 \in \setofMECs{G_1, r_1, 0, i}$ for some $0 \leq i \leq \delta_1$ and $M_2 \in \setofMECs{G_2, r_2, 1}$, or (d) $M_1 \in \setofMECs{G_1, r_1, 0, i}$ for some $0 \leq i \leq \delta_1$ and $M_2 \in \setofMECs{G_2, r_2, 0, j}$ for some $0 \leq j \leq \delta_2$. For each possible pair $(M_1, M_2)$, we determine the number of MECs $M$ belonging to $\setofMECs{G, r_1, 1}$ such that $\mathcal{P}(M, V_{G_1}, V_{G_2}) = (M_1, M_2)$. By counting all such $M$, we obtain the required $n_1(G, r_1)$.

In the case of each pair of MECs $(M_1, M_2)$, where $M_1 \in \setofMECs{G_1, r_1, 1}$ and $M_2 \in \setofMECs{G_2, r_2, 1}$, two MECs $M$ exist, both of which belong to $\setofMECs{G, r_1, 1}$, with their projections on $V_{G_1}$ and $V_{G_2}$ being $M_1$ and $M_2$, respectively. One possible $M$ is the union of $M_1$, $M_2$, and $r_1\rightarrow r_2$, and the other possible $M$ is the union of $M_1$, $M_2$, and $r_2 \rightarrow r_1$. Verification of these $M$s regarding their status as an MEC is based on satisfying both \cref{item-1-of-thm:nes-and-suf-cond-for-tree-graph-to-be-an-MEC} and \cref{item-2-of-thm:nes-and-suf-cond-for-tree-graph-to-be-an-MEC} of \cref{thm:nes-and-suf-cond-for-tree-graph-to-be-an-MEC}.

Similarly, in the second scenario, for each pair of MECs $(M_1, M_2)$, where $M_1 \in \setofMECs{G_1, r_1, 1}$ and $M_2 \in \setofMECs{G_2, r_2, 0, j}$ for some $0 \leq j \leq \delta_2$, $j+2$ MECs $M$ exist belonging to $\setofMECs{G, r_1, 1}$ with projections on $V_{G_1}$ and $V_{G_2}$ as $M_1$ and $M_2$, respectively. 

For the third case, for each pair of MECs $(M_1, M_2)$ such that $M_1 \in \setofMECs{G_1, r_1, 0, i}$ for some $0 \leq i \leq \delta_1$ and $M_2 \in \setofMECs{G_2, r_2, 1}$, $i+1$ MECs $M$ belong to $\setofMECs{G, r_1, 1}$ with projections of $M$ on $V_{G_1}$ and $V_{G_2}$ as $M_1$ and $M_2$, respectively.

In the fourth scenario, for each pair of MECs $(M_1, M_2)$ such that $M_1 \in \setofMECs{G_1, r_1, 0, i}$ for some $0 \leq i \leq \delta_1$ and $M_2 \in \setofMECs{G_2, r_2, 0, j}$ for some $0 \leq j \leq \delta_2$, $i$ MECs $M$ belong to $\setofMECs{G, r_1, 1}$ with projections on $V_{G_1}$ and $V_{G_2}$ as $M_1$ and $M_2$, respectively.

Counting all the aforementioned possible $M$ gives us the required number of MECs belonging to $\setofMECs{G, r_1, 1}$, i.e., $n_1(G, r_1)$. This proves \cref{item-1-of-recursion} of \cref{lem:recursive-method-to-count-MECs-of-tree}.
\end{proof}

Proof of \cref{item-2a-of-recursion,item-2-of-recursion} of \cref{lem:recursive-method-to-count-MECs-of-tree} follows the same approach as the proof of \cref{item-1-of-recursion} of \cref{lem:recursive-method-to-count-MECs-of-tree}. The complete proof of \cref{lem:recursive-method-to-count-MECs-of-tree} is provided in the Supplementary Material.

We construct \cref{alg:counting-MEC-of-tree} to count the MECs of a tree graph. Lines \ref{alg:countingMEC-tree-if-start}--\ref{alg:countingMEC-tree-if-end} handle the base case when the degree of the root node in the tree graph $G$ is 0, i.e., when $G$ has only one node. In this scenario, only one MEC of $G$ is possible, which is $G$ itself. This MEC belongs to $\setofMECs{G, r, 0, 0}$ because in the MEC, no edge adjacent to $r$ is an incoming edge towards $r$, and no edge adjacent to $r$ is undirected. This implies that when $G$ has only one node, $n_1(G, r) = 0$, and $n_0^0(G, r) = 1$. Line \ref{alg:countingMEC-tree-base-case-return} reflects the same. The remaining lines correctly implement \cref{lem:recursive-method-to-count-MECs-of-tree}.

We now proceed to the second part of the paper, where we construct an algorithm that counts the MECs of a chordal graph.

\subsection{Counting MECs of a chordal graph}
\label{subsection:chordal-graph}
In this subsection, we address \cref{prob:counting-MEC} when $G$ is a chordal graph. We give an algorithm that for a chordal graph $G$, counts the MECs of $G$.
The idea of this algorithm comes from the above-discussed algorithm for counting MECs of a tree graph.
Let us first define an image of an MEC.

\begin{definition}
    \label{def:image-of-an-MEC}
    Let $M$ be an MEC of an undirected graph $G$. For $X \subseteq V_G$, $O$ is said to be an image of $M$ on $X$, if $O = M[X]$.
\end{definition}

The following corollaries come from \cref{def:image-of-an-MEC,obs:induced-subgraph-of-MEC-is-PMEC}.

\begin{corollary}
    \label{corr:MEC-has-a-unique-MEC-on-X}
    Let $M$ be an MEC of an undirected graph $G$. For any $X \subseteq V_G$, there exists a unique image of $M$ on $X$.
\end{corollary}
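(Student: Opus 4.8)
The statement to prove is \Cref{corr:MEC-has-a-unique-MEC-on-X}: for an MEC $M$ of an undirected graph $G$ and any $X \subseteq V_G$, there exists a unique image of $M$ on $X$.

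This is essentially immediate from the definitions, so the proof plan is short.

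\medskip

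\noindent\textbf{Proof plan.} The plan is to observe that the image of $M$ on $X$ is defined in \Cref{def:image-of-an-MEC} to be exactly the induced subgraph $M[X]$, so the entire content of the corollary reduces to the fact that the induced subgraph of a graph on a fixed vertex set is itself uniquely determined. First I would note that $M$ is a fixed graph (a specific MEC, hence a specific chain graph with a specific set of directed and undirected edges), and $X$ is a fixed subset of $V_G = V_M$. By the standard definition of induced subgraph, $M[X]$ has vertex set $X$ and edge set $\{(u,v) \in E_M : u,v \in X\}$, which is a single well-defined object once $M$ and $X$ are given. Hence there is exactly one graph $O$ with $O = M[X]$, which is by definition the unique image of $M$ on $X$.

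\medskip

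\noindent I would also briefly remark why this is consistent with \Cref{obs:induced-subgraph-of-MEC-is-PMEC} (which the corollary is said to follow from, alongside \Cref{def:image-of-an-MEC}): that observation guarantees the image $M[X]$ is in fact a partial MEC, so the notion of ``image'' is non-vacuous and lands in the intended class of objects; but existence and uniqueness of $M[X]$ as a graph do not even need that — they are purely set-theoretic. There is essentially no obstacle here: the only thing to be careful about is making explicit that ``image'' was \emph{defined} as $M[X]$ rather than characterized by some property that might a priori be satisfied by several graphs, so that uniqueness is not something requiring an argument (as it does, by contrast, for \emph{projection} in \Cref{obs:there-exists-a-unique-projection}, where uniqueness is a genuine statement). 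So the proof is a one- or two-line dereference of the definition.
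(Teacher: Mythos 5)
Your proposal is correct and matches the paper's own proof, which is the same one-line observation that the induced subgraph $M[X]$ is uniquely determined once $M$ and $X$ are fixed. The additional remarks about \Cref{obs:induced-subgraph-of-MEC-is-PMEC} and the contrast with projections are accurate but not needed.
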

\begin{proof}
    For $X\subseteq V_G$, the induced subgraph of $M$ on $X$ is unique.
\end{proof}

\begin{corollary}
    \label{corr:image-is-a-PMEC}
   Let $M$ be an MEC of an undirected graph $G$, and $O$ be an image of $M$ on $X \subseteq V_G$. Then, $O$ is a partial MEC of $G[X]$.
\end{corollary}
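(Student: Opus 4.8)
The plan is to observe that $O$ is simply an induced subgraph of the MEC $M$, and then to invoke \Cref{obs:induced-subgraph-of-MEC-is-PMEC} together with the elementary fact that taking skeletons commutes with taking induced subgraphs.

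First I would recall that, by \Cref{def:image-of-an-MEC}, $O = M[X]$, so $O$ is an induced subgraph of $M$. Since $M$ is an MEC, \Cref{obs:induced-subgraph-of-MEC-is-PMEC} immediately yields that $O$ is a partial MEC, i.e.\ $O$ obeys \cref{item-1-theorem-nec-suf-cond-for-MEC,item-2-theorem-nec-suf-cond-for-MEC,item-3-theorem-nec-suf-cond-for-MEC} of \Cref{thm:nes-and-suf-cond-for-chordal-graph-to-be-an-MEC}. It then remains only to verify the skeleton condition from \Cref{def:partial-MEC}, namely that $\skel{O} = G[X]$.

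For this, since $M$ is an MEC of $G$ we have $\skel{M} = G$. Replacing each directed edge $u \rightarrow v$ of a graph by the undirected edge $u - v$ clearly commutes with restricting the vertex set to $X$ (both operations act edge-by-edge and neither creates nor destroys vertices), so $\skel{M[X]} = (\skel{M})[X] = G[X]$. Combining this with the previous paragraph, $O$ is a partial MEC whose skeleton is $G[X]$, hence a partial MEC of $G[X]$, as claimed.

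There is essentially no obstacle here; the statement is a direct corollary of \Cref{def:image-of-an-MEC} and \Cref{obs:induced-subgraph-of-MEC-is-PMEC}. The only point that warrants an explicit line is the commutation $\skel{M[X]} = (\skel{M})[X]$, which follows straight from the definitions of skeleton and induced subgraph.
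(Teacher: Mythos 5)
Your proof is correct and follows essentially the same route as the paper: note that $O = M[X]$ is an induced subgraph of the MEC $M$ and apply \cref{obs:induced-subgraph-of-MEC-is-PMEC}. The only difference is that you spell out the skeleton identity $\skel{M[X]} = (\skel{M})[X] = G[X]$, which the paper leaves implicit; this is a harmless (and arguably welcome) extra line, not a change of approach.
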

\begin{proof}
    From \cref{obs:induced-subgraph-of-MEC-is-PMEC}, an induced subgraph of an MEC is a partial MEC. Since an image of an MEC is an induced subgraph of it, it is a partial MEC.
\end{proof}

We partition the set of MECs $M$ of $G$ based on its image. Let $\setofMECs{G, O}$ denotes the set of MECs $M$ of $G$ for which $O$ is an image, i.e., $M[V_O] = O$.

From \cref{corr:MEC-has-a-unique-MEC-on-X}, the image of an MEC $M$ of $G$ on $X \subseteq V_G$ is unique. Additionally, from \cref{corr:image-is-a-PMEC}, the image $O$ of an MEC $M$ of an undirected graph $G$ on $X$ is a partial MEC of $G[X]$. This implies that for each MEC $M$ of $G$, there exists a unique partial MEC $O$ of $G[X]$ such that $M \in \setofMECs{G, O}$. Therefore, the sum of $|\setofMECs{G, O}|$ over all partial MECs $O$ of $G[X]$ is $|\setofMECs{G}|$, i.e., for any $X \subseteq V_G$,

\begin{equation}
    \label{eq:summation-of-PMECs}
    |\setofMECs{G}| = \sum_{O\in\setofpartialMECs{G[X]}}{|\setofMECs{G, O}|}
\end{equation}

\Cref{eq:summation-of-PMECs} reduces the problem of counting MECs of $G$ into \cref{problem-2}
\begin{problem}
\label{problem-2}
    \textbf{Input:} An undirected connected chordal graph $G$, and $X\subseteq V_G$.
    \textbf{Output:} Compute $|\setofMECs{G, O}|$ for each partial MEC $O$ of $G[X]$.    
\end{problem}

For any chordal graph $G$, we first fix an $X$, and then give a recursive method to solve \cref{problem-2}. In this way, using \cref{eq:summation-of-PMECs}, we able to compute $|\setofMECs{G}|$.
 For this, we first construct a clique tree representation (\cref{def:clique-tree}) $T$ of the chordal graph $G$. 
 We arbitrarily pick a node $r_1$ as a root node of $T$. 
We give a recursive method to solve \cref{problem-2} for $X = r_1\cup N(r_1, G)$.
 
 We arbitrarily pick a neighbor $r_2$ of $r_1$ in $T$. Cut the edge $r_1-r_2$ in $T$. This gives us two induced subgraphs $T_1$ and $T_2$ of $T$ such that $r_1$ and $r_2$ are the root nodes of $T_1$ and $T_2$, respectively.  For $a\in \{1,2\}$, let $G_a$ be the induced subgraph of $G$ that is represented by $T_a$. 
 We show that if for each $a\in \{1,2\}$, for each $O_a \in \setofpartialMECs{G_a[r_a\cup N(r_a, G_a)]}$, we have the knowledge of $|\setofMECs{G_a, O_a}|$ then for each $O \in \setofpartialMECs{G[r_1\cup N(r_1, G)]}$, we can compute $|\setofMECs{G, O}|$. 

 We start with finding a relation between the MECs of $G$, $G_1$, and $G_2$.
   The induced subgraphs $G_1$ and $G_2$ of $G$ has the following properties: (a) $G_1 \cup G_2 = G$, and (b) $I = V_{G_1} \cap V_{G_2} = r_1 \cap r_2$ is a vertex separator of  $G$ that separates $V_{G_1}\setminus{I}$ and $V_{G_2}\setminus I$ (\cite{blair1993introduction}). 
 Let $M, M_1$ and $M_2$ be the MECs of, respectively, $G, G_1$ and $G_2$ such that $\mathcal{P}(M, V_{G_1}, V_{G_2}) = (M_1, M_2)$. 
The following observation provides necessary conditions obeyed by images of $M$, $M_1$, and $M_2$.
\begin{observation}
\label{obs:nes-cond-for-chordal-graph}
Let $G$ be a chordal graph, and $G_1$ and $G_2$ be two induced subgraphs of $G$ such that $G = G_1\cup G_2$, and $I = V_{G_1}\cap V_{G_2}$ is a vertex separator of $G$ that separates $V_{G_1}\setminus I$ and $V_{G_2}\setminus I$. Let $r_1$ and $r_2$ be cliques of, respectively, $G_1$ and $G_2$ such that $r_1\cap r_2 = I$. Let $X_1 = r_1 \cup N(r_1, G_1)$, $X_2 = r_2 \cup N(r_2, G_2)$, and $X = r_1 \cup r_2 \cup N(r_1\cup r_2, G)$. Let $M$, $M_1$ and $M_2$ be MECs of $G$, $G_1$, and $G_2$, respectively. Let $O = M[X]$, $O_1 = M_1[X_1]$, and $O_2 = M_2[X_2]$. If $(M_1, M_2) = \mathcal{P}(M, V_{G_1}, V_{G_2})$ then
    \begin{enumerate}
        \item
        \label{item-1-of-obs:nes-cond-for-chordal-graph}
        For $a\in \{1,2\}$, if $u\rightarrow v \in O_a$ then $u\rightarrow v \in O$.
        \item
        \label{item-2-of-obs:nes-cond-for-chordal-graph} For $a \in \{1,2\}$, $\mathcal{V}(O_a) = \mathcal{V}(O[V_{O_a}])$.
        \item
        \label{item-3-of-obs:nes-cond-for-chordal-graph}
        For $a \in \{1,2\}$, if $u -v \in O_a$ and $u\rightarrow v \in O$ then $u\rightarrow v$ is strongly protected in $O$.
    \end{enumerate}
\end{observation}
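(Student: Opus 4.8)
The plan is to derive each of the three items directly from the definition of projection (\cref{def:projection}), the uniqueness and basic properties of projections and images already recorded in the excerpt, and the structural facts about the vertex separator $I = r_1 \cap r_2$. Throughout, I will use that $(M_1, M_2) = \mathcal{P}(M, V_{G_1}, V_{G_2})$ means $\mathcal{V}(M_1) = \mathcal{V}(M[V_{G_1}])$ and $\mathcal{V}(M_2) = \mathcal{V}(M[V_{G_2}])$, and that $O = M[X]$, $O_a = M_a[X_a]$ are the images of the respective MECs (note $X_a \subseteq V_{G_a}$ and $X_a \subseteq X$, which I would verify first from the definitions of $X$, $X_1$, $X_2$: indeed $r_a \cup N(r_a, G_a) \subseteq r_1 \cup r_2 \cup N(r_1 \cup r_2, G)$ since $G_a$ is an induced subgraph of $G$).

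\emph{Item \ref{item-1-of-obs:nes-cond-for-chordal-graph}.} Suppose $u \rightarrow v \in O_a = M_a[X_a]$. Since $M_a[X_a]$ is an induced subgraph of $M_a$, we get $u \rightarrow v \in M_a$. Now $M_a$ is a projection of $M$ (on $V_{G_a}$), so by \cref{lem:directed-edge-is-same-in-projected-MEC}, $u \rightarrow v \in M$. Because $u, v \in X_a \subseteq X$, the induced subgraph $O = M[X]$ inherits this orientation, so $u \rightarrow v \in O$.

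\emph{Item \ref{item-2-of-obs:nes-cond-for-chordal-graph}.} Fix $a$. On one hand, $\mathcal{V}(O_a) = \mathcal{V}(M_a[X_a])$, and since taking v-structures of an induced subgraph just restricts to triples all lying in the subgraph, $\mathcal{V}(M_a[X_a]) = \{\, \sigma \in \mathcal{V}(M_a) : \text{all three nodes of } \sigma \text{ lie in } X_a \,\}$. By the projection hypothesis $\mathcal{V}(M_a) = \mathcal{V}(M[V_{G_a}])$, and again restricting to triples inside $X_a$ (using $X_a \subseteq V_{G_a}$), this equals $\{\, \sigma \in \mathcal{V}(M[V_{G_a}]) : \text{all three nodes in } X_a\,\} = \mathcal{V}(M[X_a])$. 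On the other hand $\mathcal{V}(O[V_{O_a}]) = \mathcal{V}(M[X][X_a]) = \mathcal{V}(M[X_a])$ since $V_{O_a} = X_a \subseteq X$. Hence $\mathcal{V}(O_a) = \mathcal{V}(O[V_{O_a}])$. The one point that needs care here is confirming $V_{O_a} = X_a$, which is immediate from $O_a = M_a[X_a]$, and that nested induced subgraphs compose, $M[X][X_a] = M[X_a]$ when $X_a \subseteq X$.

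\emph{Item \ref{item-3-of-obs:nes-cond-for-chordal-graph}.} Suppose $u - v \in O_a$ but $u \rightarrow v \in O$. Since $O = M[X]$ and $u,v \in X$, we have $u \rightarrow v \in M$; as $M$ is an MEC, \cref{thm:nes-and-suf-cond-for-chordal-graph-to-be-an-MEC}\ref{item-4-theorem-nec-suf-cond-for-MEC} says $u \rightarrow v$ is strongly protected in $M$, i.e. it participates in an induced subgraph of $M$ of one of the four forms in \cref{fig:strongly-protected-edge}, involving one extra vertex $w$ (and possibly $w'$). The key claim is that all the vertices of that witnessing configuration already lie in $X$, so the configuration survives in $O = M[X]$ and witnesses strong protection of $u \rightarrow v$ in $O$. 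To see this: $u, v \in X_a = r_a \cup N(r_a, G_a)$, and since $u - v \in O_a \subseteq M_a$ with $M_a$ a chain graph whose undirected components are cliques, $u$ and $v$ are adjacent, so at least one of $u, v$, say $u$, is a neighbor of $r_a$ or in $r_a$. Any witness vertex $w$ in the configurations of \cref{fig:strongly-protected-edge} is a neighbor (in $M$, hence in $G$) of $u$ or of $v$; I would argue, using that $N(r_1 \cup r_2, G) \subseteq X$ together with the clique-separator structure (a neighbor of a node of $r_a$ is either in $r_a \cup N(r_a, G_a) \subseteq X$ or is separated away by $I$, forcing it into $r_a$ or its neighborhood), that $w, w' \in X$.

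The main obstacle is precisely this last step of Item \ref{item-3-of-obs:nes-cond-for-chordal-graph}: showing that the extra vertices $w$ (and $w'$) of the strong-protection witness in $M$ cannot escape the set $X = r_1 \cup r_2 \cup N(r_1 \cup r_2, G)$. This is where the hypotheses that $r_1, r_2$ are \emph{cliques} with $r_1 \cap r_2 = I$ and that $I$ is a \emph{vertex separator} of $G$ must be used essentially: a witness vertex adjacent to $u$ or $v$ (which sit in $r_a \cup N(r_a,G_a)$) either lies on the $G_a$-side — where I must bound its distance from $r_a$ using that $r_a$ is a maximal clique in the clique-tree setting — or lies on the other side, in which case the separator $I$ forces it to be a neighbor of a node of $I \subseteq r_1 \cap r_2$ and hence in $N(r_1 \cup r_2, G) \subseteq X$. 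I would handle the on-side case by a short case analysis over the four configurations in \cref{fig:strongly-protected-edge}, noting that in each, $w$ (and $w'$) is adjacent to $v$, and $v \in N(r_a, G_a)$ or $v \in r_a$, so $w$ is within the closed neighborhood of $r_a$; the clique-intersection property of the clique tree then pins it inside $X$. Once all witness vertices are confined to $X$, the configuration is an induced subgraph of $M[X] = O$, completing the proof.
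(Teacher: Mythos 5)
Your treatment of items \ref{item-1-of-obs:nes-cond-for-chordal-graph} and \ref{item-2-of-obs:nes-cond-for-chordal-graph} is correct and amounts to an elementary, self-contained version of what the paper obtains by citing items 1 and 2 of \cref{lem:sharma2023results}: for the first, the chain $O_a\subseteq M_a$, then \cref{lem:directed-edge-is-same-in-projected-MEC}, then $X_a\subseteq X$; for the second, the fact that taking v-structures commutes with passing to induced subgraphs.

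Item \ref{item-3-of-obs:nes-cond-for-chordal-graph}, however, has a genuine gap, and the strategy itself does not work. You argue that $u\rightarrow v\in O$ gives $u\rightarrow v\in M$, which is strongly protected in $M$, and then try to show that the witnessing configuration lies inside $X$ so that it survives in $O=M[X]$. First, the containment claim fails: when $u,v\in N(r_a,G_a)\setminus r_a$, a witness $w$ of configuration (a) or (b) is adjacent to $u$ or $v$ but need not be adjacent to any vertex of $r_1\cup r_2$; it can sit at distance two from $r_a$ and hence outside $X=r_1\cup r_2\cup N(r_1\cup r_2,G)$. Nothing in the hypotheses pins such a $w$ inside $X$ --- the $r_a$ are merely cliques, and your appeal to ``the clique-intersection property of the clique tree'' invokes structure that the observation does not assume. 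Second, and more tellingly, your argument never actually uses the hypothesis $u-v\in O_a$: if it were valid it would show that \emph{every} directed edge of $M[X]$ is strongly protected in $M[X]$, i.e., that $M[X]$ is always an MEC, whereas the paper only guarantees (and only needs) that $M[X]$ is a partial MEC (\cref{corr:image-is-a-PMEC}); the stronger statement is false in general. The paper instead routes through item 3 of \cref{lem:sharma2023results}, which characterizes \emph{why} an edge undirected in $O_a$ becomes directed in $O$ --- either it is already strongly protected in $O$, or its orientation is forced along a triangle-free path starting from another directed edge of $O_a$ --- and then proves (\cref{obs:cp-in-M1-is-a-cp-in-O1} in the supplement) that in a chordal graph such a path between two vertices of $r_a\cup N(r_a,G_a)$ stays entirely inside $r_a\cup N(r_a,G_a)\subseteq X$; all its edges are directed in $M$, and its final two edges (with a short extra analysis for case (3c)) yield a strong-protection witness already contained in $O$. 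That detour through the path-based characterization is precisely what replaces the unavailable ``the witness from $M$ stays in $X$'' step.
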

\begin{proof}[Proof of \cref{obs:nes-cond-for-chordal-graph}]
\Cref{obs:nes-cond-for-chordal-graph} follows the following result of \cite{sharma2023fixedparameter}.
\begin{lemma}[Lemma 3.2 of \cite{sharma2023fixedparameter}]
\label{lem:sharma2023results}
    Let $G$ be an undirected graph, and let $G_1$ and $G_2$ be two induced subgraphs of $G$ such that $G = G_1 \cup G_2$, and $I = V_{G_1} \cap V_{G_2}$ is a vertex separator of $G$ that separates $V_{G_1}\setminus I$ and $V_{G_2} \setminus I$. Let $r_1$ and $r_2$ be subsets of, respectively,  $V_{G_1}$ and $V_{G_2}$ such that $r_1\cap r_2 = I$. Let $M$ be an MEC of $G$, $\mathcal{P}(M, V_{G_1}, V_{G_2}) = (M_1, M_2)$, $O = M[r_1\cup r_2 \cup N(r_1\cup r_2, G)]$, $O_1 = M_1[r_1 \cup N(r_1, G_1)]$, and $O_2 = M_2[r_2 \cup N(r_2, G_2)]$. Then, for each $a\in \{1,2\}$,
    \begin{enumerate}
        \item 
        \label{item-1-of-lem:sharma2023results}
        If $u\rightarrow v \in O_a$ then $u\rightarrow v \in O$.
         \item
        \label{item-2-of-lem:sharma2023results}
        $\mathcal{V}(O_a) = \mathcal{V}(O[V_{O_a}])$.
         \item
        \label{item-3-of-lem:sharma2023results}
        If $u -v \in O_a$ then $u\rightarrow v \in O$ if, and only if, either of the following occurs:
         \begin{enumerate}
             \item
             \label{item-3a-of-lem:sharma2023results}
             $u\rightarrow v$ is strongly protected in $O$.
             \item
             \label{item-3b-of-lem:sharma2023results}
             there exists $x-y \in O_a$ such that $x\rightarrow y\in O$, and there exists a \tfp{} from $(x,y)$ to $(u,v)$ in $M_a$ with $(x,y) \neq (u,v)$.
             \item
             \label{item-3c-of-lem:sharma2023results}
             there exists $x-y \in O_a$ such that $x\rightarrow y \in O$, and there exist \tfps{} from $(x, y)$ to $v$, and from $(v,u)$ to $x$ with $y\neq v$ and $u\neq x$.
         \end{enumerate}
    \end{enumerate}
\end{lemma}

From the input constructions provided in \cref{obs:nes-cond-for-chordal-graph}, $G_1$ and $G_2$ are induced subgraphs of $G$ such that $G = G_1\cup G_2$, $I = V_{G_1} \cap V_{G_2}$ is a vertex separator of $G$ that separates $V_{G_1} \setminus I$ and $V_{G_2} \setminus I$, $r_1$ and $r_2$ are the subsets of, respectively, $V_{G_1}$ and $V_{G_2}$ (as they are cliques of, respectively, $G_1$ and $G_2$) such that $r_1\cap r_2 = I$. 
This shows that the input constraints of \cref{obs:nes-cond-for-chordal-graph} obey the input constraints of \cref{lem:sharma2023results}. Therefore, we can use the result of \cref{lem:sharma2023results} to prove our claims in \cref{obs:nes-cond-for-chordal-graph}.
\Cref{item-1-of-lem:sharma2023results,item-2-of-lem:sharma2023results} of 
     \cref{lem:sharma2023results} verifies, respectively, \cref{item-1-of-obs:nes-cond-for-chordal-graph,item-2-of-obs:nes-cond-for-chordal-graph} of \cref{obs:nes-cond-for-chordal-graph}. Correctness of \cref{item-3-of-obs:nes-cond-for-chordal-graph} of \cref{obs:nes-cond-for-chordal-graph} comes from the fact that when $G, G_1$, and $G_2$ are chordal, and $r_1$ and $r_2$ are cliques, \cref{item-3a-of-lem:sharma2023results,item-3b-of-lem:sharma2023results} of \cref{lem:sharma2023results} implies \cref{item-3a-of-lem:sharma2023results} of \cref{lem:sharma2023results}. This further implies that \cref{item-3-of-lem:sharma2023results} of \cref{lem:sharma2023results} implies \cref{item-3-of-obs:nes-cond-for-chordal-graph} of \cref{obs:nes-cond-for-chordal-graph}. The complete proof of \cref{item-3-of-obs:nes-cond-for-chordal-graph} of \cref{obs:nes-cond-for-chordal-graph} is provided in the Supplementary Material.
\end{proof}

We define the following to summarize the necessary conditions provided by \cref{obs:nes-cond-for-chordal-graph}.

\begin{definition}[\textbf{Extension}]
    \label{def:extension}
    Let $G$ be a chordal graph, and $G_1$ and $G_2$ be two induced subgraphs of $G$ such that $G = G_1\cup G_2$, and $I = V_{G_1}\cap V_{G_2}$ is a vertex separator of $G$ that separates $V_{G_1}\setminus I$ and $V_{G_2}\setminus I$. Let $r_1$ and $r_2$ be cliques of, respectively, $G_1$ and $G_2$ such that $r_1\cap r_2 = I$. Let $X_1 = r_1 \cup N(r_1, G_1)$, $X_2 = r_2 \cup N(r_2, G_2)$, and $X = r_1 \cup r_2 \cup N(r_1\cup r_2, G)$. Let $O \in \setofpartialMECs{G[X]}$, $O_1 \in \setofpartialMECs{G_1[X_1]}$, and $O_2 \in \setofpartialMECs{G_2[X_2]}$. We call $O$ as an extension of $O_1$ and $O_2$, denoted as $O \in \mathcal{E}(O_1, O_2)$, if
    \begin{enumerate}
        \item 
        \label{item-1-of-def:extensions}
        For $a \in \{1,2\}$, if $u\rightarrow v \in O_a$ then $u\rightarrow v \in O$.
         \item
        \label{item-2-of-def:extension}
        For $a \in \{1,2\}$, $\mathcal{V}(O_a) = \mathcal{V}(O[V_{O_a}])$.
         \item
        \label{item-3-of-def:extension}
        For $a \in \{1,2\}$, for  $u -v \in O_a$, if $u\rightarrow v \in O$ then $u\rightarrow v$ is strongly protected in $O$.
    \end{enumerate}
\end{definition}

Since  $O$, $O_1$ and $O_2$ used in \cref{obs:nes-cond-for-chordal-graph}  are partial MECs of, respectively, $G[X]$, $G_1[X_1]$, and $G_2[X_2]$ (from \cref{obs:induced-subgraph-of-MEC-is-PMEC}). And, they obey \cref{item-1-of-def:extensions,item-2-of-def:extension,item-3-of-def:extension} of \cref{def:extension}. Therefore, we can summarize \cref{obs:nes-cond-for-chordal-graph} as follows:

\begin{lemma}
    \label{lem:nes-cond-for-chordal-graph}
    Let $G$ be a chordal graph, and $G_1$ and $G_2$ be two induced subgraphs of $G$ such that $G = G_1\cup G_2$, and $I = V_{G_1}\cap V_{G_2}$ is a vertex separator of $G$ that separates $V_{G_1}\setminus I$ and $V_{G_2}\setminus I$. Let $r_1$ and $r_2$ be cliques of, respectively, $G_1$ and $G_2$ such that $r_1\cap r_2 = I$. Let $X_1 = r_1 \cup N(r_1, G_1)$, $X_2 = r_2 \cup N(r_2, G_2)$, and $X = r_1 \cup r_2 \cup N(r_1\cup r_2, G)$. Let $M$, $M_1$ and $M_2$ be MECs of $G$, $G_1$, and $G_2$, respectively. Let $O = M[X]$, $O_1 = M_1[X_1]$, and $O_2 = M_2[X_2]$. If $(M_1, M_2) = \mathcal{P}(M, V_{G_1}, V_{G_2})$ then $O \in \mathcal{E}(O_1, O_2)$.
\end{lemma}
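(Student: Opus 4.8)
The plan is to obtain \cref{lem:nes-cond-for-chordal-graph} as an immediate restatement of \cref{obs:nes-cond-for-chordal-graph}. The two have exactly the same hypotheses: the same splitting $G = G_1\cup G_2$, the same separator $I$, the same clique roots $r_1,r_2$ with $r_1\cap r_2 = I$, the same sets $X_1, X_2, X$, and the same relation $(M_1,M_2)=\mathcal{P}(M,V_{G_1},V_{G_2})$. Hence the only thing \cref{obs:nes-cond-for-chordal-graph} does not already say verbatim is the membership bookkeeping demanded by \cref{def:extension}, namely that $O$, $O_1$, $O_2$ actually lie in $\setofpartialMECs{G[X]}$, $\setofpartialMECs{G_1[X_1]}$, and $\setofpartialMECs{G_2[X_2]}$ respectively.

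So the first step is to invoke \cref{corr:image-is-a-PMEC} (equivalently \cref{obs:induced-subgraph-of-MEC-is-PMEC}): since $O=M[X]$, $O_1=M_1[X_1]$, $O_2=M_2[X_2]$ are images of MECs, they are induced subgraphs of MECs, hence partial MECs; and because their vertex sets are precisely $X$, $X_1$, $X_2$, they are partial MECs of $G[X]$, $G_1[X_1]$, $G_2[X_2]$ as \cref{def:extension} requires. The second step is then pure matching of conclusions to the definition: \cref{item-1-of-obs:nes-cond-for-chordal-graph} of \cref{obs:nes-cond-for-chordal-graph} supplies \cref{item-1-of-def:extensions} of \cref{def:extension}, \cref{item-2-of-obs:nes-cond-for-chordal-graph} supplies \cref{item-2-of-def:extension}, and \cref{item-3-of-obs:nes-cond-for-chordal-graph} supplies \cref{item-3-of-def:extension}. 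Together these yield $O\in\mathcal{E}(O_1,O_2)$, which is the lemma.

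Consequently there is essentially no obstacle at the level of \cref{lem:nes-cond-for-chordal-graph} itself; the real work has already been absorbed into \cref{obs:nes-cond-for-chordal-graph}, and within that the only delicate point is its third item. There one must argue that when $G,G_1,G_2$ are chordal and $r_1,r_2$ are cliques, the triangle-free-path alternatives (items \cref{item-3b-of-lem:sharma2023results} and \cref{item-3c-of-lem:sharma2023results}) of \cref{lem:sharma2023results} cannot certify a directed edge that is not already strongly protected, i.e.\ they collapse onto alternative \cref{item-3a-of-lem:sharma2023results}. The expected reason is that under these hypotheses any witnessing triangle-free path, combined with the clique structure of $r_a$ and the neighbourhood $r_a\cup N(r_a,G_a)$, already forces a configuration of one of the shapes in \cref{fig:strongly-protected-edge}. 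As the excerpt indicates, that reduction is carried out in the Supplementary Material; given it, \cref{lem:nes-cond-for-chordal-graph} is a one-line corollary.
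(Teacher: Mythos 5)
Your proposal is correct and matches the paper's own treatment exactly: the paper derives \cref{lem:nes-cond-for-chordal-graph} as a direct restatement of \cref{obs:nes-cond-for-chordal-graph}, using \cref{obs:induced-subgraph-of-MEC-is-PMEC} for the partial-MEC membership of $O$, $O_1$, $O_2$ and matching the observation's three items to the three conditions of \cref{def:extension}. Your identification of the reduction of \cref{item-3b-of-lem:sharma2023results,item-3c-of-lem:sharma2023results} to \cref{item-3a-of-lem:sharma2023results} as the only substantive step (deferred to the Supplementary Material) is also where the paper places the real work.
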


We now show that the necessary conditions that are obeyed by the partial MECs $O, O_1$, and $O_2$ in \cref{lem:nes-cond-for-chordal-graph} are sufficient to answer whether there exists an MEC $M \in \setofMECs{G, O}$ for any pair of MECs $(M_1, M_2) \in \setofMECs{G_1, O_1}\times \setofMECs{G_2, O_2}$ such that $\mathcal{P}(M, V_{G_1}, V_{G_2}) = (M_1, M_2)$.

\begin{lemma}
    \label{lem:suff-cond-for-chordal-graph}
    Let $G$ be a chordal graph, and $G_1$ and $G_2$ be two induced subgraphs of $G$ such that $G = G_1\cup G_2$, and $I = V_{G_1}\cap V_{G_2}$ is a vertex separator of $G$ that separates $V_{G_1}\setminus I$ and $V_{G_2}\setminus I$. Let $r_1$ and $r_2$ be cliques of, respectively, $G_1$ and $G_2$ such that $r_1\cap r_2 = I$. Let $X_1 = r_1 \cup N(r_1, G_1)$, $X_2 = r_2 \cup N(r_2, G_2)$, and $X = r_1 \cup r_2 \cup N(r_1\cup r_2, G)$. Let $O, O_1$, and $O_2$ be  partial MECs of, respectively, $G[X]$, $G_1[X_1]$, and $G_2[X_2]$ such that $O$ is an extension of $O_1$ and $O_2$, i.e., $O \in \mathcal{E}(O_1, O_2)$.
    Then, for each pair of MECs $(M_1, M_2) \in \setofMECs{G_1, O_1} \times \setofMECs{G_2, O_2}$, there exists a unique MEC $M \in \setofMECs{G, O}$ such that $\mathcal{P}(M, V_{G_1}, V_{G_2}) = (M_1, M_2)$. 
\end{lemma}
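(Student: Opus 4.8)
The plan is to reconstruct the unique candidate $M$ explicitly as a Markov union and then verify it is an MEC of $G$ with image $O$ on $X$, using the structural input in $\mathcal{E}(O_1,O_2)$.

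First I would fix a pair $(M_1,M_2)\in\setofMECs{G_1,O_1}\times\setofMECs{G_2,O_2}$ and argue that $M_1$ and $M_2$ are synchronous graphs in the sense of \cref{def:synchronous-graphs}: their common vertices lie in $I\subseteq r_1\cap r_2$, and on that clique the orientations of $M_1$ and $M_2$ are both determined by $O_1[I]=O[I]=O_2[I]$ via \cref{item-1-of-def:extensions} and \cref{item-2-of-def:extension} of \cref{def:extension} together with \cref{lem:directed-edge-is-same-in-projected-MEC}, so no edge is oriented one way in $M_1$ and the opposite way in $M_2$. (The key point is that every edge with both endpoints in $I$ already appears inside $O_1$ and $O_2$ as an induced subgraph, so $O_1$, $O_2$ agree on it because both equal the restriction of $O$.) Then define $M \defeq U_M(M_1,M_2)$ as in \cref{def:Markov-union-of-graphs}. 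Because $\skel{M_1}=G_1$, $\skel{M_2}=G_2$ and $G_1\cup G_2 = G$ (and the two skeletons agree on the shared part since both are the induced subgraph of $G$), $\skel{M}=G$. Uniqueness will then follow: any $M'\in\setofMECs{G,O}$ with $\mathcal{P}(M',V_{G_1},V_{G_2})=(M_1,M_2)$ must, by \cref{lem:directed-edge-is-same-in-projected-MEC} applied in both directions, have exactly the directed edges of $M_1$ on $V_{G_1}$ and of $M_2$ on $V_{G_2}$; since $I$ separates $V_{G_1}\setminus I$ from $V_{G_2}\setminus I$, there are no edges of $G$ outside $E_{G_1}\cup E_{G_2}$, so $M'$ is forced to be $U_M(M_1,M_2)=M$.

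Next I would check $M\in\setofMECs{G}$ by verifying the four conditions of \cref{thm:nes-and-suf-cond-for-chordal-graph-to-be-an-MEC}. Conditions \ref{item-2-theorem-nec-suf-cond-for-MEC} (every \ucc{} is chordal) is immediate since $G$ is chordal. For condition \ref{item-1-theorem-nec-suf-cond-for-MEC} (chain graph) and condition \ref{item-3-theorem-nec-suf-cond-for-MEC} (no induced $a\rightarrow b-c$): any such bad configuration or directed cycle would have to use vertices on both sides of the separator $I$, hence would pass through $I$, and then — because $I$ sits inside the cliques $r_1,r_2$ and the relevant neighbourhood vertices are captured in $X_1$, $X_2$, $X$ — its trace would already be visible inside $O$, $O_1$ or $O_2$; since those are partial MECs (\cref{def:partial-MEC}) they contain no such configuration, a contradiction. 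I would make this precise by a short case analysis on where the vertices of a hypothetical violating induced subgraph lie, repeatedly using that an induced subgraph on a set contained in one of $V_{G_1},V_{G_2}$ is just the corresponding $M_a$, whose relevant part is recorded in $O_a$. For condition \ref{item-4-theorem-nec-suf-cond-for-MEC} (strong protection of every directed edge $u\rightarrow v$ of $M$): if $u\rightarrow v$ lies entirely on one side, say in $M_1$, then either it was already directed in $O_1$ — in which case \cref{item-3-of-def:extension} (strong protection of edges of $O$ that are undirected in $O_a$) plus the fact that $u\rightarrow v\in O$ (by \cref{item-1-of-def:extensions}) and strong protection is witnessed by an induced subgraph on at most four vertices all within $X$, so it transfers to $M$ — or it was undirected in $O_1$ but directed in $M_1$, and then strong protection of $u\rightarrow v$ inside the MEC $M_1$ is witnessed by a configuration of \cref{fig:strongly-protected-edge}; I would argue this witness survives in $M$ because $U_M$ only adds directedness, never removes it, and never reverses edges (synchronicity), so any configuration (a)--(d) in $M_1$ remains valid in $M$.

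Finally I would verify $M[X]=O$. By construction $M$ restricted to $V_{O_1}\cup V_{O_2}\supseteq X$ has, on each $X_a$, exactly the orientations of $O_a$ (since $M_a[X_a]=O_a$ and $U_M$ preserves them), and \cref{item-2-of-def:extension} pins down the v-structures; the remaining edges inside $X$ that are undirected in both $O_1$ and $O_2$ stay undirected in $M$ unless forced, and \cref{item-3-of-def:extension} is exactly the statement that any such forced orientation is already present in $O$ — so $M[X]$ agrees with $O$ edge by edge.

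\textbf{The main obstacle} I expect is condition \ref{item-4-theorem-nec-suf-cond-for-MEC}, strong protection: I must show that the \emph{local} witness of strong protection recorded in $O$ (which only sees vertices in $X$) actually certifies strong protection of the edge in the \emph{global} graph $M$, and symmetrically that strong protection inside an MEC $M_1$ on the $G_1$-side is not destroyed when we glue in $M_2$. Both directions hinge on the facts that (i) a strong-protection witness is an induced subgraph on $\le 4$ vertices lying in a single clique-plus-neighbourhood region captured by some $X_\bullet$, and (ii) the Markov union $U_M$ is monotone in directedness and respects synchronicity, so no witnessing configuration can be spoiled. Pinning down exactly why every relevant 4-vertex neighbourhood is contained in $X_1$, $X_2$, or $X$ — i.e. that $r_1,r_2$ being cliques forces the "second neighbour" vertices of a strong-protection pattern to land in the recorded sets — is the delicate combinatorial core, and is presumably where the chordality and clique hypotheses (absent in the more general \cref{lem:sharma2023results}) are doing their real work, exactly as in the simplification noted after \cref{lem:sharma2023results}.
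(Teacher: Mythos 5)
There is a genuine gap, and it stems from a single misconception that infects both your construction and your uniqueness argument: you conflate the \emph{projection} $\mathcal{P}(M,V_{G_a})=M_a$ (defined via v-structures, \cref{def:projection}) with the \emph{induced subgraph} $M[V_{G_a}]$. \Cref{lem:directed-edge-is-same-in-projected-MEC} only gives one direction: a directed edge of the projection is directed in $M$. The converse fails --- an edge can be undirected in $M_a$ yet directed in $M$ (this is exactly the situation governed by \cref{item-3-of-def:extension} of \cref{def:extension} and \cref{item-3-of-lem:sharma2023results} of \cref{lem:sharma2023results}). Consequently your candidate $M=U_M(M_1,M_2)$ is wrong: if $O$ directs an edge $u\rightarrow v$ that is undirected in $O_1=M_1[X_1]$ (which \cref{def:extension} explicitly permits, requiring only that $u\rightarrow v$ be strongly protected in $O$), then $u-v$ is undirected in $M_1$ and in $M_2$, hence undirected in $U_M(M_1,M_2)$, so $M[X]\neq O$ and $M\notin\setofMECs{G,O}$. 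Your uniqueness argument fails for the same reason: $\mathcal{P}(M',V_{G_1},V_{G_2})=(M_1,M_2)$ does \emph{not} force $M'[V_{G_a}]=M_a$, so $M'$ is not ``forced to be $U_M(M_1,M_2)$.'' Indeed, if your argument were correct, each pair $(M_1,M_2)$ would yield exactly one MEC of $G$, which contradicts the paper's own tree-case counts (e.g.\ \cref{lem:counting-MECs-of-M1-M1} produces two MECs per pair); the whole point of parametrizing by $O$ is that many MECs share the same projection pair, distinguished by their image on $X$.

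The paper's construction repairs both defects: it starts from $U_M(M_1,M_2,O)$ --- including $O$ in the Markov union so that $O$'s orientations are imposed --- and then runs a second, essential propagation step (an LBFS ordering of each \ucc{} of $M_i$ meeting $r_i$, followed by Meek-rule-like orientation of edges leaving $r_i\cup N(r_i,G_i)$). This second step is also missing from your proposal: once $O$ directs $u\rightarrow v$ with $v-w\in M_1$ for some $w\notin X$, you get an induced $u\rightarrow v-w$ violating \cref{item-3-theorem-nec-suf-cond-for-MEC} of \cref{thm:nes-and-suf-cond-for-chordal-graph-to-be-an-MEC} unless the orientation is propagated deeper into $G_1$; no one-shot union can be a chain graph satisfying all four MEC conditions. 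Uniqueness in the paper is then proved correctly by showing $\mathcal{V}(M)=\mathcal{V}(M_1)\cup\mathcal{V}(M_2)\cup\mathcal{V}(O)$ and invoking the fact that an MEC is determined by its skeleton and v-structure set --- not by pinning down induced subgraphs. Your instincts about where the difficulty lies (strong protection, and why witnesses stay inside $X$, using chordality and the clique hypotheses) are sound, but they address the verification phase of a construction that, as defined, does not produce an element of $\setofMECs{G,O}$.
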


\begin{proof}[Proof of \cref{lem:suff-cond-for-chordal-graph}]
    Let $M_1 \in \setofMECs{G_1, O_1}$, and $M_2 \in \setofMECs{G_2, O_2}$. We construct an MEC $M$ such that $M \in \setofMECs{G, O}$ and $\mathcal{P}(M, V_{G_1}, V_{G_2}) = (M_1, M_2)$. Later we show the uniqueness of $M$.
    Our construction of $M$ starts with the Markov union (\cref{def:Markov-union-of-graphs}) of $M_1$ and $M_2$ and $O$. Since the precondition for the Markov union of graphs is that the graphs need to be pairwise synchronous, we demonstrate that $M_1$, $M_2$, and $O$ satisfy the pairwise synchronous condition.

    \begin{observation}
        \label{obs:M1-M2-and-O-are-synchronous}
        $M_1$, $M_2$, and $O$ are synchronous graphs.
    \end{observation}
    \begin{proof}
        From the construction, $V_{M_1}\cap V_O = r_1\cup N(r_1, G_1)$, and $V_{M_2}\cap V_O = r_2\cup N(r_2, G_2)$. Since $O$ is an extension of $O_1 = M_1[r_1\cup N(r_1, G_1)]$ and $O_2 = M_2[r_2\cup N(r_2, G_2)]$, from \cref{def:extension,def:synchronous-graphs}, $O$ and $M_1$, and $O$ and $M_2$ are pairwise synchronous. 

        Suppose $M_1$ and $M_2$ are not synchronous. Then, there must exist nodes $u,v \in V_{M_1}\cap V_{M_2} = r_1\cap r_2 = I$ such that $u\rightarrow v \in M_1$ and $v\rightarrow u \in M_2$. Since $O_1 = M_1[r_1\cup N(r_1, G_1)]$, and $O_2 = M_2[r_2\cup N(r_2, G_2)]$, $u\rightarrow v \in O_1$ and $v\rightarrow u \in O_2$. But, then $O$ cannot be an extension of $O$ as it must violate \cref{item-1-of-def:extensions} of \cref{def:extension}. This leads to a contradiction, as $O$ is an extension of $O_1$ and $O_2$. This implies $M_1$ and $M_2$ are synchronous. This completes the proof of \cref{obs:M1-M2-and-O-are-synchronous}.
    \end{proof}

    We construct the required MEC $M$ using the following steps:
    \begin{enumerate}
        \item
            \label{step-1-of-construction-of-M}
            Initialize $M = U_M(M_1, M_2, O)$, Markov union (\cref{def:Markov-union-of-graphs}) of $M_1$, $M_2$, and $O$.
        \item
            \label{step-2-of-construction-of-M}
            For each $i \in \{1,2\}$, for each \ucc{} $\mathcal{C}$ of $M_i$,
            if $V_{\mathcal{C}} \cap r_i \neq \emptyset$ then 
            we do the following:
        \begin{enumerate}
            \item
            \label{step-2-a-of-construction-of-M}
            Construct an LBFS ordering $\tau$ of $\mathcal{C}$ that starts with $V_{\mathcal{C}} \cap r_i$.
\item
            \label{step-2-b-of-construction-of-M}
            For $u-v \in \mathcal{C}$ such that $\tau(u) < \tau(v)$, and $v \in V_{\mathcal{C}}\setminus{r_i\cup N(r_i, G_i)}$, if $u-v \in M$ then replace $u-v$ with $u\rightarrow v$ in $M$ if one of the following occurs:
            \begin{enumerate}
                \item
            \label{step-2-b-a-of-construction-of-M}
            There exists an induced subgraph of the form $w\rightarrow u -v \in M$ such that $\tau(w) < \tau(v)$.
                \item
            \label{step-2-b-b-of-construction-of-M}
            There exists an induced subgraph of the form $u\rightarrow w\rightarrow v-u \in M$ such that $\tau(w) < \tau(v)$.
            \end{enumerate}
\end{enumerate}
    \end{enumerate}

    We now prove that $M$ is the required MEC. For this, we prove the following: (a) the skeleton of $M$ is $G$ (\cref{obs:skeleton-of-M-is-G}), (b) $M[V_O] = O$ (\cref{obs:O-is-an-induced-subgraph-of-M}),  (c) $M$ is an MEC (\cref{obs:M-is-an-MEC}), (d) $\mathcal{P}(M, V_{G_1}, V_{G_2}) = (M_1, M_2)$ (\cref{obs:M1-and-M2-are-projections-of-M}), (e) $M$ is unique (\cref{obs:M-is-unique}).
    We start with the first one.

    \begin{observation}
        \label{obs:skeleton-of-M-is-G}
        The skeleton of $M$ is $G$.
    \end{observation}
    \begin{proof}
        From the initialization step (step \ref{step-1-of-construction-of-M} of the construction of $M$), $\skel{M} = \skel{M_1} \cup \skel{M_2} \cup \skel{O} = G_1 \cup G_2 \cup G[r_1\cup r_2\cup N(r_1\cup r_2, G] = G$. At step \ref{step-2-of-construction-of-M} of the construction of $M$, we only update $M$ by adding an orientation to an undirected edge.  This does not change the skeleton. Therefore, the skeleton of $M$ is $G$.
    \end{proof}

    We now show the second one.

    \begin{observation}
        \label{obs:O-is-an-induced-subgraph-of-M}
        $M[V_O] = O$.
    \end{observation}
    \begin{proof}
        From \cref{obs:skeleton-of-M-is-G}, the skeleton of $M$ is $G$. This implies $\skel{M[V_O]} = G[r_1\cup r_2\cup N(r_1\cup r_2, G)] = \skel{O}$. Therefore, to prove $M[V_O] = O$, we only need to prove that if $u\rightarrow v \in O$ then $u\rightarrow v \in M$, and if $u-v \in O$ then $u-v \in M$.

        Suppose $u\rightarrow v \in O$. Then, from \cref{def:Markov-union-of-graphs}, $u\rightarrow v \in U_M(M_1, M_2, O)$. This implies at step \ref{step-1-of-construction-of-M} of the construction of $M$, we have $u\rightarrow v \in M$. Since, at the second step of the construction of $M$, we do not change any directed edge of $M$, $u\rightarrow v \in M$.

        Suppose $u-v \in O$. This implies $u,v \in V_O = r_1\cup r_2 \cup N(r_1\cup r_2, G)$. We first show that at step \ref{step-1-of-construction-of-M} of the construction of $M$, we have $u-v \in M$. 
        Since $r_1 \cap r_2$ is a vertex separator of $G$, either $u, v \in r_1\cap N(r_1, G_1)$ or $u, v  \in r_2 \cap N(r_1, G_1)$. If $u,v \in r_1\cup N(r_1, G_1)$ then from \cref{item-1-of-def:extensions} of \cref{def:extension}, $u-v \in O_1$. Since $O_1$ is an induced subgraph of $M_1$, $u-v \in M_1$. And, if $u,v \notin r_1\cup N(r_1, G_1)$ then $u-v \notin \skel{M_1}$.
        Similarly, if $u,v \in r_2\cup N(r_2, G_2)$ then from \cref{item-1-of-def:extensions} of \cref{def:extension}, $u-v \in O_2$. Since $O_2$ is an induced subgraph of $M_2$, $u-v \in M_2$. And, if $u,v \notin r_2\cup N(r_2, G_2)$ then $u-v \notin \skel{M_2}$.
        From \cref{def:Markov-union-of-graphs}, this implies $u-v \in U_M(M_1, M_2, O)$.
        This further implies that at  \cref{step-1-of-construction-of-M} of the construction of $M$, we have $u-v \in M$. Since \cref{step-2-of-construction-of-M} of the construction of $M$ does not change an undirected edge with endpoints in $r_1\cup r_2\cup N(r_1\cup r_2, G)$, this implies that $u-v \in M$.

        The above discussion implies that $O$ is an induced subgraph of $M$. This completes the proof of \cref{obs:O-is-an-induced-subgraph-of-M}.
    \end{proof}

    We now prove the third one.
    \begin{observation}
        \label{obs:M-is-an-MEC}
        $M$ is an MEC.
    \end{observation}
    \begin{proof}
        \Cref{thm:nes-and-suf-cond-for-chordal-graph-to-be-an-MEC} provides necessary and sufficient conditions for a graph to be an MEC. \Cref{obs:M-is-a-chain-graph,obs:ucc-of-M-is-chordal,obs:M-does-not-have-a->b-c,obs:directed-edges-of-M-are-sp} show that $M$ obeys all the conditions of
        \cref{thm:nes-and-suf-cond-for-chordal-graph-to-be-an-MEC}. 
               \begin{observation}
            \label{obs:M-is-a-chain-graph}
            $M$ is a chain graph.
        \end{observation}
        \begin{observation}
            \label{obs:ucc-of-M-is-chordal}
            The \uccs{} of $M$ are chordal.
        \end{observation}
        \begin{observation}
            \label{obs:M-does-not-have-a->b-c}
            $M$ does not have an induced subgraph of the form $a\rightarrow b-c$.
        \end{observation}
        \begin{observation}
            \label{obs:directed-edges-of-M-are-sp}
            The directed edges of $M$ are strongly protected.
        \end{observation}
        From \cref{thm:nes-and-suf-cond-for-chordal-graph-to-be-an-MEC}, \cref{obs:M-is-a-chain-graph,obs:ucc-of-M-is-chordal,obs:M-does-not-have-a->b-c,obs:directed-edges-of-M-are-sp} implies that $M$ is an MEC. Proofs of \cref{obs:M-is-a-chain-graph,obs:ucc-of-M-is-chordal,obs:M-does-not-have-a->b-c,obs:directed-edges-of-M-are-sp} are provided in the Supplementary material. This completes the proof of \cref{obs:M-is-an-MEC}.
    \end{proof}    

    The following observations show that $M_1$ and $M_2$ are the projections of $M$ on, respectively, $V_{G_1}$ and $V_{G_2}$, and $M$ is unique. Proofs are available in the Supplementary Material.

    \begin{observation}
        \label{obs:M1-and-M2-are-projections-of-M}
        $\mathcal{P}(M, V_{G_1}, V_{G_2}) = (M_1, M_2)$.
    \end{observation}

    \begin{observation}
        \label{obs:M-is-unique}
        There exists a unique $M \in \setofMECs{G, O}$ such that $\mathcal{P}(M, V_{G_1}, V_{G_2}) = (M_1, M_2)$.
    \end{observation}
    
      This completes the proof of \cref{lem:suff-cond-for-chordal-graph}. 
\end{proof}

\Cref{lem:formula-to-compute-MEC-G-O} uses \cref{lem:nes-cond-for-chordal-graph,lem:suff-cond-for-chordal-graph} to construct a formula that computes $|\setofMECs{G, O}|$ for any partial MEC $O \in \setofpartialMECs{G[r_1\cup r_2\cup N(r_1\cup r_2, G)]}$.

\begin{lemma}
    \label{lem:formula-to-compute-MEC-G-O}
    Let $G$ be a chordal graph, and $G_1$ and $G_2$ be two induced subgraphs of $G$ such that $G = G_1\cup G_2$, and $I = V_{G_1}\cap V_{G_2}$ is a vertex separator of $G$ that separates $V_{G_1}\setminus I$ and $V_{G_2}\setminus I$. Let $r_1$ and $r_2$ be cliques of, respectively, $G_1$ and $G_2$ such that $r_1\cap r_2 = I$. Let $X_1 = r_1 \cup N(r_1, G_1)$, $X_2 = r_2 \cup N(r_2, G_2)$, and $X = r_1 \cup r_2 \cup N(r_1\cup r_2, G)$. Then, for any $O \in \setofpartialMECs{G[X]}$,
    \begin{equation}
        \label{eq:lem:formula-to-compute-MEC-G-O}
        |\setofMECs{G, O}| = \sum_{\substack{O_1 \in \setofpartialMECs{G_1[X_1]}\\ O_2 \in \setofpartialMECs{G_2[X_2]}\\O \in \mathcal{E}(O_1, O_2)}}{|\setofMECs{G_1, O_1}| \times |\setofMECs{G_2, O_2}|}.
    \end{equation}
\end{lemma}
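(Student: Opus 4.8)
The plan is to exhibit a bijection between $\setofMECs{G,O}$ and the disjoint union
$\bigsqcup_{(O_1,O_2)}\setofMECs{G_1,O_1}\times\setofMECs{G_2,O_2}$, where $(O_1,O_2)$ ranges over all pairs with $O_1\in\setofpartialMECs{G_1[X_1]}$, $O_2\in\setofpartialMECs{G_2[X_2]}$, and $O\in\mathcal{E}(O_1,O_2)$. Once such a bijection is in hand, \cref{eq:lem:formula-to-compute-MEC-G-O} follows by counting, using only that the summands are pairwise disjoint. So this lemma is essentially a packaging of \cref{lem:nes-cond-for-chordal-graph} (necessity) and \cref{lem:suff-cond-for-chordal-graph} (sufficiency plus uniqueness) into a counting identity.

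First I would define the forward map. Given $M\in\setofMECs{G,O}$, let $(M_1,M_2)=\mathcal{P}(M,V_{G_1},V_{G_2})$, which is well defined and unique by \cref{obs:there-exists-a-unique-projection}, and set $O_1=M_1[X_1]$ and $O_2=M_2[X_2]$. By \cref{corr:image-is-a-PMEC}, $O_1$ and $O_2$ are partial MECs of $G_1[X_1]$ and $G_2[X_2]$, respectively, so $M_a\in\setofMECs{G_a,O_a}$ holds by construction; and by \cref{lem:nes-cond-for-chordal-graph}, $O\in\mathcal{E}(O_1,O_2)$. Hence $M\mapsto(M_1,M_2)$ indeed lands in the summand indexed by $(O_1,O_2)$. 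For injectivity, suppose $M,M'\in\setofMECs{G,O}$ have the same image $(M_1,M_2)$. Both satisfy $M[V_O]=O=M'[V_O]$ and $\mathcal{P}(M,V_{G_1},V_{G_2})=(M_1,M_2)=\mathcal{P}(M',V_{G_1},V_{G_2})$, so the uniqueness clause of \cref{lem:suff-cond-for-chordal-graph} forces $M=M'$.

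For surjectivity, I would take any admissible triple: partial MECs $O_1,O_2$ with $O\in\mathcal{E}(O_1,O_2)$ and a pair $(M_1,M_2)\in\setofMECs{G_1,O_1}\times\setofMECs{G_2,O_2}$. Since $O$, $O_1$, $O_2$ are partial MECs of $G[X]$, $G_1[X_1]$, $G_2[X_2]$ with $O\in\mathcal{E}(O_1,O_2)$, \cref{lem:suff-cond-for-chordal-graph} yields an MEC $M\in\setofMECs{G,O}$ with $\mathcal{P}(M,V_{G_1},V_{G_2})=(M_1,M_2)$. Its image under the forward map is $(M_1,M_2)$, sitting in the $(O_1,O_2)$ summand because $M_a[X_a]=O_a$ by the definition of $\setofMECs{G_a,O_a}$. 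Finally, to see the codomain is genuinely a disjoint union, note that for $O_1\ne O_1'$ we have $\setofMECs{G_1,O_1}\cap\setofMECs{G_1,O_1'}=\emptyset$ by \cref{corr:MEC-has-a-unique-MEC-on-X} (a fixed MEC of $G_1$ has a unique image on $X_1$), and likewise on the $G_2$ side; hence the products indexed by distinct pairs $(O_1,O_2)$ are pairwise disjoint. Combining the bijection with this disjointness gives $|\setofMECs{G,O}|=\sum_{O\in\mathcal{E}(O_1,O_2)}|\setofMECs{G_1,O_1}|\times|\setofMECs{G_2,O_2}|$, as claimed.

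The heavy lifting is done entirely by \cref{lem:nes-cond-for-chordal-graph} and \cref{lem:suff-cond-for-chordal-graph}, so I do not expect a genuine obstacle here; the one place to be careful is checking that the forward and backward maps are mutual inverses, i.e.\ that passing from $M$ to its projections and then taking images $M_a[X_a]$ recovers exactly the index $(O_1,O_2)$ used in \cref{lem:suff-cond-for-chordal-graph}, and that $O_1,O_2$ are legitimately partial MECs so that the sum on the right-hand side is indexed correctly. These are bookkeeping points rather than substantive difficulties.
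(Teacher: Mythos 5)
Your proposal is correct and matches the paper's own proof essentially step for step: the paper also defines the map $M\mapsto\mathcal{P}(M,V_{G_1},V_{G_2})$, establishes well-definedness via \cref{obs:there-exists-a-unique-projection}, \cref{corr:image-is-a-PMEC}, and \cref{lem:nes-cond-for-chordal-graph}, obtains bijectivity from the existence-and-uniqueness statement of \cref{lem:suff-cond-for-chordal-graph}, and derives disjointness of the summands from \cref{corr:MEC-has-a-unique-MEC-on-X}. No substantive differences.
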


Using \cref{lem:formula-to-compute-MEC-G-O}, knowing $|\setofMECs{G_1, O_1}|$ for each $O_1 \in \setofpartialMECs{G[r_1\cup N(r_1, G_1)]}$, and $|\setofMECs{G_2, O_2}|$ for each $O_2 \in \setofpartialMECs{G[r_2\cup N(r_2, G_2)]}$, we can compute $|\setofMECs{G, O}|$ for each $O \in \setofpartialMECs{G[r_1\cup r_2 \cup N(r_1\cup r_2, G]}$. 
But, for the recursion, we require to compute $|\setofMECs{G, O'}|$ for each $O' \in \setofpartialMECs{G[r_1\cup N(r_1, G)]}$.

We use the following lemma to evaluate $|\setofMECs{G, O'}|$ for any $O' \in \setofpartialMECs{G[r_1\cup N(r_1, G)]}$, when we have the knowledge of $|\setofMECs{G, O}|$ for all $O \in \setofpartialMECs{G[r_1\cup r_2\cup N(r_1 \cup r_2, G)]}$.
\begin{lemma}
    \label{lem:computation-of-MEC-G-O'}
    Let $G$ be a chordal graph, and $X'\subseteq X \subseteq V_G$. Then, for any $O' \in \setofpartialMECs{G[X]}$,
    \begin{equation}
        \label{eq:lem:formula-to-compute-MEC-G-O'}
        |\setofMECs{G, O'}| = \sum_{\substack{O \in \setofpartialMECs{G[X]}\\ O' = O[X']}}{|\setofMECs{G, O}| }.
    \end{equation}
\end{lemma}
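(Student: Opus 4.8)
The plan is to show that the sets $\setofMECs{G, O}$, as $O$ ranges over those partial MECs of $G[X]$ satisfying $O[X'] = O'$, form a partition of $\setofMECs{G, O'}$; the identity on cardinalities then follows immediately. First I would unwind the definitions: by \cref{def:image-of-an-MEC}, $M \in \setofMECs{G, O'}$ means $M[X'] = O'$, and $M \in \setofMECs{G, O}$ means $M[X] = O$. So the claim reduces to the statement that for every MEC $M$ of $G$, we have $M[X'] = O'$ if and only if there is exactly one $O \in \setofpartialMECs{G[X]}$ with $O[X'] = O'$ and $M[X] = O$.

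The key steps, in order: (1) Take $M \in \setofMECs{G, O'}$ and set $O := M[X]$. By \cref{obs:induced-subgraph-of-MEC-is-PMEC}, $O$ is a partial MEC of $G[X]$ since it is an induced subgraph of the MEC $M$, so $O \in \setofpartialMECs{G[X]}$. (2) Check that $O[X'] = O'$: since $X' \subseteq X$, taking the induced subgraph on $X'$ is transitive, i.e. $O[X'] = (M[X])[X'] = M[X'] = O'$; hence $M$ is counted in the term indexed by this $O$ on the right-hand side. (3) Conversely, if $M$ is counted in the term for some $O \in \setofpartialMECs{G[X]}$ with $O[X'] = O'$, then $M[X] = O$, and again by transitivity of taking induced subgraphs $M[X'] = (M[X])[X'] = O[X'] = O'$, so $M \in \setofMECs{G, O'}$. (4) Uniqueness of the $O$ associated to a given $M$: if $M[X] = O$ and $M[X] = O''$ then $O = O''$ (the induced subgraph $M[X]$ is a single well-defined graph), so the sets $\{\setofMECs{G, O}\}$ appearing in the sum are pairwise disjoint. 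Combining (1)–(4), $\setofMECs{G, O'} = \bigsqcup_{O \in \setofpartialMECs{G[X]},\, O[X'] = O'} \setofMECs{G, O}$, and taking cardinalities yields \cref{eq:lem:formula-to-compute-MEC-G-O'}.

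The only point needing a hair of care is the transitivity of the induced-subgraph operation, $(M[X])[X'] = M[X']$ for $X' \subseteq X$, but this is immediate from the definition of induced subgraph (both vertex sets equal $X'$, and an edge, directed or undirected, is present in either graph exactly when both endpoints lie in $X'$ and the edge is present in $M$). So I do not expect any real obstacle here; the lemma is essentially a bookkeeping statement analogous to \cref{eq:summation-of-PMECs}, just with $G[X]$ in place of $G$ and the coarser partition induced by the map $O \mapsto O[X']$. If one wanted to be fully rigorous about "$O$ is a partial MEC," one notes that \cref{obs:induced-subgraph-of-MEC-is-PMEC} is stated for MECs $G$, but it applies verbatim to $M$ here since $M$ is an MEC; no chordality or separator hypothesis is used, which is why the statement is clean and $X, X'$ are arbitrary.
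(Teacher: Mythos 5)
Your proof is correct and follows essentially the same route as the paper's: both establish that $\setofMECs{G,O'}$ is the disjoint union of the sets $\setofMECs{G,O}$ over partial MECs $O$ of $G[X]$ with $O[X']=O'$, proving the two inclusions via $O:=M[X]$ (a partial MEC by \cref{obs:induced-subgraph-of-MEC-is-PMEC}/\cref{corr:image-is-a-PMEC}) and transitivity of induced subgraphs, and obtaining disjointness from the uniqueness of the image $M[X]$ (\cref{corr:MEC-has-a-unique-MEC-on-X}). No gaps.
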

\begin{proof}[Proof sketch of \cref{lem:computation-of-MEC-G-O'}]
    For any MEC $M \in \setofMECs{G, O'}$, there exists $O = M[X]$. Since $X' \subseteq X$, $O' = O[X']$. And, for any MEC $M$ with induced subgraph $O$ such that $O'$ is an induced subgraph of $O$, $O'$ is also an induced subgraph of $M$, i.e., $M \in \setofMECs{G, O'}$.
\end{proof}

\Cref{lem:formula-to-compute-MEC-G-O,lem:computation-of-MEC-G-O'} imply the following:
\begin{lemma}
    \label{lem:computation-of-MEC-G-O'-2}
    Let $G$ be a chordal graph, and $G_1$ and $G_2$ be two induced subgraphs of $G$ such that $G = G_1\cup G_2$, and $I = V_{G_1}\cap V_{G_2}$ is a vertex separator of $G$ that separates $V_{G_1}\setminus I$ and $V_{G_2}\setminus I$. Let $r_1$ and $r_2$ be cliques of, respectively, $G_1$ and $G_2$ such that $r_1\cap r_2 = I$. Let $X_1 = r_1 \cup N(r_1, G_1)$, $X_2 = r_2 \cup N(r_2, G_2)$, $X = r_1 \cup r_2 \cup N(r_1\cup r_2, G)$, and $X' = r_1 \cup N(r_1, G)$. Then, for any $O' \in \setofpartialMECs{G[X']}$,
    \begin{equation}
        \label{eq:lem:computation-of-MEC-G-O'-2}
        |\setofMECs{G, O'}| = \sum_{\substack{O_1 \in \setofpartialMECs{G_1[X_1]}\\ O_2 \in \setofpartialMECs{G_2[X_2]}\\O \in \setofpartialMECs{G[X]}\\O \in \mathcal{E}(O_1, O_2)\\ O' = O[X]}}{|\setofMECs{G_1, O_1}| \times |\setofMECs{G_2, O_2}|}.
    \end{equation}
\end{lemma}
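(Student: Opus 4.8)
The plan is to derive the identity by composing \cref{lem:computation-of-MEC-G-O'} with \cref{lem:formula-to-compute-MEC-G-O}, each instantiated at the sets named in the statement.

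First I would check that the inclusion required by \cref{lem:computation-of-MEC-G-O'} holds for $X' = r_1 \cup N(r_1, G)$ and $X = r_1 \cup r_2 \cup N(r_1 \cup r_2, G)$: since $r_1 \subseteq r_1\cup r_2$ and every $G$-neighbor of a node of $r_1$ is a $G$-neighbor of a node of $r_1\cup r_2$, we have $N(r_1,G) \subseteq N(r_1\cup r_2, G)$, hence $X' \subseteq X$. Applying \cref{lem:computation-of-MEC-G-O'} to the pair $(X', X)$ gives, for every $O' \in \setofpartialMECs{G[X']}$,
\[
|\setofMECs{G, O'}| \;=\; \sum_{\substack{O \in \setofpartialMECs{G[X]}\\ O' = O[X']}} |\setofMECs{G, O}| .
\]

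Next I would substitute, into each inner summand, the formula for $|\setofMECs{G, O}|$ supplied by \cref{lem:formula-to-compute-MEC-G-O}; this is legitimate because the hypotheses assumed here on $G, G_1, G_2, I, r_1, r_2$ and on $X_1, X_2, X$ are precisely those of that lemma. This turns the right-hand side into a nested sum: outermost over $O \in \setofpartialMECs{G[X]}$ with $O' = O[X']$, and inside it over pairs $O_1 \in \setofpartialMECs{G_1[X_1]}$, $O_2 \in \setofpartialMECs{G_2[X_2]}$ with $O \in \mathcal{E}(O_1, O_2)$, the summand being $|\setofMECs{G_1, O_1}|\times|\setofMECs{G_2, O_2}|$. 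Flattening the nested sums into a single sum over triples $(O_1, O_2, O)$ constrained by $O_1 \in \setofpartialMECs{G_1[X_1]}$, $O_2 \in \setofpartialMECs{G_2[X_2]}$, $O \in \setofpartialMECs{G[X]}$, $O \in \mathcal{E}(O_1, O_2)$, and $O' = O[X']$ yields exactly \cref{eq:lem:computation-of-MEC-G-O'-2}.

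The argument is pure bookkeeping, so there is no genuine obstacle; the only steps needing a word of justification are (i) the inclusion $X' \subseteq X$, verified above, so that \cref{lem:computation-of-MEC-G-O'} applies, and (ii) the fact that $O[X']$ is again a partial MEC of $G[X']$ — which holds because $O$ is an induced subgraph of some MEC $M$ of $G$, hence so is $O[X']$, and \cref{obs:induced-subgraph-of-MEC-is-PMEC} then guarantees it is a partial MEC — so that the index $O'$ in \cref{lem:computation-of-MEC-G-O'} indeed ranges over $\setofpartialMECs{G[X']}$ as claimed.
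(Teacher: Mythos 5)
Your proposal is correct and follows exactly the paper's own proof: apply \cref{lem:computation-of-MEC-G-O'} with $X'\subseteq X$ to expand $|\setofMECs{G,O'}|$ as a sum over $O$ with $O'=O[X']$, then substitute \cref{lem:formula-to-compute-MEC-G-O} for each $|\setofMECs{G,O}|$ and flatten. The two side remarks you add (verifying $X'\subseteq X$ and that $O[X']$ is a partial MEC) are harmless extra care that the paper leaves implicit.
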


\subsubsection{Algorithm for counting MECs of a chordal graph}
\label{sec:algorithm-for-counting-MECs}
We use \cref{lem:computation-of-MEC-G-O'-2} to construct a recursive algorithm (\cref{alg:counting-MEC-chordal-tree-decommposition}) that for inputs $G$, a clique tree representation $T$ of $G$, and a node $r_1$ of $T$, outputs a function $f$ such that for $O \in \setofpartialMECs{G[r_1\cup N(r_1, G)]}$, $f(O) = |\setofMECs{G, O}|$. \Cref{lem:proof-of-correctness-of-count-MEC-algorithm} provides the correctness of \cref{alg:counting-MEC-chordal-tree-decommposition}.

\begin{algorithm}[H]
\caption{Count($G, T, r_1$)}
\label{alg:counting-MEC-chordal-tree-decommposition}
\SetAlgoLined
\SetKwInOut{KwIn}{Input}
\SetKwInOut{KwOut}{Output}
\SetKwFunction{MEC-Construction}{MEC-Construction}
\KwIn{A connected chordal graph $G$, a clique tree representation $T$ of $G$, a node $r_1\in V_T$.}
    \KwOut{A function $f: \setofpartialMECs{G[r_1\cup N(r_1, G_1)]} \rightarrow  \mathbb{Z}$ such that for $O \in \setofpartialMECs{G[r_1\cup N(r_1, G)]}$, $f(O) = |\setofMECs{G,O}|$.}

    Create a function $f: \setofpartialMECs{G[r_1\cup N(r_1, G_1)]} \rightarrow  \mathbb{Z}$ \label{alg:counting-MEC-chordal-tree-decommposition-creating-function-f}

    \ForEach{$O \in \setofpartialMECs{G[r_1\cup N(r_1, G_1)]}$ \label{alg:counting-MEC-chordal-tree-decommposition-foreach-1-start}}
    {
        $f(O) \leftarrow 0$ \label{alg:counting-MEC-chordal-tree-decommposition-f-O-init}
    }\label{alg:counting-MEC-chordal-tree-decommposition-for-each-1-end}
    
   \If{degree of $r_1$ in $T$ = 0 \label{alg:counting-MEC-chordal-tree-decommposition-if-degree-is-1}}
   {
        $f(G) =  1$ \label{alg:counting-MEC-chordal-tree-decommposition:inside-if-case}
        
       \KwRet $f$ \label{alg:counting-MEC-chordal-tree-decommposition-return-inside-if-case}
   }\label{alg:counting-MEC-chordal-tree-decommposition-if-end}
   
   Pick an arbitrary edge $r_1-r_2 \in E_T$ \label{alg:counting-MEC-chordal-tree-decommposition-pick-edge-r1-r2}
   
   Cut the edge $r_1-r_2$ of $T$ \label{alg:counting-MEC-chordal-tree-decommposition-cut-edge}
   
   $T_1 \leftarrow$ induced subtree of $T$ containing $r_1$ \label{alg:counting-MEC-chordal-tree-decommposition-T1-definition}
   
   $T_2 \leftarrow$ induced subtree of $T$ containing $r_2$ \label{alg:counting-MEC-chordal-tree-decommposition-T2-definition}
   
   $G_1 \leftarrow$ induced subgraph of $G$ represented by $T_1$ \label{alg:counting-MEC-chordal-tree-decommposition-G1-definition}
   
   $G_2 \leftarrow$ induced subgraph of $G$ represented by $T_2$ \label{alg:counting-MEC-chordal-tree-decommposition-G2-definition}
   
   $f_1 \leftarrow$ Count$(G_1,T_1,r_1)$ \label{alg:counting-MEC-chordal-tree-decommposition-f1-introduction}
   
   $f_2 \leftarrow$ Count$(G_2,T_2,r_2)$ \label{alg:counting-MEC-chordal-tree-decommposition-f2-introduction}

   \ForEach{$O_1 \in \setofpartialMECs{G_1[r_1\cup N(r_1, G_1)]}$ \label{alg:counting-MEC-chordal-tree-decommposition-for-each-2-start}}
   {
        \ForEach{$O_2 \in \setofpartialMECs{G_2[r_2\cup N(r_2, G_2)]}$ \label{alg:counting-MEC-chordal-tree-decommposition-for-each-3-start}}
           {
                \ForEach{$O \in \setofpartialMECs{G[r_1\cup r_2 \cup N(r_1\cup r_2, G)]}$ \label{alg:counting-MEC-chordal-tree-decommposition-for-each-4-start}}
                   {
                        \If{$O\in \mathcal{E}(O_1, O_2)$ \label{alg:counting-MEC-chordal-tree-decommposition-if-2-start}}
                        {
                            $O' \leftarrow O[r_1\cup N(r_1, G)]$ \label{alg:counting-MEC-chordal-tree-decommposition-O'-introduction}

                            $f(O') = f(O') + f_1(O_1)\times f_2(O_2)$ \label{alg:counting-MEC-chordal-tree-decommposition-f-O'-update}
                        } \label{alg:counting-MEC-chordal-tree-decommposition-if-2-end}
                   }\label{alg:counting-MEC-chordal-tree-decommposition-for-each-4-end}
           }\label{alg:counting-MEC-chordal-tree-decommposition-for-each-3-end}
   } \label{alg:counting-MEC-chordal-tree-decommposition-for-each-2-end}

   \KwRet $f$ \label{alg:counting-MEC-chordal-tree-decommposition-final-return-statement}
\end{algorithm} 
\begin{lemma}
    \label{lem:proof-of-correctness-of-count-MEC-algorithm}
    Let $G$ be a chordal graph, $T$ be a clique tree representation of $G$, and $r_1$ be a node of $T$.  For input $G, T$ and $r_1$, \cref{alg:counting-MEC-chordal-tree-decommposition} outputs a function $f: \setofpartialMECs{G[r_1\cup N(r_1, G)]} \rightarrow \mathbb{Z}$ such that for $O \in \setofpartialMECs{G[r_1\cup N(r_1, G)]}$, $f(O) = |\setofMECs{G, O}|$.
\end{lemma}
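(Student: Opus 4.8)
The plan is to prove this by induction on the number of nodes of the clique tree $T$ (equivalently, on the number of maximal cliques of $G$ appearing in $T$). Termination of \cref{alg:counting-MEC-chordal-tree-decommposition} is immediate, since cutting the edge $r_1-r_2$ yields two strictly smaller subtrees $T_1$ and $T_2$.

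For the base case I would take $T$ to consist of a single node $r_1$; then $r_1$ is a maximal clique covering all of $G$, so $G = G[r_1]$ is a complete graph and $G[r_1\cup N(r_1,G)] = G$. I would use the standard fact that a complete graph has exactly one MEC, namely its fully undirected self: every acyclic orientation of a complete skeleton is a transitive tournament, all such orientations are Markov equivalent (same skeleton, no v-structures), and their essential graph is undirected; alternatively this is checked directly against \cref{thm:nes-and-suf-cond-for-chordal-graph-to-be-an-MEC}. Consequently the only partial MEC $O$ of $G$ for which $\setofMECs{G,O}$ is nonempty is $O = G$, with $|\setofMECs{G,G}| = |\setofMECs{G}| = 1$; this is exactly what lines \ref{alg:counting-MEC-chordal-tree-decommposition-if-degree-is-1}--\ref{alg:counting-MEC-chordal-tree-decommposition-if-end} return (all entries of $f$ initialized to $0$, then $f(G) = 1$).

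For the inductive step I would assume $T$ has at least two nodes and that the claim holds for all smaller clique trees. After the algorithm picks an edge $r_1-r_2 \in E_T$, cuts it, and builds $T_1\ni r_1$, $T_2\ni r_2$ and the induced subgraphs $G_1, G_2$ of $G$ that they represent, I would invoke the clique-tree facts already used in \cref{subsection:chordal-graph} (\cite{blair1993introduction}): $T_a$ is a clique tree of $G_a$ with $r_a$ a maximal clique of $G_a$, $G = G_1\cup G_2$, and $I := V_{G_1}\cap V_{G_2} = r_1\cap r_2$ is a vertex separator of $G$ that separates $V_{G_1}\setminus I$ from $V_{G_2}\setminus I$. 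Since $G_1, G_2$ are chordal (induced subgraphs of $G$) and $r_1, r_2$ are cliques, all hypotheses of \cref{lem:computation-of-MEC-G-O'-2} hold with $X_1 = r_1\cup N(r_1,G_1)$, $X_2 = r_2\cup N(r_2,G_2)$, $X = r_1\cup r_2\cup N(r_1\cup r_2,G)$, and $X' = r_1\cup N(r_1,G)$. By the induction hypothesis applied to the two recursive calls, the returned functions satisfy $f_1(O_1) = |\setofMECs{G_1,O_1}|$ for all $O_1\in\setofpartialMECs{G_1[X_1]}$ and $f_2(O_2) = |\setofMECs{G_2,O_2}|$ for all $O_2\in\setofpartialMECs{G_2[X_2]}$. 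Then I would read off the value of $f$ accumulated by the three nested loops: since $X'\subseteq X$ (because $r_1\subseteq r_1\cup r_2$ and $N(r_1,G)\subseteq N(r_1\cup r_2,G)$), the map $O\mapsto O[X']$ takes $\setofpartialMECs{G[X]}$ into $\setofpartialMECs{G[X']}$ (the partial-MEC conditions being hereditary under induced subgraphs), and because $f$ starts at $0$ and line \ref{alg:counting-MEC-chordal-tree-decommposition-f-O'-update} fires once for each triple $(O_1,O_2,O)$ enumerated with $O\in\mathcal{E}(O_1,O_2)$, at termination
\begin{equation*}
    f(O') \;=\; \sum_{\substack{O_1\in\setofpartialMECs{G_1[X_1]},\ O_2\in\setofpartialMECs{G_2[X_2]},\ O\in\setofpartialMECs{G[X]}\\ O\in\mathcal{E}(O_1,O_2),\ O' = O[X']}} f_1(O_1)\cdot f_2(O_2)
\end{equation*}
for every $O'\in\setofpartialMECs{G[X']}$. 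Substituting the induction hypothesis and comparing with \cref{eq:lem:computation-of-MEC-G-O'-2} gives $f(O') = |\setofMECs{G,O'}|$, which closes the induction. (I would also point out that the $N(r_1,G_1)$ appearing in lines \ref{alg:counting-MEC-chordal-tree-decommposition-creating-function-f} and \ref{alg:counting-MEC-chordal-tree-decommposition-foreach-1-start} is a typographical slip for $N(r_1,G)$, the input graph.)

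I expect the main obstacle here to be organizational rather than mathematical: the substantive content lives in \cref{lem:nes-cond-for-chordal-graph,lem:suff-cond-for-chordal-graph,lem:formula-to-compute-MEC-G-O,lem:computation-of-MEC-G-O'-2}, so the work is to verify that the decomposition performed by \cref{alg:counting-MEC-chordal-tree-decommposition} instantiates the hypotheses of \cref{lem:computation-of-MEC-G-O'-2} exactly — above all the clique-tree vertex-separator property that identifies $I = r_1\cap r_2$ as the separator and yields $G = G_1\cup G_2$ — and to settle the base case (the unique MEC of a complete graph). The remainder is routine: matching the loop ranges and the projection index $X'$ to the summation of \cref{lem:computation-of-MEC-G-O'-2}.
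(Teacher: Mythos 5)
Your proposal is correct and follows essentially the same route as the paper: the base case is the unique (fully undirected) MEC of a complete graph when $T$ has a single node, and the recursive step is exactly the observation that the three nested loops implement the summation of \cref{lem:computation-of-MEC-G-O'-2}, whose hypotheses are supplied by the clique-tree separator properties of \cite{blair1993introduction}. You make the induction on $|V_T|$ and the instantiation of the separator hypotheses more explicit than the paper does, and you correctly flag the $N(r_1,G_1)$ versus $N(r_1,G)$ typographical slip, but there is no substantive difference in the argument.
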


\begin{algorithm}[H]
\caption{Count-MEC($G$)}
\label{alg:counting-MEC-chordal}
\SetAlgoLined
\SetKwInOut{KwIn}{Input}
\SetKwInOut{KwOut}{Output}
\SetKwFunction{MEC-Construction}{MEC-Construction}
\KwIn{A connected chordal graph $G$}
    \KwOut{ $|MEC(G)|$}
    
   Construct a clique tree representation $T$ of $G$. \label{alg:counting-MEC-chordal-construct-clique-tree}
   
   Pick an arbitrary node $r_1$ of $T$ as root node. \label{alg:counting-MEC-chordal-pick-r_1}
   
   $f \leftarrow$ Count$(G,T,r_1)$ \label{alg:counting-MEC-chordal-f-returns}
   
   $sum\leftarrow 0$ \label{alg:counting-MEC-chordal-sum-init}
   
   \ForEach{$O\in \setofpartialMECs{G[r_1\cup N(r_1,G)]}$ \label{alg:counting-MEC-chordal-for-each-start}}
   {
        sum=sum+$f[O]$ \label{alg:counting-MEC-chordal-sum-update}
   }\label{alg:counting-MEC-chordal-for-each-end}
   
   \KwRet sum; \label{alg:counting-MEC-chordal-return-statement}
\end{algorithm} 
We use \cref{alg:counting-MEC-chordal-tree-decommposition,eq:summation-of-PMECs} to construct \cref{alg:counting-MEC-chordal} that for an input chordal graph $G$, outputs $|\setofMECs{G}|$. \Cref{lem:proof-of-correctness-of-count-MEC-algorithm-2} provides the correctness of \cref{alg:counting-MEC-chordal}.

\begin{lemma}
    \label{lem:proof-of-correctness-of-count-MEC-algorithm-2}
    Let $G$ be a chordal graph.  For input $G$, \cref{alg:counting-MEC-chordal} outputs  $|\setofMECs{G}|$.
\end{lemma}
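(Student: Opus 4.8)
The plan is to obtain \cref{lem:proof-of-correctness-of-count-MEC-algorithm-2} as an essentially immediate consequence of \cref{lem:proof-of-correctness-of-count-MEC-algorithm} together with the counting identity \cref{eq:summation-of-PMECs}; the argument is a line-by-line trace of \cref{alg:counting-MEC-chordal}. First I would check that the algorithm is well defined: by \cref{lem:every-chordal-graph-has-clique-tree} a clique tree representation $T$ of the connected chordal graph $G$ exists, so line \ref{alg:counting-MEC-chordal-construct-clique-tree} succeeds, and picking an arbitrary root $r_1$ on line \ref{alg:counting-MEC-chordal-pick-r_1} is unproblematic. Applying \cref{lem:proof-of-correctness-of-count-MEC-algorithm} to the recursive call on line \ref{alg:counting-MEC-chordal-f-returns}, immediately after that line the returned function $f$ satisfies $f(O) = |\setofMECs{G,O}|$ for every $O \in \setofpartialMECs{G[r_1\cup N(r_1,G)]}$.

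Next I would analyze the accumulation loop on lines \ref{alg:counting-MEC-chordal-for-each-start}--\ref{alg:counting-MEC-chordal-for-each-end}: a trivial induction on the number of completed iterations shows that on termination the variable $sum$ equals $\sum_{O \in \setofpartialMECs{G[r_1\cup N(r_1,G)]}} f(O)$, which by the previous step equals $\sum_{O \in \setofpartialMECs{G[r_1\cup N(r_1,G)]}} |\setofMECs{G,O}|$. Finally I would instantiate \cref{eq:summation-of-PMECs} with the choice $X = r_1 \cup N(r_1,G) \subseteq V_G$; this yields that the last sum equals $|\setofMECs{G}|$, so line \ref{alg:counting-MEC-chordal-return-statement} returns $|\setofMECs{G}|$, as claimed.

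There is no genuine obstacle remaining: all the substantive content — the clique-tree recursion, the extension relation $\mathcal{E}(O_1,O_2)$, and the bijection between pairs $(M_1,M_2)$ and MECs $M$ established in \cref{lem:suff-cond-for-chordal-graph} — has already been discharged inside \cref{lem:proof-of-correctness-of-count-MEC-algorithm}. The only two points needing a sentence of care are (i) that $r_1\cup N(r_1,G)$ is indeed a subset of $V_G$, so that \cref{eq:summation-of-PMECs} is applicable (immediate from the definition of $N(\cdot,\cdot)$), and (ii) the standing assumption, built into the input specification of \cref{alg:counting-MEC-chordal}, that $G$ is connected, which is what makes $T$ a genuine tree; a disconnected chordal graph would be handled by running the algorithm on each connected component and multiplying the counts, but that lies outside the scope of this lemma.
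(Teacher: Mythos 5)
Your proposal is correct and follows essentially the same route as the paper's own proof: invoke \cref{lem:proof-of-correctness-of-count-MEC-algorithm} to get $f(O)=|\setofMECs{G,O}|$ after line \ref{alg:counting-MEC-chordal-f-returns}, observe that the loop on lines \ref{alg:counting-MEC-chordal-for-each-start}--\ref{alg:counting-MEC-chordal-for-each-end} accumulates $\sum_{O}f(O)$, and conclude via \cref{eq:summation-of-PMECs} with $X=r_1\cup N(r_1,G)$. Your two added remarks (applicability of \cref{eq:summation-of-PMECs} and the connectedness assumption in the input specification) are sensible but do not change the argument.
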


\section{Time Complexity}
\label{sec:time-complexity}
\begin{lemma}
    \label{lem:time-complexity-of-tree}
    Let $G$ be a tree graph. For input graph $G$, and a node $r\in V_{G}$, the time complexity of \cref{alg:counting-MEC-of-tree} is $O(d^2n)$, where $d$ is the degree of $G$, and $n$ is the number of nodes of $G$.
\end{lemma}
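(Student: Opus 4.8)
The plan is to factor the running time as (number of recursive invocations of Counting-MEC-Tree) times (work done inside one invocation, excluding the recursive calls it spawns), and bound each factor separately.

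First I would bound the number of invocations. Every invocation that is not a base case (line~\ref{alg:countingMEC-tree-if-start}) picks one edge $(r_1,r_2)\in E_G$, cuts it, and recurses on the two resulting subtrees $G_1\ni r_1$ and $G_2\ni r_2$; since $G$ is a tree, $E_{G_1}$, $\{(r_1,r_2)\}$, and $E_{G_2}$ partition $E_G$, and in particular $V_{G_1}$ and $V_{G_2}$ are disjoint. A straightforward strong induction on $|E_H|$ then shows that for any connected subtree $H$ the recursion started at Counting-MEC-Tree$(H,\cdot)$ contains exactly $|E_H|$ non-base invocations (the base case contributes $0$; a non-base call on $H$ contributes $1+|E_{G_1}|+|E_{G_2}| = 1+(|E_H|-1)=|E_H|$). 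Hence there are exactly $|E_G| = n-1$ non-base invocations. As each non-base invocation has exactly two children in the recursion tree and each base invocation is a leaf, the recursion tree is a full binary tree, so it has $n$ leaves and $2n-1$ nodes in total; equivalently, with $C(n)$ the invocation count on an $n$-node input, $C(1)=1$ and $C(n)=1+C(a)+C(n-a)$ for the split sizes, so $C(n)=2n-1=O(n)$.

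Second I would bound the cost of a single invocation, treating each integer arithmetic operation as unit cost. Write $\delta,\delta_1,\delta_2$ for the degrees of $r_1$ in $G$, of $r_1$ in $G_1$, and of $r_2$ in $G_2$; each is at most the maximum degree $d$ of $G$. Using an adjacency-list representation of the tree that is edited in place, lines~\ref{alg:counting-MEC-of-tree:delta_intro}--\ref{alg:counting-MEC-of-tree:delta-2-intro} (computing the three degrees, choosing and deleting the edge $(r_1,r_2)$, and recording the two subtrees) cost $O(d)$: no subtree is ever copied, since the recursive call on $(G_a,r_a)$ only explores the component of $r_a$ in the edge-deleted tree and never crosses a deleted edge. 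The arithmetic of lines~\ref{alg:counting-MEC-of-tree:N-intro}--\ref{alg:counting-MEC-of-tree:while-end} dominates: forming $N$ and the sums $\sum_j (j+2)c_2^j$ and $\sum_j j\cdot c_2^j$ costs $O(\delta_2)=O(d)$; forming $b$ (line~\ref{alg:counting-MEC-of-tree:b-intro}) and $c_0$ (line~\ref{alg:counting-MEC-of-tree:c0-intro}) costs $O(\delta_1+\delta_2)=O(d)$; and the \texttt{while} loop runs $\delta\le d$ times, each iteration evaluating a sum over $\delta_2+1\le d+1$ terms, for a total of $O(d^2)$. So one invocation costs $O(d^2)$.

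Multiplying the two bounds gives total time $O(d^2)\cdot O(n)=O(d^2 n)$, as claimed. I expect the only genuine subtlety to be the first step: one must check that the two-way branching does not inflate the invocation count, which is exactly the edge-partition/full-binary-tree argument above. The per-invocation bound is then a routine inspection of \cref{alg:counting-MEC-of-tree}, with the $d^2$ arising solely because the sum $\sum_j j\cdot c_2^j$ is re-evaluated inside the $\delta$-iteration \texttt{while} loop; hoisting that sum out of the loop would even yield $O(dn)$, but $O(d^2 n)$ already suffices.
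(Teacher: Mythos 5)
Your proposal is correct and follows essentially the same route as the paper's proof: bound the number of recursive invocations by the edge count $n-1$ (hence $O(n)$ calls), and bound the per-invocation work by $O(d^2)$, dominated by the $\delta$-iteration while loop whose body re-evaluates an $O(d)$-term sum. Your version is somewhat more careful (the full-binary-tree accounting of invocations and the remark that subtrees must not be copied), but these are refinements of the same argument rather than a different approach.
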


\begin{lemma}
    \label{lem:time-complexity-of-chordal-graph-tree-decomposition}
    Let $G$ be a chordal graph, $T$ be a tree decomposition of $G$, and $r$ be a node of $T$. For input $G$, $T$, and $r$, the time complexity of \cref{alg:counting-MEC-chordal-tree-decommposition} is $O(n(2^{O(d^2k^2)} + n^2))$, where $d$ is the degree of graph, $k$ is the treewidth of $T$, and $n$ is the number of nodes in $G$.
\end{lemma}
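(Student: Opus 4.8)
The plan is to bound separately (i) the number of recursive calls that \cref{alg:counting-MEC-chordal-tree-decommposition} makes on input $(G,T,r_1)$ and (ii) the work performed inside a single call, excluding its recursive sub-calls; the stated running time then follows by multiplying the two bounds.

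\emph{Step (i): $O(n)$ calls.} Every non-base call cuts one edge of its clique tree $T$ and recurses on the two resulting subtrees, both non-empty, and a call is a base case precisely when its tree is a single node. Thus the tree of recursive calls is a binary tree whose leaves correspond bijectively to the nodes of the original $T$, so it has $2|V_T|-1$ nodes in total. Since a chordal graph on $n$ vertices has at most $n$ maximal cliques (\cref{def:clique-tree}), $|V_T|\le n$, and hence the algorithm makes $O(n)$ calls.

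\emph{Step (ii): $2^{O(d^2k^2)}+O(n^2)$ per call.} First note that all vertex sets the call manipulates have size $O(dk)$: each of $r_1,r_2$ is a maximal clique and so has at most $k+1$ vertices (the treewidth being $k$), and each vertex has at most $d$ neighbours, so $X_1=r_1\cup N(r_1,G_1)$, $X_2=r_2\cup N(r_2,G_2)$, $X=r_1\cup r_2\cup N(r_1\cup r_2,G)$ and $X'=r_1\cup N(r_1,G)$ each have $O(dk)$ vertices. A partial MEC with a prescribed skeleton $H$ is just an orientation of the edges of $H$ (each edge directed in one of two ways or kept undirected), so for every such set $Y$, $|\setofpartialMECs{G[Y]}|\le 3^{\binom{|Y|}{2}}=2^{O(d^2k^2)}$, and $\setofpartialMECs{G[Y]}$ can be enumerated within this time by generating all orientations and discarding those that violate \cref{item-1-theorem-nec-suf-cond-for-MEC,item-2-theorem-nec-suf-cond-for-MEC,item-3-theorem-nec-suf-cond-for-MEC} of \cref{thm:nes-and-suf-cond-for-chordal-graph-to-be-an-MEC}, each check being $\poly{dk}$ on an $O(dk)$-vertex graph. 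Consequently the triple \textbf{foreach} loop runs $|\setofpartialMECs{G_1[X_1]}|\cdot|\setofpartialMECs{G_2[X_2]}|\cdot|\setofpartialMECs{G[X]}|=2^{O(d^2k^2)}$ iterations, and each iteration does only $\poly{dk}$ additional work: testing $O\in\mathcal{E}(O_1,O_2)$ amounts to checking the three conditions of \cref{def:extension} on $O(dk)$-vertex graphs, computing $O'=O[X']$ is a restriction, and updating $f(O')$ is one lookup (via a hash table keyed by a canonical encoding of partial MECs) plus one multiply-add — all subsumed into the $2^{O(d^2k^2)}$ factor. The only other work in the call is graph bookkeeping on $G$: computing the relevant neighbourhoods, cutting $T$, forming the induced subgraphs $G_1,G_2$ represented by $T_1,T_2$, and initialising $f$; the graph operations cost $O(n^2)$ (as $G$ has $O(n^2)$ edges) and initialising $f$ costs $2^{O(d^2k^2)}$. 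Hence a single call, net of recursion, takes $2^{O(d^2k^2)}+O(n^2)$ time.

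Multiplying the $O(n)$ call count by the $2^{O(d^2k^2)}+O(n^2)$ per-call cost yields the claimed $O\!\big(n(2^{O(d^2k^2)}+n^2)\big)$. I expect the delicate points to be the two estimates in Step (ii): confirming that every vertex set processed really does have size $O(dk)$, so that the number of partial MECs (which controls the exponential term) is $2^{O(d^2k^2)}$ rather than something larger; and fixing a representation of partial MECs under which enumeration, the extension test of \cref{def:extension}, the restriction to $X'$, and the dictionary operations on $f$ are all polynomial in $dk$. The counts stored by $f$ can be exponentially large in $n$, but, as in the tree case (\cref{lem:time-complexity-of-tree}), arithmetic on them is charged at unit cost, so it does not affect the bound.
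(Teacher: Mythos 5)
Your proposal is correct and follows essentially the same route as the paper: bound the recursion by the $O(n)$ nodes/edges of the clique tree, bound $|X|$, $|X_1|$, $|X_2|$ by $O(dk)$ so that the number of partial MECs (and hence the triple loop) is $2^{O(d^2k^2)}$ with $\poly{dk}$ work per iteration, and charge $O(n^2)$ for the graph bookkeeping in each call. Your added remarks on enumerating/encoding partial MECs and on unit-cost arithmetic only make explicit what the paper leaves implicit.
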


\begin{lemma}
    \label{lem:time-complexity-of-chordal-graph}
    For input chordal graph $G$, the time complexity to run \cref{alg:counting-MEC-chordal} is $O(n(2^{O(d^2k^2)} + n^2))$, where $d$ is the degree of graph, $k$ is the treewidth of $T$, and $n$ is the number of nodes in $G$.
\end{lemma}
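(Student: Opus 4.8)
The plan is to bound the running time of \cref{alg:counting-MEC-chordal} by splitting it into three pieces and then absorbing the cheap pieces into the dominant one: (i) constructing the clique tree representation $T$ of $G$; (ii) the recursive call to \cref{alg:counting-MEC-chordal-tree-decommposition} on $G$, $T$, and $r_1$; and (iii) the final loop that accumulates $\sum_{O} f[O]$ over $O \in \setofpartialMECs{G[r_1\cup N(r_1,G)]}$. Correctness is already supplied by \cref{lem:proof-of-correctness-of-count-MEC-algorithm-2}, so only the three time estimates and their combination remain.

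For piece (i), I would use the standard fact that a clique tree representation of a chordal graph with $n$ nodes and $m$ edges can be built in $O(n+m)$ time (e.g.\ from a perfect elimination / LBFS ordering, \cref{def:LBFS}, together with \cite{blair1993introduction}); since $m=O(n^2)$ this is $O(n^2)$, which is dominated by the claimed bound. For piece (ii), \cref{lem:time-complexity-of-chordal-graph-tree-decomposition} gives running time $O(n(2^{O(d^2k^2)}+n^2))$ for the recursive call, which already matches the target bound.

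The one genuinely new estimate is piece (iii), and it hinges on bounding the size of the index set. Because $r_1\in V_T$ is a maximal clique of $G$ and the clique number of a chordal graph equals its treewidth plus one, $|r_1|\le k+1$; since every vertex has at most $d$ neighbours, $|r_1\cup N(r_1,G)| \le (k+1) + (k+1)d = (k+1)(d+1) = O(kd)$. Every partial MEC with skeleton $G[r_1\cup N(r_1,G)]$ is determined by assigning to each of its at most $\binom{(k+1)(d+1)}{2}$ edges one of three states (undirected, or directed either way), so
\[
\bigl|\setofpartialMECs{G[r_1\cup N(r_1,G)]}\bigr| \;\le\; 3^{\binom{(k+1)(d+1)}{2}} \;=\; 2^{O(k^2d^2)} .
\]
Each iteration of the loop performs a constant number of table look-ups plus one addition of integers bounded by $|\setofMECs{G}| \le 2^{\binom{n}{2}}$, i.e.\ integers of $O(n^2)$ bits, costing $O(n^2)$ time; hence piece (iii) runs in $2^{O(k^2d^2)}\cdot O(n^2)$ time.

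Combining, the total running time is $O(n^2) + O(n(2^{O(d^2k^2)}+n^2)) + 2^{O(k^2d^2)}\cdot O(n^2)$. The last summand is absorbed into the second through a two-case bound on $n^2\cdot 2^{O(k^2d^2)}$: if $n\le 2^{O(k^2d^2)}$ then $n^2\cdot 2^{O(k^2d^2)}\le n\cdot 2^{O(k^2d^2)}$ after doubling the constant in the exponent, and otherwise $n^2\cdot 2^{O(k^2d^2)}<n^3\le n\cdot n^2$; in both cases it is $O(n(2^{O(d^2k^2)}+n^2))$. Thus the whole algorithm runs in $O(n(2^{O(d^2k^2)}+n^2))$ time. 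I do not expect a real obstacle here: the argument is essentially bookkeeping layered on top of \cref{lem:time-complexity-of-chordal-graph-tree-decomposition}, and the only mildly delicate points are pinning down the $O(kd)$ bound on $|r_1\cup N(r_1,G)|$ and checking that the enumeration and big-integer arithmetic in the final loop really are subsumed by that dominant term.
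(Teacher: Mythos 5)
Your proposal is correct and follows essentially the same decomposition as the paper's proof: clique-tree construction in $O(n+m)$ time, the dominant $O(n(2^{O(d^2k^2)}+n^2))$ bound from \cref{lem:time-complexity-of-chordal-graph-tree-decomposition}, and the final summation over at most $3^{O(d^2k^2)}=2^{O(d^2k^2)}$ partial MECs of $G[r_1\cup N(r_1,G)]$. Your additional bookkeeping on the bit-length of the accumulated counts and the two-case absorption of $n^2\cdot 2^{O(k^2d^2)}$ is a minor refinement the paper elides by charging unit cost to each addition, but it does not change the argument or the bound.
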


\begin{theorem}
    \label{thm:counting-MECs-for-tree-graph}
    For a tree graph $G$, we can count the MECs of $G$ in $O(d^2n)$ time, where $n$ is the number of nodes in $G$, and $d$ is the degree of $G$.
\end{theorem}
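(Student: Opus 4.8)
The plan is to obtain the theorem by assembling three ingredients that are already in place: the recursion of \cref{lem:recursive-method-to-count-MECs-of-tree}, the correctness of \cref{alg:counting-MEC-of-tree}, and the running-time bound of \cref{lem:time-complexity-of-tree}. Concretely, I would fix an arbitrary root $r \in V_G$, run \cref{alg:counting-MEC-of-tree} on $(G,r)$ to get the tuple $(n_1(G,r), n_0^0(G,r), \dots, n_0^{\delta}(G,r))$ with $\delta = \deg{r}$ in $G$, and then read off $|\setofMECs{G}|$ from \cref{eq:counting-MEC-of-tree-1,eq:counting-MEC-of-tree-2}.

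The correctness of the returned tuple is proved by structural induction on $G$. For the base case $\delta = 0$ (so $G$ is a single vertex), the only MEC of $G$ is $G$ itself, which contributes $n_1(G,r)=0$ and $n_0^0(G,r)=1$; this is exactly what lines \ref{alg:countingMEC-tree-if-start}--\ref{alg:countingMEC-tree-if-end} return. For the inductive step, the algorithm picks an edge $r-r_2$, cuts it into subtrees $G_1 \ni r$ and $G_2 \ni r_2$, recursively obtains the corresponding tuples for $(G_1,r)$ and $(G_2,r_2)$ (valid by the induction hypothesis, since each subtree has strictly fewer vertices), and then the assignments to $b$, $c_0$, and to $c_i$ inside the \texttt{while} loop transcribe verbatim the three formulas of \cref{lem:recursive-method-to-count-MECs-of-tree}. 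Hence the tuple returned by \cref{alg:counting-MEC-of-tree} equals $(n_1(G,r), n_0^0(G,r),\dots,n_0^{\delta}(G,r))$.

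Given the tuple, \cref{eq:counting-MEC-of-tree-1,eq:counting-MEC-of-tree-2} yield $|\setofMECs{G}| = n_1(G,r) + \sum_{i=0}^{\delta} n_0^i(G,r)$, so one summation over the $\delta+2 \le d+2$ entries of the tuple finishes the computation at an extra cost of $O(d)$. For the time bound, \cref{lem:time-complexity-of-tree} already states that \cref{alg:counting-MEC-of-tree} runs in $O(d^2 n)$ time on an $n$-vertex tree of maximum degree $d$; adding the final $O(d)$ summation leaves the total at $O(d^2 n)$, which is the claimed bound.

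Within the scope of this theorem the genuine mathematical content — the case analysis of \cref{lem:recursive-method-to-count-MECs-of-tree} showing how a projected pair $(M_1,M_2)$ extends across the cut edge into valid MECs of $G$ via \cref{thm:nes-and-suf-cond-for-tree-graph-to-be-an-MEC} — is assumed, so the only remaining obstacles are bookkeeping: checking the base case and the index ranges of the \texttt{while} loop of \cref{alg:counting-MEC-of-tree}, and verifying the per-node $O(d^2)$ work charged in \cref{lem:time-complexity-of-tree}. Neither is difficult. (I would also note, for honesty, that the stated $O(d^2 n)$ is in the unit-cost arithmetic model consistent with the rest of the paper; since the counts can be exponential in $n$, tracking bit-complexity would add a polynomial factor, but that is outside the intended model.)
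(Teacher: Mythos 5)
Your proposal is correct and follows essentially the same route as the paper's own proof: pick an arbitrary root, invoke \cref{alg:counting-MEC-of-tree}, justify its output via \cref{lem:recursive-method-to-count-MECs-of-tree}, combine the returned tuple using \cref{eq:counting-MEC-of-tree-1,eq:counting-MEC-of-tree-2}, and charge the running time to \cref{lem:time-complexity-of-tree}. The explicit induction sketch and the unit-cost-arithmetic caveat are reasonable additions but do not change the argument.
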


\begin{theorem}
    \label{thm:counting-MECs-for-chordal-graph}
    For a chordal graph $G$, there exists a fixed parameter tractable algorithm that can count the MECs of $G$ in $O(n(2^{O(d^2k^2)}+n^2))$, where $n$ is the number of nodes of $G$, and the parameters $d$ and $k$ are, respectively, the degree and the treewidth of $G$. 
\end{theorem}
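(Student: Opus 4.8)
The plan is to obtain \Cref{thm:counting-MECs-for-chordal-graph} as an essentially immediate corollary of the two supporting results already in place, \Cref{lem:proof-of-correctness-of-count-MEC-algorithm-2} and \Cref{lem:time-complexity-of-chordal-graph}, both applied to \Cref{alg:counting-MEC-chordal}. First I would point to \Cref{alg:counting-MEC-chordal} as the witnessing algorithm: it builds a clique tree representation $T$ of the input chordal graph $G$ --- which exists by \Cref{lem:every-chordal-graph-has-clique-tree} and can be constructed in polynomial time by standard methods --- roots $T$ at an arbitrary node $r_1$, calls the recursive subroutine \Cref{alg:counting-MEC-chordal-tree-decommposition} to obtain a function $f$ with $f(O) = |\setofMECs{G, O}|$ for every $O \in \setofpartialMECs{G[r_1 \cup N(r_1, G)]}$ (correctness of this step being \Cref{lem:proof-of-correctness-of-count-MEC-algorithm}), and finally returns $\sum_O f(O)$. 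Applying \Cref{eq:summation-of-PMECs} with $X = r_1 \cup N(r_1, G)$ shows this sum equals $|\setofMECs{G}|$, which is exactly what \Cref{lem:proof-of-correctness-of-count-MEC-algorithm-2} records; so correctness of the claimed algorithm is already established.

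For the running time I would track two quantities. The first is the size of the recursion: a chordal graph on $n$ vertices has at most $n$ maximal cliques, so $|V_T| \le n$ and the recursion tree of \Cref{alg:counting-MEC-chordal-tree-decommposition} has $O(n)$ nodes. The second is the cost of one recursive call, where the key structural point is that each $r_i$ is a maximal clique of a chordal graph, hence $|r_i| \le k+1$ with $k$ the treewidth; since every vertex has degree at most $d$, the sets $X_1, X_2$ and $r_1 \cup r_2 \cup N(r_1 \cup r_2, G)$ all have size $O(kd)$, so the induced subgraph on any of them has $O(k^2 d^2)$ edges, and a partial MEC on such a set amounts to a constrained choice of direction (undirected, or one of two orientations) per edge, giving at most $3^{O(k^2 d^2)} = 2^{O(k^2 d^2)}$ partial MECs. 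The dominant loop of \Cref{alg:counting-MEC-chordal-tree-decommposition} ranges over triples $(O_1, O_2, O)$ of such partial MECs and, for each, checks the extension condition of \Cref{def:extension} and performs a projection onto $r_1 \cup N(r_1, G)$; each such operation touches only the $O(kd)$ vertices involved, so the entire loop costs $2^{O(k^2 d^2)}$. The remaining per-node bookkeeping --- cutting the clique-tree edge, identifying $G_1$, $G_2$ and their vertex sets, computing neighborhoods --- is polynomial in $n$, contributing an additive $O(n^2)$. Multiplying by the $O(n)$ recursion nodes and adding the polynomial-time clique-tree construction yields $O(n(2^{O(d^2 k^2)} + n^2))$, as stated in \Cref{lem:time-complexity-of-chordal-graph}.

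I expect the genuine difficulty to sit not in this top-level assembly but in the supporting \Cref{lem:time-complexity-of-chordal-graph} (and behind it the correctness of \Cref{alg:counting-MEC-chordal-tree-decommposition}): one must argue that the extension test $O \in \mathcal{E}(O_1, O_2)$ --- especially verifying that every directed edge of $O$ arising from an undirected edge of $O_1$ or $O_2$ is strongly protected in $O$ --- can be performed in time depending only on $k$ and $d$, using that $O$ lives on the bounded set $X$ of size $O(kd)$ rather than on all of $G$. This localization is precisely where chordality is used: \Cref{obs:nes-cond-for-chordal-graph}, via \Cref{lem:sharma2023results}, collapses the general triangle-free-path conditions to the purely local strong-protection condition when $G$ is chordal and $r_1, r_2$ are cliques, which is what keeps the per-call cost at $2^{O(k^2 d^2)}$ instead of the $2^{O(k^4 d^4)}$ of the general bound of \cite{sharma2023fixedparameter}. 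Once that localization and the per-call analysis are in hand, \Cref{thm:counting-MECs-for-chordal-graph} is just the conjunction of \Cref{lem:proof-of-correctness-of-count-MEC-algorithm-2} and \Cref{lem:time-complexity-of-chordal-graph}.
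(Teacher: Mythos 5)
Your proposal is correct and matches the paper's proof, which derives the theorem in one line as the conjunction of \Cref{lem:proof-of-correctness-of-count-MEC-algorithm-2} (correctness of \Cref{alg:counting-MEC-chordal}) and \Cref{lem:time-complexity-of-chordal-graph} (its running time). Your elaborations of the recursion size, the $O(kd)$ bound on the relevant vertex sets, and the $2^{O(k^2d^2)}$ count of partial MECs also agree with how the paper proves those supporting lemmas.
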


 \medskip

\bibliography{biblio.bib}
\bibliographystyle{abbrvnat}

\newpage
\section{Supplementary Material}
\subsection{Omitted proofs of \cref{subsection:tree}}
\label{subsection:omitted-proof-of-tree}
\begin{proof}[Proof of \cref{lem:recursive-method-to-count-MECs-of-tree}]
From \cref{obs:there-exists-a-unique-projection}, for any MEC $M$ of $G$, $\mathcal{P}(M, V_{G_1}, V_{G_2})$ is unique. 
For each pair of MECs $(M_1, M_2) \in \setofMECs{G_1}\times \setofMECs{G_2}$, we find the MECs $M$ of $G$ such that $\mathcal{P}(M, V_{G_1}, V_{G_2}) = (M_1, M_2)$.
We then classify these MECs $M$ into $\setofMECs{G, r_1, 1}$, $\setofMECs{G, r_1, 0, 0}, \setofMECs{G, r_1, 0, 1}, \ldots , \setofMECs{G, r_1, 0, \delta}$ (recall $\delta$ is the degree of $r_1$ in $G$).
We further count the MECs in each class. This gives us the required numbers $n_1(G, r_1)$, $n_0^0(G, r_1)$, $n_0^1(G, r_1)$, $\ldots$, $n_0^{\delta}(G, r_1)$.

For a pair of MECs $(M_1, M_2) \in \setofMECs{G_1}\times \setofMECs{G_2}$, \cref{obs:v-structures-of-G-neither-in-M1-nor-in-M2} provides us with necessary and sufficient conditions for an MEC $M$ of $G$ such that $\mathcal{P}(M, V_{G_1}, V_{G_2}) = (M_1, M_2)$.

\begin{observation}
    \label{obs:v-structures-of-G-neither-in-M1-nor-in-M2}
    Let $M, M_1$ and $M_2$ be the MECs of $G, G_1$ and $G_2$, respectively. Then,  $\mathcal{P}(M, V_{G_1}, V_{G_2}) = (M_1, M_2)$ if, and only if,
    \begin{enumerate}
        \item
        \label{item-1-of-obs:v-structures-of-G-neither-in-M1-nor-in-M2}
        $M$ contains all the v-structures of $M_1$ and $M_2$, i.e., $\mathcal{V}(M_1) \cup \mathcal{V}(M_2) \subseteq \mathcal{V}(M)$.
        \item
        \label{item-2-of-obs:v-structures-of-G-neither-in-M1-nor-in-M2}
        Either all the v-structures of $M$ that are neither v-structures in $M_1$ nor in $M_2$ contain an edge $r_1\rightarrow r_2$, or all the v-structures of $M$ that are neither v-structures in $M_1$ nor in $M_2$ contain an edge $r_2\rightarrow r_1$.
    \end{enumerate}
\end{observation}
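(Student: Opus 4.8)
The plan is to reduce the statement entirely to v-structures. Since $G$ is a tree and the edge $(r_1,r_2)$ was cut, $V_{G_1}$ and $V_{G_2}$ are disjoint with $V_{G_1}\cup V_{G_2}=V_G$, and --- because $r_1-r_2$ is a bridge of $G$ --- it is the unique edge of $G$ with one endpoint in $V_{G_1}$ and one in $V_{G_2}$. Unwinding \cref{def:projection}, the equality $\mathcal{P}(M,V_{G_1},V_{G_2})=(M_1,M_2)$ is exactly the conjunction of $\mathcal{V}(M_1)=\mathcal{V}(M[V_{G_1}])$ and $\mathcal{V}(M_2)=\mathcal{V}(M[V_{G_2}])$, so from here on the argument is purely about which v-structures appear where.

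The first step is a structural classification of the v-structures of $M$. Any v-structure $a\rightarrow b\leftarrow c$ of $M$ uses the two edges $a-b$, $b-c$ of $G$ (as $\skel{M}=G$). If the collider $b$ is not in $\{r_1,r_2\}$, then all of its $G$-neighbours lie on one side of the cut, so $a,b,c$ all belong to the same $V_{G_i}$ and the v-structure is an induced subgraph of $M[V_{G_i}]$; conversely every v-structure of $M[V_{G_i}]$ is one of $M$. Hence $\mathcal{V}(M[V_{G_i}])$ is precisely the set of v-structures of $M$ all of whose vertices lie in $V_{G_i}$, and the remaining (``straddling'') v-structures of $M$ are exactly those in $\mathcal{V}(M)\setminus\big(\mathcal{V}(M[V_{G_1}])\cup\mathcal{V}(M[V_{G_2}])\big)$: for such a v-structure $b\in\{r_1,r_2\}$, it uses the bridge $r_1-r_2$, and since $a\ne c$ exactly one of $a,c$ is the other bridge endpoint, so the v-structure contains $r_1\rightarrow r_2$ when $b=r_2$ and $r_2\rightarrow r_1$ when $b=r_1$.

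Given this, both directions are short. For ``$\Rightarrow$'', assume the two projection equalities; then $\mathcal{V}(M_1)\cup\mathcal{V}(M_2)=\mathcal{V}(M[V_{G_1}])\cup\mathcal{V}(M[V_{G_2}])\subseteq\mathcal{V}(M)$, which is \cref{item-1-of-obs:v-structures-of-G-neither-in-M1-nor-in-M2}; and every v-structure of $M$ outside $\mathcal{V}(M_1)\cup\mathcal{V}(M_2)$ is straddling, hence contains $r_1\rightarrow r_2$ or $r_2\rightarrow r_1$, and since in $M$ the edge $\{r_1,r_2\}$ carries a single orientation --- or is undirected, in which case there are no straddling v-structures and \cref{item-2-of-obs:v-structures-of-G-neither-in-M1-nor-in-M2} holds vacuously --- they all contain the same one, which is \cref{item-2-of-obs:v-structures-of-G-neither-in-M1-nor-in-M2}. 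For ``$\Leftarrow$'', assume \cref{item-1-of-obs:v-structures-of-G-neither-in-M1-nor-in-M2,item-2-of-obs:v-structures-of-G-neither-in-M1-nor-in-M2}; I would show $\mathcal{V}(M_1)=\mathcal{V}(M[V_{G_1}])$, the $M_2$ case being identical. The inclusion $\mathcal{V}(M_1)\subseteq\mathcal{V}(M[V_{G_1}])$ holds because every v-structure of $M_1$ lies in $\mathcal{V}(M)$ by \cref{item-1-of-obs:v-structures-of-G-neither-in-M1-nor-in-M2} and has all its vertices in $V_{G_1}$, so it lies in $\mathcal{V}(M[V_{G_1}])$ by the classification. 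Conversely, if $\sigma\in\mathcal{V}(M[V_{G_1}])$, then $\sigma$ is a v-structure of $M$ with all vertices in $V_{G_1}$, so it cannot contain $r_1\rightarrow r_2$ or $r_2\rightarrow r_1$ (those edges meet $V_{G_2}$); hence by \cref{item-2-of-obs:v-structures-of-G-neither-in-M1-nor-in-M2} it must lie in $\mathcal{V}(M_1)\cup\mathcal{V}(M_2)$, and since its vertices avoid $V_{G_2}$ it lies in $\mathcal{V}(M_1)$.

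The step I expect to be the main obstacle is the structural classification: making the bridge argument watertight (that a v-structure not confined to one side must have its collider at an endpoint of $r_1-r_2$, and therefore carries one of the two orientations of that edge), and correctly handling the boundary case in which $M$ leaves $r_1-r_2$ undirected, which is exactly what makes \cref{item-2-of-obs:v-structures-of-G-neither-in-M1-nor-in-M2} hold vacuously. Everything after that is a routine chase of set inclusions built on \cref{def:projection} and this classification.
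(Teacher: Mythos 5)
Your proposal is correct and follows essentially the same route as the paper: both arguments rest on classifying each v-structure of $M$ as lying entirely in $V_{G_1}$, entirely in $V_{G_2}$, or using the single bridging edge $r_1-r_2$ (forcing it to contain $r_1\rightarrow r_2$ or $r_2\rightarrow r_1$), and then unwinding \cref{def:projection} as the two equalities $\mathcal{V}(M_i)=\mathcal{V}(M[V_{G_i}])$. Your write-up of the ``$\Leftarrow$'' direction is in fact more explicit than the paper's, which compresses that step into a single sentence, but the underlying argument is identical.
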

\begin{proof}
    We first prove $\leftarrow$ of \cref{obs:v-structures-of-G-neither-in-M1-nor-in-M2}.  Suppose $\mathcal{P}(M, V_{G_1}, V_{G_2}) = (M_1, M_2)$.
    Since $M_1$ is a projection of $M$ on $V_{G_1}$, from \cref{def:projection}, $\mathcal{V}(M_1) = \mathcal{V}(M[V_{G_1}])$. This implies $\mathcal{V}(M_1) \subseteq \mathcal{V}(M)$. Similarly, $\mathcal{V}(M_2) \subseteq \mathcal{V}(M)$. This proves \cref{item-1-of-obs:v-structures-of-G-neither-in-M1-nor-in-M2}.

    From the structure of $M$, each edge of $M$ has either both endpoints in $V_{G_1}$, or both endpoints in $V_{G_2}$, or the endpoints are $r_1$ and $r_2$. 
If both the edges of the v-structure have endpoints in $V_{G_1}$ then the v-structure must be in $M_1$ since $\mathcal{V}(M[V_{G_1}]) = \mathcal{V}(M_1)$. Similarly, if both the edges of the v-structure have endpoints in $V_{G_2}$ then the v-structure must be in $M_2$.
From the construction, there cannot be a v-structure in $M$ such that one edge with endpoints in $V_{G_1}$ and another edge with endpoints in $V_{G_2}$, as there is no vertex common in between $V_{G_1}$ and $V_{G_2}$. Thus, for a v-structure of $M$ that is neither in $M_1$ nor in $M_2$, the only option that remains is one edge of the v-structure has endpoints $r_1$ and $r_2$. Since the edges of a v-structure are directed edges, a v-structure of $M$ that is neither in $M_1$ nor in $M_2$ contains either $r_1 \rightarrow r_2$ or contains $r_2 \rightarrow r_1$.
Also, since $M$ cannot have both $r_1 \rightarrow r_2$ and $r_2\rightarrow r_1$. Thus, either all the v-structures of $M$ that are neither in $M_1$ nor in $M_2$ contain $r_1\rightarrow r_2$, or all the v-structures of $M$ that are neither in $M_1$ nor in $M_2$ contain $r_2\rightarrow r_1$. This proves \cref{item-2-of-obs:v-structures-of-G-neither-in-M1-nor-in-M2}. 
This completes the proof of $\leftarrow$ of \cref{obs:v-structures-of-G-neither-in-M1-nor-in-M2}.

We now prove $\rightarrow$ of \cref{obs:v-structures-of-G-neither-in-M1-nor-in-M2}. Suppose $M$ obeys \cref{item-1-of-obs:v-structures-of-G-neither-in-M1-nor-in-M2,item-2-of-obs:v-structures-of-G-neither-in-M1-nor-in-M2}. Since $V_{G_1} \cap V_{G_2} = \emptyset$, and both $r_1$ and $r_2$ are neither  in $V_{G_1}$ nor in $V_{G_2}$, from \cref{item-1-of-obs:v-structures-of-G-neither-in-M1-nor-in-M2,item-2-of-obs:v-structures-of-G-neither-in-M1-nor-in-M2} of \cref{obs:v-structures-of-G-neither-in-M1-nor-in-M2},  $\mathcal{V}(M[V_{G_1}]) = \mathcal{V}(M_1)$, and $\mathcal{V}(M[V_{G_2}]) = \mathcal{V}(M_2)$. Then, from \cref{def:projection}, $\mathcal{P}(M, V_{G_1}, V_{G_2}) = (M_1, M_2)$. 
\end{proof}

For each pair of MECs $(M_1, M_2) \in \setofMECs{G_1}\times \setofMECs{G_2}$, we now find MECs $M$ of $G$ such that $\mathcal{P}(M, V_{G_1}, V_{G_2}) = (M_1, M_2)$, i.e., $M$ obeys \cref{item-1-of-obs:v-structures-of-G-neither-in-M1-nor-in-M2,item-2-of-obs:v-structures-of-G-neither-in-M1-nor-in-M2} of \cref{obs:v-structures-of-G-neither-in-M1-nor-in-M2}.
$(M_1, M_2)$ can be of four kinds: either (a) $M_1 \in \setofMECs{G_1, r_1, 1}$ and  $M_2 \in \setofMECs{G_2, r_2, 1}$, or (b) $M_1 \in \setofMECs{G_1, r_1, 1}$ and for some $0\leq j \leq \delta_2$, $M_2 \in \setofMECs{G_2, r_2, 0, j}$ (recall that $\delta_2$ is the degree of $r_2$ in $G_2$), or (c) for some $0\leq i \leq \delta_1$, $M_1 \in \setofMECs{G_1, r_1, 0, i}$ and  $M_2 \in \setofMECs{G_2, r_2, 1}$ (recall that $\delta_1$ is the degree of $r_1$ in $G_1$), or (d) for some $0\leq i \leq \delta_1$, $M_1 \in \setofMECs{G_1, r_1, 0, i}$ and for some $0\leq j \leq \delta_2$, $M_2 \in \setofMECs{G_2, r_2, 0, j}$. \cref{lem:relation-between-v-structures-M1-M1} deals with the first possibility.
    
\begin{lemma}
\label{lem:relation-between-v-structures-M1-M1}
Let $M_1\in \setofMECs{G_1, r_1, 1}$, and $M_2\in \setofMECs{G_2, r_2, 1}$. Then, 
\begin{enumerate}
    \item 
        \label{item-1-of-lem:relation-between-v-structures-M1-M1}
        For any MEC $M$ of $G$, if $\mathcal{P}(M, V_{G_1}, V_{G_2}) = (M_1, M_2)$ then one of the following occurs:
        \begin{enumerate}
            \item
                \begin{equation}
                \label{eq:v-structure-1}
                    \mathcal{V}(M)=\mathcal{V}(M_1) \cup \mathcal{V}(M_2) \cup \bigcup_{x\rightarrow r_1 \in E_{M_1}}{x\rightarrow r_1\leftarrow r_2}
                \end{equation}
            \item 
                \begin{equation}
                \label{eq:v-structure-2}
                    \mathcal{V}(M)=\mathcal{V}(M_1) \cup \mathcal{V}(M_2) \cup \bigcup_{y\rightarrow r_2 \in E_{M_2}}{y\rightarrow r_2\leftarrow r_1}
                \end{equation}
        \end{enumerate}
    \item 
        \label{item-2-of-lem:relation-between-v-structures-M1-M1}
        There exists a unique MEC $M$ of $G$ that obeys \cref{eq:v-structure-1}. Also, $M$ belongs to $\setofMECs{G, r_1, 1}$, and $\mathcal{P}(M, V_{G_1}, V_{G_2}) = (M_1, M_2)$.
\item 
        \label{item-3-of-lem:relation-between-v-structures-M1-M1}
        There exists a unique MEC $M$ of $G$ that obeys \cref{eq:v-structure-2}. Also, $M$ belongs to $\setofMECs{G, r_1, 1}$, and $\mathcal{P}(M, V_{G_1}, V_{G_2}) = (M_1, M_2)$.
\end{enumerate}
\end{lemma}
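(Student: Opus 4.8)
The plan is to prove the three parts in the order stated, with Part~\ref{item-1-of-lem:relation-between-v-structures-M1-M1} (pinning down $\mathcal{V}(M)$) carrying the conceptual weight and Parts~\ref{item-2-of-lem:relation-between-v-structures-M1-M1} and~\ref{item-3-of-lem:relation-between-v-structures-M1-M1} being a construction-plus-uniqueness argument that is symmetric in the two sides. Throughout I will use that $G$ is a tree, so any two neighbors of a common node are non-adjacent.

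For Part~\ref{item-1-of-lem:relation-between-v-structures-M1-M1}, let $M$ be an MEC of $G$ with $\mathcal{P}(M, V_{G_1}, V_{G_2}) = (M_1, M_2)$. First I would show the edge $r_1-r_2$ is oriented in $M$: since $M_1 \in \setofMECs{G_1, r_1, 1}$ there is $x\to r_1 \in M_1$, hence $x \to r_1 \in M$ by \cref{lem:directed-edge-is-same-in-projected-MEC}; if $r_1-r_2$ were undirected in $M$, then $x\to r_1 - r_2$ would be an induced $a\to b-c$ (as $x,r_2$ are non-adjacent), contradicting \cref{item-1-of-thm:nes-and-suf-cond-for-tree-graph-to-be-an-MEC} of \cref{thm:nes-and-suf-cond-for-tree-graph-to-be-an-MEC}. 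So $r_2\to r_1 \in M$ or $r_1\to r_2 \in M$. In the case $r_2\to r_1 \in M$, apply \cref{obs:v-structures-of-G-neither-in-M1-nor-in-M2}: the v-structures of $M$ lying in neither $M_1$ nor $M_2$ all contain the cross edge, hence (with $r_1$ as head) are of the form $a\to r_1 \leftarrow r_2$ with $a\in V_{G_1}$ and $a\to r_1 \in M$. The key step is that this set of heads equals $\{x : x\to r_1 \in M_1\}$: inclusion $\supseteq$ is \cref{lem:directed-edge-is-same-in-projected-MEC}; for $\subseteq$, given $a\to r_1 \in M$ with $a\in V_{G_1}$, take the guaranteed $x_0\to r_1 \in M_1 \subseteq M$, so $a\to r_1 \leftarrow x_0$ is a v-structure of $M$ inside $V_{G_1}$, hence in $\mathcal{V}(M[V_{G_1}]) = \mathcal{V}(M_1)$, forcing $a\to r_1 \in M_1$. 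This gives \eqref{eq:v-structure-1}; the case $r_1\to r_2\in M$ is identical with $(G_1,r_1,M_1)$ and $(G_2,r_2,M_2)$ interchanged, giving \eqref{eq:v-structure-2}.

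For Parts~\ref{item-2-of-lem:relation-between-v-structures-M1-M1} and~\ref{item-3-of-lem:relation-between-v-structures-M1-M1} I would build $M$ explicitly: take the skeleton $G$, orient every edge exactly as in $M_1$, every edge exactly as in $M_2$, orient the cross edge $r_2\to r_1$ (for Part~\ref{item-2-of-lem:relation-between-v-structures-M1-M1}; $r_1\to r_2$ for Part~\ref{item-3-of-lem:relation-between-v-structures-M1-M1}), and leave every remaining edge undirected. Then verify $M$ is an MEC via \cref{thm:nes-and-suf-cond-for-tree-graph-to-be-an-MEC}: no induced $a\to b-c$ is created at $r_1$ or $r_2$ because $M_1$ having an incoming edge at $r_1$ forces \emph{all} of $r_1$'s edges in $M_1$ to be oriented (else $M_1$ itself contains $a\to b-c$), and symmetrically at $r_2$; and $r_2\to r_1$ is strongly protected by the v-structure $x_0\to r_1 \leftarrow r_2$ (form (b) of \cref{fig:strongly-protected-edge}), while every edge of $M$ inherited from $M_i$ keeps its witnessing configuration since $M[V_{G_i}]$ equals $M_i$. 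Since $r_2\notin V_{G_1}$ and $r_1\notin V_{G_2}$, we get $M[V_{G_1}]=M_1$ and $M[V_{G_2}]=M_2$, so $\mathcal{P}(M,V_{G_1},V_{G_2})=(M_1,M_2)$ by \cref{obs:there-exists-a-unique-projection}; counting the v-structures of this $M$ exactly as in Part~\ref{item-1-of-lem:relation-between-v-structures-M1-M1} shows $\mathcal{V}(M)$ is the right-hand side of \eqref{eq:v-structure-1} (resp.\ \eqref{eq:v-structure-2}), and $r_2\to r_1\in M$ (resp.\ the new v-structure) places $M$ in $\setofMECs{G,r_1,1}$. Uniqueness is then immediate, since an MEC is determined by its skeleton together with its set of v-structures (\cite{verma1990equivalence}).

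The main obstacle I anticipate is the $\subseteq$ direction in Part~\ref{item-1-of-lem:relation-between-v-structures-M1-M1}: ruling out a directed edge $a\to r_1$ of $M$ (with $a\in V_{G_1}$) that is \emph{not} already directed in the projection $M_1$. Without the hypothesis $M_1\in\setofMECs{G_1,r_1,1}$ this genuinely fails, since a projection can forget orientations, so the argument must use the guaranteed extra incoming edge at $r_1$ in an essential way; the same care reappears when checking the no-$a\to b-c$ condition in the construction. The remaining verifications are routine case checks exploiting the tree structure.
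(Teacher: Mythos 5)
Your proposal is correct and follows essentially the same route as the paper: Part~\ref{item-1-of-lem:relation-between-v-structures-M1-M1} via \cref{obs:v-structures-of-G-neither-in-M1-nor-in-M2} together with the forced orientation of $r_1-r_2$ (your $\subseteq$ step, argued through $\mathcal{V}(M[V_{G_1}])=\mathcal{V}(M_1)$, is an equivalent rephrasing of the paper's partition of $r_1$'s neighbours into incoming and outgoing sets, all directed because $M_1\in\setofMECs{G_1,r_1,1}$), and Parts~\ref{item-2-of-lem:relation-between-v-structures-M1-M1}--\ref{item-3-of-lem:relation-between-v-structures-M1-M1} via the same explicit construction $M_1\cup M_2\cup\{r_2\rightarrow r_1\}$ (resp.\ $\{r_1\rightarrow r_2\}$) checked against \cref{thm:nes-and-suf-cond-for-tree-graph-to-be-an-MEC}, with uniqueness from skeleton plus v-structures. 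The only slight imprecision is your closing parenthetical for Part~\ref{item-3-of-lem:relation-between-v-structures-M1-M1}: membership in $\setofMECs{G,r_1,1}$ there comes from the edge $x_0\rightarrow r_1$ inherited from $M_1$ (as in the paper's \cref{lem:MEC-with-projected-MEC-in-M1-is-in-M1}), not from the new v-structures at $r_2$, but this is already available from your $M[V_{G_1}]=M_1$.
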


\begin{proof}
We first prove \cref{item-1-of-lem:relation-between-v-structures-M1-M1} of \cref{lem:relation-between-v-structures-M1-M1}.
\begin{proof}[Proof of \cref{item-1-of-lem:relation-between-v-structures-M1-M1} of \cref{lem:relation-between-v-structures-M1-M1}]
    Let $M$ be an MEC of $G$ such that $\mathcal{P}(M, V_{G_1}, V_{G_2}) = (M_1, M_2)$. 
From \cref{obs:v-structures-of-G-neither-in-M1-nor-in-M2}, $M$ has the v-structures of $M_1$ and $M_2$, and additionally, it can only have the v-structures that involve the edge with endpoints $r_1$ and $r_2$.    
Since $M_1 \in \setofMECs{G_1, r_1, 1}$, there exists a node $x\in X$ such that $x\rightarrow r_1 \in M_1$. As $M_1$ is an MEC, from \cref{item-1-of-thm:nes-and-suf-cond-for-tree-graph-to-be-an-MEC} of \cref{thm:nes-and-suf-cond-for-tree-graph-to-be-an-MEC}, all the edges adjacent to $r_1$ in $M_1$ are directed. This implies that we can partition the set of neighbors of $r_1$ in $G_1$ into $X$ and $X'$ such that $X = \{x: x\rightarrow r_1 \in M_1\}$ and $X' = \{x': x'\leftarrow r_1 \in M_1\}$. Similarly, as $M_2\in \setofMECs{G_2, r_2, 1}$, we can partition the set of neighbors of $r_2$ in $G_2$ into $Y$ and $Y'$ such that $Y = \{y: y\rightarrow r_2 \in M_2\}$ and $Y' = \{y': y'\leftarrow r_2 \in M_2\}$.
According to \cref{lem:directed-edge-is-same-in-projected-MEC}, $M$ has edges $x\rightarrow r_1$ for $x\in X$, $r_1\rightarrow x'$ for $x' \in X'$, $y\rightarrow r_2$ for $y\in Y$, and $r_2\rightarrow y'$ for $y' \in Y'$.
Based on \cref{item-1-of-thm:nes-and-suf-cond-for-tree-graph-to-be-an-MEC} of \cref{thm:nes-and-suf-cond-for-tree-graph-to-be-an-MEC}, either $r_2\rightarrow r_1 \in E_M$ or $r_1\rightarrow r_2\in E_M$, otherwise, $M$ has an induced subgraph of the form $x\rightarrow r_1-r_2$ for $x\in X$, contradicting \cref{item-1-of-thm:nes-and-suf-cond-for-tree-graph-to-be-an-MEC} of \cref{thm:nes-and-suf-cond-for-tree-graph-to-be-an-MEC}. If $r_2\rightarrow r_1 \in E_M$, then, in addition to the v-structures of $M_1$ and $M_2$, $M$ has the v-structures of the form $r_2\rightarrow r_1 \leftarrow x$ for $x \in X$, i.e., $M$ obeys $\mathcal{V}(M)$ as stated in \cref{eq:v-structure-1}. Similarly, if $r_1\rightarrow r_2 \in E_M$, then $M$ obeys \cref{eq:v-structure-2}. This completes the proof of \cref{item-1-of-lem:relation-between-v-structures-M1-M1}.
\end{proof}

 We now prove \cref{item-2-of-lem:relation-between-v-structures-M1-M1,item-3-of-lem:relation-between-v-structures-M1-M1} of \cref{lem:relation-between-v-structures-M1-M1}.
As discussed in the introduction, there cannot be two distinct MECs with the same skeleton and the same set of v-structures. 
This means there can be at most one MEC of $G$ containing the v-structures in \cref{eq:v-structure-1}, and at most one MEC of $G$ containing the v-structures in \cref{eq:v-structure-2}. 
The uniqueness of MECs present in \cref{item-2-of-lem:relation-between-v-structures-M1-M1,item-3-of-lem:relation-between-v-structures-M1-M1} of \cref{lem:relation-between-v-structures-M1-M1} comes from this fact. 
 We will show that $M_x=M_1\cup M_2 \cup \{r_2\rightarrow r_1\}$ is the MEC of $G$ that obeys \cref{eq:v-structure-1}, and $M_y=M_1\cup M_2 \cup \{r_1\rightarrow r_2\}$ is the MEC of $G$ that obeys \cref{eq:v-structure-2}.
  Let's begin with $M_x$.

To prove \cref{item-2-of-lem:relation-between-v-structures-M1-M1}, we need to demonstrate that (a) $M_x$ obeys \cref{eq:v-structure-1}, (b) $M_x$ is an MEC of $G$, (c) $\mathcal{P}(M_x, V_{G_1}, V_{G_2}) = (M_1, M_2)$, and (d) $M_x \in \setofMECs{G, r_1, 1}$. Let's first establish that $M_x$ obeys \cref{eq:v-structure-1}.

From the construction of $M_x$, the induced subgraph of $M_x$ on $V_{G_1}$ is $M_1$. Therefore, $\mathcal{V}(M_x[V_{G_1}]) = \mathcal{V}(M_1)$. Similarly, $\mathcal{V}(M_x[V_{G_2}]) = \mathcal{V}(M_2)$. This implies that $M_x$ contains the v-structure of $M_1$ and $M_2$, and in addition, the v-structures it contains must have one edge $r_2\rightarrow r_1$. Considering the construction of $M_x$, if $x\rightarrow r_1 \in M$, then either $x = r_2$ or $x\rightarrow r_1 \in M_1$. Therefore, the v-structures of $M_x$ with one edge $r_2\rightarrow r_1$ must be of the form $x\rightarrow r_1 \leftarrow r_2$ for $x\rightarrow r_1 \in M_1$. This demonstrates that the v-structures contained in $M_x$ obey \cref{eq:v-structure-1}.

Now, let's verify that $M_x$ is an MEC of $G$. Given the construction of $M_x$, its skeleton is $G$, so we only need to establish that $M_x$ is an MEC. As the skeleton of $M_x$ is a tree graph, we demonstrate that it satisfies \cref{item-1-of-thm:nes-and-suf-cond-for-tree-graph-to-be-an-MEC,item-2-of-thm:nes-and-suf-cond-for-tree-graph-to-be-an-MEC} of \cref{thm:nes-and-suf-cond-for-tree-graph-to-be-an-MEC}.

 Suppose $M_x$ contains an induced subgraph of the form $a\rightarrow b -c$. From the construction of $M_x$, there are 3 possibilities: either (a) $a,b,c \in V_{G_1}$, or (b) $a,b,c \in V_{G_2}$, or (c) $a = r_2$, $b =r_1$, and $c \in V_{G_1}$. If $a, b, c \in V_{G_1}$ then from the construction of $M_x$, $a\rightarrow b -c \in M_1$. Since $M_1$ is an MEC of a tree graph $G_1$, this contradicts \cref{item-1-of-thm:nes-and-suf-cond-for-tree-graph-to-be-an-MEC} of \cref{thm:nes-and-suf-cond-for-tree-graph-to-be-an-MEC}. This implies that this case cannot occur. Similarly, there cannot be $a, b, c  \in V_{G_2}$. If $a = r_2$, $b =r_1$, and $c \in V_{G_1}$ then from the construction of $M_x$, $r_1-c \in M_1$. Since $M_1 \in \setofMECs{G_1, r_1, 1}$, there must exist a node $x$ such that $x\rightarrow r_1 \in M_1$. This implies $M_1$ contains an induced subgraph of the form $x\rightarrow r_1 - c \in M_1$. As seen earlier, this cannot occur as $M_1$ obeys \cref{item-1-of-thm:nes-and-suf-cond-for-tree-graph-to-be-an-MEC} of \cref{thm:nes-and-suf-cond-for-tree-graph-to-be-an-MEC}.  This implies $M_x$ cannot have any induced subgraph of the form $a\rightarrow b - c$.  This shows that $M_x$ obeys \cref{item-1-of-thm:nes-and-suf-cond-for-tree-graph-to-be-an-MEC} of \cref{thm:nes-and-suf-cond-for-tree-graph-to-be-an-MEC}.

We now show that each directed edge $u\rightarrow v$ in $M_x$ obeys \cref{item-2-of-thm:nes-and-suf-cond-for-tree-graph-to-be-an-MEC} of \cref{thm:nes-and-suf-cond-for-tree-graph-to-be-an-MEC}. Let $u\rightarrow v$ be a directed edge in $M_x$. From the construction of $M_x$, either $u\rightarrow v \in M_1$, or $u\rightarrow v \in M_2$, or $u =r_2$ and $v =r_1$. Since $M_1$ is an MEC of a tree graph $G_1$, from \cref{item-2-of-thm:nes-and-suf-cond-for-tree-graph-to-be-an-MEC} of \cref{thm:nes-and-suf-cond-for-tree-graph-to-be-an-MEC}, if $a\rightarrow b \in M_1$ then it must be part of an induced subgraph of the form either $w\rightarrow u \rightarrow v$ (similar to \cref{fig:strongly-protected-edge}.a), or $u\rightarrow v \leftarrow w$. From the construction of $M$, the induced subgraph of $M_1$ is an induced subgraph of $M$. Thus, in $M_x$, $u\rightarrow v$ obeys \cref{item-2-of-thm:nes-and-suf-cond-for-tree-graph-to-be-an-MEC} of \cref{thm:nes-and-suf-cond-for-tree-graph-to-be-an-MEC}. Similarly, if $u\rightarrow v \in M_2$ then in $M_x$, $u\rightarrow v$ obeys \cref{item-2-of-thm:nes-and-suf-cond-for-tree-graph-to-be-an-MEC} of \cref{thm:nes-and-suf-cond-for-tree-graph-to-be-an-MEC}. We now go through the final possibility when $u =r_2$ and $v =r_1$. Since $M_1 \in \setofMECs{G_1, r_1, 1}$, there exist a node $x\in V_{G_1}$ such that $x\rightarrow r_1 \in M_1$. From the construction of $M_x$, $x\rightarrow r_1\leftarrow r_2$ is an induced subgraph of $M_x$. This implies $r_2\rightarrow r_1$  also obeys \cref{item-2-of-thm:nes-and-suf-cond-for-tree-graph-to-be-an-MEC} of \cref{thm:nes-and-suf-cond-for-tree-graph-to-be-an-MEC}. This shows all the directed edges of $M_x$ obeys \cref{item-2-of-thm:nes-and-suf-cond-for-tree-graph-to-be-an-MEC} of \cref{thm:nes-and-suf-cond-for-tree-graph-to-be-an-MEC}.
The above discussion implies that $M_x$ is an MEC of $G$ that obeys \cref{eq:v-structure-1}. We now prove that $\mathcal{P}(M_x, V_{G_1}, V_{G_2}) = (M_1, M_2)$.

Since $M_x$ obeys \cref{item-1-of-obs:v-structures-of-G-neither-in-M1-nor-in-M2,item-2-of-obs:v-structures-of-G-neither-in-M1-nor-in-M2} of \cref{obs:v-structures-of-G-neither-in-M1-nor-in-M2}, therefore, from \cref{obs:v-structures-of-G-neither-in-M1-nor-in-M2}, $\mathcal{P}(M_x, V_{G_1}, V_{G_2}) = (M_1, M_2)$. We now prove the remaining part of \cref{item-2-of-lem:relation-between-v-structures-M1-M1}, that is $M_x \in \setofMECs{G, r_1, 1}$. The following observation proves this.

\begin{observation}
    \label{lem:MEC-with-projected-MEC-in-M1-is-in-M1}
    Let $M$ and $M_1$ be MECs of $G$ and $G_1$, respectively, such that $\mathcal{P}(M, V_{G_1}) = M_1$, and $M_1 \in \setofMECs{G_1, r_1, 1}$. Then, $M \in \setofMECs{G, r_1, 1}$.
\end{observation}
\begin{proof}
    Since $M_1 \in \setofMECs{G_1, r_1, 1}$, there exists an $x$ such that $x\rightarrow r_1 \in M_1$. Since $M_1$ is a projection of $M$, from \cref{lem:directed-edge-in-projection-implies-directed-edge-in-MEC}, $x\rightarrow r_1 \in M$. This implies $M \in \setofMECs{G, r_1, 1}$.
\end{proof}

This completes the proof of \cref{item-2-of-lem:relation-between-v-structures-M1-M1} of \cref{lem:relation-between-v-structures-M1-M1}. Similarly, we can prove \cref{item-3-of-lem:relation-between-v-structures-M1-M1} of \cref{lem:relation-between-v-structures-M1-M1}. This concludes the proof of \cref{lem:relation-between-v-structures-M1-M1}.
\end{proof}

\Cref{lem:relation-between-v-structures-M1-M1} implies the following corollary:

\begin{corollary}
\label{lem:counting-MECs-of-M1-M1}
For $M_1 \in \setofMECs{G_1, r_1, 1}$ and $M_2 \in \setofMECs{G_2, r_2, 1}$, the number of MECs $M$ of $G$ such that $\mathcal{P}(M, V_{G_1}, V_{G_2}) = (M_1, M_2)$ is two. Specifically, both MECs belong to $\setofMECs{G, r_1, 1}$.
\end{corollary}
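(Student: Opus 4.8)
The plan is to derive the corollary directly from \cref{lem:relation-between-v-structures-M1-M1}, which already contains all the substantive work. First I would apply \cref{item-1-of-lem:relation-between-v-structures-M1-M1} of \cref{lem:relation-between-v-structures-M1-M1}: every MEC $M$ of $G$ with $\mathcal{P}(M, V_{G_1}, V_{G_2}) = (M_1, M_2)$ has a set of v-structures equal either to the set on the right-hand side of \cref{eq:v-structure-1} or to the set on the right-hand side of \cref{eq:v-structure-2}. Since two MECs with the same skeleton and the same set of v-structures are identical (the characterization of \cite{verma1990equivalence} recalled in the introduction), this already bounds the number of such $M$ by two.

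For the matching lower bound I would invoke \cref{item-2-of-lem:relation-between-v-structures-M1-M1,item-3-of-lem:relation-between-v-structures-M1-M1} of \cref{lem:relation-between-v-structures-M1-M1}: there is a (unique) MEC $M_x$ of $G$ realizing \cref{eq:v-structure-1} and a (unique) MEC $M_y$ realizing \cref{eq:v-structure-2}, and in both cases the lemma also tells us that the MEC projects to $(M_1, M_2)$ and lies in $\setofMECs{G, r_1, 1}$. So it only remains to check $M_x \neq M_y$, i.e.\ that the two v-structure sets in \cref{eq:v-structure-1,eq:v-structure-2} are genuinely different. Here I would use the hypotheses on $M_1$ and $M_2$: since $M_1 \in \setofMECs{G_1, r_1, 1}$ there is some $x \rightarrow r_1 \in M_1$, so \cref{eq:v-structure-1} contains the v-structure $x \rightarrow r_1 \leftarrow r_2$, whose collider is $r_1$ and which uses the oriented edge $r_2 \rightarrow r_1$. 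On the other hand, every v-structure appearing in \cref{eq:v-structure-2} either lies in $\mathcal{V}(M_1)$ (hence does not involve $r_2$, as $r_2 \notin V_{G_1}$), lies in $\mathcal{V}(M_2)$ (hence does not involve $r_1$, as $r_1 \notin V_{G_2}$), or has the form $y \rightarrow r_2 \leftarrow r_1$ (collider $r_2$, using the edge $r_1 \rightarrow r_2$). None of these can equal $x \rightarrow r_1 \leftarrow r_2$, so the two sets differ and $M_x \neq M_y$. Combining the two directions, there are exactly two such MECs, namely $M_x$ and $M_y$, and both belong to $\setofMECs{G, r_1, 1}$.

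The only real subtlety, and the step I would treat with the most care, is this distinctness of the two v-structure sets; everything else is a direct citation of \cref{lem:relation-between-v-structures-M1-M1}. The distinctness itself is short once one observes that an incoming edge at $r_1$ inside $M_1$ forces at least one ``new'' v-structure with collider $r_1$ in \cref{eq:v-structure-1}, and such a v-structure can never be produced on the right-hand side of \cref{eq:v-structure-2}, which only contributes v-structures with collider $r_2$.
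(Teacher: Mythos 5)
Your proposal is correct and follows essentially the same route as the paper's own proof: \cref{item-1-of-lem:relation-between-v-structures-M1-M1} gives the upper bound of two, and \cref{item-2-of-lem:relation-between-v-structures-M1-M1,item-3-of-lem:relation-between-v-structures-M1-M1} supply the two MECs together with their membership in $\setofMECs{G, r_1, 1}$. Your explicit check that the v-structure sets in \cref{eq:v-structure-1,eq:v-structure-2} differ (so the two MECs are genuinely distinct) is a point the paper leaves implicit, and your argument for it is sound.
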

\begin{proof}
From \cref{item-1-of-lem:relation-between-v-structures-M1-M1} of \cref{lem:relation-between-v-structures-M1-M1}, for any MEC $M$ of $G$, if $\mathcal{P}(M, V_{G_1}, V_{G_2}) = (M_1, M_2)$ then $M$ has to obey either \cref{eq:v-structure-1} or \cref{eq:v-structure-2}. 

From \cref{item-2-of-lem:relation-between-v-structures-M1-M1} of \cref{lem:relation-between-v-structures-M1-M1}, there exists a unique MEC that satisfies \cref{eq:v-structure-1}, belonging to $\setofMECs{G, r_1, 1}$. Additionally, from \cref{item-3-of-lem:relation-between-v-structures-M1-M1} of \cref{lem:relation-between-v-structures-M1-M1}, another unique MEC, following \cref{eq:v-structure-2}, also belongs to $\setofMECs{G, r_1, 1}$.

Hence, the total count of MECs is two, and both of them belong to $\setofMECs{G, r_1, 1}$.
\end{proof}

We now consider the second possibility where $M_1 \in \setofMECs{G_1, r_1, 1}$ and $M_2 \in \setofMECs{G_2, r_2, 0, j}$ for some $0 \leq j \leq \delta_2$.

\begin{lemma}
\label{lem:relation-between-v-structures-M1-M0}
Let $M_1 \in \setofMECs{G_1, r_1, 1}$ and for some $0 \leq j \leq \delta_2$, $M_2 \in \setofMECs{G_2, r_2, 0, j}$. Then:
\begin{enumerate}
    \item
    \label{item-1-of-lem:relation-between-v-structures-M1-M0}
    For any MEC $M$ of $G$, if $\mathcal{P}(M, V_{G_1}, V_{G_2}) = (M_1, M_2)$ then one of the following occurs:
\begin{enumerate}
    \item 
    \begin{equation}
    \label{eq:v-structure-M1-M0-2}
        \mathcal{V}(M)=\mathcal{V}(M_1) \cup \mathcal{V}(M_2) \cup \bigcup_{x\rightarrow r_1 \in E_{M_1}}{x\rightarrow r_1\leftarrow r_2}
    \end{equation}
    \item
    \begin{equation}
        \label{eq:v-structure-M1-M0-1}
        \mathcal{V}(M)=\mathcal{V}(M_1) \cup \mathcal{V}(M_2)
    \end{equation}
    \item For some undirected edge $y-r_2\in E_{M_2}$,
    \begin{equation}
    \label{eq:v-structure-M1-M0-3}
        \mathcal{V}(M)=\mathcal{V}(M_1) \cup \mathcal{V}(M_2) \cup \{r_1\rightarrow r_2\leftarrow y\}
    \end{equation}
\end{enumerate}
\item
\label{item-2-of-lem:relation-between-v-structures-M1-M0}
There exists a unique MEC $M$ of $G$ that obeys \cref{eq:v-structure-M1-M0-2}. Additionally,  $M$ belongs to $\setofMECs{G, r_1, 1}$, and $\mathcal{P}(M, V_{G_1}, V_{G_2}) = (M_1, M_2)$.

\item
\label{item-3-of-lem:relation-between-v-structures-M1-M0}
There exists a unique MEC $M$ of $G$ that obeys \cref{eq:v-structure-M1-M0-1}. Furthermore, $M$ belongs to $\setofMECs{G, r_1, 1}$, and $\mathcal{P}(M, V_{G_1}, V_{G_2}) = (M_1, M_2)$.

\item
\label{item-4-of-lem:relation-between-v-structures-M1-M0}
For each $y$ such that $y-r_2 \in E_{M_2}$, there exists a unique MEC $M$ of $G$ that obeys \cref{eq:v-structure-M1-M0-3}. Also, $M$ belongs to $\setofMECs{G, r_1, 1}$, and $\mathcal{P}(M, V_{G_1}, V_{G_2}) = (M_1, M_2)$.
\end{enumerate}
\end{lemma}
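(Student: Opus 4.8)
The plan is to follow the template of the proof of \cref{lem:relation-between-v-structures-M1-M1}, adapting it to track the extra freedom created by the undirected edges incident to $r_2$ in $M_2$. Since cutting $(r_1,r_2)$ makes $G_1$ and $G_2$ vertex-disjoint (the only edge of $G$ joining $V_{G_1}$ and $V_{G_2}$ was the bridge $r_1-r_2$), \cref{obs:v-structures-of-G-neither-in-M1-nor-in-M2} says that an MEC $M$ of $G$ satisfies $\mathcal{P}(M, V_{G_1}, V_{G_2}) = (M_1, M_2)$ iff $M$ contains every v-structure of $M_1$ and of $M_2$ and every additional v-structure of $M$ uses the bridge edge, with all additional v-structures containing $r_1 \rightarrow r_2$ or all containing $r_2 \rightarrow r_1$. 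Because $M_1 \in \setofMECs{G_1, r_1, 1}$, \cref{item-1-of-thm:nes-and-suf-cond-for-tree-graph-to-be-an-MEC} of \cref{thm:nes-and-suf-cond-for-tree-graph-to-be-an-MEC} forces every edge incident to $r_1$ in $M_1$ to be oriented, and by \cref{lem:directed-edge-in-projection-implies-directed-edge-in-MEC} these orientations carry over to $M$; I would write $X = \{x : x \rightarrow r_1 \in E_{M_1}\}$, which is nonempty, and note that $X$ is exactly the set of in-neighbours of $r_1$ in $M$ lying in $V_{G_1}$, while no $x \in X$ is adjacent to $r_2$.

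For \cref{item-1-of-lem:relation-between-v-structures-M1-M0} I would first rule out the bridge being undirected: picking $x \in X$, an undirected $r_1 - r_2$ would yield the forbidden induced subgraph $x \rightarrow r_1 - r_2$. Then I would case on the bridge orientation. If $r_2 \rightarrow r_1$, the in-neighbours of $r_1$ in $M$ are exactly $X \cup \{r_2\}$, so, using that each $x \in X$ is non-adjacent to $r_2$, the only v-structures beyond those of $M_1$ and $M_2$ are $x \rightarrow r_1 \leftarrow r_2$ for $x \rightarrow r_1 \in E_{M_1}$, i.e. \cref{eq:v-structure-M1-M0-2}. If $r_1 \rightarrow r_2$, every new v-structure has the form $r_1 \rightarrow r_2 \leftarrow c$ with $c \in V_{G_2}$ and $c \rightarrow r_2 \in M$; since $\mathcal{V}(M[V_{G_2}]) = \mathcal{V}(M_2)$ and distinct neighbours of $r_2$ are non-adjacent in the tree $G_2$, at most one edge $y - r_2$ that is undirected in $M_2$ can be oriented towards $r_2$ in $M$ (two of them would create a v-structure at $r_2$ inside $V_{G_2}$ that is absent from $M_2$). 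If none is oriented inward we get \cref{eq:v-structure-M1-M0-1}; if exactly one, say $y \rightarrow r_2$ with $y - r_2 \in E_{M_2}$, we get \cref{eq:v-structure-M1-M0-3}.

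For \cref{item-2-of-lem:relation-between-v-structures-M1-M0,item-3-of-lem:relation-between-v-structures-M1-M0,item-4-of-lem:relation-between-v-structures-M1-M0}, uniqueness is immediate, since an MEC is determined by its skeleton together with its v-structure set, so fixing which of \cref{eq:v-structure-M1-M0-1,eq:v-structure-M1-M0-2,eq:v-structure-M1-M0-3} holds (and which $y$, in the last case) leaves at most one MEC. For existence I would exhibit it explicitly: for \cref{item-2-of-lem:relation-between-v-structures-M1-M0} take $M = M_1 \cup M_2 \cup \{r_2 \rightarrow r_1\}$; for \cref{item-3-of-lem:relation-between-v-structures-M1-M0} and \cref{item-4-of-lem:relation-between-v-structures-M1-M0} start from $M_1 \cup M_2 \cup \{r_1 \rightarrow r_2\}$ (and additionally orient the chosen $y - r_2$ as $y \rightarrow r_2$ in the latter) and then complete by repeatedly replacing each induced $a \rightarrow b - c$ inside $V_{G_2}$ by $a \rightarrow b \rightarrow c$, starting from $r_2$. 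For each construction I would verify the conditions of \cref{thm:nes-and-suf-cond-for-tree-graph-to-be-an-MEC}: the chain-graph and chordal-component conditions are automatic for a tree skeleton; the absence of $a \rightarrow b - c$ holds because the propagation orients precisely the offending edges; and strong protection of the bridge and of any edge oriented at $r_2$ is witnessed by $x \rightarrow r_1 \leftarrow r_2$ (resp. $r_1 \rightarrow r_2 \leftarrow y$), while every other oriented edge lies on a directed path out of $r_1$ or $r_2$ and is protected by configuration (a) in \cref{item-2-of-thm:nes-and-suf-cond-for-tree-graph-to-be-an-MEC} of \cref{thm:nes-and-suf-cond-for-tree-graph-to-be-an-MEC}. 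I would then read off $\mathcal{V}(M)$, invoke \cref{obs:v-structures-of-G-neither-in-M1-nor-in-M2} to get $\mathcal{P}(M, V_{G_1}, V_{G_2}) = (M_1, M_2)$, and conclude $M \in \setofMECs{G, r_1, 1}$ since $x \rightarrow r_1 \in M$ (cf. \cref{lem:MEC-with-projected-MEC-in-M1-is-in-M1}).

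The step I expect to be the real obstacle is the $r_1 \rightarrow r_2$ branch (items~3 and~4): one must show that the forced orientation-propagation from $r_2$ into $V_{G_2}$ always terminates, introduces no directed cycle, and introduces no new v-structure, so that \cref{eq:v-structure-M1-M0-1} is realised by exactly one MEC and each of the $j$ edges $y - r_2 \in E_{M_2}$ yields exactly one MEC for \cref{eq:v-structure-M1-M0-3}. The argument should exploit that the chain components of $M_2$ are chordal, hence tree, undirected graphs, so that orienting away from $r_2$ simply roots the chain component containing $r_2$ and gives every non-root vertex a unique parent, precluding both new colliders and cycles. Combined with \cref{item-2-of-lem:relation-between-v-structures-M1-M0}, this yields exactly $j + 2$ MECs $M$ with $\mathcal{P}(M, V_{G_1}, V_{G_2}) = (M_1, M_2)$, all lying in $\setofMECs{G, r_1, 1}$.
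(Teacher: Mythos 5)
Your proposal is correct and follows essentially the same route as the paper: the same case analysis on the orientation of the bridge edge for \cref{item-1-of-lem:relation-between-v-structures-M1-M0}, the same explicit constructions $M_1\cup M_2\cup\{r_2\rightarrow r_1\}$ and $M_1\cup M_2\cup\{r_1\rightarrow r_2\}$ (plus $y\rightarrow r_2$) with orientation propagation away from $r_2$ for \cref{item-2-of-lem:relation-between-v-structures-M1-M0,item-3-of-lem:relation-between-v-structures-M1-M0,item-4-of-lem:relation-between-v-structures-M1-M0}, and uniqueness from the fact that skeleton plus v-structures determine an MEC. The only cosmetic difference is that you phrase the completion step as iterated closure of the rule $a\rightarrow b-c\Rightarrow a\rightarrow b\rightarrow c$ starting at $r_2$, whereas the paper directly orients the edges lying on undirected paths out of $r_2$ (avoiding $y$ in the last item); on a tree skeleton these coincide.
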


\begin{proof}
We first prove \cref{item-1-of-lem:relation-between-v-structures-M1-M0} of \cref{lem:relation-between-v-structures-M1-M0}.

\begin{proof}[Proof of \cref{item-1-of-lem:relation-between-v-structures-M1-M0} of \cref{lem:relation-between-v-structures-M1-M0}]
Let $M$ be an MEC of $G$ such that $\mathcal{P}(M, V_{G_1}, V_{G_2}) = (M_1, M_2)$.
    From \cref{obs:v-structures-of-G-neither-in-M1-nor-in-M2}, $M$ has the v-structures of $M_1$ and $M_2$. In addition to those, it can only have v-structures that involve an edge with endpoints $r_1$ and $r_2$.

Since $M_1 \in \setofMECs{G_1, r_1, 1}$, there exists an $x\in V_{G_1}$ such that $x\rightarrow r_1$ in $M_1$. Following \cref{item-1-of-thm:nes-and-suf-cond-for-tree-graph-to-be-an-MEC} of \cref{thm:nes-and-suf-cond-for-tree-graph-to-be-an-MEC}, there cannot be an induced subgraph $x\rightarrow r_1 - y$ in $M_1$. Therefore, each edge adjacent to $r_1$ in $M_1$ is directed. Consequently, we can partition the neighbors of $r_1$ in $G_1$ into $X$ and $X'$ such that $X = \{x: x\rightarrow r_1 \in M_1\}$ and $X' = \{x': r_1\rightarrow x' \in M_1\}$ with $X$ being non-empty.

From \cref{lem:directed-edge-in-projection-implies-directed-edge-in-MEC}, for $x\in X$, $x\rightarrow r_1 \in M$, and for $x' \in X'$, $r_1\rightarrow x' \in M$. Further, from \cref{item-1-of-thm:nes-and-suf-cond-for-tree-graph-to-be-an-MEC} of \cref{thm:nes-and-suf-cond-for-tree-graph-to-be-an-MEC}, either $r_2\rightarrow r_1 \in M$ or $r_1\rightarrow r_2 \in M$, otherwise, for some $x \in X$, $M$ contains an induced subgraph $x\rightarrow r_1-r_2$, contradicting \cref{item-1-of-thm:nes-and-suf-cond-for-tree-graph-to-be-an-MEC} of \cref{thm:nes-and-suf-cond-for-tree-graph-to-be-an-MEC}..

If $r_2 \rightarrow r_1 \in M$, then the v-structures with endpoints $r_1$ and $r_2$ are $x\rightarrow r_1\leftarrow r_2$ for each $x \in X$. In this case, the v-structures of $M$ obey \cref{eq:v-structure-M1-M0-2}.

Suppose $r_1\rightarrow r_2 \in M$. Since $M_2 \in \setofMECs{G_2, r_2, 0, j}$, there cannot be an edge $y\rightarrow r_2$ in $M_2$. This implies that we can partition the neighbors of $r_2$ in $G_2$ into $Y$ and $Y'$ such that $Y = \{y: y-r_2 \in M_2\}$ is a set of $j$ elements (from the definition of $\setofMECs{G_2, r_2, 0, j}$), and $Y' = \{y': r_2 \rightarrow y' \in M_2\}$ is a set of $\delta_2 - j$ elements.
Since $r_1\rightarrow r_2 \in M$, according to \cref{item-1-of-thm:nes-and-suf-cond-for-tree-graph-to-be-an-MEC} of \cref{thm:nes-and-suf-cond-for-tree-graph-to-be-an-MEC}, for $y \in Y$, either $y\rightarrow r_2 \in M$ or $r_2\rightarrow y \in M$, otherwise, $r_1\rightarrow r_2-y$ is an induced subgraph of $M$, contradicting \cref{item-1-of-thm:nes-and-suf-cond-for-tree-graph-to-be-an-MEC} of \cref{thm:nes-and-suf-cond-for-tree-graph-to-be-an-MEC}. 
If for two distinct $y_1, y_2 \in Y$,  $y_1\rightarrow r_2, y_2 \rightarrow r_2 \in M$, then we get a v-structure $y_1\rightarrow r_2 \leftarrow y_2$ in $M$ such the nodes of the v-structure are in $V_{G_2}$, but the v-structure is not in $M_2$, implying $\mathcal{V}(M[V_{G_2}]) \neq \mathcal{V}(M_2)$, which further implies $\mathcal{P}(M, V_{G_2}) \neq M_2$, a contradiction. This implies that there exists at most one $y \in Y$ such that $y\rightarrow r_2 \in M$. 

If there does not exist any $y \in Y$ such that $y\rightarrow r_2 \in M$, then there won't be any v-structure in $M$ with endpoints $r_1$ and $r_2$. In this case, $M$ only contains the v-structures of $M_1$ and $M_2$, and obeys \cref{eq:v-structure-M1-M0-1}. 

However, if there exists a $y\in Y$ such that $y\rightarrow r_2 \in M$, then $r_1\rightarrow r_2 \leftarrow y$ is the only v-structure with endpoints $r_1$ and $r_2$ in $M$. In this case, $M$ obeys \cref{eq:v-structure-M1-M0-3}. 

We show that in each possibility, $M$ obeys \cref{item-1-of-lem:relation-between-v-structures-M1-M0} of \cref{lem:relation-between-v-structures-M1-M0}. 
This completes the proof of \cref{item-1-of-lem:relation-between-v-structures-M1-M0} of \cref{lem:relation-between-v-structures-M1-M0}.
\end{proof}

As discussed in the introduction, there cannot be two distinct MECs with the same skeleton and the same set of v-structures. The uniqueness of the MECs present in \cref{item-2-of-lem:relation-between-v-structures-M1-M0,item-3-of-lem:relation-between-v-structures-M1-M0,item-4-of-lem:relation-between-v-structures-M1-M0} of \cref{lem:relation-between-v-structures-M1-M0} arises from this fact.

We now prove \cref{item-2-of-lem:relation-between-v-structures-M1-M0,item-3-of-lem:relation-between-v-structures-M1-M0,item-4-of-lem:relation-between-v-structures-M1-M0} of \cref{lem:relation-between-v-structures-M1-M0}.

\begin{proof}[Proof of \cref{item-2-of-lem:relation-between-v-structures-M1-M0} of \cref{lem:relation-between-v-structures-M1-M0}]
    We construct an MEC of $G$ satisfying \cref{item-2-of-lem:relation-between-v-structures-M1-M0} of \cref{lem:relation-between-v-structures-M1-M0}. Let $M = M_1\cup M_2 \cup \{r_2\rightarrow r_1\}$. To verify that $M$ adheres to \cref{item-2-of-lem:relation-between-v-structures-M1-M0}, we need to demonstrate the following: (a) $M$ is an MEC of $G$, (b) $M$ satisfies \cref{eq:v-structure-M1-M0-2}, (c) $\mathcal{P}(M, V_{G_1}, V_{G_2}) = (M_1, M_2)$, and (d) $M \in \setofMECs{G, r_1, 1}$.

The proof follows a similar approach to that of verifying \cref{item-2-of-lem:relation-between-v-structures-M1-M1} in \cref{lem:relation-between-v-structures-M1-M1}. For completeness, we provide the proof.

Firstly, we establish that $M$ is an MEC of $G$. The construction of $M$ implies that its skeleton is identical to $G$. Hence, we only need to confirm that $M$ is an MEC, i.e., it satisfies \cref{item-1-of-thm:nes-and-suf-cond-for-tree-graph-to-be-an-MEC,item-2-of-thm:nes-and-suf-cond-for-tree-graph-to-be-an-MEC} of \cref{thm:nes-and-suf-cond-for-tree-graph-to-be-an-MEC}.

Suppose $M$ contains an induced subgraph of the form $a\rightarrow b -c$. There are three possible scenarios from the construction of $M$: (a) $a,b,c \in V_{G_1}$, (b) $a,b,c \in V_{G_2}$, or (c) $a = r_2$, $b =r_1$, and $c \in V_{G_1}$. If $a, b, c \in V_{G_1}$, then by construction, $a\rightarrow b -c \in M_1$. As $M_1$ is an MEC, this contradicts \cref{item-1-of-thm:nes-and-suf-cond-for-tree-graph-to-be-an-MEC} of \cref{thm:nes-and-suf-cond-for-tree-graph-to-be-an-MEC}. This scenario is impossible. Similarly, $a, b, c \in V_{G_2}$ is not viable.

Considering $a = r_2$, $b =r_1$, and $c \in V_{G_1}$, from the construction of $M$, $r_1-c \in M_1$. As $M_1 \in \setofMECs{G_1, r_1, 1}$, there must exist a node $x$ such that $x\rightarrow r_1 \in M_1$. This implies $M_1$ contains an induced subgraph of the form $x\rightarrow r_1 - c \in M_1$. However, this contradicts \cref{item-1-of-thm:nes-and-suf-cond-for-tree-graph-to-be-an-MEC} of \cref{thm:nes-and-suf-cond-for-tree-graph-to-be-an-MEC}, as $M_1$ is an MEC of a tree graph $G_1$. Consequently, $M$ cannot contain any induced subgraph of the form $a\rightarrow b - c$. Thus, $M$ satisfies \cref{item-1-of-thm:nes-and-suf-cond-for-tree-graph-to-be-an-MEC} of \cref{thm:nes-and-suf-cond-for-tree-graph-to-be-an-MEC}.

 We now demonstrate that each directed edge $u\rightarrow v$ in $M$ adheres to \cref{item-2-of-thm:nes-and-suf-cond-for-tree-graph-to-be-an-MEC} of \cref{thm:nes-and-suf-cond-for-tree-graph-to-be-an-MEC}. Let $u\rightarrow v$ be a directed edge in $M$. Based on the construction of $M$, it follows that either $u\rightarrow v \in M_1$, or $u\rightarrow v \in M_2$, or $u = r_2$ and $v = r_1$.

Considering $M_1$ as an MEC of tree graph $G_1$, if $u\rightarrow v \in M_1$, then according to \cref{item-2-of-thm:nes-and-suf-cond-for-tree-graph-to-be-an-MEC} of \cref{thm:nes-and-suf-cond-for-tree-graph-to-be-an-MEC}, $u\rightarrow v$ is part of an induced subgraph in $M_1$, such as either $w\rightarrow u \rightarrow v$ or $w\rightarrow v\leftarrow u$. As the induced subgraph of $M_1$ is also a subset of $M$, when $u\rightarrow v \in M_1$, $u\rightarrow v$ satisfies \cref{item-2-of-thm:nes-and-suf-cond-for-tree-graph-to-be-an-MEC} of \cref{thm:nes-and-suf-cond-for-tree-graph-to-be-an-MEC} for $M$.

Similarly, if $u\rightarrow v \in M_2$, then $u\rightarrow v$ satisfies \cref{item-2-of-thm:nes-and-suf-cond-for-tree-graph-to-be-an-MEC} of \cref{thm:nes-and-suf-cond-for-tree-graph-to-be-an-MEC} for $M$. Now, let's consider the scenario when $u = r_2$ and $v = r_1$. As $M_1 \in \setofMECs{G_1, r_1, 1}$, there exists a node $x\in V_{G_1}$ such that $x\rightarrow r_1 \in M_1$. With the construction of $M$, $x\rightarrow r_1\leftarrow r_2$ is an induced subgraph of $M$. Consequently, in this case, $u\rightarrow v$ also fulfills \cref{item-2-of-thm:nes-and-suf-cond-for-tree-graph-to-be-an-MEC} of \cref{thm:nes-and-suf-cond-for-tree-graph-to-be-an-MEC}.

Thus, the analysis shows that $M$ satisfies \cref{item-2-of-thm:nes-and-suf-cond-for-tree-graph-to-be-an-MEC} of \cref{thm:nes-and-suf-cond-for-tree-graph-to-be-an-MEC}. The aforementioned discussions imply that $M$ complies with \cref{item-1-of-thm:nes-and-suf-cond-for-tree-graph-to-be-an-MEC,item-2-of-thm:nes-and-suf-cond-for-tree-graph-to-be-an-MEC} of \cref{thm:nes-and-suf-cond-for-tree-graph-to-be-an-MEC}. Consequently, this verifies that $M$ is an MEC.

    Next, we demonstrate that the v-structures of $M$ adhere to \cref{eq:v-structure-M1-M0-2}. Considering the construction of $M$, $M_1$ and $M_2$ represent the induced subgraphs of $M$ on $V_{G_1}$ and $V_{G_2}$, respectively. Thus, $\mathcal{V}(M[V_{G_1}]) = \mathcal{V}(M_1)$ and $\mathcal{V}(M[V_{G_2}]) = \mathcal{V}(M_2)$. Since $r_2 \rightarrow r_1 \in M$, apart from the v-structures of $M_1$ and $M_2$, $M$ can possess v-structures involving the edge $r_2\rightarrow r_1$. Given $M_1 \in \setofMECs{G_1, r_1, 1}$, there exists a non-empty set $X = \{x: x\rightarrow r_1 \in M\}$. For each $x \in X$, $x\rightarrow r_1 \leftarrow r_2$ constitutes a v-structure in $M$. Thus, it is confirmed that the v-structures in $M$ satisfy \cref{eq:v-structure-M1-M0-2}.

Next, we verify $\mathcal{P}(M, V_{G_1}, V_{G_2}) = (M_1, M_2)$. As per the construction of $M$, $M[V_{G_1}] = M_1$, and $M[V_{G_2}] = M_2$. Consequently, $\mathcal{V}(M[V_{G_1}]) = \mathcal{V}(M_1)$ and $\mathcal{V}(M[V_{G_2}]) = \mathcal{V}(M_2)$. According to \cref{def:projection}, this further indicates $\mathcal{P}(M, V_{G_1}, V_{G_2}) = (M_1, M_2)$.

Referring to \cref{lem:MEC-with-projected-MEC-in-M1-is-in-M1}, we conclude that $M \in \setofMECs{G, r_1, 1}$. This completes the proof for \cref{item-2-of-lem:relation-between-v-structures-M1-M0} of \cref{lem:relation-between-v-structures-M1-M0}.
\end{proof}

\begin{proof}[Proof of \cref{item-3-of-lem:relation-between-v-structures-M1-M0} of \cref{lem:relation-between-v-structures-M1-M0}]
    We create an MEC $M$ of $G$ using the following steps:
\begin{enumerate}
    \item
    \label{item-2-of-M1-M0-1}
    Initialize $M = M_1\cup M_2\cup \{r_1\rightarrow r_2\}$.
    \item
    \label{item-2-of-M1-M0-2}
    Update $M$ by replacing $u-v$ in $M$ with $u\rightarrow v$ if $u-v \in M_2$ and there exists an undirected path from $r_2$ to $(u, v)$ in $M_2$.
\end{enumerate}

We demonstrate that $M$, constructed using the aforementioned steps, adheres to \cref{item-3-of-lem:relation-between-v-structures-M1-M0} of \cref{lem:relation-between-v-structures-M1-M0}. To prove this, we need to establish the following: (a) $M$ is an MEC of $G$, (b) $M$ adheres to \cref{eq:v-structure-M1-M0-1}, (c) $\mathcal{P}(M, V_{G_1}, V_{G_2}) = (M_1, M_2)$, and (d) $M \in \setofMECs{G, r_1, 1}$.

Firstly, we demonstrate that $M$ is an MEC, i.e., $M$ obeys \cref{item-1-of-thm:nes-and-suf-cond-for-tree-graph-to-be-an-MEC,item-2-of-thm:nes-and-suf-cond-for-tree-graph-to-be-an-MEC} of \cref{thm:nes-and-suf-cond-for-tree-graph-to-be-an-MEC}.

    Suppose $M$ contains an induced subgraph of the form $a\rightarrow b - c$. From the construction of $M$, there are three possibilities: either (a) $a,b,c \in V_{G_1}$, or (b) $a,b,c \in V_{G_2}$, or (c) $a = r_1$, $b =r_2$, and $c \in V_{G_2}$. 
    One by one, we show that none of the possibilities occurs. 
    
    From the construction of $M$, an induced subgraph of $M$ with vertices in $V_{G_1}$ is also an induced subgraph of $M_1$ (note that \cref{item-2-of-M1-M0-2} does not make any change in an edge with the endpoints in $V_{G_1}$). Since $M_1$ is an MEC of a tree graph $G_1$, from \cref{item-1-of-thm:nes-and-suf-cond-for-tree-graph-to-be-an-MEC} of 
\cref{thm:nes-and-suf-cond-for-tree-graph-to-be-an-MEC}, there cannot be an induced subgraph $u\rightarrow v-w \in M_1$. This implies that the first possibility cannot occur.

Suppose, $M$ contains an induced subgraph of the form $a\rightarrow b-c$ such that $a,b, c \in V_{G_2}$. From the construction of $M$, $b-c \in M_2$ (note that \cref{item-2-of-M1-M0-2} does not change a directed edge in $M_1\cup M_2 \cup \{r_1\rightarrow r_2\}$ into an undirected edge), and either $a\rightarrow b \in M_2$ or $a-b \in M_2$. If $a\rightarrow b \in M_2$ then $a\rightarrow b - c$ is an induced subgraph in the MEC $M_2$ of the tree graph $G_2$, contradicting \cref{item-1-of-thm:nes-and-suf-cond-for-tree-graph-to-be-an-MEC} of 
\cref{thm:nes-and-suf-cond-for-tree-graph-to-be-an-MEC}. And, if $a-b \in M_2$ then it must have been directed at \cref{item-2-of-M1-M0-2}. This implies there exists a path from $r_2$ to $(a,b)$ in $M_2$. This further implies that there exists a path from $r_2$ to $(b,c)$ in $M_2$. But, then, we have $b\rightarrow c \in M$ at \cref{item-2-of-M1-M0-2}. This shows that the second possibility also cannot occur. 

Suppose $r_1 \rightarrow r_2-c$ in $M$ such that $c\in V_{G_2}$. From the construction of $M$, this implies $r_2-c \in M_2$. But, edge $r_2-c$ is a path from $r_2$ to $(r_2,c)$. Then, it must have been directed at \cref{item-2-of-M1-M0-2}. This implies that the third possibility also cannot occur. This shows that $M$ cannot have an induced subgraph of the form $u\rightarrow v- w$. This implies that $M$ obeys \cref{item-1-of-thm:nes-and-suf-cond-for-tree-graph-to-be-an-MEC} of \cref{thm:nes-and-suf-cond-for-tree-graph-to-be-an-MEC}.

We now show that $M$ obeys \cref{item-2-of-thm:nes-and-suf-cond-for-tree-graph-to-be-an-MEC} of \cref{thm:nes-and-suf-cond-for-tree-graph-to-be-an-MEC}. Suppose $u\rightarrow v$ is a directed edge in $M$ then either $u,v \in V_{G_1}$, or $u,v \in V_{G_2}$, or $u=r_1$ and $v = r_2$. We show that in each of the possibilities $u\rightarrow v$ in $M$ obeys \cref{item-2-of-thm:nes-and-suf-cond-for-tree-graph-to-be-an-MEC} of \cref{thm:nes-and-suf-cond-for-tree-graph-to-be-an-MEC}.

Suppose $u,v \in V_{G_1}$. From the construction of $M$, $u\rightarrow v \in M_1$. Since $M_1$ is an MEC of a tree graph $G_1$, from \cref{item-2-of-thm:nes-and-suf-cond-for-tree-graph-to-be-an-MEC} of \cref{thm:nes-and-suf-cond-for-tree-graph-to-be-an-MEC}, $u\rightarrow v$ is part of an induced subgraph of $M_1$ of the form either $u\rightarrow v \leftarrow w$ or  $w\rightarrow u\rightarrow v$. Since from the construction of $M$, an induced subgraph of $M_1$ is an induced subgraph of $M$. Therefore, $u\rightarrow v$ in $M$ obeys \cref{item-2-of-thm:nes-and-suf-cond-for-tree-graph-to-be-an-MEC} of \cref{thm:nes-and-suf-cond-for-tree-graph-to-be-an-MEC}. 

Suppose $u,v \in V_{G_2}$. From the construction of $M$, there are two cases: either $u\rightarrow v \in M_2$ or $u-v \in M_2$. 

Suppose $u\rightarrow v \in M_2$.
Since $M_2$ is an MEC of a tree graph $G_2$, from \cref{item-2-of-thm:nes-and-suf-cond-for-tree-graph-to-be-an-MEC} of \cref{thm:nes-and-suf-cond-for-tree-graph-to-be-an-MEC}, $u\rightarrow v$ is part of an induced subgraph of $M_2$ of the form either $u\rightarrow v \leftarrow w$ or  $w\rightarrow u\rightarrow v$. Since from the construction of $M$, a directed induced subgraph of $M_2$ is an induced subgraph of $M$. Therefore, in the case when $u\rightarrow v \in M_2$, $u\rightarrow v$ in $M$ obeys \cref{item-2-of-thm:nes-and-suf-cond-for-tree-graph-to-be-an-MEC} of \cref{thm:nes-and-suf-cond-for-tree-graph-to-be-an-MEC}. 

Suppose  $u-v \in M_2$ then it must have been directed in $M$ at \cref{item-2-of-M1-M0-2}. This implies there exists an undirected path from $r_2$ to $(u,v)$ in $M_2$. 
Let the path be $P = (x_0 = r_2, x_1, \ldots, x_{l-1} = u, x_{l} = v)$ for some $l \geq 1$. From the construction of $M$, for $0\leq i < l$, $x_i \rightarrow x_{i+1} \in M$ because $P_i = (x_0 = r_2, \ldots, x_i, x_{i+1})$ is an undirected path in $M_2$ from $r_2$ to $(x_i, x_{i+1})$. If $l > 1$ then $u\rightarrow v$ is part of an induced subgraph  $x_{l-2} \rightarrow x_{l-1} \rightarrow x_l$ in $M$.
And, if $l=1$ then $r_2 = u$, and $u\rightarrow v$ is part of an induced subgraph $r_1\rightarrow r_2\rightarrow v$ in $M$. 
Thus, we show that if $u,v \in V_{G_2}$ then $u\rightarrow v$ obeys \cref{item-2-of-thm:nes-and-suf-cond-for-tree-graph-to-be-an-MEC} of \cref{thm:nes-and-suf-cond-for-tree-graph-to-be-an-MEC}.

We now show that $r_1\rightarrow r_2$ in $M$ also obeys \cref{item-2-of-thm:nes-and-suf-cond-for-tree-graph-to-be-an-MEC} of \cref{thm:nes-and-suf-cond-for-tree-graph-to-be-an-MEC}.
Since $M_1 \in \setofMECs{G_1, r_1, 1}$, there must exist $x\rightarrow r_1 \in M_1$. From the construction of $M$, $x\rightarrow r_1 \in M$. Then, $r_1\rightarrow r_2$ is part of an induced subgraph of the form $x\rightarrow r_1\rightarrow r_2$. Thus, we show that each edge of $M$ obeys \cref{item-2-of-thm:nes-and-suf-cond-for-tree-graph-to-be-an-MEC} of \cref{thm:nes-and-suf-cond-for-tree-graph-to-be-an-MEC}.

The above discussion shows that $M$ is an MEC as it obeys \cref{item-1-of-thm:nes-and-suf-cond-for-tree-graph-to-be-an-MEC,item-2-of-thm:nes-and-suf-cond-for-tree-graph-to-be-an-MEC} of \cref{thm:nes-and-suf-cond-for-tree-graph-to-be-an-MEC}. From the construction of $M$, the skelton of $M$ is $G$. This implies $M$ is an MEC of $G$.

We now show that $M$ obeys \cref{eq:v-structure-M1-M0-1}. From the construction of $M$, directed edges of $M_1$ and $M_2$ are directed edges of $M$. This implies the v-structures of $M_1$ and $M_2$ are the v-structures of $M$, i.e., $\mathcal{V}(M_1) \cup \mathcal{V}(M_2) \subseteq \mathcal{V}(M)$. For the completeness of the proof, we show that if $u\rightarrow v \leftarrow w$ is a v-structure in $M$ then either it is a v-structure of $M_1$ or it is a v-structure of $M_2$, i.e., $\mathcal{V}(M) \subseteq \mathcal{V}(M_1) \cup \mathcal{V}(M_2)$. This implies $\mathcal{V}(M) = \mathcal{V}(M_1) \cup \mathcal{V}(M_2)$, i.e., $M$ obeys \cref{eq:v-structure-M1-M0-1}.

Suppose there exists a v-structure $u\rightarrow v \leftarrow w$ in $M$. We show that either $u\rightarrow v \leftarrow w$ is a v-structure in $M_1$, or it is a v-structure in $M_2$. From the construction of $M$, there are following possibilities: (a) $u,v,w \in V_{G_1}$, or (b) $u = r_1$, $v =r_2$, and $w$ is a neighbor of $r_2$ in $V_{G_2}$, or (c) $u,v,w \in V_{G_2}$. 

    From the construction of $M$, $M_1$ is an induced subgraph of $M$ (note that \cref{item-2-of-M1-M0-2} does not change the orientation of any edge of $M_1$). This implies that if $u,v,w \in V_{G_1}$ then $u\rightarrow v \leftarrow w$ is a v-structure in $M_1$. 
    
    Suppose $u = r_1$, $v =r_2$, and $w$ is a neighbor of $r_2$ in  $V_{G_2}$. We show that this possibility cannot occur. Since $M_2 \in \setofMECs{G_2, r_2, 0}$, either $r_2\rightarrow w \in M_2$ or $r_2-w \in M_2$. We show that none of them occurs.
 Suppose $r_2\rightarrow w \in M_2$.
    From the construction of $M$ (\cref{item-2-of-M1-M0-1,item-2-of-M1-M0-2}), if $r_2\rightarrow w \in M_2$ then $r_2\rightarrow w \in M$. But, we have $r_2\leftarrow w \in M$. This implies $r_2\rightarrow w \notin M_2$. 
Suppose $r_2-w \in M_2$.
    Then, from \cref{item-2-of-M1-M0-2}, $r_2 \rightarrow w \in M$. Again, we have $r_2\leftarrow w \in M$. This implies that this case also cannot occur.
    We now move to the final possibility.
    
    Suppose $u,v, w \in V_{G_2}$. From the construction of $M$, if $a\rightarrow b \in M_2$ then $a\rightarrow b \in M$. This implies either $u-v \in M_2$ or $u\rightarrow v \in M_2$, and either $v-w \in M_2$ or $v\leftarrow w \in M_2$. From \cref{item-1-of-thm:nes-and-suf-cond-for-tree-graph-to-be-an-MEC} of \cref{thm:nes-and-suf-cond-for-tree-graph-to-be-an-MEC}, either $u\rightarrow v\leftarrow w \in M_2$ or $u-v-w \in M_2$ (recall that $M_2$ is an MEC of a tree graph $G_2$). 
    If $u\rightarrow v\leftarrow w \in M_2$ then the v-structure is a v-structure of $M_2$. Suppose $u-v-w \in M_2$. Then, if $u\rightarrow v \in M$ then it must have been directed at \cref{item-2-of-M1-M0-2}, i.e., there exists an undirected path from $r_2$ to $(u,v)$.
    Since $u-v-w \in M_2$, the existence of a path from $r_2$ to $(u,v)$ in $M_2$ implies the existence of a path from $r_2$ to $(v,w)$ in $M_2$. Also, since $M_2$ is an MEC of a tree graph $G_2$, there cannot be two different paths from $r_2$ to $w$. 
    This implies at \cref{item-2-of-M1-M0-2}, if we have $u\rightarrow v \in M$ then we also have $v\rightarrow w\in M$. But, this is a contradiction, as we have assumed that $u\rightarrow v \leftarrow w \in M$.
    This implies there cannot be $u-v-w\in M_2$ and $u\rightarrow v\leftarrow w \in M$. Thus, we show that in all the possible scenarios, if $u\rightarrow v\leftarrow w$ is a v-structure in $M$ then it is a v-structure either in $M_1$ or in $M_2$. 
    This completes the proof that $M$ obeys \cref{eq:v-structure-M1-M0-1}.

    The above discussion implies that $\mathcal{V}(M[V_{G_1}] = \mathcal{V}(M_1)$ and $\mathcal{V}(M[V_{G_2}] = \mathcal{V}(M_2)$. Therefore, from \cref{def:projection}, $\mathcal{P}(M, V_{G_1}, V_{G_2}) = (M_1, M_2)$.

    From \cref{lem:MEC-with-projected-MEC-in-M1-is-in-M1}, $M \in \setofMECs{G, r_1, 1}$. This completes the proof of \cref{item-3-of-lem:relation-between-v-structures-M1-M0} of \cref{lem:relation-between-v-structures-M1-M0}.
\end{proof}

\begin{proof}[Proof of \cref{item-4-of-lem:relation-between-v-structures-M1-M0} of \cref{lem:relation-between-v-structures-M1-M0}]
     We construct an MEC of $G$ using the following steps:
    \begin{enumerate}
        \item
        \label{item-1-of-item-3-of-M1-M0}
        Initialize $M = M_1\cup M_2\cup \{r_1\rightarrow r_2\}$.
        \item
        \label{item-2-of-item-3-of-M1-M0}
        Pick an edge $y-r_2 \in M_2$. Replace the edge $y-r_2$ in $M$  with $y\rightarrow r_2$.
        \item
        \label{item-3-of-item-3-of-M1-M0}
        Update $M$ by replacing  $u-v$ of $M$ with $u\rightarrow v$, if $u-v \in M_2$ and there exists an undirected path from $r_2$ to $(u, v)$ in $M_2$ such that the path does not contain $y$.
    \end{enumerate}

    We show that $M$ constructed using the above steps will obey \cref{item-4-of-lem:relation-between-v-structures-M1-M0} of \cref{lem:relation-between-v-structures-M1-M0}. To prove this, we have to show the following: (a) $M$ is an MEC of $G$, (b) $M$ obeys \cref{eq:v-structure-M1-M0-3}, (c) $\mathcal{P}(M, V_{G_1}, V_{G_2}) = (M_1, M_2)$, and (d) $M \in \setofMECs{G, r_1, 1}$.

    We first show that $M$ is an MEC of $G$. From the construction of $M$, the skeleton of $M$ is $G$. Therefore, it would be sufficient to show that $M$ obeys \cref{item-1-of-thm:nes-and-suf-cond-for-tree-graph-to-be-an-MEC,item-2-of-thm:nes-and-suf-cond-for-tree-graph-to-be-an-MEC} of \cref{thm:nes-and-suf-cond-for-tree-graph-to-be-an-MEC}. 

    We first show that $M$ obeys \cref{item-1-of-thm:nes-and-suf-cond-for-tree-graph-to-be-an-MEC} of \cref{thm:nes-and-suf-cond-for-tree-graph-to-be-an-MEC}.  Suppose $M$ contains an induced subgraph of the form $a\rightarrow b - c$. From the construction of $M$, there are three possibilities: either (a) $a,b,c \in V_{G_1}$, or (b) $a,b,c \in V_{G_2}$, or (c) $a = r_1$, $b =r_2$, and $c \in V_{G_2}$. 
    One by one, we show that none of the possibilities occurs. 
    
    Suppose $a,b, c \in V_{G_1}$. From the construction of $M$, an induced subgraph of $M$ with vertices in $V_{G_1}$ is also an induced subgraph of $M_1$ (note that \cref{item-2-of-item-3-of-M1-M0,item-3-of-item-3-of-M1-M0} does not make any change in an edge with the endpoints in $V_{G_1}$). Since $M_1$ is an MEC of a tree graph $G_1$, from \cref{item-1-of-thm:nes-and-suf-cond-for-tree-graph-to-be-an-MEC} of \cref{thm:nes-and-suf-cond-for-tree-graph-to-be-an-MEC}, there cannot be an induced subgraph $u\rightarrow v-w \in M_1$. This implies that the first possibility cannot occur. 
    
Suppose $a,b, c \in V_{G_2}$. From the construction of $M$, $b-c \in M_2$ (note that \cref{item-2-of-item-3-of-M1-M0,item-3-of-item-3-of-M1-M0} do not change a directed edge in $M_1\cup M_2 \cup \{r_1\rightarrow r_2\}$ into an undirected edge), and either $a\rightarrow b \in M_2$ or $a-b \in M_2$. If $a\rightarrow b \in M_2$ then $a\rightarrow b - c$ is an induced subgraph in the MEC $M_2$, contradicting \cref{item-1-of-thm:nes-and-suf-cond-for-tree-graph-to-be-an-MEC} of \cref{thm:nes-and-suf-cond-for-tree-graph-to-be-an-MEC}. And, if $a-b \in M_2$ then it must have been directed at either at \cref{item-2-of-item-3-of-M1-M0} (first case) or at \cref{item-3-of-item-3-of-M1-M0} (second case). If it has directed at \cref{item-2-of-item-3-of-M1-M0} then $a = y, b = r_2$ and $c \neq y$ is a neighbor of $r_2$. This implies $r_2 -c$ is a path from $r_2$ to $(r_2, c)$. But, then at \cref{item-3-of-item-3-of-M1-M0}, $r_2-c$ has been converted into $r_2\rightarrow c$. This implies that the first case cannot occur. We move to the second case when $a-b$ has been directed at \cref{item-3-of-item-3-of-M1-M0}. This implies there exists a path from $r_2$ to $(a,b)$ in $M_2$. This further implies that there exists a path from $r_2$ to $(b,c)$ in $M_2$. But, then, we have $b\rightarrow c \in M$ at \cref{item-3-of-item-3-of-M1-M0}. This shows that the second possibility cannot occur.

Suppose  $a = r_1$, $b =r_2$, and $c \in V_{G_2}$. From the construction of $M$, this implies $r_2-c \in M_2$. There are two cases, either $c = y$, or $c \neq y$. If $c = y$, then at \cref{item-2-of-item-3-of-M1-M0}, we have $c\rightarrow b \in M$. Therefore, this case cannot occur. Suppose $c\neq y$.  But, edge $r_2-c$ is a path from $r_2$ to $(r_2,c)$. Then, it must have been directed at \cref{item-3-of-item-3-of-M1-M0}. This implies that the third possibility also cannot occur. Thus, we show that $M$ cannot have an induced subgraph of the form $u\rightarrow v- w$. This implies that $M$ obeys \cref{item-1-of-thm:nes-and-suf-cond-for-tree-graph-to-be-an-MEC} of \cref{thm:nes-and-suf-cond-for-tree-graph-to-be-an-MEC}.

 We now show that $M$ obeys \cref{item-2-of-thm:nes-and-suf-cond-for-tree-graph-to-be-an-MEC} of \cref{thm:nes-and-suf-cond-for-tree-graph-to-be-an-MEC}. Suppose $u\rightarrow v$ is a directed edge in $M$ then either $u,v \in V_{G_1}$, or $u,v \in V_{G_2}$, or $u=r_1$ and $v = r_2$. We show that in each of the possibilities $u\rightarrow v$ in $M$ obeys \cref{item-2-of-thm:nes-and-suf-cond-for-tree-graph-to-be-an-MEC} of \cref{thm:nes-and-suf-cond-for-tree-graph-to-be-an-MEC}.

Suppose $u,v \in V_{G_1}$. From the construction of $M$, $M_1$ is an induced subgraph of $M$ (note that \cref{item-2-of-item-3-of-M1-M0,item-3-of-item-3-of-M1-M0} do not change the orientation of any edge of $M_1$). This implies $u\rightarrow v \in M_1$. Since $M_1$ is an MEC of a tree graph $G_1$, from \cref{item-2-of-thm:nes-and-suf-cond-for-tree-graph-to-be-an-MEC} of \cref{thm:nes-and-suf-cond-for-tree-graph-to-be-an-MEC}, $u\rightarrow v$ is part of an induced subgraph of $M_1$ of the form either $u\rightarrow v \leftarrow w$ or  $w\rightarrow u\rightarrow v$. Since $M_1$ is an induced subgraph of $M$, an induced subgraph of $M_1$ is an induced subgraph of $M$. Therefore, $u\rightarrow v$ in $M$ obeys \cref{item-2-of-thm:nes-and-suf-cond-for-tree-graph-to-be-an-MEC} of \cref{thm:nes-and-suf-cond-for-tree-graph-to-be-an-MEC}. 

Suppose $u,v \in V_{G_2}$. From the construction of $M$, there are two cases: either $u\rightarrow v \in M_2$ or $u-v \in M_2$. Suppose $u\rightarrow v \in M_2$. Since $M_2$ is an MEC of a tree graph $G_2$, from \cref{item-2-of-thm:nes-and-suf-cond-for-tree-graph-to-be-an-MEC} of \cref{thm:nes-and-suf-cond-for-tree-graph-to-be-an-MEC}, $u\rightarrow v$ is part of an induced subgraph of $M_2$ of the form either $u\rightarrow v \leftarrow w$ or  $w\rightarrow u\rightarrow v$.  From the construction of $M$, a directed induced subgraph of $M_2$ is an induced subgraph of $M$ (as a directed edge of $M_2$ is a directed edge of $M$, and the skeletons of $M_2$ and $M[V_{G_2}$ are the same).  Therefore, in this case also $u\rightarrow v$ in $M$ obeys \cref{item-2-of-thm:nes-and-suf-cond-for-tree-graph-to-be-an-MEC} of \cref{thm:nes-and-suf-cond-for-tree-graph-to-be-an-MEC}.

Suppose $u-v \in M_2$. Then, it must have been directed in $M$ either at \cref{item-2-of-item-3-of-M1-M0} or at \cref{item-3-of-item-3-of-M1-M0}. Suppose $u-v \in M_2$ and it gets directed in $M$ at \cref{item-2-of-item-3-of-M1-M0}. Then, $u = y$ and $v = r_2$. In that case, $u\rightarrow v$ is part of an induced subgraph $r_1\rightarrow r_2 \leftarrow y$, obeying \cref{item-2-of-thm:nes-and-suf-cond-for-tree-graph-to-be-an-MEC} of \cref{thm:nes-and-suf-cond-for-tree-graph-to-be-an-MEC}. Suppose $u-v$ gets directed in $M$ at \cref{item-3-of-item-3-of-M1-M0}. This implies there exists an undirected path from $r_2$ to $(u,v)$ in $M_2$, and the path does not contain $y$. 
Let the path be $P = (x_0 = r_2, x_1, \ldots, x_{l-1} = u, x_{l} = v)$ for some $l \geq 1$. From the construction of $M$, for $0\leq i < l$, $x_i \rightarrow x_{i+1} \in M$ because $P_i = (x_0 = r_2, \ldots, x_i, x_{i+1})$ is an undirected path in $M_2$ from $r_2$ to $(x_i, x_{i+1})$, and $P_i$ does not contain $y$ either. If $l > 1$ then $u\rightarrow v$ is part of an induced subgraph  $x_{l-2} \rightarrow x_{l-1} \rightarrow x_l$ in $M$.
And, if $l=1$ then $r_2 = u$, and $u\rightarrow v$ is part of an induced subgraph $r_1\rightarrow r_2\rightarrow v$ in $M$.
 Thus, we show that if $u,v \in V_{G_2}$ then $u\rightarrow v$ obeys \cref{item-2-of-thm:nes-and-suf-cond-for-tree-graph-to-be-an-MEC} of \cref{thm:nes-and-suf-cond-for-tree-graph-to-be-an-MEC}.
 
 We now show that $r_1\rightarrow r_2$ in $M$ also obeys \cref{item-2-of-thm:nes-and-suf-cond-for-tree-graph-to-be-an-MEC} of \cref{thm:nes-and-suf-cond-for-tree-graph-to-be-an-MEC}. $r_1\rightarrow r_2$ is part of an induced subgraph $r_1\rightarrow
  r_2 \leftarrow y$ in $M$. Thus, we show that each edge of $M$ obeys \cref{item-2-of-thm:nes-and-suf-cond-for-tree-graph-to-be-an-MEC} of \cref{thm:nes-and-suf-cond-for-tree-graph-to-be-an-MEC}.

The above discussion shows that $M$ obeys \cref{item-1-of-thm:nes-and-suf-cond-for-tree-graph-to-be-an-MEC,item-2-of-thm:nes-and-suf-cond-for-tree-graph-to-be-an-MEC} of \cref{thm:nes-and-suf-cond-for-tree-graph-to-be-an-MEC}. This implies $M$ is an MEC of $G$.

We now show that $M$ obeys \cref{eq:v-structure-M1-M0-3}. From the construction of $M$, directed edges of $M_1$ and $M_2$ are directed edges of $M$. Also, $r_1\rightarrow r_2 \leftarrow y$ is a v-structure in $M$. This implies  $\mathcal{V}(M_1) \cup \mathcal{V}(M_2) \cup \{r_1\rightarrow r_2 \leftarrow y\} \subseteq \mathcal{V}(M)$. For the completeness of the proof, we show that if $u\rightarrow v \leftarrow w$ is a v-structure in $M$ then either it is a v-structure of $M_1$ or it is a v-structure of $M_2$, or $u= r_1$, $v = r_2$ and $w = y$, i.e., $\mathcal{V}(M) \subseteq \mathcal{V}(M_1) \cup \mathcal{V}(M_2) \cup \{r_1\rightarrow r_2 \leftarrow y\}$. This further implies $\mathcal{V}(M) = \mathcal{V}(M_1) \cup \mathcal{V}(M_2)\cup \{r_1\rightarrow r_2 \leftarrow y\}$, i.e., $M$ obeys \cref{eq:v-structure-M1-M0-3}.

Suppose there exists a v-structure $u\rightarrow v \leftarrow w$ in $M$. From the construction of $M$, there are following possibilities: (a) $u,v,w \in V_{G_1}$, or (b) $u = r_1$, $v =r_2$, and $w = y$ is a neighbor of $r_2$ in $V_{G_2}$, or (c) $u,v,w \in V_{G_2}$. 

    Suppose $u,v, w \in V_{G_1}$. From the construction of $M$, $M_1$ is an induced subgraph of $M$. This implies that if $u,v,w \in V_{G_1}$ then $u\rightarrow v \leftarrow w$ is a v-structure in $M_1$. 
    
    Suppose $u = r_1$, $v =r_2$, and $w$ is a neighbor of $r_2$ in  $V_{G_2}$. Since $M_2 \in \setofMECs{G_2, r_2, 0}$, either $r_2\rightarrow w \in M_2$ or $r_2-w \in M_2$. From the construction of $M$ (\cref{item-1-of-item-3-of-M1-M0,item-2-of-item-3-of-M1-M0,item-3-of-item-3-of-M1-M0}), if $r_2\rightarrow w \in M_2$ then $r_2\rightarrow w \in M$. But, we have $r_2\leftarrow w \in M$. This implies that $r_2-w \in M_2$. From \cref{item-2-of-item-3-of-M1-M0,item-3-of-item-3-of-M1-M0}, if we have $w \rightarrow r_2 \in M$ then $w = y$, otherwise, we have $r_2\rightarrow w \in M$ at \cref{item-3-of-item-3-of-M1-M0} as $r_2-w$ is an undirected path from $r_2$ to $(r_2, w)$, and the path does not contain $y$. This implies $u = r_1$, $v = r_2$, and $w = y$.  
    
    Suppose $u,v, w \in V_{G_2}$. From the construction of $M$, if $a\rightarrow b \in M_2$ then $a\rightarrow b \in M$. This implies either $u-v \in M_2$ or $u\rightarrow v \in M_2$, and either $v-w \in M_2$ or $v\leftarrow w \in M_2$. Since $M_2$ is an MEC of a tree graph $G_2$, from \cref{item-1-of-thm:nes-and-suf-cond-for-tree-graph-to-be-an-MEC} of \cref{thm:nes-and-suf-cond-for-tree-graph-to-be-an-MEC}, either $u\rightarrow v\leftarrow w \in M_2$ or $u-v-w \in M_2$ (recall that $M_2$ is an MEC). 
    If $u\rightarrow v\leftarrow w \in M_2$ then the v-structure is a v-structure of $M_2$. Suppose $u-v-w \in M_2$. Then, if $u\rightarrow v \in M$ then it must have been directed either at \cref{item-2-of-item-3-of-M1-M0} or at \cref{item-2-of-item-3-of-M1-M0}. Suppose $u\rightarrow v \in M$ at \cref{item-2-of-item-3-of-M1-M0}. Then, $u = y$ and $v =r_2$. But, then $v-w$ is a path from $(r_2)$ to $(v, w)$ (and the path does not contain $y$). And, we have $v\rightarrow w \in M$ at \cref{item-3-of-item-3-of-M1-M0}, a contradiction, as the v-structure implies $w\rightarrow v \in M$. This implies that this case cannot occur. Suppose  $u\rightarrow v \in M$ at \cref{item-3-of-item-3-of-M1-M0}.  Then, there exists a path from $r_2$ to $(u,v)$, and the path does not contain $y$.
    Since $u-v-w \in M_2$, the existence of a path from $r_2$ to $(u,v)$ in $M_2$ implies the existence of a path from $r_2$ to $(v,w)$ in $M_2$, and the new path also does not contain $y$. Also, since $M_2$ is an MEC of a tree graph $G_2$, there cannot be two different paths from $r_2$ to $w$. 
    This implies at \cref{item-3-of-item-3-of-M1-M0}, if we have $u\rightarrow v \in M$ then we also have $v\rightarrow w\in M$. But, this is a contradiction, as we have assumed that $u\rightarrow v \leftarrow w \in M$.
    This implies there cannot be $u-v-w\in M_2$ and $u\rightarrow v\leftarrow w \in M$. Thus, we show that in all the possible scenarios, if $u\rightarrow v\leftarrow w$ is a v-structure in $M$ then it is a v-structure either in $M_1$ or in $M_2$, or $u = r_1$, $v =r_2$ and $w =  y$. 
    This completes the proof that $M$ obeys \cref{eq:v-structure-M1-M0-3}.

    The above discussion implies that $\mathcal{V}(M[V_{G_1}]) = \mathcal{V}(M_1)$ and $\mathcal{V}(M[V_{G_2}]) = \mathcal{V}(M_2)$. Therefore, from \cref{def:projection}, $\mathcal{P}(M, V_{G_1}, V_{G_2}) = (M_1, M_2)$.

    From \cref{lem:MEC-with-projected-MEC-in-M1-is-in-M1}, $M \in \setofMECs{G, r_1, 1}$. This completes the proof of \cref{item-4-of-lem:relation-between-v-structures-M1-M0} of \cref{lem:relation-between-v-structures-M1-M0}. 
\end{proof}
This completes the proof of \cref{lem:relation-between-v-structures-M1-M0}.
\end{proof}

\Cref{lem:relation-between-v-structures-M1-M0} implies the following corollary:

\begin{corollary}
\label{lem:counting-MECs-of-M1-M0}
For $M_1\in \setofMECs{G_1,r_1, 1}$, and $M_2 \in \setofMECs{G_2,r_2, 0, j}$,
the number of MECs $M$ of $G$ such that $\mathcal{P}(M, V_{G_1}, V_{G_2})= (M_1, M_2)$ is j+ 2. More specifically, all the MECs belong to $\setofMECs{G,r_1, 1}$.
\end{corollary}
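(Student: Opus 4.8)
The plan is to read off the count directly from the case analysis already established in \cref{lem:relation-between-v-structures-M1-M0}, in exact parallel with the proof of \cref{lem:counting-MECs-of-M1-M1}. First I would apply \cref{item-1-of-lem:relation-between-v-structures-M1-M0} of \cref{lem:relation-between-v-structures-M1-M0}: for any MEC $M$ of $G$ with $\mathcal{P}(M, V_{G_1}, V_{G_2}) = (M_1, M_2)$, the set $\mathcal{V}(M)$ must coincide with one of the v-structure sets described in \cref{eq:v-structure-M1-M0-2}, in \cref{eq:v-structure-M1-M0-1}, or in \cref{eq:v-structure-M1-M0-3} (the last for some undirected edge $y - r_2 \in E_{M_2}$). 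Since $M_2 \in \setofMECs{G_2, r_2, 0, j}$, exactly $j$ edges of the form $y - r_2$ are undirected in $M_2$, so in total there are $j + 2$ candidate v-structure sets.

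Next I would verify that these $j+2$ candidate sets are pairwise distinct, so that the MECs realizing them are pairwise distinct. The set in \cref{eq:v-structure-M1-M0-2} is nonempty, because $M_1 \in \setofMECs{G_1, r_1, 1}$ forces some $x \rightarrow r_1 \in M_1$, and its extra v-structures use the directed edge $r_2 \rightarrow r_1$; the set in \cref{eq:v-structure-M1-M0-1} adds no new v-structures, and each set in \cref{eq:v-structure-M1-M0-3} adds the v-structure $r_1 \rightarrow r_2 \leftarrow y$, which uses $r_1 \rightarrow r_2$. Hence the set in \cref{eq:v-structure-M1-M0-2} differs from all of the others; the set in \cref{eq:v-structure-M1-M0-1} differs from each set in \cref{eq:v-structure-M1-M0-3} because the latter contains the extra v-structure $r_1 \rightarrow r_2 \leftarrow y$; and two instances of \cref{eq:v-structure-M1-M0-3} for distinct $y \neq y'$ differ because one contains $r_1 \rightarrow r_2 \leftarrow y$ and the other contains $r_1 \rightarrow r_2 \leftarrow y'$. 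Combined with the fact (used throughout) that an MEC is determined by its skeleton together with its set of v-structures, this shows that each candidate set is realized by at most one MEC of $G$, and distinct candidate sets yield distinct MECs.

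Then I would invoke \cref{item-2-of-lem:relation-between-v-structures-M1-M0,item-3-of-lem:relation-between-v-structures-M1-M0,item-4-of-lem:relation-between-v-structures-M1-M0} of \cref{lem:relation-between-v-structures-M1-M0}, which guarantee that each of the $j+2$ candidate v-structure sets is in fact realized by a (necessarily unique) MEC $M$ of $G$ satisfying $\mathcal{P}(M, V_{G_1}, V_{G_2}) = (M_1, M_2)$, and that every such $M$ lies in $\setofMECs{G, r_1, 1}$. Putting this together with the previous paragraph, the MECs $M$ of $G$ with $\mathcal{P}(M, V_{G_1}, V_{G_2}) = (M_1, M_2)$ are in bijection with the $j + 2$ candidate v-structure sets, so there are exactly $j + 2$ of them, and all belong to $\setofMECs{G, r_1, 1}$.

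I do not expect a real obstacle, since \cref{lem:relation-between-v-structures-M1-M0} carries the full structural content; the only step needing a little care is the distinctness check for the $j+2$ v-structure sets, and in particular the corner case $j = 0$, where only the sets of \cref{eq:v-structure-M1-M0-2} and \cref{eq:v-structure-M1-M0-1} remain and one must confirm they are genuinely different — which follows from the nonemptiness of $\{x : x \rightarrow r_1 \in M_1\}$.
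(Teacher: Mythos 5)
Your proposal is correct and follows essentially the same route as the paper: apply \cref{item-1-of-lem:relation-between-v-structures-M1-M0} of \cref{lem:relation-between-v-structures-M1-M0} to restrict $\mathcal{V}(M)$ to the $j+2$ candidate sets, then use \cref{item-2-of-lem:relation-between-v-structures-M1-M0,item-3-of-lem:relation-between-v-structures-M1-M0,item-4-of-lem:relation-between-v-structures-M1-M0} together with the fact that an MEC is determined by its skeleton and v-structures to conclude there are exactly $j+2$ MECs, all in $\setofMECs{G,r_1,1}$. Your explicit pairwise-distinctness check of the candidate v-structure sets (including the $j=0$ corner case) is a small refinement the paper leaves implicit, but it does not change the argument.
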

\begin{proof}
From \cref{item-1-of-lem:relation-between-v-structures-M1-M0} of \cref{lem:relation-between-v-structures-M1-M0}, for any MEC $M$ of $G$, if $\mathcal{P}(M, V_{G_1}, V_{G_2}) = (M_1, M_2)$ then $M$ has to obey either \cref{eq:v-structure-M1-M0-2} or \cref{eq:v-structure-M1-M0-1} or \cref{eq:v-structure-M1-M0-3}. 

From \cref{item-2-of-lem:relation-between-v-structures-M1-M0} of \cref{lem:relation-between-v-structures-M1-M0}, there exists a unique MEC that satisfies \cref{eq:v-structure-M1-M0-2}, belonging to $\setofMECs{G, r_1, 1}$. Additionally, from \cref{item-3-of-lem:relation-between-v-structures-M1-M0} of \cref{lem:relation-between-v-structures-M1-M0}, another unique MEC, following \cref{eq:v-structure-M1-M0-1}, also belongs to $\setofMECs{G, r_1, 1}$. Furthermore, given $M_2 \in \setofMECs{G_2, r_2, 0, j}$, $j$ instances of $y$ exist such that $y-r_2 \in E_{M_2}$. According to \cref{item-4-of-lem:relation-between-v-structures-M1-M0} of \cref{lem:relation-between-v-structures-M1-M0}, for each $y-r_2 \in E_{M_2}$, there exists a unique MEC satisfying \cref{eq:v-structure-M1-M0-3}. Consequently, there are $j$ MECs following \cref{eq:v-structure-M1-M0-3}, all belong to $\setofMECs{G,r_1, 1}$ as per \cref{item-4-of-lem:relation-between-v-structures-M1-M0}.

Hence, the total count of MECs is $j+2$, and all of them belong to $\setofMECs{G,r_1, 1}$.
\end{proof}

We now deal with the third possibility when for some $0\leq i \leq \delta_1$, $M_1 \in \setofMECs{G_1, r_1, 0, i}$, and $M_2 \in \setofMECs{G_2, r_2, 1}$. 
\begin{lemma}
\label{lem:relation-between-v-structures-M0-M1}
Let for some $0\leq i \leq \delta_1$, $M_1 \in \setofMECs{G_1, r_1, 0, i}$, and $M_2 \in \setofMECs{G_2, r_2, 1}$. Then,
\begin{enumerate}
    \item
    \label{item-1-of-lem:relation-between-v-structures-M0-M1}
    For any $M \in$ MEC$(G)$, if $\mathcal{P}(M,V_{G_1},V_{G_2})=(M_1,M_2)$ then one of the following occurs:
\begin{enumerate}
    \item
    \begin{equation}
    \label{eq:v-structure-M0-M1-1}
        \mathcal{V}(M)=\mathcal{V}(M_1) \cup \mathcal{V}(M_2) \cup \bigcup_{y\rightarrow r_2 \in E_{M_2}}{y\rightarrow r_2\leftarrow r_1}
    \end{equation}

        \item 
    \begin{equation}
        \label{eq:v-structure-M0-M1-2}
        \mathcal{V}(M)=\mathcal{V}(M_1) \cup \mathcal{V}(M_2)
    \end{equation}
    
    \item For some undirected edge $x-r_1\in E_{M_1}$,
    \begin{equation}
    \label{eq:v-structure-M0-M1-3}
        \mathcal{V}(M)=\mathcal{V}(M_1) \cup \mathcal{V}(M_2) \cup \{x\rightarrow r_1\leftarrow r_2\}
    \end{equation}
\end{enumerate}
\item 
\label{item-2-of-lem:relation-between-v-structures-M0-M1}
There exists a unique MEC $M$ of $G$ that obeys \cref{eq:v-structure-M0-M1-1}. Additionally, $M$ belongs to $\setofMECs{G, r_1, 0, i}$, and $\mathcal{P}(M, V_{G_1}, V_{G_2}) = (M_1, M_2)$.

\item
\label{item-3-of-lem:relation-between-v-structures-M0-M1}
There exists a unique MEC $M$ of $G$ that obeys \cref{eq:v-structure-M0-M1-2}. Furthermore, $M$ belongs to $\setofMECs{G, r_1, 1}$, and $\mathcal{P}(M, V_{G_1}, V_{G_2}) = (M_1, M_2)$.

\item
\label{item-4-of-lem:relation-between-v-structures-M0-M1}
For each $x$ such that $x-r_2 \in E_{M_1}$, there exists a unique MEC $M$ of $G$ that obeys \cref{eq:v-structure-M0-M1-3}. Also, $M$ belongs to $\setofMECs{G, r_1, 1}$, and $\mathcal{P}(M, V_{G_1}, V_{G_2}) = (M_1, M_2)$.
\end{enumerate}
\end{lemma}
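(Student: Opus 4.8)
The plan is to prove \cref{lem:relation-between-v-structures-M0-M1} by adapting the argument used for \cref{lem:relation-between-v-structures-M1-M0} almost verbatim, with the roles of $(G_1,r_1)$ and $(G_2,r_2)$ interchanged; the one genuinely new ingredient is tracking whether the resulting MEC $M$ lies in $\setofMECs{G,r_1,1}$ or in $\setofMECs{G,r_1,0,i}$, since $r_1$ is the root of the whole tree.

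For \cref{item-1-of-lem:relation-between-v-structures-M0-M1}: let $M$ be an MEC of $G$ with $\mathcal P(M,V_{G_1},V_{G_2})=(M_1,M_2)$. By \cref{obs:v-structures-of-G-neither-in-M1-nor-in-M2}, $M$ contains every v-structure of $M_1$ and of $M_2$, and any further v-structure of $M$ uses the edge between $r_1$ and $r_2$. Since $M_2\in\setofMECs{G_2,r_2,1}$ there is some $y\rightarrow r_2\in M_2$, hence $y\rightarrow r_2\in M$ by \cref{lem:directed-edge-in-projection-implies-directed-edge-in-MEC}; therefore the edge $r_1-r_2$ of $M$ must be directed, as an undirected $r_1-r_2$ would yield the configuration $y\rightarrow r_2-r_1$ forbidden by \cref{item-1-of-thm:nes-and-suf-cond-for-tree-graph-to-be-an-MEC} of \cref{thm:nes-and-suf-cond-for-tree-graph-to-be-an-MEC}. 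If $r_1\rightarrow r_2\in M$, the only new v-structures are $y\rightarrow r_2\leftarrow r_1$ over all $y\rightarrow r_2\in M_2$, giving \cref{eq:v-structure-M0-M1-1}. If $r_2\rightarrow r_1\in M$, then (exactly as in \cref{lem:relation-between-v-structures-M1-M0}) every undirected edge $x-r_1$ of $M_1$ must be oriented in $M$ (else $r_2\rightarrow r_1-x$ is forbidden), and at most one of them can be oriented as $x\rightarrow r_1$, since two such edges would create a v-structure inside $V_{G_1}$ absent from $M_1$ (two distinct neighbours of $r_1$ are non-adjacent because $G$ is a tree), contradicting $\mathcal P(M,V_{G_1})=M_1$; this yields \cref{eq:v-structure-M0-M1-2} when none of these edges points into $r_1$ and \cref{eq:v-structure-M0-M1-3} when exactly one does.

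For \cref{item-2-of-lem:relation-between-v-structures-M0-M1,item-3-of-lem:relation-between-v-structures-M0-M1,item-4-of-lem:relation-between-v-structures-M0-M1}, uniqueness is immediate because an MEC is determined by its skeleton together with its set of v-structures (as recalled in the introduction), so it suffices to exhibit one MEC realizing each prescribed v-structure set. For \cref{item-2-of-lem:relation-between-v-structures-M0-M1} I would take $M=M_1\cup M_2\cup\{r_1\rightarrow r_2\}$ and verify, along the lines of \cref{lem:relation-between-v-structures-M1-M1}, that: $M$ obeys \cref{item-1-of-thm:nes-and-suf-cond-for-tree-graph-to-be-an-MEC,item-2-of-thm:nes-and-suf-cond-for-tree-graph-to-be-an-MEC} of \cref{thm:nes-and-suf-cond-for-tree-graph-to-be-an-MEC} (the new edge $r_1\rightarrow r_2$ is strongly protected by $y\rightarrow r_2\leftarrow r_1$, and no $a\rightarrow b-c$ is created because $r_2$ has no undirected edge in $M_2$ and $r_1$ has no incoming edge in $M$); $\mathcal V(M)$ equals \cref{eq:v-structure-M0-M1-1}; $M[V_{G_1}]=M_1$ and $M[V_{G_2}]=M_2$, whence $\mathcal P(M,V_{G_1},V_{G_2})=(M_1,M_2)$; and $M\in\setofMECs{G,r_1,0,i}$ because at $r_1$ the construction only adds the outgoing edge $r_1\rightarrow r_2$, leaving $r_1$ with no incoming edge and with the same $i$ undirected edges as in $M_1$. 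For \cref{item-3-of-lem:relation-between-v-structures-M0-M1} and \cref{item-4-of-lem:relation-between-v-structures-M0-M1} I would mirror the constructions in \cref{item-3-of-lem:relation-between-v-structures-M1-M0} and \cref{item-4-of-lem:relation-between-v-structures-M1-M0}: initialize $M=M_1\cup M_2\cup\{r_2\rightarrow r_1\}$ (and, for \cref{item-4-of-lem:relation-between-v-structures-M0-M1}, additionally orient the chosen $x-r_1$ as $x\rightarrow r_1$), then repeatedly replace $u-v$ by $u\rightarrow v$ whenever $u-v\in M_1$ and there is an undirected path from $r_1$ to $(u,v)$ in $M_1$ (avoiding $x$, in the \cref{item-4-of-lem:relation-between-v-structures-M0-M1} case). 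The same four verifications carry over verbatim — in particular the cascade creates no new v-structure inside $V_{G_1}$, since in a tree there is a unique undirected path from $r_1$ to any given node — and here $M\in\setofMECs{G,r_1,1}$ simply because $r_2\rightarrow r_1$ is an incoming edge at $r_1$. The associated corollary, that the number of such MECs is $i+2$, then follows precisely as \cref{lem:counting-MECs-of-M1-M0} followed from \cref{lem:relation-between-v-structures-M1-M0}.

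The hardest part will not be any single step but the class bookkeeping. I must make sure that in the $r_1\rightarrow r_2$ case no undirected edge incident to $r_1$ is forced to be oriented, so that $M$ genuinely stays in $\setofMECs{G,r_1,0,i}$ and not in some $\setofMECs{G,r_1,0,i'}$ with $i'<i$ or in $\setofMECs{G,r_1,1}$; this holds because $r_1\rightarrow r_2$ points away from $r_1$, and the edge $r_1-r_2$ being the only link between $G_1$ and $G_2$, no orientation can propagate back towards $r_1$. Symmetrically, in the $r_2\rightarrow r_1$ cases I must check that the cascade of forced orientations inside $M_1$ is well defined, terminates, and introduces no spurious v-structure in $V_{G_1}$; the latter is exactly the content of the corresponding verifications in the proof of \cref{lem:relation-between-v-structures-M1-M0} with the indices relabelled.
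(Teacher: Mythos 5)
Your proposal is correct and follows essentially the same route as the paper's own proof: the same case analysis on the orientation of the $r_1$--$r_2$ edge for \cref{item-1-of-lem:relation-between-v-structures-M0-M1}, the same construction $M=M_1\cup M_2\cup\{r_1\rightarrow r_2\}$ for \cref{item-2-of-lem:relation-between-v-structures-M0-M1}, and the same initialize-then-cascade constructions (orienting undirected edges of $M_1$ reachable by undirected paths from $r_1$, avoiding $x$ in the \cref{item-4-of-lem:relation-between-v-structures-M0-M1} case) for \cref{item-3-of-lem:relation-between-v-structures-M0-M1,item-4-of-lem:relation-between-v-structures-M0-M1}, with uniqueness via skeleton plus v-structures. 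The class-membership bookkeeping you flag as the delicate point is handled exactly as in the paper.
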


\begin{proof}
We first prove \cref{item-1-of-lem:relation-between-v-structures-M0-M1} of \cref{lem:relation-between-v-structures-M0-M1}.
\begin{proof}[Proof of \cref{item-1-of-lem:relation-between-v-structures-M0-M1} of \cref{lem:relation-between-v-structures-M0-M1}]
Let $M$ be an MEC of $G$ such that $\mathcal{P}(M, V_{G_1}, V_{G_2}) = (M_1, M_2)$.
    From \cref{obs:v-structures-of-G-neither-in-M1-nor-in-M2}, $M$ has the v-structures of $M_1$ and $M_2$. Other than that, it can only have v-structures that involve an edge with endpoints $r_1$ and $r_2$. 

    Since $M_2 \in \setofMECs{G_2, r_2, 1}$, there exists a $y\in V_{G_2}$ such that $y\rightarrow r_2$ in $M_2$.
From \cref{item-1-of-thm:nes-and-suf-cond-for-tree-graph-to-be-an-MEC} of \cref{thm:nes-and-suf-cond-for-tree-graph-to-be-an-MEC}, there cannot be an induced subgraph $y\rightarrow r_2 - y'$ in $M_2$. Therefore, each edge adjacent to $r_2$ in $M_2$ is directed. This means we can partition the neighbors of $r_2$ in $G_2$ into $Y$ and $Y'$ such that $Y = \{y: y\rightarrow r_2 \in M_2\}$ and $Y' = \{y': r_2\rightarrow y' \in M_2\}$ with $Y$ being non-empty. 
From \cref{lem:directed-edge-in-projection-implies-directed-edge-in-MEC}, for $y\in Y$, $y\rightarrow r_2 \in M$, and for $y' \in Y'$, $r_2\rightarrow y' \in M$. From \cref{item-1-of-thm:nes-and-suf-cond-for-tree-graph-to-be-an-MEC} of \cref{thm:nes-and-suf-cond-for-tree-graph-to-be-an-MEC}, either $r_1\rightarrow r_2 \in M$ or $r_2\rightarrow r_1 \in M$, otherwise, for some $y \in Y$, $M$ contains an induced subgraph $y\rightarrow r_2-r_1$, contradicting \cref{item-1-of-thm:nes-and-suf-cond-for-tree-graph-to-be-an-MEC} of \cref{thm:nes-and-suf-cond-for-tree-graph-to-be-an-MEC}.

Suppose $r_1 \rightarrow r_2 \in M$.  Then, the v-structures with endpoints $r_2$ and $r_1$ are $y\rightarrow r_2\leftarrow r_1$ for each $y \in Y$. In this case, the v-structures of $M$ obey \cref{eq:v-structure-M0-M1-1}.

Suppose $r_2\rightarrow r_1 \in M$. Since $M_1 \in \setofMECs{G_1, r_1, 0, i}$, there cannot be an edge $x\rightarrow r_1$ in $M_1$. This implies that we can partition the neighbors of $r_1$ in $G_1$ into $X$ and $X'$ such that $X = \{x: x-r_1 \in M_1\}$ is a set of $i$ elements (from the definition of $\setofMECs{G_1, r_1, 0, i}$), and $X' = \{x': r_1 \rightarrow x' \in M_1\}$ is a set of $\delta_1 - i$ elements. 
Since $r_2\rightarrow r_1 \in M$, from \cref{item-1-of-thm:nes-and-suf-cond-for-tree-graph-to-be-an-MEC} of \cref{thm:nes-and-suf-cond-for-tree-graph-to-be-an-MEC}, for $x \in X$, either $x\rightarrow r_1 \in M$ or $r_1\rightarrow x \in M$, otherwise, there exists an induced subgraph $r_2\rightarrow r_1-x \in M_1$, contradicting \cref{item-1-of-thm:nes-and-suf-cond-for-tree-graph-to-be-an-MEC} of \cref{thm:nes-and-suf-cond-for-tree-graph-to-be-an-MEC}. 
If for two distinct $x_1, x_2 \in X$,  $x_1\rightarrow r_1, x_2 \rightarrow r_1 \in M$, then we get a v-structure $x_1\rightarrow r_1 \leftarrow x_2$ in $M$ such the nodes of the v-structure are in $V_{G_1}$, but the v-structure is not in $M_1$, implying $\mathcal{V}(M[V_{G_1}]) \neq \mathcal{V}(M_1)$, which further implies $\mathcal{P}(M, V_{G_1}) \neq M_1$, a contradiction.
This implies that there exists at most one $x \in X$ such that $x\rightarrow r_1 \in M$. If there does not exist any $x \in X$ such that $x\rightarrow r_1 \in M$, then there won't be any v-structure in $M$ with endpoints $r_1$ and $r_2$. In this case, $M$ only contains the v-structures of $M_1$ and $M_2$, and obeys \cref{eq:v-structure-M0-M1-2}. If there exists an $x\in X$ such that $x\rightarrow r_1 \in M$, then $r_2\rightarrow r_1 \leftarrow x$ is the only v-structure with endpoints $r_1$ and $r_2$ in $M$. In this case, $M$ obeys \cref{eq:v-structure-M0-M1-3}. This completes the proof of \cref{item-1-of-lem:relation-between-v-structures-M0-M1} of \cref{lem:relation-between-v-structures-M0-M1}.

\end{proof}

As discussed in the introduction, there cannot be two distinct MECs with the same skeleton and the same set of v-structures.  Uniqueness of the MECs present in \cref{item-2-of-lem:relation-between-v-structures-M0-M1,item-3-of-lem:relation-between-v-structures-M0-M1,item-4-of-lem:relation-between-v-structures-M0-M1} of \cref{lem:relation-between-v-structures-M0-M1} comes from this fact. 
We now prove \cref{item-2-of-lem:relation-between-v-structures-M0-M1,item-3-of-lem:relation-between-v-structures-M0-M1,item-4-of-lem:relation-between-v-structures-M0-M1} of \cref{lem:relation-between-v-structures-M0-M1}.  

\begin{proof}[Proof of \cref{item-2-of-lem:relation-between-v-structures-M0-M1} of \cref{lem:relation-between-v-structures-M0-M1}]
    We construct an MEC of $G$ that obeys \cref{item-2-of-lem:relation-between-v-structures-M0-M1} of \cref{lem:relation-between-v-structures-M0-M1}. Let $M = M_1\cup M_2 \cup \{r_1\rightarrow r_2\}$. To prove that $M$ obeys \cref{item-2-of-lem:relation-between-v-structures-M0-M1}, we have to show the following: (a) $M$ is an MEC of $G$, (b) $M$ obeys \cref{eq:v-structure-M0-M1-1}, (c) $\mathcal{P}(M, V_{G_1}, V_{G_2}) = (M_1, M_2)$, and (d) $M \in \setofMECs{G, r_1,0, i}$.   

    We first show that $M$ is an MEC of $G$. From the construction of $M$, it is clear that the skeleton of $M$ is $G$. Thus, the only thing we have to show is that $M$ is an MEC, i.e., it obeys \cref{item-1-of-thm:nes-and-suf-cond-for-tree-graph-to-be-an-MEC,item-2-of-thm:nes-and-suf-cond-for-tree-graph-to-be-an-MEC} of \cref{thm:nes-and-suf-cond-for-tree-graph-to-be-an-MEC}. 

Suppose $M$ contains an induced subgraph of the form $a\rightarrow b -c$. From the construction of $M$, there are 3 possibilities: either (a) $a,b,c \in V_{G_1}$, or (b) $a,b,c \in V_{G_2}$, or (c) $a = r_1$, $b =r_2$, and $c \in V_{G_2}$. We show that none of these possibilities occurs.

Suppose $a, b, c \in V_{G_1}$. From the construction of $M$, $M_1$ is an induced subgraph of $M$. This implies $a\rightarrow b -c \in M_1$. This contradicts \cref{item-1-of-thm:nes-and-suf-cond-for-tree-graph-to-be-an-MEC} of \cref{thm:nes-and-suf-cond-for-tree-graph-to-be-an-MEC}, as $M_1$ is an MEC of a tree graph $G_1$. This implies that this possibility cannot occur. 
Similarly, there cannot be $a, b, c  \in V_{G_2}$, as from the construction of $M$, $M_2$ is also an induced subgraph of $M$. 

Suppose $a = r_1$, $b =r_2$, and $c \in V_{G_2}$. Then, from the construction of $M$, $r_2-c \in M_2$. Since $M_2 \in \setofMECs{G_2, r_2, 1}$, there must exist a node $y$ such that $y\rightarrow r_2 \in M_2$. This implies $M_2$ contains an induced subgraph of the form $y\rightarrow r_2 - c \in M_2$. Since $M_2$ is an MEC of a tree graph $G_2$, this contradicts \cref{item-1-of-thm:nes-and-suf-cond-for-tree-graph-to-be-an-MEC} of \cref{thm:nes-and-suf-cond-for-tree-graph-to-be-an-MEC}. This implies this possibility also cannot occur. This further implies that $M$ cannot have any induced subgraph of the form $a\rightarrow b - c$. 

 We now show that each directed edge $u\rightarrow v$ of $M$ obeys \cref{item-2-of-thm:nes-and-suf-cond-for-tree-graph-to-be-an-MEC} of \cref{thm:nes-and-suf-cond-for-tree-graph-to-be-an-MEC}. Let $u\rightarrow v$ be a directed edge in $M$. From the construction of $M$, either $u\rightarrow v \in M_1$, or $u\rightarrow v \in M_2$, or $u =r_1$ and $v =r_2$. Since $M_1$ is an MEC of a tree graph $G_1$, if $u\rightarrow v \in M_1$ then from \cref{item-2-of-thm:nes-and-suf-cond-for-tree-graph-to-be-an-MEC} of \cref{thm:nes-and-suf-cond-for-tree-graph-to-be-an-MEC}, $u\rightarrow v$ is part of an induced subgraph of $M_1$ of the form either $w\rightarrow u \rightarrow v$ or $w\rightarrow v\leftarrow u$. From the construction of $M$, the induced subgraph of $M_1$ is an induced subgraph of $M$. Thus, when $u\rightarrow v \in M_1$, $u\rightarrow v$ obeys \cref{item-2-of-thm:nes-and-suf-cond-for-tree-graph-to-be-an-MEC} of \cref{thm:nes-and-suf-cond-for-tree-graph-to-be-an-MEC} for $M$. 
 Similarly, if $u\rightarrow v \in M_2$ then $u\rightarrow v$ obeys \cref{item-2-of-thm:nes-and-suf-cond-for-tree-graph-to-be-an-MEC} of \cref{thm:nes-and-suf-cond-for-tree-graph-to-be-an-MEC} for $M$.  We now go through the final possibility when $u =r_1$ and $v =r_2$. Since $M_2 \in \setofMECs{G_2, r_2, 1}$, there exist a node $y\in V_{G_2}$ such that $y\rightarrow r_2 \in M_2$. From the construction of $M$, $y\rightarrow r_2\leftarrow r_1$ is an induced subgraph of $M$. This implies in this case also $u\rightarrow v$ obeys \cref{item-2-of-thm:nes-and-suf-cond-for-tree-graph-to-be-an-MEC} of \cref{thm:nes-and-suf-cond-for-tree-graph-to-be-an-MEC}.  This shows that $M$ obeys \cref{item-2-of-thm:nes-and-suf-cond-for-tree-graph-to-be-an-MEC} of \cref{thm:nes-and-suf-cond-for-tree-graph-to-be-an-MEC}. 

 The above discussion implies that $M$ obeys \cref{item-1-of-thm:nes-and-suf-cond-for-tree-graph-to-be-an-MEC,item-2-of-thm:nes-and-suf-cond-for-tree-graph-to-be-an-MEC} of \cref{thm:nes-and-suf-cond-for-tree-graph-to-be-an-MEC}. 
 This further implies that $M$ is an MEC.

    We now show that the v-structures of $M$ obey \cref{eq:v-structure-M0-M1-1}. From the construction of $M$, $M_1$ and $M_2$ are the induced subgraphs of $M$ on $V_{G_1}$ and $V_{G_2}$. Therefore, $\mathcal{V}(M[V_{G_1}]) = \mathcal{V}(M_1)$, and $\mathcal{V}(M[V_{G_2}]) = \mathcal{V}(M_2)$. Since $r_1 \rightarrow r_2 \in M$, other than the v-structures of $M_1$ and $M_2$, $M$ can have the v-structures with an edge $r_1\rightarrow r_2$. Since $M_2 \in \setofMECs{G_2, r_2, 1}$, there exists a non-empty set $Y = \{y: y\rightarrow r_2 \in M_2\}$. For each $y \in Y$, $y\rightarrow r_2 \leftarrow r_2$ is a v-structure in $M$. This shows that the v-structures in $M$ obey \cref{eq:v-structure-M0-M1-1}. 

    We now show that $\mathcal{P}(M, V_{G_1}, V_{G_2}) = (M_1, M_2)$. From the construction of $M$, $M[V_{G_1}] = M_1$, and $M[V_{G_2}] = M_2$. This implies $\mathcal{V}(M[V_{G_1}]) = \mathcal{V}(M_1)$, and $\mathcal{V}(M[V_{G_2}]) = \mathcal{V}(M_2)$. From \cref{def:projection}, this further implies $\mathcal{P}(M, V_{G_1}, V_{G_2}) = (M_1, M_2)$. 

From the construction of $M$, $M_1$ is an induced subgraph of $M$. 
In $M_1$, $i$ edges adjacent to $r_1$ are undirected and the remaining $\delta_1 - i$ adjacent edges of $r_1$ in $M_1$ are directed outward of $r_1$. The only edge that is adjacent to $r_1$ in $M$ but not adjacent to $r_1$ in $M_1$ is $r_1\rightarrow r_2$, which is directed outward of $r_1$.   Therefore, no edge adjacent to $r_1$ in $M$ is incoming to $r_1$, and $i$ undirected edges adjacent to $M$ are undirected. This implies  $M  \in \setofMECs{G, r_1, 0, i}$. This completes the proof of \cref{item-2-of-lem:relation-between-v-structures-M0-M1} of \cref{lem:relation-between-v-structures-M0-M1}.
\end{proof}

\begin{proof}[Proof of \cref{item-3-of-lem:relation-between-v-structures-M0-M1} of \cref{lem:relation-between-v-structures-M0-M1}]
    We construct an MEC of $G$ using the following steps:
    \begin{enumerate}
        \item
        \label{item-2-of-M0-M1-1}
        Initialize $M = M_1\cup M_2\cup \{r_2\rightarrow r_1\}$.
        \item
        \label{item-2-of-M0-M1-2}
        Update $M$ by replacing  $u-v$ of $M$ with $u\rightarrow v$, if $u-v \in M_1$ and there exists an undirected path from $r_1$ to $(u, v)$ in $M_1$. 
    \end{enumerate}
    We show that $M$ constructed using the above steps will obey \cref{item-3-of-lem:relation-between-v-structures-M0-M1} of \cref{lem:relation-between-v-structures-M0-M1}. To prove this, we have to show the following: (a) $M$ is an MEC of $G$, (b) $M$ obeys \cref{eq:v-structure-M0-M1-2}, (c) $\mathcal{P}(M, V_{G_1}, V_{G_2}) = (M_1, M_2)$, and (d) $M \in \setofMECs{G, r_1, 1}$.

    We first show that $M$ is an MEC, i.e., it obeys \cref{item-1-of-thm:nes-and-suf-cond-for-tree-graph-to-be-an-MEC,item-2-of-thm:nes-and-suf-cond-for-tree-graph-to-be-an-MEC} of \cref{thm:nes-and-suf-cond-for-tree-graph-to-be-an-MEC}. 

    We first show that $M$ obeys \cref{item-1-of-thm:nes-and-suf-cond-for-tree-graph-to-be-an-MEC} of \cref{thm:nes-and-suf-cond-for-tree-graph-to-be-an-MEC}.  Suppose $M$ contains an induced subgraph of the form $a\rightarrow b - c$. From the construction of $M$, there are three possibilities: either (a) $a,b,c \in V_{G_1}$, or (b) $a,b,c \in V_{G_2}$, or (c) $a = r_2$, $b =r_1$, and $c \in V_{G_1}$. 
    One by one, we show that none of the possibilities occurs. 

    Suppose, $M$ contains an induced subgraph of the form $a\rightarrow b-c$ such that $a,b, c \in V_{G_1}$. From the construction of $M$, $b-c \in M_1$ (note that \cref{item-2-of-M0-M1-2} does not change a directed edge in $M_1\cup M_2 \cup \{r_2\rightarrow r_1\}$ into an undirected edge), and either $a\rightarrow b \in M_1$ or $a-b \in M_1$. If $a\rightarrow b \in M_1$ then $a\rightarrow b - c$ is an induced subgraph in the MEC $M_1$, contradicting \cref{item-1-of-thm:nes-and-suf-cond-for-tree-graph-to-be-an-MEC} of \cref{thm:nes-and-suf-cond-for-tree-graph-to-be-an-MEC}. And, if $a-b \in M_1$ then it must have been directed at \cref{item-2-of-M0-M1-2}. This implies there exists a path from $r_2$ to $(a,b)$ in $M_1$. This further implies that there exists a path from $r_2$ to $(b,c)$ in $M_1$. But, then, we have $b\rightarrow c \in M$ at \cref{item-2-of-M0-M1-2}. This shows that the first possibility cannot occur. 
    
    From the construction of $M$, an induced subgraph of $M$ with vertices in $V_{G_2}$ is also an induced subgraph of $M_2$ (note that \cref{item-2-of-M0-M1-2} does not make any change in an edge with the endpoints in $V_{G_2}$). Since $M_2$ is an MEC of a tree graph $G_2$, from \cref{item-1-of-thm:nes-and-suf-cond-for-tree-graph-to-be-an-MEC} of \cref{thm:nes-and-suf-cond-for-tree-graph-to-be-an-MEC}, there cannot be an induced subgraph $u\rightarrow v-w \in M_2$. This implies that the second possibility cannot occur. 

Suppose $r_2 \rightarrow r_1-c$ in $M$ such that $c\in V_{G_1}$. From the construction of $M$, this implies $r_1-c \in M_1$. But, edge $r_1-c$ is a path from $r_1$ to $(r_1,c)$. Then, it must have been directed at \cref{item-2-of-M0-M1-2}. This implies that the third possibility also cannot occur. This shows that $M$ cannot have an induced subgraph of the form $u\rightarrow v- w$. This implies that $M$ obeys \cref{item-1-of-thm:nes-and-suf-cond-for-tree-graph-to-be-an-MEC} of \cref{thm:nes-and-suf-cond-for-tree-graph-to-be-an-MEC}.

We now show that $M$ obeys \cref{item-2-of-thm:nes-and-suf-cond-for-tree-graph-to-be-an-MEC} of \cref{thm:nes-and-suf-cond-for-tree-graph-to-be-an-MEC}. Suppose $u\rightarrow v$ is a directed edge in $M$ then either $u,v \in V_{G_1}$, or $u,v \in V_{G_2}$, or $u=r_1$ and $v = r_2$. We will show that in each of the possibilities, $u\rightarrow v$ in $M$ obeys \cref{item-2-of-thm:nes-and-suf-cond-for-tree-graph-to-be-an-MEC} of \cref{thm:nes-and-suf-cond-for-tree-graph-to-be-an-MEC}.

Suppose $u,v \in V_{G_1}$. From the construction of $M$, there are two cases: either $u\rightarrow v \in M_1$, or $u-v \in M_1$.
Suppose $u\rightarrow v \in M_1$.
Since $M_1$ is an MEC of a tree graph $G_1$, from \cref{item-2-of-thm:nes-and-suf-cond-for-tree-graph-to-be-an-MEC} of \cref{thm:nes-and-suf-cond-for-tree-graph-to-be-an-MEC}, $u\rightarrow v$ is part of an induced subgraph of $M_1$ of the form either $u\rightarrow v\leftarrow w$, or $w\rightarrow u\rightarrow v$. From the construction of $M$,  a directed induced subgraph of $M_1$ is an induced subgraph in $M$, as a directed edge in $M_1$ is a directed edge in $M$. Therefore, $u\rightarrow v$ is strongly protected in $M$.
When $u-v \in M_1$ then it must have been directed in $M$ at \cref{item-2-of-M0-M1-2}. This implies there exists an undirected path from $r_1$ to $(u,v)$ in $M_1$. 
Let the path be $P = (x_0 = r_1, x_1, \ldots, x_{l-1} = u, x_{l} = v)$ for some $l \geq 1$. From the construction of $M$, for $0\leq i < l$, $x_i \rightarrow x_{i+1} \in M$ because $P_i = (x_0 = r_1, \ldots, x_i, x_{i+1})$ is an undirected path in $M_1$ from $r_1$ to $(x_i, x_{i+1})$. If $l > 1$ then $u\rightarrow v$ is part of an induced subgraph  $x_{l-2} \rightarrow x_{l-1} \rightarrow x_l$ in $M$.
And, if $l=1$ then $r_1 = u$, and $u\rightarrow v$ is part of an induced subgraph $r_2\rightarrow r_1\rightarrow v$ in $M$.  Thus, we show that if $u,v \in V_{G_1}$ then $u\rightarrow v$ obeys \cref{item-2-of-thm:nes-and-suf-cond-for-tree-graph-to-be-an-MEC} of \cref{thm:nes-and-suf-cond-for-tree-graph-to-be-an-MEC}.

Suppose $u,v \in V_{G_2}$. From the construction of $M$, $M_2$ is an induced subgraph of $M$ (note that \cref{item-2-of-M0-M1-2} does not change the orientation of any edge of $M_2$). This implies $u\rightarrow v \in M_2$. Since $M_2$ is an MEC of a tree graph $G_2$, from \cref{item-2-of-thm:nes-and-suf-cond-for-tree-graph-to-be-an-MEC} of \cref{thm:nes-and-suf-cond-for-tree-graph-to-be-an-MEC}, $u\rightarrow v$ is part of an induced subgraph of $M_2$ of the form either $u\rightarrow v \leftarrow w$ or  $w\rightarrow u\rightarrow v$. Since from the construction of $M$, an induced subgraph of $M_2$ is an induced subgraph of $M$. Therefore, $u\rightarrow v$ in $M$ obeys \cref{item-2-of-thm:nes-and-suf-cond-for-tree-graph-to-be-an-MEC} of \cref{thm:nes-and-suf-cond-for-tree-graph-to-be-an-MEC}. 

We now show that $r_2\rightarrow r_1$ in $M$ also obeys \cref{item-2-of-thm:nes-and-suf-cond-for-tree-graph-to-be-an-MEC} of \cref{thm:nes-and-suf-cond-for-tree-graph-to-be-an-MEC}.
Since $M_2 \in \setofMECs{G_2, r_2, 1}$, there must exist $y\rightarrow r_2 \in M_2$. From the construction of $M$, $y\rightarrow r_2 \in M$. Then, $r_2\rightarrow r_1$ is part of an induced subgraph of the form $y\rightarrow r_2\rightarrow r_1$. Thus, we show that each edge of $M$ obeys \cref{item-2-of-thm:nes-and-suf-cond-for-tree-graph-to-be-an-MEC} of \cref{thm:nes-and-suf-cond-for-tree-graph-to-be-an-MEC}.

The above discussion shows that $M$ is an MEC as it obeys \cref{item-1-of-thm:nes-and-suf-cond-for-tree-graph-to-be-an-MEC,item-2-of-thm:nes-and-suf-cond-for-tree-graph-to-be-an-MEC} of \cref{thm:nes-and-suf-cond-for-tree-graph-to-be-an-MEC}. From the construction of $M$, the skelton of $M$ is $G$. This implies $M$ is an MEC of $G$.

We now show that $M$ obeys \cref{eq:v-structure-M0-M1-2}. From the construction of $M$, directed edges of $M_1$ and $M_2$ are directed edges of $M$. This implies the v-structures of $M_1$ and $M_2$ are the v-structures of $M$, i.e., $\mathcal{V}(M_1) \cup \mathcal{V}(M_2) \subseteq \mathcal{V}(M)$. For the completeness of the proof, we show that if $u\rightarrow v \leftarrow w$ is a v-structure in $M$ then either it is a v-structure of $M_1$ or it is a v-structure of $M_2$, i.e., $\mathcal{V}(M) \subseteq \mathcal{V}(M_1) \cup \mathcal{V}(M_2)$. This implies $\mathcal{V}(M) = \mathcal{V}(M_1) \cup \mathcal{V}(M_2)$, i.e., $M$ obeys \cref{eq:v-structure-M0-M1-2}.

Suppose there exists a v-structure $u\rightarrow v \leftarrow w$ in $M$. We show that either $u\rightarrow v \leftarrow w$ is a v-structure in $M_1$, or it is a v-structure in $M_2$. From the construction of $M$, there are following possibilities: (a) $u,v,w \in V_{G_1}$, or (b) $u = r_2$, $v =r_1$, and $w$ is a neighbor of $r_1$ in $V_{G_1}$, or (c) $u,v,w \in V_{G_2}$. 

Suppose $u,v, w \in V_{G_1}$. From the construction of $M$, if $a\rightarrow b \in M_1$ then $a\rightarrow b \in M$. This implies either $u-v \in M_1$, or $u\rightarrow v \in M_1$, and either $v-w \in M_1$ or $v\leftarrow w \in M_1$. From \cref{item-1-of-thm:nes-and-suf-cond-for-tree-graph-to-be-an-MEC} of \cref{thm:nes-and-suf-cond-for-tree-graph-to-be-an-MEC}, either $u\rightarrow v\leftarrow w \in M_1$ or $u-v-w \in M_1$ (recall that $M_1$ is an MEC). 
    If $u\rightarrow v\leftarrow w \in M_1$ then the v-structure is a v-structure of $M_1$. Suppose $u-v-w \in M_1$. Then, if $u\rightarrow v \in M$ then it must have been directed at \cref{item-2-of-M0-M1-2}, i.e., there exists an undirected path from $r_1$ to $(u,v)$.
    Since $u-v-w \in M_1$, the existence of a path from $r_1$ to $(u,v)$ in $M_1$ implies the existence of a path from $r_1$ to $(v,w)$ in $M_1$. Also, since $M_1$ is an MEC of a tree graph $G_1$, there cannot be two different paths from $r_1$ to $w$. 
    This implies at \cref{item-2-of-M0-M1-2}, if we have $u\rightarrow v \in M$ then we also have $v\rightarrow w\in M$. But, this is a contradiction, as we have assumed that $u\rightarrow v \leftarrow w \in M$.
    This implies there cannot be $u-v-w\in M_1$ and $u\rightarrow v\leftarrow w \in M$. 
    
    Suppose $u = r_2$, $v =r_1$, and $w$ is a neighbor of $r_1$ in  $V_{G_1}$. Since $M_1 \in \setofMECs{G_1, r_1, 0}$, either $r_1\rightarrow w \in M_1$ or $r_2-w \in M_2$. From the construction of $M$ (\cref{item-2-of-M0-M1-1,item-2-of-M0-M1-2}), if $r_1\rightarrow w \in M_1$ then $r_1\rightarrow w \in M$. But, we have $r_1\leftarrow w \in M$. This implies that $r_1-w \in M_1$. But, then, from \cref{item-2-of-M0-M1-2}, $r_1 \rightarrow w \in M$. This implies that this case cannot occur.
    
    We now move to the final possibility, when $u,v,w \in V_{G_2}$. From the construction of $M$, $M_2$ is an induced subgraph of $M$. This implies that if $u,v,w \in V_{G_2}$ then $u\rightarrow v \leftarrow w$ is a v-structure in $M_2$. Thus, we show that in all the possible scenarios, if $u\rightarrow v\leftarrow w$ is a v-structure in $M$ then it is a v-structure either in $M_1$ or in $M_2$. 
    This completes the proof that $M$ obeys \cref{eq:v-structure-M0-M1-2}.

    The above discussion implies that $\mathcal{V}(M[V_{G_1}]) = \mathcal{V}(M_1)$ and $\mathcal{V}(M[V_{G_2}]) = \mathcal{V}(M_2)$. Therefore, from \cref{def:projection}, $\mathcal{P}(M, V_{G_1}, V_{G_2}) = (M_1, M_2)$.

    Since there is an incoming edge $r_2\rightarrow r_1$ in $M$, $M \in \setofMECs{G, r_1, 1}$. This completes the proof of \cref{item-3-of-lem:relation-between-v-structures-M0-M1} of \cref{lem:relation-between-v-structures-M0-M1}.
\end{proof}

\begin{proof}[Proof of \cref{item-4-of-lem:relation-between-v-structures-M0-M1} of \cref{lem:relation-between-v-structures-M0-M1}]
    We construct an MEC $M$ of $G$ using the following steps:
    \begin{enumerate}
        \item
        \label{item-1-of-item-3-of-M0-M1}
        Initialize $M = M_1\cup M_2\cup \{r_2\rightarrow r_1\}$.
        \item
        \label{item-2-of-item-3-of-M0-M1}
        Pick an edge $x-r_1 \in M_1$. Replace the edge $x-r_1$ in $M$  with $x\rightarrow r_1$.
        \item
        \label{item-3-of-item-3-of-M0-M1}
        Update $M$ by replacing  $u-v$ of $M$ with $u\rightarrow v$, if $u-v \in M_1$ and there exists an undirected path from $r_1$ to $(u, v)$ in $M_1$ such that the path does not contain $x$.
    \end{enumerate}

    We show that $M$ constructed using the above steps will obey \cref{item-4-of-lem:relation-between-v-structures-M0-M1} of \cref{lem:relation-between-v-structures-M0-M1}. To prove this, we have to show the following: (a) $M$ is an MEC of $G$, (b) $M$ obeys \cref{eq:v-structure-M0-M1-3}, (c) $\mathcal{P}(M, V_{G_1}, V_{G_2}) = (M_1, M_2)$, and (d) $M \in \setofMECs{G, r_1, 1}$.

    We first show that $M$ is an MEC i.e., it obeys \cref{item-1-of-thm:nes-and-suf-cond-for-tree-graph-to-be-an-MEC,item-2-of-thm:nes-and-suf-cond-for-tree-graph-to-be-an-MEC} of \cref{thm:nes-and-suf-cond-for-tree-graph-to-be-an-MEC}. 

    We first show that $M$ obeys \cref{item-1-of-thm:nes-and-suf-cond-for-tree-graph-to-be-an-MEC} of \cref{thm:nes-and-suf-cond-for-tree-graph-to-be-an-MEC}.  Suppose $M$ contains an induced subgraph of the form $a\rightarrow b - c$. From the construction of $M$, there are three possibilities: either (a) $a,b,c \in V_{G_1}$, or (b) $a,b,c \in V_{G_2}$, or (c) $a = r_2$, $b =r_1$, and $c \in V_{G_1}$. 
    One by one, we show that none of the possibilities occurs.

    Suppose $a,b, c \in V_{G_1}$. From the construction of $M$, $b-c \in M_1$ (note that \cref{item-2-of-item-3-of-M0-M1,item-3-of-item-3-of-M0-M1} do not change a directed edge in $M_1\cup M_2 \cup \{r_2\rightarrow r_1\}$ into an undirected edge), and either $a\rightarrow b \in M_1$ or $a-b \in M_1$. If $a\rightarrow b \in M_1$ then $a\rightarrow b - c$ is an induced subgraph in the MEC $M_1$, contradicting \cref{item-1-of-thm:nes-and-suf-cond-for-tree-graph-to-be-an-MEC} of \cref{thm:nes-and-suf-cond-for-tree-graph-to-be-an-MEC}. And, if $a-b \in M_1$ then it must have been directed at either at \cref{item-2-of-item-3-of-M0-M1} (first case) or at \cref{item-3-of-item-3-of-M0-M1} (second case). If it has been directed at \cref{item-2-of-item-3-of-M0-M1} then $a = x, b = r_1$ and $c \neq x$ is a neighbor of $r_1$. This implies $r_1 -c$ is a path from $r_1$ to $(r_1, c)$. But, then at \cref{item-3-of-item-3-of-M0-M1}, $r_1-c$ has been converted into $r_1\rightarrow c$. This implies that the first case cannot occur. We move to the second case when $a-b$ has been directed at \cref{item-3-of-item-3-of-M0-M1}. This implies there exists a path from $r_1$ to $(a,b)$ in $M_1$. This further implies that there exists a path from $r_1$ to $(b,c)$ in $M_1$. But, then, we have $b\rightarrow c \in M$ at \cref{item-3-of-item-3-of-M0-M1}, a contradiction (as from our assumption $c\rightarrow b$ is part of the v-structure of $M$).
     This implies that the first possibility cannot occur. 

Suppose $a,b,c \in V_{G_2}$. From the construction of $M$, an induced subgraph of $M$ with vertices in $V_{G_2}$ is also an induced subgraph of $M_2$ (note that \cref{item-2-of-item-3-of-M0-M1,item-3-of-item-3-of-M0-M1} does not make any change in an edge with the endpoints in $V_{G_2}$). Since $M_2$ is an MEC of a tree graph $G_2$, from \cref{item-1-of-thm:nes-and-suf-cond-for-tree-graph-to-be-an-MEC} of \cref{thm:nes-and-suf-cond-for-tree-graph-to-be-an-MEC}, there cannot be an induced subgraph $u\rightarrow v-w \in M_2$. This shows that the second possibility cannot occur.

Suppose $a= r_2$, $b =r_1$, and $c\in V_{G_1}$. From the construction of $M$, this implies $r_1-c \in M_1$. There are two cases, either $c = x$, or $c \neq x$. If $c = x$, then at \cref{item-2-of-item-3-of-M0-M1}, we have $c\rightarrow b \in M$. Therefore, this case cannot occur. Suppose $c\neq x$.  But, edge $r_1-c$ is a path from $r_1$ to $(r_1,c)$. Then, it must have been directed at \cref{item-3-of-item-3-of-M0-M1}. This implies that the third possibility also cannot occur. This shows that $M$ cannot have an induced subgraph of the form $u\rightarrow v- w$. This implies that $M$ obeys \cref{item-1-of-thm:nes-and-suf-cond-for-tree-graph-to-be-an-MEC} of \cref{thm:nes-and-suf-cond-for-tree-graph-to-be-an-MEC}.

We now show that $M$ obeys \cref{item-2-of-thm:nes-and-suf-cond-for-tree-graph-to-be-an-MEC} of \cref{thm:nes-and-suf-cond-for-tree-graph-to-be-an-MEC}. Suppose $u\rightarrow v$ is a directed edge in $M$ then either $u,v \in V_{G_1}$, or $u,v \in V_{G_2}$, or $u=r_2$ and $v = r_1$. We show that in each of the possibilities $u\rightarrow v$ in $M$ obeys \cref{item-2-of-thm:nes-and-suf-cond-for-tree-graph-to-be-an-MEC} of \cref{thm:nes-and-suf-cond-for-tree-graph-to-be-an-MEC}.

Suppose $u,v \in V_{G_1}$. From the construction of $M$, there are two cases: either $u\rightarrow v \in M_1$ or $u-v \in M_1$.

Suppose $u\rightarrow v \in M_1$. 
Since $M_1$ is an MEC of a tree graph $G_1$, from \cref{item-2-of-thm:nes-and-suf-cond-for-tree-graph-to-be-an-MEC} of \cref{thm:nes-and-suf-cond-for-tree-graph-to-be-an-MEC}, $u\rightarrow v$ is part of an induced subgraph of $M_1$ of the form either $u\rightarrow v\leftarrow w$, or $w\rightarrow u\rightarrow v$. From the construction of $M$,  a directed induced subgraph of $M_1$ is an induced subgraph in $M$. Therefore, $u\rightarrow v$ is strongly protected in $M$.

Suppose $u-v \in M_1$. Then, $u-v$ must have been directed in $M$ either at \cref{item-2-of-item-3-of-M0-M1} or at \cref{item-3-of-item-3-of-M0-M1}. Suppose $u-v \in M_1$ and it gets directed in $M$ at \cref{item-2-of-item-3-of-M0-M1}. Then, $u = x$ and $v = r_2$. In that case, $u\rightarrow v$ is part of an induced subgraph $r_2\rightarrow r_1 \leftarrow x$, obeying \cref{item-2-of-thm:nes-and-suf-cond-for-tree-graph-to-be-an-MEC} of \cref{thm:nes-and-suf-cond-for-tree-graph-to-be-an-MEC}. 
Suppose $u-v \in M_1$ and it gets directed in $M$ at \cref{item-3-of-item-3-of-M0-M1}. This implies there exists an undirected path from $r_1$ to $(u,v)$ in $M_1$, and the path does not contain $x$. 
Let the path be $P = (x_0 = r_1, x_1, \ldots, x_{l-1} = u, x_{l} = v)$ for some $l \geq 1$. From the construction of $M$, for $0\leq i < l$, $x_i \rightarrow x_{i+1} \in M$ because $P_i = (x_0 = r_1, \ldots, x_i, x_{i+1})$ is an undirected path in $M_1$ from $r_1$ to $(x_i, x_{i+1})$, and $P_i$ does not contain $x$ either. If $l > 1$ then $u\rightarrow v$ is part of an induced subgraph  $x_{l-2} \rightarrow x_{l-1} \rightarrow x_l$ in $M$.
And, if $l=1$ then $r_1 = u$, and $u\rightarrow v$ is part of an induced subgraph $r_2\rightarrow r_1\rightarrow v$ in $M$.
 Thus, we show that if $u,v \in V_{G_1}$ then $u\rightarrow v$ obeys \cref{item-2-of-thm:nes-and-suf-cond-for-tree-graph-to-be-an-MEC} of \cref{thm:nes-and-suf-cond-for-tree-graph-to-be-an-MEC}.

Suppose $u,v \in V_{G_2}$. From the construction of $M$, $M_2$ is an induced subgraph of $M$ (note that \cref{item-2-of-item-3-of-M0-M1,item-3-of-item-3-of-M0-M1} do not change the orientation of any edge of $M_2$). This implies $u\rightarrow v \in M_2$. Since $M_2$ is an MEC of a tree graph $G_2$, from \cref{item-2-of-thm:nes-and-suf-cond-for-tree-graph-to-be-an-MEC} of \cref{thm:nes-and-suf-cond-for-tree-graph-to-be-an-MEC}, $u\rightarrow v$ is part of an induced subgraph of $M_2$ of the form either $u\rightarrow v \leftarrow w$ or  $w\rightarrow u\rightarrow v$. Since from the construction of $M$, an induced subgraph of $M_2$ is an induced subgraph of $M$. Therefore, $u\rightarrow v$ in $M$ obeys \cref{item-2-of-thm:nes-and-suf-cond-for-tree-graph-to-be-an-MEC} of \cref{thm:nes-and-suf-cond-for-tree-graph-to-be-an-MEC}. 
 
 We now show that $r_2\rightarrow r_1$ in $M$ also obeys \cref{item-2-of-thm:nes-and-suf-cond-for-tree-graph-to-be-an-MEC} of \cref{thm:nes-and-suf-cond-for-tree-graph-to-be-an-MEC}. $r_2\rightarrow r_1$ is part of an induced subgraph $r_2\rightarrow
  r_1 \leftarrow x$ in $M$. Thus, we show that each edge of $M$ obeys \cref{item-2-of-thm:nes-and-suf-cond-for-tree-graph-to-be-an-MEC} of \cref{thm:nes-and-suf-cond-for-tree-graph-to-be-an-MEC}.

The above discussion shows that $M$ is an MEC as it obeys \cref{item-1-of-thm:nes-and-suf-cond-for-tree-graph-to-be-an-MEC,item-2-of-thm:nes-and-suf-cond-for-tree-graph-to-be-an-MEC} of \cref{thm:nes-and-suf-cond-for-tree-graph-to-be-an-MEC}. From the construction of $M$, the skelton of $M$ is $G$. This implies $M$ is an MEC of $G$.

We now show that $M$ obeys \cref{eq:v-structure-M0-M1-3}. From the construction of $M$, directed edges of $M_1$ and $M_2$ are directed edges of $M$. Also, $r_2\rightarrow r_1 \leftarrow x$ is a v-structure in $M$. This implies  $\mathcal{V}(M_2) \cup \mathcal{V}(M_1) \cup \{r_2\rightarrow r_1 \leftarrow x\} \subseteq \mathcal{V}(M)$. For the completeness of the proof, we show that if $u\rightarrow v \leftarrow w$ is a v-structure in $M$ then either it is a v-structure of $M_1$ or it is a v-structure of $M_2$, or  $u =r_2$, $v = r_1$, and $w =  x$, i.e., $\mathcal{V}(M) \subseteq \mathcal{V}(M_1) \cup \mathcal{V}(M_2) \cup \{r_2\rightarrow r_1 \leftarrow x\}$. This further implies $\mathcal{V}(M) = \mathcal{V}(M_1) \cup \mathcal{V}(M_2)\cup \{r_2\rightarrow r_1 \leftarrow x\}$, i.e., $M$ obeys \cref{eq:v-structure-M0-M1-3}.

Suppose there exists a v-structure $u\rightarrow v \leftarrow w$ in $M$.  From the construction of $M$, there are following possibilities: (a) $u,v,w \in V_{G_1}$, or (b) $u = r_2$, $v =r_1$, and $w$ is a neighbor of $r_1$ in $V_{G_1}$, or (c) $u,v,w \in V_{G_2}$. 

Suppose $u,v, w \in V_{G_1}$. From the construction of $M$, if $a\rightarrow b \in M_1$ then $a\rightarrow b \in M$. This implies either $u-v \in M_1$ or $u\rightarrow v \in M_1$, and either $v-w \in M_1$ or $v\leftarrow w \in M_1$. From \cref{item-1-of-thm:nes-and-suf-cond-for-tree-graph-to-be-an-MEC} of \cref{thm:nes-and-suf-cond-for-tree-graph-to-be-an-MEC}, either $u\rightarrow v\leftarrow w \in M_1$ or $u-v-w \in M_1$. 
    If $u\rightarrow v\leftarrow w \in M_1$ then the v-structure is a v-structure of $M_1$. Suppose $u-v-w \in M_1$. Then, if $u\rightarrow v \in M$ then it must have been directed either at \cref{item-2-of-item-3-of-M0-M1} or at \cref{item-3-of-item-3-of-M0-M1}. 
    
    Suppose $u\rightarrow v \in M$ has been directed at \cref{item-2-of-item-3-of-M0-M1}. Then, $u = x$ and $v =r_1$. But, then $v-w$ is a path from $r_1$ to $(v, w)$ (and the path does not contain $x$). And, we have $v\rightarrow w \in M$ at \cref{item-3-of-item-3-of-M0-M1}, a contradiction, as the v-structure implies $w\rightarrow v \in M$. This implies that this case cannot occur.
    
    Suppose  $u\rightarrow v \in M$ has been directed at \cref{item-3-of-item-3-of-M0-M1}.  Then, there exists a path from $r_1$ to $(u,v)$, and the path does not contain $x$.
    Since $u-v-w \in M_1$, the existence of a path from $r_1$ to $(u,v)$ in $M_1$ implies the existence of a path from $r_1$ to $(v,w)$ in $M_1$, and the new path also does not contain $x$. Also, since $M_1$ is an MEC of a tree graph $G_1$, there cannot be two different paths from $r_1$ to $w$. 
    This implies at \cref{item-2-of-item-3-of-M0-M1}, if we have $u\rightarrow v \in M$ then we also have $v\rightarrow w\in M$. But, this is a contradiction, as we have assumed that $u\rightarrow v \leftarrow w \in M$.
    This implies there cannot be $u-v-w\in M_1$ and $u\rightarrow v\leftarrow w \in M$. 

    Suppose $u = r_2$, $v =r_1$, and $w$ is a neighbor of $r_1$ in  $V_{G_1}$. Since $M_1 \in \setofMECs{G_1, r_1, 0}$, either $r_1\rightarrow w \in M_1$ or $r_1-w \in M_1$. From the construction of $M$ (\cref{item-1-of-item-3-of-M0-M1,item-2-of-item-3-of-M0-M1,item-3-of-item-3-of-M0-M1}), if $r_1\rightarrow w \in M_1$ then $r_1\rightarrow w \in M$. But, we have $r_1\leftarrow w \in M$. This implies that $r_1-w \in M_1$. From \cref{item-2-of-item-3-of-M0-M1,item-3-of-item-3-of-M0-M1}, if we have $w \rightarrow r_1 \in M$ then $w = x$, otherwise, we have $r_1\rightarrow w \in M$ at \cref{item-3-of-item-3-of-M0-M1} as $r_1-w$ is an undirected path from $r_1$ to $(r_1, w)$, and the path does not contain $x$. This implies $u= r_2$, $v = r_1$, and $w =  x$. 
    
    We now move to the final possibility. Suppose $u,v,w \in V_{G_2}$. From the construction of $M$, $M_2$ is an induced subgraph of $M$. This implies that if $u,v,w \in V_{G_2}$ then $u\rightarrow v \leftarrow w$ is a v-structure in $M_2$. 
    Thus, we show that in all the possible scenarios, if $u\rightarrow v\leftarrow w$ is a v-structure in $M$ then it is a v-structure either in $M_1$ or in $M_2$, or $u= r_2$, $v =r_1$ and $w =  x$. 
    This completes the proof that $M$ obeys \cref{eq:v-structure-M0-M1-3}.

    The above discussion implies that $\mathcal{V}(M[V_{G_1}]) = \mathcal{V}(M_1)$ and $\mathcal{V}(M[V_{G_2}]) = \mathcal{V}(M_2)$. Therefore, from \cref{def:projection}, $\mathcal{P}(M, V_{G_1}, V_{G_2}) = (M_1, M_2)$.

   Since there is an incoming edge $r_2\rightarrow r_1$ in $M$, $M \in \setofMECs{G, r_1, 1}$. This completes the proof of \cref{item-4-of-lem:relation-between-v-structures-M0-M1} of \cref{lem:relation-between-v-structures-M0-M1}. 
\end{proof}

This completes the proof of \cref{lem:relation-between-v-structures-M0-M1}.
\end{proof}

\Cref{lem:relation-between-v-structures-M0-M1} implies the following corollary:

\begin{corollary}
\label{lem:counting-MECs-of-M0-M1}
For $M_1\in \setofMECs{G_1,r_1, 0, i}$, and $M_2 \in \setofMECs{G_2,r_2, 1}$,
the number of MECs $M$ of $G$ such that $\mathcal{P}(M, V_{G_1}, V_{G_2})= (M_1, M_2)$ is i+ 2. More specifically, $i+1$ MECs belong to $\setofMECs{G,r_1, 1}$, and one MEC belong to $\setofMECs{G, r_1, 0, i}$.
\end{corollary}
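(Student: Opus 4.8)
The plan is to read this off directly from \cref{lem:relation-between-v-structures-M0-M1}, exactly as \cref{lem:counting-MECs-of-M1-M1} was obtained from \cref{lem:relation-between-v-structures-M1-M1} and \cref{lem:counting-MECs-of-M1-M0} from \cref{lem:relation-between-v-structures-M1-M0}. Fix $M_1 \in \setofMECs{G_1, r_1, 0, i}$ and $M_2 \in \setofMECs{G_2, r_2, 1}$, and write $\mathcal{S}$ for the set of MECs $M$ of $G$ with $\mathcal{P}(M, V_{G_1}, V_{G_2}) = (M_1, M_2)$. First I would apply \cref{item-1-of-lem:relation-between-v-structures-M0-M1} of \cref{lem:relation-between-v-structures-M0-M1}: every $M \in \mathcal{S}$ has $\mathcal{V}(M)$ equal to one of the three prescribed forms, namely \cref{eq:v-structure-M0-M1-1}, \cref{eq:v-structure-M0-M1-2}, or some instance of \cref{eq:v-structure-M0-M1-3} indexed by an undirected edge $x-r_1 \in E_{M_1}$. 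Since an MEC is determined by its skeleton together with its v-structure set, this assigns to each $M \in \mathcal{S}$ exactly one of these patterns.

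Next I would tally the patterns. \cref{item-2-of-lem:relation-between-v-structures-M0-M1} gives a unique member of $\mathcal{S}$ with v-structure set \cref{eq:v-structure-M0-M1-1}, and it lies in $\setofMECs{G, r_1, 0, i}$; \cref{item-3-of-lem:relation-between-v-structures-M0-M1} gives a unique member with v-structure set \cref{eq:v-structure-M0-M1-2}, lying in $\setofMECs{G, r_1, 1}$; and \cref{item-4-of-lem:relation-between-v-structures-M0-M1} gives, for each undirected edge $x-r_1 \in E_{M_1}$, a unique member with the corresponding v-structure set \cref{eq:v-structure-M0-M1-3}, again lying in $\setofMECs{G, r_1, 1}$. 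Because $M_1 \in \setofMECs{G_1, r_1, 0, i}$, there are exactly $i$ such edges $x-r_1$, and for distinct $x$ the v-structure sets in \cref{eq:v-structure-M0-M1-3} differ, so these $i$ MECs are pairwise distinct; they are also distinct from the MECs of patterns \cref{eq:v-structure-M0-M1-1} and \cref{eq:v-structure-M0-M1-2}, since the set $Y = \{y : y\rightarrow r_2 \in E_{M_2}\}$ appearing in \cref{eq:v-structure-M0-M1-1} is nonempty (as $M_2 \in \setofMECs{G_2, r_2, 1}$), so that pattern strictly enlarges $\mathcal{V}(M_1)\cup\mathcal{V}(M_2)$ in a way the other patterns do not. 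Summing, $|\mathcal{S}| = 1 + 1 + i = i+2$, with $i+1$ of these in $\setofMECs{G, r_1, 1}$ and the remaining one in $\setofMECs{G, r_1, 0, i}$, which is the assertion.

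There is no real obstacle left: the existence, uniqueness, and subclass-membership statements for each v-structure pattern are precisely the content of \cref{item-2-of-lem:relation-between-v-structures-M0-M1,item-3-of-lem:relation-between-v-structures-M0-M1,item-4-of-lem:relation-between-v-structures-M0-M1} of \cref{lem:relation-between-v-structures-M0-M1}, which we may assume. The only point requiring minor care is the disjointness bookkeeping in the previous paragraph—verifying that the three families of MECs produced by the lemma are pairwise disjoint and that the third family genuinely contributes $i$ distinct MECs—and this is a short consequence of the fact that an MEC is uniquely determined by its skeleton and its set of v-structures.
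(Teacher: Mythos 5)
Your proposal is correct and follows essentially the same route as the paper's own proof: classify the possible v-structure sets via \cref{item-1-of-lem:relation-between-v-structures-M0-M1} of \cref{lem:relation-between-v-structures-M0-M1}, then count one MEC each from \cref{item-2-of-lem:relation-between-v-structures-M0-M1,item-3-of-lem:relation-between-v-structures-M0-M1} and $i$ from \cref{item-4-of-lem:relation-between-v-structures-M0-M1}, using the uniqueness of an MEC given its skeleton and v-structure set. The only difference is that you make the pairwise-disjointness bookkeeping explicit, which the paper leaves implicit.
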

\begin{proof}
From \cref{item-1-of-lem:relation-between-v-structures-M0-M1} of \cref{lem:relation-between-v-structures-M0-M1}, for any MEC $M$ of $G$, if $\mathcal{P}(M, V_{G_1}, V_{G_2}) = (M_1, M_2)$ then $M$ has to obey either \cref{eq:v-structure-M0-M1-1,eq:v-structure-M0-M1-2,eq:v-structure-M0-M1-3}.
    From \cref{item-2-of-lem:relation-between-v-structures-M0-M1} of \cref{lem:relation-between-v-structures-M0-M1}, there exists a unique MEC that satisfies \cref{eq:v-structure-M0-M1-1}, belonging to $\setofMECs{G,r_1, 0, i}$. As indicated in \cref{item-3-of-lem:relation-between-v-structures-M0-M1} of \cref{lem:relation-between-v-structures-M0-M1}, another unique MEC, following \cref{eq:v-structure-M0-M1-2},  belonging to $\setofMECs{G,r_1, 1}$.

Given $M_1 \in \setofMECs{G_1, r_1, 0, i}$, $i$ instances of $x$ exist such that $x-r_1 \in E_{M_1}$. According to \cref{item-4-of-lem:relation-between-v-structures-M0-M1} of \cref{lem:relation-between-v-structures-M0-M1}, for each $x-r_1 \in E_{M_1}$, there exists a unique MEC satisfying \cref{eq:v-structure-M0-M1-3}. Consequently, there are $i$ MECs following \cref{eq:v-structure-M0-M1-3}, all belong to $\setofMECs{G,r_1, 1}$ as per \cref{item-3-of-lem:relation-between-v-structures-M0-M1}.

Hence, the total count of MECs is $i+2$. Out of which, $i+1$ MECs belong to $\setofMECs{G,r_1, 1}$, and one MEC belong to $\setofMECs{G, r_1, 0, i}$.
\end{proof}

We now deal with the last possibility, when for some $0 \leq i \leq \delta_1$, $M_1 \in \setofMECs{G_1, r_1, 0, i}$, and for some $0 \leq j \leq \delta_2$, $M_2 \in \setofMECs{G_2, r_2, 0, j}$

\begin{lemma}
\label{lem:relation-between-v-structures-M0-M0}
Let  for some $0\leq i \leq \delta_1$, $M_1 \in \setofMECs{G_1, r_1, 0, i}$, and for some $0\leq j \leq \delta_2$, $M_2 \in \setofMECs{G_2, r_2, 0, j}$. 
\begin{enumerate}
    \item
    \label{item-1-of-lem:relation-between-v-structures-M0-M0}
    Let $M \in$ MEC$(G)$ such that $\mathcal{P}(M,V_{G_1},V_{G_2})=(M_1,M_2)$. Then,
    one of the following occurs:
\begin{enumerate}
    \item
    \begin{equation}
    \label{eq:v-structure-M0-M0-1}
        \mathcal{V}(M)=\mathcal{V}(M_1) \cup \mathcal{V}(M_2)
    \end{equation}

        \item 
    For some undirected edge $x-r_1\in E_{M_1}$,
    \begin{equation}
    \label{eq:v-structure-M0-M0-2}
        \mathcal{V}(M)=\mathcal{V}(M_1) \cup \mathcal{V}(M_2) \cup \{x\rightarrow r_1\leftarrow r_2\}
    \end{equation}
    
    \item For some undirected edge $y-r_2\in E_{M_2}$,
    \begin{equation}
    \label{eq:v-structure-M0-M0-3}
        \mathcal{V}(M)=\mathcal{V}(M_1) \cup \mathcal{V}(M_2) \cup \{y\rightarrow r_2\leftarrow r_1\}
    \end{equation}
\end{enumerate}
\item 
\label{item-2-of-lem:relation-between-v-structures-M0-M0}
There exists a unique MEC $M$ of $G$ that obeys \cref{eq:v-structure-M0-M0-1}. Additionally, $M$ belongs to $\setofMECs{G, r_1, 0, i+1}$, and  $\mathcal{P}(M, V_{G_1}, V_{G_2}) = (M_1, M_2)$.

\item
\label{item-3-of-lem:relation-between-v-structures-M0-M0}
For each $x$ such that $x-r_1 \in E_{M_1}$, there exists a unique MEC $M$ of $G$ that obeys \cref{eq:v-structure-M0-M0-2}. Furthermore, $M$ belongs to $\setofMECs{G, r_1, 1}$, and $\mathcal{P}(M, V_{G_1}, V_{G_2}) = (M_1, M_2)$.

\item
\label{item-4-of-lem:relation-between-v-structures-M0-M0}
For each $y$ such that $y-r_2 \in E_{M_2}$, there exists a unique MEC $M$ of $G$ that obeys \cref{eq:v-structure-M0-M0-3}. Also, $M$ belongs to $\setofMECs{G, r_1, 0, i}$, and $\mathcal{P}(M, V_{G_1}, V_{G_2}) = (M_1, M_2)$.
\end{enumerate}
\end{lemma}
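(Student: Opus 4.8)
The plan is to prove \cref{lem:relation-between-v-structures-M0-M0} by the same strategy already used for \cref{lem:relation-between-v-structures-M1-M1,lem:relation-between-v-structures-M1-M0,lem:relation-between-v-structures-M0-M1}: first pin down the possible v-structure sets of an MEC $M$ of $G$ with $\mathcal{P}(M, V_{G_1}, V_{G_2}) = (M_1, M_2)$, then for each admissible v-structure set construct the (unique) witnessing MEC and read off its membership class. Throughout I rely on \cref{obs:v-structures-of-G-neither-in-M1-nor-in-M2} (every extra v-structure of $M$ uses the edge joining $r_1$ and $r_2$), \cref{lem:directed-edge-in-projection-implies-directed-edge-in-MEC} (orientations of $M_1$ and $M_2$ survive in $M$), and \cref{thm:nes-and-suf-cond-for-tree-graph-to-be-an-MEC}.

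For \cref{item-1-of-lem:relation-between-v-structures-M0-M0}: since $M_1 \in \setofMECs{G_1, r_1, 0, i}$ and $M_2 \in \setofMECs{G_2, r_2, 0, j}$, no edge of $M_1$ enters $r_1$ and no edge of $M_2$ enters $r_2$, so the same holds in $M$. I then case on the edge between $r_1$ and $r_2$ in $M$. If it is undirected, it creates no v-structure, and using $\mathcal{V}(M[V_{G_a}]) = \mathcal{V}(M_a)$ we land in \cref{eq:v-structure-M0-M0-1}. If $r_2 \rightarrow r_1 \in M$, then by \cref{item-2-of-thm:nes-and-suf-cond-for-tree-graph-to-be-an-MEC} of \cref{thm:nes-and-suf-cond-for-tree-graph-to-be-an-MEC} it must be strongly protected; in a tree the only option is a configuration $r_2 \rightarrow r_1 \leftarrow x$ with $x$ a neighbour of $r_1$ (the alternative $w \rightarrow r_2 \rightarrow r_1$ is impossible because nothing enters $r_2$), so some $x - r_1 \in E_{M_1}$ gets oriented $x \rightarrow r_1$; a second such edge into $r_1$ would put a v-structure inside $V_{G_1}$ absent from $M_1$, contradicting the projection, so $x$ is unique and \cref{eq:v-structure-M0-M0-2} holds. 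The case $r_1 \rightarrow r_2 \in M$ is the mirror image and yields \cref{eq:v-structure-M0-M0-3}.

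For \cref{item-2-of-lem:relation-between-v-structures-M0-M0,item-3-of-lem:relation-between-v-structures-M0-M0,item-4-of-lem:relation-between-v-structures-M0-M0}: uniqueness is automatic, since an MEC is determined by its skeleton together with its v-structures. For existence I exhibit the MEC explicitly. In \cref{item-2-of-lem:relation-between-v-structures-M0-M0} I take $M = M_1 \cup M_2 \cup \{r_1 - r_2\}$; no induced $a \rightarrow b - c$ can pass through the new edge (nothing enters $r_1$ or $r_2$), every directed edge lies in $M_1$ or $M_2$ and keeps its protecting subgraph, $M[V_{G_a}] = M_a$, and the extra undirected edge at $r_1$ raises its undirected degree from $i$ to $i+1$, so $M \in \setofMECs{G, r_1, 0, i+1}$. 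In \cref{item-3-of-lem:relation-between-v-structures-M0-M0} (for a fixed $x - r_1 \in E_{M_1}$) I initialise $M = M_1 \cup M_2 \cup \{r_2 \rightarrow r_1\}$, replace $x - r_1$ by $x \rightarrow r_1$, and then run the same cascade as in the proof of \cref{item-4-of-lem:relation-between-v-structures-M0-M1}: orient $u - v \in M_1$ as $u \rightarrow v$ whenever there is an undirected path from $r_1$ to $(u,v)$ in $M_1$ avoiding $x$. In \cref{item-4-of-lem:relation-between-v-structures-M0-M0} (for a fixed $y - r_2 \in E_{M_2}$) the mirror construction on the $G_2$-side works, starting from $r_1 \rightarrow r_2$; here $r_1$ keeps its $i$ undirected edges, so $M \in \setofMECs{G, r_1, 0, i}$, whereas the incoming edge $r_2 \rightarrow r_1$ in \cref{item-3-of-lem:relation-between-v-structures-M0-M0} places that MEC in $\setofMECs{G, r_1, 1}$. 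In each case I verify, as in the earlier lemmas, that $M$ avoids $a \rightarrow b - c$, that every directed edge is strongly protected ($r_2 \rightarrow r_1$ by $x \rightarrow r_1 \leftarrow r_2$, resp.\ $r_1 \rightarrow r_2$ by $y \rightarrow r_2 \leftarrow r_1$; the cascaded edges by the $w \rightarrow u \rightarrow v$ chains they sit on), and that $\mathcal{V}(M[V_{G_a}]) = \mathcal{V}(M_a)$, so that $\mathcal{P}(M, V_{G_1}, V_{G_2}) = (M_1, M_2)$ by \cref{def:projection}.

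The main obstacle is the propagation step in \cref{item-3-of-lem:relation-between-v-structures-M0-M0,item-4-of-lem:relation-between-v-structures-M0-M0}: one must check that orienting edges along undirected paths from the root (avoiding the distinguished neighbour) introduces neither a new induced $a \rightarrow b - c$ nor a spurious v-structure inside $V_{G_1}$ (resp.\ $V_{G_2}$). As in \cref{lem:relation-between-v-structures-M0-M1}, this uses that the skeleton is a tree — so the undirected path from the root to any edge is unique — together with the observation that orienting an edge towards the root would immediately create a v-structure absent from $M_1$ (resp.\ $M_2$), which the projection forbids. The remaining verifications are routine bookkeeping parallel to the three preceding lemmas, after which \cref{lem:relation-between-v-structures-M0-M0} follows.
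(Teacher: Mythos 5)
Your proposal matches the paper's proof essentially step for step: the same case analysis on the orientation of the $r_1$--$r_2$ edge for \cref{item-1-of-lem:relation-between-v-structures-M0-M0}, and the same explicit constructions ($M = M_1\cup M_2\cup\{r_1-r_2\}$ for \cref{item-2-of-lem:relation-between-v-structures-M0-M0}, and the root-to-edge orientation cascade avoiding the distinguished neighbour for \cref{item-3-of-lem:relation-between-v-structures-M0-M0,item-4-of-lem:relation-between-v-structures-M0-M0}), with the same membership bookkeeping. One minor remark: in \cref{item-1-of-lem:relation-between-v-structures-M0-M0} the paper avoids your strong-protection argument that $r_2\rightarrow r_1\in M$ forces some $x\rightarrow r_1$ (your claim that ``nothing enters $r_2$'' in $M$ is not immediate, since a single undirected edge of $M_2$ could a priori be oriented into $r_2$ in $M$ without creating a forbidden v-structure); it simply notes that if no such $x$ exists the v-structures still satisfy \cref{eq:v-structure-M0-M0-1}, so the disjunction holds either way and nothing is lost.
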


\begin{proof}
We first prove \cref{item-1-of-lem:relation-between-v-structures-M0-M0} of \cref{lem:relation-between-v-structures-M0-M0}.

\begin{proof}[Proof of \cref{item-1-of-lem:relation-between-v-structures-M0-M0} of \cref{lem:relation-between-v-structures-M0-M0}]
Let $M$ be an MEC of $G$ such that $\mathcal{P}(M, V_{G_1}, V_{G_2}) = (M_1, M_2)$.
    From \cref{obs:v-structures-of-G-neither-in-M1-nor-in-M2}, $M$ has the v-structures of $M_1$ and $M_2$. Other than that, it can only have the v-structures that involve an edge with endpoints $r_1$ and $r_2$. There are three possibilities: either $r_1-r_2 \in M$ or $r_2\rightarrow r_1 \in M$ or $r_1\rightarrow r_2 \in M$. 
    
    Suppose $r_1-r_2 \in M$. Then, there cannot be any v-structure with an edge with endpoints $r_1$ and $r_2$, because a v-structure contains directed edges, and $r_1-r_2$ is undirected. This implies that if $r_1-r_2 \in M$ then the only v-structures $M$ can have are the v-structures of $M_1$ and $M_2$, i.e., $M$ follows \cref{eq:v-structure-M0-M0-1}. 

Suppose  $r_2 \rightarrow r_1 \in M$.
Since $M_1 \in \setofMECs{G_1, r_1, 0, i}$,  we can partition the set of neighbors of $r_1$ in $G_1$ into $X$ and $X'$ such that $X = \{x: x-r_1 \in M_1\}$ is a set of $i$ nodes, and $X' = \{x': r_1\rightarrow x' \in M_1\}$ of $\delta_1 - i$ nodes. From \cref{lem:directed-edge-in-projection-implies-directed-edge-in-MEC}, for $x' \in X'$, $r_1\rightarrow x' \in M$. Since $r_2\rightarrow r_1 \in M$, from \cref{item-1-of-thm:nes-and-suf-cond-for-tree-graph-to-be-an-MEC} of \cref{thm:nes-and-suf-cond-for-tree-graph-to-be-an-MEC}, for $x \in X$, either $r_1\rightarrow x \in M$ or $x\rightarrow r_1 \in M$, otherwise, $r_2\rightarrow r_1 -x$ is an induced subgraph in $M$, contradicting \cref{item-1-of-thm:nes-and-suf-cond-for-tree-graph-to-be-an-MEC} of \cref{thm:nes-and-suf-cond-for-tree-graph-to-be-an-MEC}. 
If for two distinct $x_1, x_2 \in X$,  $x_1\rightarrow r_1, x_2 \rightarrow r_1 \in M$, then we get a v-structure $x_1\rightarrow r_1 \leftarrow x_2$ in $M$ such the nodes of the v-structure are in $V_{G_1}$, but the v-structure is not in $M_1$, implying $\mathcal{V}(M[V_{G_1}]) \neq \mathcal{V}(M_1)$, which further implies $\mathcal{P}(M, V_{G_1}) \neq M_1$, a contradiction. This implies that there exists at most one $x \in X$ such that $x\rightarrow r_1 \in M$. If there is no $x\in X$ such that $x\rightarrow r_1 \in M$ then there won't be any v-structure in $M$ with endpoints $r_1$ and $r_2$, and in this case, $M$ contains only the v-structures of $M_1$ and $M_2$, as in \cref{eq:v-structure-M0-M0-1}. If there exists an $x \in X$ such that $x\rightarrow r_1 \in M$ then other than the v-structures of $M_1$ and $M_2$, $M$ also contains a v-structure $r_2\rightarrow r_1 \leftarrow x$. In this case, $M$ obeys \cref{eq:v-structure-M0-M0-2}.

Suppose  $r_1 \rightarrow r_2 \in M$.
Since $M_2 \in \setofMECs{G_2, r_2, 0, j}$,  we can partition the set of neighbors of $r_2$ in $G_2$ into $Y$ and $Y'$ such that $Y = \{y: y-r_2 \in M_2\}$ is a set of $j$ nodes, and $Y' = \{y': r_2\rightarrow y' \in M_2\}$ of $\delta_2 - j$ nodes. From \cref{lem:directed-edge-in-projection-implies-directed-edge-in-MEC}, for $y' \in Y'$, $r_2\rightarrow y' \in M$. Since $r_1\rightarrow r_2 \in M$, from \cref{item-1-of-thm:nes-and-suf-cond-for-tree-graph-to-be-an-MEC} of \cref{thm:nes-and-suf-cond-for-tree-graph-to-be-an-MEC}, for $y \in Y$, either $r_2\rightarrow y \in M$ or $y\rightarrow r_2 \in M$, otherwise, $r_1\rightarrow r_2 -y$ is an induced subgraph in $M$, contradicting \cref{item-1-of-thm:nes-and-suf-cond-for-tree-graph-to-be-an-MEC} of \cref{thm:nes-and-suf-cond-for-tree-graph-to-be-an-MEC}. 
If for two distinct $y_1, y_2 \in Y$,  $y_1\rightarrow r_2, y_2 \rightarrow r_2 \in M$, then we get a v-structure $y_1\rightarrow r_2 \leftarrow y_2$ in $M$ such the nodes of the v-structure are in $V_{G_2}$, but the v-structure is not in $M_2$, implying $\mathcal{V}(M[V_{G_2}]) \neq \mathcal{V}(M_2)$, which further implies $\mathcal{P}(M, V_{G_2}) \neq M_2$, a contradiction. This implies that there exists at most one $y \in Y$ such that $y\rightarrow r_2 \in M$. If there is no $y\in Y$ such that $y\rightarrow r_2 \in M$ then there won't be any v-structure in $M$ with endpoints $r_1$ and $r_2$, and in this case, $M$ contains only the v-structures of $M_1$ and $M_2$, as in \cref{eq:v-structure-M0-M0-1}. If there exists a $y \in Y$ such that $y\rightarrow r_2 \in M$ then other than the v-structures of $M_1$ and $M_2$, $M$ also contains a v-structure $r_1\rightarrow r_2 \leftarrow y$. In this case, $M$ obeys \cref{eq:v-structure-M0-M0-3}.

This implies in all the possibilities, $M$ either follows \cref{eq:v-structure-M0-M0-1} or it follows \cref{eq:v-structure-M0-M0-2} or it follows \cref{eq:v-structure-M0-M0-3}. This completes the proof of \cref{item-1-of-lem:relation-between-v-structures-M0-M0}.
\end{proof}

As discussed in the introduction, there cannot be two distinct MECs with the same skeleton and the same set of v-structures.  Uniqueness of the MECs present in \cref{item-2-of-lem:relation-between-v-structures-M0-M0,item-3-of-lem:relation-between-v-structures-M0-M0,item-4-of-lem:relation-between-v-structures-M0-M0} of \cref{lem:relation-between-v-structures-M0-M0} comes from this fact. 
We now prove \cref{item-2-of-lem:relation-between-v-structures-M0-M0,item-3-of-lem:relation-between-v-structures-M0-M0,item-4-of-lem:relation-between-v-structures-M0-M0} of \cref{lem:relation-between-v-structures-M0-M0}. 

\begin{proof}[Proof of \cref{item-2-of-lem:relation-between-v-structures-M0-M0} of \cref{lem:relation-between-v-structures-M0-M0}]
    We construct an MEC of $G$ that obeys \cref{item-2-of-lem:relation-between-v-structures-M0-M0} of \cref{lem:relation-between-v-structures-M0-M0}. 
    Let $M = M_1\cup M_2 \cup \{r_1 - r_2\}$. To prove that $M$ obeys \cref{item-2-of-lem:relation-between-v-structures-M0-M0}, we have to show the following: (a) $M$ is an MEC of $G$, (b) $M$ obeys \cref{eq:v-structure-M0-M0-1}, (c) $\mathcal{P}(M, V_{G_1}, V_{G_2}) = (M_1, M_2)$, and (d) $M \in \setofMECs{G, r_1,0, i+1}$.   

    We first show that $M$ is an MEC of $G$. From the construction of $M$, it is clear that the skeleton of $M$ is $G$. Thus, the only thing we have to show is that $M$ is an MEC, i.e., it obeys \cref{item-1-of-thm:nes-and-suf-cond-for-tree-graph-to-be-an-MEC,item-2-of-thm:nes-and-suf-cond-for-tree-graph-to-be-an-MEC} of \cref{thm:nes-and-suf-cond-for-tree-graph-to-be-an-MEC}. 

We start with proving \cref{item-1-of-thm:nes-and-suf-cond-for-tree-graph-to-be-an-MEC} of \cref{thm:nes-and-suf-cond-for-tree-graph-to-be-an-MEC}. Suppose $M$ contains an induced subgraph of the form $a\rightarrow b -c$. Since $M_1 \in \setofMECs{G_1, r_1, 0, i}$, there cannot be an edge of the form $x\rightarrow r_1$ in $M_1$. Similarly, since $M_2 \in \setofMECs{G_2, r_2, 0, j}$, there cannot be an edge of the form $y\rightarrow r_2 \in M_2$. Since $M = M_1\cup M_2 \cup \{r_1 - r_2\}$, there cannot be an edge of the form $x\rightarrow r_1$ and $y\rightarrow r_2$ in $M$. This implies either (a) $a,b,c \in V_{G_1}$, or (b) $a,b,c \in V_{G_2}$. If $a, b, c \in V_{G_1}$ then from the construction of $M$, $a\rightarrow b -c \in M_1$. Since $M_1$ is an MEC of a tree graph $T_1$, this contradicts \cref{item-1-of-thm:nes-and-suf-cond-for-tree-graph-to-be-an-MEC} of \cref{thm:nes-and-suf-cond-for-tree-graph-to-be-an-MEC}. This implies that this case cannot occur. Similarly, there cannot be $a, b, c  \in V_{G_2}$. This implies $M$ cannot have any induced subgraph of the form $a\rightarrow b - c$. 

 We now show that each directed edge $u\rightarrow v$ of $M$ obeys \cref{item-2-of-thm:nes-and-suf-cond-for-tree-graph-to-be-an-MEC} of \cref{thm:nes-and-suf-cond-for-tree-graph-to-be-an-MEC}. Let $u\rightarrow v$ be a directed edge in $M$. From the construction of $M$, either $u\rightarrow v \in M_1$, or $u\rightarrow v \in M_2$. Since $M_1$ is an MEC of a tree graph $G_1$, if $u\rightarrow v \in M_1$ then from \cref{item-2-of-thm:nes-and-suf-cond-for-tree-graph-to-be-an-MEC} of \cref{thm:nes-and-suf-cond-for-tree-graph-to-be-an-MEC}, $u\rightarrow v$ is part of an induced subgraph of $M_1$ of the form either $w\rightarrow u \rightarrow v$ or $w\rightarrow v\leftarrow u$. From the construction of $M$, the induced subgraph of $M_1$ is an induced subgraph of $M$. Thus, when $u\rightarrow v \in M_1$, $u\rightarrow v$ obeys \cref{item-2-of-thm:nes-and-suf-cond-for-tree-graph-to-be-an-MEC} of \cref{thm:nes-and-suf-cond-for-tree-graph-to-be-an-MEC} for $M$. 
 Similarly, if $u\rightarrow v \in M_2$ then $u\rightarrow v$ obeys \cref{item-2-of-thm:nes-and-suf-cond-for-tree-graph-to-be-an-MEC} of \cref{thm:nes-and-suf-cond-for-tree-graph-to-be-an-MEC} for $M$.    This shows that $M$ obeys \cref{item-2-of-thm:nes-and-suf-cond-for-tree-graph-to-be-an-MEC} of \cref{thm:nes-and-suf-cond-for-tree-graph-to-be-an-MEC}. 

 The above discussion implies that $M$ obeys \cref{item-1-of-thm:nes-and-suf-cond-for-tree-graph-to-be-an-MEC,item-2-of-thm:nes-and-suf-cond-for-tree-graph-to-be-an-MEC} of \cref{thm:nes-and-suf-cond-for-tree-graph-to-be-an-MEC}. 
 This further implies that $M$ is an MEC.

    We now show that the v-structures of $M$ obey \cref{eq:v-structure-M0-M0-1}. From the construction of $M$, $M_1$ and $M_2$ are the induced subgraphs of $M$ on $V_{G_1}$ and $V_{G_2}$. Therefore, $\mathcal{V}(M[V_{G_1}]) = \mathcal{V}(M_1)$, and $\mathcal{V}(M[V_{G_2}]) = \mathcal{V}(M_2)$. Since $r_1 - r_2 \in M$, other than the v-structures of $M_1$ and $M_2$, $M$ cannot have any v-structures.  This shows that the v-structures in $M$ obey \cref{eq:v-structure-M0-M0-1}. 

    We now show that $\mathcal{P}(M, V_{G_1}, V_{G_2}) = (M_1, M_2)$. From the construction of $M$, $M[V_{G_1}] = M_1$, and $M[V_{G_2}] = M_2$. This implies $\mathcal{V}(M[V_{G_1}]) = \mathcal{V}(M_1)$, and $\mathcal{V}(M[V_{G_2}]) = \mathcal{V}(M_2)$. From \cref{def:projection}, this further implies $\mathcal{P}(M, V_{G_1}, V_{G_2}) = (M_1, M_2)$. 

From the construction of $M$, $M_1$ is an induced subgraph of $M$. 
In $M_1$, $i$ edges adjacent to $r_1$ are undirected and the remaining $\delta_1 - i$ adjacent edges of $r_1$ in $M_1$ are directed outward of $r_1$. Since $M_1$ is an induced subgraph of $M$, the $i$ adjacent undirected edges of $M_1$ are undirected in $M$ as well. Other than that, it has an additional undirected edge $r_1-r_2$, adjacent to $r_1$. This implies the total number of undirected edges adjacent to $r_1$ in $M$ is $i+1$. Therefore, no edge adjacent to $r_1$ in $M$ is incoming to $r_1$, and $i+1$ undirected edges adjacent to $M$ are undirected. This implies  $M  \in \setofMECs{G, r_1, 0, i+1}$. This completes the proof of \cref{item-2-of-lem:relation-between-v-structures-M0-M0} of \cref{lem:relation-between-v-structures-M0-M0}.
   
\end{proof}

\begin{proof}[Proof of \cref{item-3-of-lem:relation-between-v-structures-M0-M0} of \cref{lem:relation-between-v-structures-M0-M0}]
    We construct an MEC of $G$ using the following steps:
    \begin{enumerate}
        \item
        \label{item-1-of-item-2-of-M0-M0}
        Initialize $M = M_1\cup M_2\cup \{r_2\rightarrow r_1\}$.
        \item
        \label{item-2-of-item-2-of-M0-M0}
        Pick an edge $x-r_1 \in M_1$. Replace the edge $x-r_1$ in $M$  with $x\rightarrow r_1$.
        \item
        \label{item-3-of-item-2-of-M0-M0}
        Update $M$ by replacing  $u-v$ of $M$ with $u\rightarrow v$, if $u-v \in M_1$ and there exists an undirected path from $r_1$ to $(u, v)$ in $M_1$ such that the path does not contain $x$.
    \end{enumerate}

    We show that $M$ constructed using the above steps will obey \cref{item-3-of-lem:relation-between-v-structures-M0-M0} of \cref{lem:relation-between-v-structures-M0-M0}. To prove this, we have to show the following: (a) $M$ is an MEC of $G$, (b) $M$ obeys \cref{eq:v-structure-M0-M0-2}, (c) $\mathcal{P}(M, V_{G_1}, V_{G_2}) = (M_1, M_2)$, and (d) $M \in \setofMECs{G, r_1, 1}$.

    We first show that $M$ is an MEC i.e., it obeys \cref{item-1-of-thm:nes-and-suf-cond-for-tree-graph-to-be-an-MEC,item-2-of-thm:nes-and-suf-cond-for-tree-graph-to-be-an-MEC} of \cref{thm:nes-and-suf-cond-for-tree-graph-to-be-an-MEC}. 

    We start with showing that $M$ obeys \cref{item-1-of-thm:nes-and-suf-cond-for-tree-graph-to-be-an-MEC} of \cref{thm:nes-and-suf-cond-for-tree-graph-to-be-an-MEC}.  Suppose $M$ contains an induced subgraph of the form $a\rightarrow b - c$. From the construction of $M$, there are three possibilities: either (a) $a,b,c \in V_{G_1}$, or (b) $a,b,c \in V_{G_2}$, or (c) $a = r_2$, $b =r_1$, and $c \in V_{G_1}$. 
    One by one, we show that none of the possibilities occurs.

    Suppose $a,b, c \in V_{G_1}$. From the construction of $M$, $b-c \in M_1$ (note that \cref{item-2-of-item-3-of-M0-M0,item-3-of-item-3-of-M0-M0} do not change a directed edge in $M_1\cup M_2 \cup \{r_2\rightarrow r_1\}$ into an undirected edge), and either $a\rightarrow b \in M_1$ or $a-b \in M_1$. If $a\rightarrow b \in M_1$ then $a\rightarrow b - c$ is an induced subgraph in the MEC $M_1$, contradicting \cref{item-1-of-thm:nes-and-suf-cond-for-tree-graph-to-be-an-MEC} of \cref{thm:nes-and-suf-cond-for-tree-graph-to-be-an-MEC}. And, if $a-b \in M_1$ then it must have been directed at either at \cref{item-2-of-item-3-of-M0-M0} (first case) or at \cref{item-3-of-item-3-of-M0-M0} (second case). If it has been directed at \cref{item-2-of-item-3-of-M0-M0} then $a = x, b = r_1$ and $c \neq x$ is a neighbor of $r_1$. This implies $r_1 -c$ is a path from $r_1$ to $(r_1, c)$. But, then at \cref{item-3-of-item-3-of-M0-M0}, $r_1-c$ has been converted into $r_1\rightarrow c$. This implies that the first case cannot occur. We now move to the second case. Suppose $a-b$ has been directed at \cref{item-3-of-item-3-of-M0-M0}. This implies there exists a path from $r_1$ to $(a,b)$ in $M_1$. This further implies that there exists a path from $r_1$ to $(b,c)$ in $M_1$. But, then, we have $b\rightarrow c \in M$ at \cref{item-3-of-item-3-of-M0-M0}.
     This implies that there cannot be an induced subgraph $a\rightarrow b - c$ in $M$ with $a,b,c \in V_{G_1}$.

Suppose $a,b,c \in V_{G_2}$. From the construction of $M$, an induced subgraph of $M$ with vertices in $V_{G_2}$ is also an induced subgraph of $M_2$ (note that \cref{item-2-of-item-2-of-M0-M0,item-3-of-item-2-of-M0-M0} does not make any change in an edge with the endpoints in $V_{G_2}$). Since $M_2$ is an MEC of a tree graph $G_2$, from \cref{item-1-of-thm:nes-and-suf-cond-for-tree-graph-to-be-an-MEC} of \cref{thm:nes-and-suf-cond-for-tree-graph-to-be-an-MEC}, there cannot be an induced subgraph $u\rightarrow v-w \in M_2$. This shows that there cannot be an induced subgraph $a\rightarrow b - c$ in $M$ with $a,b,c \in V_{G_2}$.

Suppose $a = r_2$, $b = r_1$, and $c\in V_{G_1}$. From the construction of $M$, this implies $r_1-c \in M_1$. There are two cases, either $c = x$, or $c \neq x$. If $c = x$, then at \cref{item-2-of-item-3-of-M0-M0}, we have $c\rightarrow b \in M$. Therefore, this case cannot occur. Suppose $c\neq x$.  But, edge $r_1-c$ is a path from $r_1$ to $(r_1,c)$. Then, it must have been directed at \cref{item-3-of-item-3-of-M0-M0}. This implies that this possibility also cannot occur. This shows that $M$ cannot have an induced subgraph of the form $u\rightarrow v- w$. This implies that $M$ obeys \cref{item-1-of-thm:nes-and-suf-cond-for-tree-graph-to-be-an-MEC} of \cref{thm:nes-and-suf-cond-for-tree-graph-to-be-an-MEC}.

We now show that $M$ obeys \cref{item-2-of-thm:nes-and-suf-cond-for-tree-graph-to-be-an-MEC} of \cref{thm:nes-and-suf-cond-for-tree-graph-to-be-an-MEC}. Suppose $u\rightarrow v$ is a directed edge in $M$ then either $u,v \in V_{G_1}$, or $u,v \in V_{G_2}$, or $u=r_2$ and $v = r_1$. We show that in each of the possibilities $u\rightarrow v$ in $M$ obeys \cref{item-2-of-thm:nes-and-suf-cond-for-tree-graph-to-be-an-MEC} of \cref{thm:nes-and-suf-cond-for-tree-graph-to-be-an-MEC}.

Suppose $u,v \in V_{G_1}$. From the construction of $M$, a directed edge of $M_1$ is a directed edge in $M$. Therefore, since $u\rightarrow v \in M$, either $u\rightarrow v \in M_1$ or $u-v \in M_1$. Suppose $u\rightarrow v \in M_1$.  
Since $M_1$ is an MEC of a tree graph $G_1$, from \cref{item-2-of-thm:nes-and-suf-cond-for-tree-graph-to-be-an-MEC} of \cref{thm:nes-and-suf-cond-for-tree-graph-to-be-an-MEC}, $u\rightarrow v$ is part of an induced subgraph of $M_1$ of the form either $u\rightarrow v \leftarrow w$ or  $w\rightarrow u\rightarrow v$. Since from the construction of $M$, a directed induced subgraph of $M_1$ is an induced subgraph of $M$. Therefore, $u\rightarrow v$ in $M$ obeys \cref{item-2-of-thm:nes-and-suf-cond-for-tree-graph-to-be-an-MEC} of \cref{thm:nes-and-suf-cond-for-tree-graph-to-be-an-MEC}. 

When $u-v \in M_1$ then it must have been directed in $M$ either at \cref{item-2-of-item-3-of-M0-M0} or at \cref{item-3-of-item-3-of-M0-M0}. Suppose $u-v \in M_1$ and it gets directed in $M$ at \cref{item-2-of-item-3-of-M0-M0}. Then, $u = x$ and $v = r_2$. In that case, $u\rightarrow v$ is part of an induced subgraph $r_2\rightarrow r_1 \leftarrow x$, obeying \cref{item-2-of-thm:nes-and-suf-cond-for-tree-graph-to-be-an-MEC} of \cref{thm:nes-and-suf-cond-for-tree-graph-to-be-an-MEC}. Suppose $u-v \in M_1$ and it gets directed in $M$ at \cref{item-3-of-item-3-of-M0-M0}. This implies there exists an undirected path from $r_2$ to $(u,v)$ in $M_1$, and the path does not contain $x$. 
Let the path be $P = (x_0 = r_1, x_1, \ldots, x_{l-1} = u, x_{l} = v)$ for some $l \geq 1$. From the construction of $M$, for $0\leq i < l$, $x_i \rightarrow x_{i+1} \in M$ because $P_i = (x_0 = r_1, \ldots, x_i, x_{i+1})$ is an undirected path in $M_1$ from $r_1$ to $(x_i, x_{i+1})$, and $P_i$ does not contain $x$ either. If $l > 1$ then $u\rightarrow v$ is part of an induced subgraph  $x_{l-2} \rightarrow x_{l-1} \rightarrow x_l$ in $M$.
And, if $l=1$ then $r_1 = u$, and $u\rightarrow v$ is part of an induced subgraph $r_2\rightarrow r_1\rightarrow v$ in $M$.
 Thus, we show that if $u,v \in V_{G_1}$ then $u\rightarrow v$ obeys \cref{item-2-of-thm:nes-and-suf-cond-for-tree-graph-to-be-an-MEC} of \cref{thm:nes-and-suf-cond-for-tree-graph-to-be-an-MEC}.

Suppose $u,v \in V_{G_2}$. From the construction of $M$, $M_2$ is an induced subgraph of $M$. This implies $u\rightarrow v \in M_2$. Since $M_2$ is an MEC of a tree graph $G_2$, from \cref{item-2-of-thm:nes-and-suf-cond-for-tree-graph-to-be-an-MEC} of \cref{thm:nes-and-suf-cond-for-tree-graph-to-be-an-MEC}, $u\rightarrow v$ is part of an induced subgraph of $M_2$ of the form either $u\rightarrow v \leftarrow w$ or  $w\rightarrow u\rightarrow v$. Since from the construction of $M$, an induced subgraph of $M_2$ is an induced subgraph of $M$. Therefore, $u\rightarrow v$ in $M$ obeys \cref{item-2-of-thm:nes-and-suf-cond-for-tree-graph-to-be-an-MEC} of \cref{thm:nes-and-suf-cond-for-tree-graph-to-be-an-MEC}. 
 
 We now show that $r_2\rightarrow r_1$ in $M$ also obeys \cref{item-2-of-thm:nes-and-suf-cond-for-tree-graph-to-be-an-MEC} of \cref{thm:nes-and-suf-cond-for-tree-graph-to-be-an-MEC}. $r_2\rightarrow r_1$ is part of an induced subgraph $r_2\rightarrow
  r_1 \leftarrow x$ in $M$. Thus, we show that each edge of $M$ obeys \cref{item-2-of-thm:nes-and-suf-cond-for-tree-graph-to-be-an-MEC} of \cref{thm:nes-and-suf-cond-for-tree-graph-to-be-an-MEC}.

The above discussion shows that $M$ is an MEC as it obeys \cref{item-1-of-thm:nes-and-suf-cond-for-tree-graph-to-be-an-MEC,item-2-of-thm:nes-and-suf-cond-for-tree-graph-to-be-an-MEC} of \cref{thm:nes-and-suf-cond-for-tree-graph-to-be-an-MEC}. From the construction of $M$, the skelton of $M$ is $G$. This implies $M$ is an MEC of $G$.

We now show that $M$ obeys \cref{eq:v-structure-M0-M0-2}. From the construction of $M$, directed edges of $M_1$ and $M_2$ are directed edges of $M$. Also, $r_2\rightarrow r_1 \leftarrow x$ is a v-structure in $M$. This implies  $\mathcal{V}(M_1) \cup \mathcal{V}(M_2) \cup \{r_2\rightarrow r_1 \leftarrow x\} \subseteq \mathcal{V}(M)$. For the completeness of the proof, we show that if $u\rightarrow v \leftarrow w$ is a v-structure in $M$ then either it is a v-structure of $M_1$ or it is a v-structure of $M_2$, or $u = r_2$, $v = r_1$ and $w =  x$, i.e., $\mathcal{V}(M) \subseteq \mathcal{V}(M_1) \cup \mathcal{V}(M_2) \cup \{r_2\rightarrow r_1 \leftarrow x\}$. This further implies $\mathcal{V}(M) = \mathcal{V}(M_1) \cup \mathcal{V}(M_2)\cup \{r_2\rightarrow r_1 \leftarrow x\}$, i.e., $M$ obeys \cref{eq:v-structure-M0-M0-2}.

Suppose there exists a v-structure $u\rightarrow v \leftarrow w$ in $M$.  From the construction of $M$, there are following possibilities: (a) $u,v,w \in V_{G_1}$, or (b) $u = r_2$, $v =r_1$, and $w$ is a neighbor of $r_1$ in $V_{G_1}$, or (c) $u,v,w \in V_{G_2}$. 

Suppose $u,v, w \in V_{G_1}$. From the construction of $M$, if $a\rightarrow b \in M_1$ then $a\rightarrow b \in M$. This implies either $u-v \in M_1$ or $u\rightarrow v \in M_1$, and either $v-w \in M_1$ or $v\leftarrow w \in M_1$. From \cref{item-1-of-thm:nes-and-suf-cond-for-tree-graph-to-be-an-MEC} of \cref{thm:nes-and-suf-cond-for-tree-graph-to-be-an-MEC}, either $u\rightarrow v\leftarrow w \in M_1$ or $u-v-w \in M_1$ (recall that $M_1$ is an MEC of a tree graph $G_1$). 
    If $u\rightarrow v\leftarrow w \in M_1$ then the v-structure is a v-structure of $M_1$. Suppose $u-v-w \in M_1$. Then, if $u\rightarrow v \in M$ then it must have been directed either at \cref{item-2-of-item-3-of-M0-M0} or at \cref{item-2-of-item-3-of-M0-M0}. Suppose $u\rightarrow v \in M$ at \cref{item-2-of-item-3-of-M0-M0}. Then, $u = x$ and $v =r_1$. But, then $v-w$ is a path from $r_1$ to $(v, w)$ (and the path does not contain $x$). And, we have $v\rightarrow w \in M$ at \cref{item-3-of-item-3-of-M0-M0}, a contradiction, as the v-structure implies $w\rightarrow v \in M$. This implies that this case cannot occur. Suppose  $u\rightarrow v \in M$ at \cref{item-3-of-item-3-of-M0-M0}.  Then, there exists a path from $r_1$ to $(u,v)$, and the path does not contain $x$.
    Since $u-v-w \in M_1$, the existence of a path from $r_1$ to $(u,v)$ in $M_1$ implies the existence of a path from $r_1$ to $(v,w)$ in $M_1$, and the new path also does not contain $x$. Also, since $M_1$ is an MEC of a tree graph $G_1$, there cannot be two different paths from $r_1$ to $w$. 
    This implies at \cref{item-2-of-item-3-of-M0-M0}, if we have $u\rightarrow v \in M$ then we also have $v\rightarrow w\in M$. But, this is a contradiction, as we have assumed that $u\rightarrow v \leftarrow w \in M$.
    This implies there cannot be $u-v-w\in M_1$ and $u\rightarrow v\leftarrow w \in M$. 

    Suppose $u = r_2$, $v =r_1$, and $w$ is a neighbor of $r_1$ in  $V_{G_1}$. Since $M_1 \in \setofMECs{G_1, r_1, 0}$, either $r_1\rightarrow w \in M_1$ or $r_1-w \in M_1$. From the construction of $M$, if $r_1\rightarrow w \in M_1$ then $r_1\rightarrow w \in M$. But, we have $r_1\leftarrow w \in M$. This implies that $r_1-w \in M_1$. From \cref{item-2-of-item-3-of-M0-M0,item-3-of-item-3-of-M0-M0}, if we have $w \rightarrow r_1 \in M$ then $w = x$, otherwise, we have $r_1\rightarrow w \in M$ at \cref{item-3-of-item-3-of-M0-M0} as $r_1-w$ is an undirected path from $r_1$ to $(r_1, w)$, and the path does not contain $x$. This implies $u= r_2$, $v = r_1$, and $w = x$. 
    
    We now move to the final possibility. Suppose $u,v,w \in V_{G_2}$. From the construction of $M$, $M_2$ is an induced subgraph of $M$. This implies that if $u,v,w \in V_{G_2}$ then $u\rightarrow v \leftarrow w$ is a v-structure in $M_2$. 
    Thus, we show that in all the possible scenarios, if $u\rightarrow v\leftarrow w$ is a v-structure in $M$ then it is a v-structure either in $M_2$ or in $M_1$, or $v =r_1$ and $\{u,w\} = \{r_2, x\}$. 
    This completes the proof that $M$ obeys \cref{eq:v-structure-M0-M0-2}.

    The above discussion implies that $\mathcal{V}(M[V_{G_1}]) = \mathcal{V}(M_1)$ and $\mathcal{V}(M[V_{G_2}]) = \mathcal{V}(M_2)$. Therefore, from \cref{def:projection}, $\mathcal{P}(M, V_{G_1}, V_{G_2}) = (M_1, M_2)$.

   Since there is an incoming edge $r_2\rightarrow r_1$ in $M$, $M \in \setofMECs{G, r_1, 1}$. This completes the proof of \cref{item-3-of-lem:relation-between-v-structures-M0-M0} of \cref{lem:relation-between-v-structures-M0-M0}. 
\end{proof}

\begin{proof}[Proof of \cref{item-4-of-lem:relation-between-v-structures-M0-M0} of \cref{lem:relation-between-v-structures-M0-M0}]
    We construct an MEC of $G$ using the following steps:
    \begin{enumerate}
        \item
        \label{item-1-of-item-3-of-M0-M0}
        Initialize $M = M_1\cup M_2\cup \{r_1\rightarrow r_2\}$.
        \item
        \label{item-2-of-item-3-of-M0-M0}
        Pick an edge $y-r_2 \in M_2$. Replace the edge $y-r_2$ in $M$  with $y\rightarrow r_2$.
        \item
        \label{item-3-of-item-3-of-M0-M0}
            Update $M$ by replacing  $u-v$ of $M$ with $u\rightarrow v$, if $u-v \in M_2$ and there exists an undirected path from $r_2$ to $(u, v)$ in $M_2$ such that the path does not contain $y$.
    \end{enumerate}

    We show that $M$ constructed using the above steps will obey \cref{item-4-of-lem:relation-between-v-structures-M0-M0} of \cref{lem:relation-between-v-structures-M0-M0}. To prove this, we have to show the following: (a) $M$ is an MEC of $G$, (b) $M$ obeys \cref{eq:v-structure-M0-M0-3}, (c) $\mathcal{P}(M, V_{G_1}, V_{G_2}) = (M_1, M_2)$, and (d) $M \in \setofMECs{G, r_1, 0, i}$.

    We first show that $M$ is an MEC i.e., it obeys \cref{item-1-of-thm:nes-and-suf-cond-for-tree-graph-to-be-an-MEC,item-2-of-thm:nes-and-suf-cond-for-tree-graph-to-be-an-MEC} of \cref{thm:nes-and-suf-cond-for-tree-graph-to-be-an-MEC}. 

    We start with proving that $M$ obeys \cref{item-1-of-thm:nes-and-suf-cond-for-tree-graph-to-be-an-MEC} of \cref{thm:nes-and-suf-cond-for-tree-graph-to-be-an-MEC}.  Suppose $M$ contains an induced subgraph of the form $a\rightarrow b - c$. From the construction of $M$, there are three possibilities: either (a) $a,b,c \in V_{G_1}$, or (b) $a,b,c \in V_{G_2}$, or (c) $a = r_1$, $b =r_2$, and $c \in V_{G_2}$. 
    One by one, we show that none of the possibilities occurs. 
    
    Suppose $a,b,c \in V_{G_1}$. From the construction of $M$, an induced subgraph of $M$ with vertices in $V_{G_1}$ is also an induced subgraph of $M_1$ (note that \cref{item-2-of-item-3-of-M0-M0,item-3-of-item-3-of-M0-M0} does not make any change in an edge with the endpoints in $V_{G_1}$). Since $M_1$ is an MEC of a tree graph $G_1$, from \cref{item-1-of-thm:nes-and-suf-cond-for-tree-graph-to-be-an-MEC} of \cref{thm:nes-and-suf-cond-for-tree-graph-to-be-an-MEC}, there cannot be an induced subgraph $a\rightarrow b-c \in M_1$. This implies that the first possibility cannot occur.

Suppose $a,b, c \in V_{G_2}$. From the construction of $M$, $b-c \in M_2$ (note that \cref{item-2-of-item-3-of-M0-M0,item-3-of-item-3-of-M0-M0} do not change a directed edge in $M_1\cup M_2 \cup \{r_1\rightarrow r_2\}$ into an undirected edge), and either $a\rightarrow b \in M_2$ or $a-b \in M_2$. If $a\rightarrow b \in M_2$ then $a\rightarrow b - c$ is an induced subgraph in the MEC $M_2$, contradicting \cref{item-1-of-thm:nes-and-suf-cond-for-tree-graph-to-be-an-MEC} of \cref{thm:nes-and-suf-cond-for-tree-graph-to-be-an-MEC}. And, if $a-b \in M_2$ then it must have been directed at either at \cref{item-2-of-item-3-of-M0-M0} (first case) or at \cref{item-3-of-item-3-of-M0-M0} (second case). If it has directed at \cref{item-2-of-item-3-of-M0-M0} then $a = y, b = r_2$ and $c \neq y$ is a neighbor of $r_2$. This implies $r_2 -c$ is a path from $r_2$ to $(r_2, c)$. But, then at \cref{item-3-of-item-3-of-M0-M0}, $r_2-c$ has been converted into $r_2\rightarrow c$. This implies that the first case cannot occur. We move to the second case when $a-b$ has been directed at \cref{item-3-of-item-3-of-M0-M0}. This implies there exists a path from $r_2$ to $(a,b)$ in $M_2$. This further implies that there exists a path from $r_2$ to $(b,c)$ in $M_2$. But, then, we have $b\rightarrow c \in M$ at \cref{item-3-of-item-3-of-M0-M0}. This shows that the second case also cannot occur. This further implies the second possibility also cannot occur. We now move to the third possibility.

Suppose $a = r_1$, $b =r_2$, and $c \in V_{G_2}$. From the construction of $M$, this implies $r_2-c \in M_2$. There are two cases, either $c = y$, or $c \neq y$. If $c = y$, then at \cref{item-2-of-item-3-of-M0-M0}, we have $c\rightarrow b \in M$. Therefore, this case cannot occur. Suppose $c\neq y$.  But, edge $r_2-c$ is a path from $r_2$ to $(r_2,c)$. Then, it must have been directed at \cref{item-3-of-item-3-of-M0-M0}. This implies that the third possibility also cannot occur. This shows that $M$ cannot have an induced subgraph of the form $u\rightarrow v- w$. This implies that $M$ obeys \cref{item-1-of-thm:nes-and-suf-cond-for-tree-graph-to-be-an-MEC} of \cref{thm:nes-and-suf-cond-for-tree-graph-to-be-an-MEC}.

We now show that $M$ obeys \cref{item-2-of-thm:nes-and-suf-cond-for-tree-graph-to-be-an-MEC} of \cref{thm:nes-and-suf-cond-for-tree-graph-to-be-an-MEC}. Suppose $u\rightarrow v$ is a directed edge in $M$ then either $u,v \in V_{G_1}$, or $u,v \in V_{G_2}$, or $u=r_1$ and $v = r_2$. We show that in each of the possibilities $u\rightarrow v$ in $M$ obeys \cref{item-2-of-thm:nes-and-suf-cond-for-tree-graph-to-be-an-MEC} of \cref{thm:nes-and-suf-cond-for-tree-graph-to-be-an-MEC}.

Suppose $u,v \in V_{G_1}$, then from the construction of $M$, $u\rightarrow v \in M_1$. Since $M_1$ is an MEC of a tree graph $G_1$, from \cref{item-2-of-thm:nes-and-suf-cond-for-tree-graph-to-be-an-MEC} of \cref{thm:nes-and-suf-cond-for-tree-graph-to-be-an-MEC}, $u\rightarrow v$ is part of an induced subgraph of $M_1$ of the form either $u\rightarrow v \leftarrow w$ or  $w\rightarrow u\rightarrow v$. Since from the construction of $M$, an induced subgraph of $M_1$ is an induced subgraph of $M$. Therefore, $u\rightarrow v$ in $M$ obeys \cref{item-2-of-thm:nes-and-suf-cond-for-tree-graph-to-be-an-MEC} of \cref{thm:nes-and-suf-cond-for-tree-graph-to-be-an-MEC}. 

Suppose $u,v \in V_{G_2}$. From the construction of $M$, either $u\rightarrow v \in M_2$ or $u-v \in M_2$. Suppose $u\rightarrow v \in M_2$. Since $M_2$ is an MEC of a tree graph $G_2$, from \cref{item-2-of-thm:nes-and-suf-cond-for-tree-graph-to-be-an-MEC} of \cref{thm:nes-and-suf-cond-for-tree-graph-to-be-an-MEC}, $u\rightarrow v$ is part of an induced subgraph of $M_2$ of the form either $u\rightarrow v \leftarrow w$ or  $w\rightarrow u\rightarrow v$.  From the construction of $M$, a directed induced subgraph of $M_2$ is an induced subgraph of $M$ (as a directed edge of $M_2$ is a directed edge of $M$, and the skeletons of $M_2$ and $M[V_{G_2}]$ are the same).  Therefore, in this case also $u\rightarrow v$ in $M$ obeys \cref{item-2-of-thm:nes-and-suf-cond-for-tree-graph-to-be-an-MEC} of \cref{thm:nes-and-suf-cond-for-tree-graph-to-be-an-MEC}.

Suppose $u-v \in M_2$. Then, it must have been directed in $M$ either at \cref{item-2-of-item-3-of-M0-M0} or at \cref{item-3-of-item-3-of-M0-M0}. Suppose $u-v \in M_2$ and it gets directed in $M$ at \cref{item-2-of-item-3-of-M0-M0}. Then, $u = y$ and $v = r_2$. In that case, $u\rightarrow v$ is part of an induced subgraph $r_1\rightarrow r_2 \leftarrow y$, obeying \cref{item-2-of-thm:nes-and-suf-cond-for-tree-graph-to-be-an-MEC} of \cref{thm:nes-and-suf-cond-for-tree-graph-to-be-an-MEC}. Suppose $u-v$ gets directed in $M$ at \cref{item-3-of-item-3-of-M0-M0}. This implies there exists an undirected path from $r_2$ to $(u,v)$ in $M_2$, and the path does not contain $y$. 
Let the path be $P = (x_0 = r_2, x_1, \ldots, x_{l-1} = u, x_{l} = v)$ for some $l \geq 1$. From the construction of $M$, for $0\leq i < l$, $x_i \rightarrow x_{i+1} \in M$ because $P_i = (x_0 = r_2, \ldots, x_i, x_{i+1})$ is an undirected path in $M_2$ from $r_2$ to $(x_i, x_{i+1})$, and $P_i$ does not contain $y$ either. If $l > 1$ then $u\rightarrow v$ is part of an induced subgraph  $x_{l-2} \rightarrow x_{l-1} \rightarrow x_l$ in $M$.
And, if $l=1$ then $r_2 = u$, and $u\rightarrow v$ is part of an induced subgraph $r_1\rightarrow r_2\rightarrow v$ in $M$.
 Thus, we show that if $u,v \in V_{G_2}$ then $u\rightarrow v$ obeys \cref{item-2-of-thm:nes-and-suf-cond-for-tree-graph-to-be-an-MEC} of \cref{thm:nes-and-suf-cond-for-tree-graph-to-be-an-MEC}.

 We now show that $r_1\rightarrow r_2$ in $M$ also obeys \cref{item-2-of-thm:nes-and-suf-cond-for-tree-graph-to-be-an-MEC} of \cref{thm:nes-and-suf-cond-for-tree-graph-to-be-an-MEC}. $r_1\rightarrow r_2$ is part of an induced subgraph $r_1\rightarrow
  r_2 \leftarrow y$ in $M$. Thus, we show that each edge of $M$ obeys \cref{item-2-of-thm:nes-and-suf-cond-for-tree-graph-to-be-an-MEC} of \cref{thm:nes-and-suf-cond-for-tree-graph-to-be-an-MEC}.

The above discussion shows that $M$ is an MEC as it obeys \cref{item-1-of-thm:nes-and-suf-cond-for-tree-graph-to-be-an-MEC,item-2-of-thm:nes-and-suf-cond-for-tree-graph-to-be-an-MEC} of \cref{thm:nes-and-suf-cond-for-tree-graph-to-be-an-MEC}. From the construction of $M$, the skelton of $M$ is $G$. This implies $M$ is an MEC of $G$.

We now show that $M$ obeys \cref{eq:v-structure-M0-M0-3}. From the construction of $M$, directed edges of $M_1$ and $M_2$ are directed edges of $M$. Also, $r_1\rightarrow r_2 \leftarrow y$ is a v-structure in $M$. This implies  $\mathcal{V}(M_1) \cup \mathcal{V}(M_2) \cup \{r_1\rightarrow r_2 \leftarrow y\} \subseteq \mathcal{V}(M)$. For the completeness of the proof, we show that if $u\rightarrow v \leftarrow w$ is a v-structure in $M$ then either it is a v-structure of $M_1$ or it is a v-structure of $M_2$, or $u = r_1, v = r_2$ and $w = y$, i.e., $\mathcal{V}(M) \subseteq \mathcal{V}(M_1) \cup \mathcal{V}(M_2) \cup \{r_1\rightarrow r_2 \leftarrow y\}$. This further implies $\mathcal{V}(M) = \mathcal{V}(M_1) \cup \mathcal{V}(M_2)\cup \{r_1\rightarrow r_2 \leftarrow y\}$, i.e., $M$ obeys \cref{eq:v-structure-M0-M0-3}.

Suppose there exists a v-structure $u\rightarrow v \leftarrow w$ in $M$.  From the construction of $M$, there are following possibilities: (a) $u,v,w \in V_{G_1}$, or (b) $u = r_1$, $v =r_2$, and $w$ is a neighbor of $r_2$ in $V_{G_2}$, or (c) $u,v,w \in V_{G_2}$. 
    From the construction of $M$, $M_1$ is an induced subgraph of $M$ (note that \cref{item-2-of-item-3-of-M0-M0,item-3-of-item-3-of-M0-M0} do not change the orientation of any edge of $M_1$). This implies that if $u,v,w \in V_{G_1}$ then $u\rightarrow v \leftarrow w$ is a v-structure in $M_1$.

    Suppose $u = r_1$, $v =r_2$, and $w$ is a neighbor of $r_2$ in  $V_{G_2}$. Since $M_2 \in \setofMECs{G_2, r_2, 0}$, either $r_2\rightarrow w \in M_2$ or $r_2-w \in M_2$. From the construction of $M$ (\cref{item-1-of-item-3-of-M0-M0,item-2-of-item-3-of-M0-M0,item-3-of-item-3-of-M0-M0}), if $r_2\rightarrow w \in M_2$ then $r_2\rightarrow w \in M$. But, we have $r_2\leftarrow w \in M$. This implies that $r_2-w \in M_2$. From \cref{item-2-of-item-3-of-M0-M0,item-3-of-item-3-of-M0-M0}, if we have $w \rightarrow r_2 \in M$ then $w = y$, otherwise, we have $r_2\rightarrow w \in M$ at \cref{item-3-of-item-3-of-M0-M0} as $r_2-w$ is an undirected path from $r_2$ to $(r_2, w)$, and the path does not contain $y$. This implies $u = r_1, v = r_2$, and $w = y$. 
     
    We now move to the final possibility.
    Suppose $u,v, w \in V_{G_2}$. From the construction of $M$, if $a\rightarrow b \in M_2$ then $a\rightarrow b \in M$. This implies either $u-v \in M_2$ or $u\rightarrow v \in M_2$, and either $v-w \in M_2$ or $v\leftarrow w \in M_2$. From \cref{item-1-of-thm:nes-and-suf-cond-for-tree-graph-to-be-an-MEC} of \cref{thm:nes-and-suf-cond-for-tree-graph-to-be-an-MEC}, either $u\rightarrow v\leftarrow w \in M_2$ or $u-v-w \in M_2$. 
    
    Suppose $u\rightarrow v\leftarrow w \in M_2$. Then, the v-structure is a v-structure of $M_2$.
    
    Suppose $u-v-w \in M_2$. Then, if $u\rightarrow v \in M$ then it must have been directed either at \cref{item-2-of-item-3-of-M0-M0} or at \cref{item-2-of-item-3-of-M0-M0}. 
    
    Suppose $u\rightarrow v$ is added to $M$ at \cref{item-2-of-item-3-of-M0-M0}. Then, $u = y$ and $v =r_2$. But, then $v-w$ is a path from $(r_2)$ to $(v, w)$ (and the path does not contain $y$). And, we have $v\rightarrow w \in M$ at \cref{item-3-of-item-3-of-M0-M0}, a contradiction, as the v-structure implies $w\rightarrow v \in M$. This implies that this case cannot occur.
    
    Suppose  $u\rightarrow v$ is added to $M$ at \cref{item-3-of-item-3-of-M0-M0}.  Then, there exists a path from $r_2$ to $(u,v)$, and the path does not contain $y$.
    Since $u-v-w \in M_2$, the existence of a path from $r_2$ to $(u,v)$ in $M_2$ implies the existence of a path from $r_2$ to $(v,w)$ in $M_2$, and the new path also does not contain $y$. Also, since $M_2$ is an MEC of a tree graph $G_2$, there cannot be two different paths from $r_2$ to $w$. 
    This implies at \cref{item-3-of-item-3-of-M0-M0}, if we have $u\rightarrow v \in M$ then we also have $v\rightarrow w\in M$. But, this is a contradiction, as we have assumed that $u\rightarrow v \leftarrow w \in M$.
    This implies there cannot be $u-v-w\in M_2$ and $u\rightarrow v\leftarrow w \in M$. Thus, we show that in all the possible scenarios, if $u\rightarrow v\leftarrow w$ is a v-structure in $M$ then it is a v-structure either in $M_1$ or in $M_2$, or $u = r_1, v =r_2$ and $w = y$. 
    This completes the proof that $M$ obeys \cref{eq:v-structure-M0-M0-3}.

    The above discussion implies that $\mathcal{V}(M[V_{G_1}] = \mathcal{V}(M_1)$ and $\mathcal{V}(M[V_{G_2}] = \mathcal{V}(M_2)$. Therefore, from \cref{def:projection}, $\mathcal{P}(M, V_{G_1}, V_{G_2}) = (M_1, M_2)$.

    From the construction of $M$, $M_1$ is an induced subgraph of $M$. 
In $M_1$, $i$ edges adjacent to $r_1$ are undirected and the remaining $\delta_1 - i$ adjacent edges of $r_1$ in $M_1$ are directed outward of $r_1$. The only edge that is adjacent to $r_1$ in $M$ but not adjacent to $r_1$ in $M_1$ is $r_1\rightarrow r_2$, which is directed outward of $r_1$.   Therefore, no edge adjacent to $r_1$ in $M$ is incoming to $r_1$, and $i$ undirected edges adjacent to $M$ are undirected. This implies  $M  \in \setofMECs{G, r_1, 0, i}$.

    This completes the proof of \cref{item-4-of-lem:relation-between-v-structures-M0-M0} of \cref{lem:relation-between-v-structures-M0-M0}.
\end{proof}
This completes the proof of \cref{lem:relation-between-v-structures-M0-M0}.
\end{proof}

\Cref{lem:relation-between-v-structures-M0-M1} implies the following corollary:

\begin{corollary}
\label{lem:counting-MECs-of-M0-M0}
For $M_1\in \setofMECs{G_1,r_1, 0, i}$, and $M_2 \in \setofMECs{G_2,r_2, 0, j}$,
the number of MECs $M$ of $G$ such that $\mathcal{P}(M, V_{G_1}, V_{G_2})= (M_1, M_2)$ is $i+ j+1$. More specifically, $i$ MECs belong to $\setofMECs{G,r_1, 1}$, and $j$ MECs belong to $\setofMECs{G, r_1, 0, i}$, and one MEC belongs to $\setofMECs{G, r_1, 0, i+1}$.
\end{corollary}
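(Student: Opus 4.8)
\textbf{Proof proposal for \cref{lem:counting-MECs-of-M0-M0}.} The plan is to read off the count directly from \cref{lem:relation-between-v-structures-M0-M0}, which has just been established. First I would invoke \cref{item-1-of-lem:relation-between-v-structures-M0-M0} of \cref{lem:relation-between-v-structures-M0-M0}: every MEC $M$ of $G$ with $\mathcal{P}(M, V_{G_1}, V_{G_2}) = (M_1, M_2)$ must have its v-structure set equal to one of the three forms \cref{eq:v-structure-M0-M0-1}, \cref{eq:v-structure-M0-M0-2}, or \cref{eq:v-structure-M0-M0-3}. Since an MEC is determined by its skeleton together with its set of v-structures (the fact recalled in the introduction from \cite{verma1990equivalence}), counting the MECs with a given projection reduces to counting the admissible v-structure sets of these three forms and checking each is realized by exactly one MEC.

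Next I would account for each form in turn. For \cref{eq:v-structure-M0-M0-1}, \cref{item-2-of-lem:relation-between-v-structures-M0-M0} of \cref{lem:relation-between-v-structures-M0-M0} gives exactly one MEC $M$ with the stated projection, and it lies in $\setofMECs{G, r_1, 0, i+1}$. For \cref{eq:v-structure-M0-M0-2}, note that since $M_1 \in \setofMECs{G_1, r_1, 0, i}$ there are exactly $i$ undirected edges $x - r_1 \in E_{M_1}$; by \cref{item-3-of-lem:relation-between-v-structures-M0-M0} of \cref{lem:relation-between-v-structures-M0-M0}, each such $x$ yields a unique MEC with the stated projection, and all of these lie in $\setofMECs{G, r_1, 1}$. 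Symmetrically, \cref{eq:v-structure-M0-M0-3} and \cref{item-4-of-lem:relation-between-v-structures-M0-M0} of \cref{lem:relation-between-v-structures-M0-M0} give $j$ MECs (one for each undirected edge $y - r_2 \in E_{M_2}$), all lying in $\setofMECs{G, r_1, 0, i}$. Summing, the total is $i + j + 1$, distributed as claimed.

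Finally I would argue that these $i + j + 1$ MECs are pairwise distinct, so that no overcounting occurs. This is immediate from distinctness of their v-structure sets: \cref{eq:v-structure-M0-M0-1} has no v-structure involving the pair $\{r_1, r_2\}$; \cref{eq:v-structure-M0-M0-2} contributes a v-structure $x \rightarrow r_1 \leftarrow r_2$, with distinct choices of $x$ giving distinct such v-structures; and \cref{eq:v-structure-M0-M0-3} contributes $y \rightarrow r_2 \leftarrow r_1$, again distinct for distinct $y$, and of a different shape (centered at $r_2$ rather than $r_1$) than those from \cref{eq:v-structure-M0-M0-2}. Hence the three families are mutually exclusive and internally non-repeating, so the count is exactly $i + j + 1$ with the stated breakdown. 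I do not anticipate a real obstacle here, since the work has been done in \cref{lem:relation-between-v-structures-M0-M0}; the only point requiring a line of care is the disjointness/distinctness bookkeeping just described, together with observing that the case analysis of \cref{item-1-of-lem:relation-between-v-structures-M0-M0} is exhaustive so that no further MECs exist.
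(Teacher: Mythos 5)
Your proposal is correct and follows essentially the same route as the paper: invoke \cref{item-1-of-lem:relation-between-v-structures-M0-M0} of \cref{lem:relation-between-v-structures-M0-M0} for exhaustiveness of the three v-structure forms, then count one MEC from \cref{item-2-of-lem:relation-between-v-structures-M0-M0}, $i$ from \cref{item-3-of-lem:relation-between-v-structures-M0-M0}, and $j$ from \cref{item-4-of-lem:relation-between-v-structures-M0-M0}, with the stated class memberships. Your explicit disjointness check via distinctness of the v-structure sets is a small addition the paper leaves implicit (it relies on the earlier remark that an MEC is determined by its skeleton and v-structures), but it does not change the argument.
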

\begin{proof}
From \cref{item-1-of-lem:relation-between-v-structures-M0-M0} of \cref{lem:relation-between-v-structures-M0-M0}, for any MEC $M$ of $G$, if $\mathcal{P}(M, V_{G_1}, V_{G_2}) = (M_1, M_2)$ then $M$ has to obey either \cref{eq:v-structure-M0-M0-1,eq:v-structure-M0-M0-2,eq:v-structure-M0-M0-3}.
    From \cref{item-2-of-lem:relation-between-v-structures-M0-M0} of \cref{lem:relation-between-v-structures-M0-M0}, there exists a unique MEC that satisfies \cref{eq:v-structure-M0-M0-1}, belonging to $\setofMECs{G,r_1, 0, i+1}$.

Given $M_1 \in \setofMECs{G_1, r_1, 0, i}$, $i$ instances of $x$ exist such that $x-r_1 \in E_{M_1}$. According to \cref{item-3-of-lem:relation-between-v-structures-M0-M0} of \cref{lem:relation-between-v-structures-M0-M0}, for each $x-r_1 \in E_{M_1}$, there exists a unique MEC satisfying \cref{eq:v-structure-M0-M0-2}. Consequently, there are $i$ MECs following \cref{eq:v-structure-M0-M0-2}, all belong to $\setofMECs{G,r_1, 1}$ as per \cref{item-3-of-lem:relation-between-v-structures-M0-M0}.

Given $M_2 \in \setofMECs{G_2, r_2, 0, j}$, $j$ instances of $y$ exist such that $y-r_2 \in E_{M_2}$. According to \cref{item-4-of-lem:relation-between-v-structures-M0-M0} of \cref{lem:relation-between-v-structures-M0-M0}, for each $y-r_2 \in E_{M_2}$, there exists a unique MEC satisfying \cref{eq:v-structure-M0-M0-3}. Consequently, there are $j$ MECs following \cref{eq:v-structure-M0-M0-3}, all belong to $\setofMECs{G,r_1, 0, i}$ as per \cref{item-4-of-lem:relation-between-v-structures-M0-M0}.

Hence, the total count of MECs is $i+j +1$. Out of which, $i$ MECs belong to $\setofMECs{G,r_1, 1}$, and $j$ MECs belong to $\setofMECs{G, r_1, 0, i}$, and one MEC belongs to $\setofMECs{G, r_1, 0, i+1}$.
\end{proof}

From \cref{lem:counting-MECs-of-M1-M1}, corresponding to each tuple $(M_1,M_2)$ such that $M_1\in \setofMECs{G_1, r_1, 1}$ and $M_2 \in \setofMECs{G_2, r_2, 1}$, there are two MECs of $G$ that belong to $\setofMECs{G, r_1, 1}$. From \cref{lem:counting-MECs-of-M1-M0}, corresponding to each tuple $(M_1,M_2)$ such that $M_1\in \setofMECs{G_1, r_1, 1}$ and $M_2 \in \setofMECs{G_2, r_2, 0, j}$, there are $j+2$ MECs of $G$ that belong to $\setofMECs{G, r_1, 1}$.
From \cref{lem:counting-MECs-of-M0-M1}, corresponding to each tuple $(M_1,M_2)$ such that $M_1\in \setofMECs{G_1, r_1, 0, i}$ and $M_2 \in \setofMECs{G_2, r_2, 1}$, there are $i+1$ MECs of $G$ that belong to $\setofMECs{G, r_1, 1}$. From \cref{lem:counting-MECs-of-M0-M0}, corresponding to each tuple $(M_1,M_2)$ such that $M_1\in \setofMECs{G_1, r_1, 0, i}$ and $M_2 \in \setofMECs{G_2, r_2, 0, j}$, there are $i$ MECs of $G$ that belong to $\setofMECs{G, r_1, 1}$. This makes the total number of MECs of $G$ that belongs to $\setofMECs{G, r_1, 1}$  equal to as in  \cref{item-1-of-recursion} of \cref{lem:recursive-method-to-count-MECs-of-tree}.

From \cref{lem:counting-MECs-of-M1-M1,lem:counting-MECs-of-M1-M0}, corresponding to each tuple $(M_1,M_2)$ such that $M_1\in \setofMECs{G_1, r_1, 1}$ and $M_2 \in \setofMECs{G_2}$, no MEC of $G$ belongs to $\setofMECs{G, r_1, 0}$.
From \cref{lem:counting-MECs-of-M0-M1}, corresponding to each tuple $(M_1,M_2)$ such that $M_1\in \setofMECs{G_1, r_1, 0, i}$ and $M_2 \in \setofMECs{G_2, r_2, 1}$, there are one MEC of $G$ that belongs to $\setofMECs{G, r_1, 0, i}$. From \cref{lem:counting-MECs-of-M0-M0},  corresponding to each tuple $(M_1,M_2)$ such that $M_1\in \setofMECs{G_1, r_1, 0, i}$ and $M_2 \in \setofMECs{G_2, r_2, 0, j}$, there are $j$ MECs of $G$ that belong to $\setofMECs{G, r_1, 0, i}$. From \cref{lem:counting-MECs-of-M0-M0}, for each tuple $(M_1,M_2)$ such that $M_1\in \setofMECs{G_1, r_1, 0, i-1}$ and $M_2 \in \setofMECs{G_2, r_2, 0}$, there are one MEC of $G$ that belong to $\setofMECs{G, r_1, 0, i}$. This makes total number of MECs of $G$ that belongs to $\setofMECs{G, r_1, 0, i}$  equal to as in  \cref{item-2a-of-recursion,item-2-of-recursion} of \cref{lem:recursive-method-to-count-MECs-of-tree}.

This completes the proof of \cref{lem:recursive-method-to-count-MECs-of-tree}.
\end{proof} \subsection{Omitted proof of \cref{subsection:chordal-graph}}
\label{subsection:omitted-proof-of-chordal-graph}

We first go through some results of \cite{sharma2023fixedparameter}:

\begin{definition}[Triangle Free Path, \cite{sharma2023fixedparameter}]
    \label{def:tfp}
    For a graph $G$, a path $P = (u_1, u_2, \ldots, u_l)$ is said to be a \tfp{} of $G$ if for any $1\leq i \leq l-2$, there does not exist an edge between $u_i$ and $u_{i+2}$ in $G$, i.e., neither $u_i-u_{i+2} \in E_G$ nor $u_i\rightarrow u_{i+2} \in E_G$ nor $u_i\leftarrow u_{i+2} \in E_G$.
\end{definition}

\begin{lemma}[\cite{sharma2023fixedparameter}]
\label{lem:tfp-is-cp-for-chordal-graph}
    For an undirected chordal graph $G$, if $P$ is a \tfp{} of $G$ then $P$ is a \cp{} of $G$.
\end{lemma}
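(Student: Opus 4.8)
The plan is to argue by contradiction, reducing the existence of any chord of $P$ to the existence of a short chord together with a chordless cycle, the former being forbidden by the \tfp{} hypothesis and the latter by chordality of $G$. Suppose $P = (u_1, u_2, \ldots, u_l)$ is a \tfp{} of $G$ but is not a \cp{}. Then there exist indices $a < b$ with $b - a \geq 2$ such that $u_a$ and $u_b$ are adjacent in $G$ (the edge is undirected since $G$ is an undirected graph). By the \tfp{} property of $P$ (\cref{def:tfp}) there is no edge between $u_a$ and $u_{a+2}$, so such a chord cannot have $b - a = 2$; hence $b - a \geq 3$. Among all pairs $(a, b)$ with $a < b$, $b - a \geq 2$, and $u_a - u_b \in E_G$, I would choose one with $b - a$ minimal.

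Next I would consider the cycle $C = (u_a, u_{a+1}, \ldots, u_b, u_{b+1} = u_a)$, which is a genuine cycle of $G$ of length $b - a + 1 \geq 4$ (its internal vertices are distinct because they are vertices of the path $P$, and $u_a - u_b \in E_G$ closes it). The key step is to show that $C$ is a \cc{} of $G$, i.e., has no edge between non-adjacent vertices of the cycle. Any such chord would be an edge $u_c - u_d$ with $a \leq c < d \leq b$, $d - c \geq 2$, and $(c, d) \neq (a, b)$. If $d - c = 2$, this contradicts the \tfp{} property of $P$. If $d - c \geq 3$, then from $a \leq c$, $d \leq b$, and $(c, d) \neq (a, b)$ we obtain $d - c < b - a$, so $u_c - u_d$ is a chord of $P$ with strictly smaller index gap, contradicting the minimality of $(a, b)$. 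Hence $C$ has no chord.

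Finally, a chordless cycle of length at least $4$ cannot occur in a chordal graph, contradicting the hypothesis that $G$ is chordal; this contradiction shows $P$ must be a \cp{}. The only step requiring care is the case analysis establishing that $C$ is chordless — splitting on whether a putative chord of $C$ spans a gap of exactly $2$ (ruled out by \cref{def:tfp}) or a gap of at least $3$ (ruled out by minimality); everything else is immediate. I expect no substantive obstacle beyond getting this dichotomy and the index bookkeeping right.
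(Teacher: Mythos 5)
Your proof is correct. Note that the paper does not supply its own argument for \cref{lem:tfp-is-cp-for-chordal-graph}: the lemma is imported verbatim from \cite{sharma2023fixedparameter}, so there is no in-paper proof to compare against. Your argument --- pick a chord $u_a - u_b$ of $P$ with minimal index gap, observe that the gap is at least $3$ because \cref{def:tfp} forbids gap-$2$ chords, and then show the cycle $(u_a, u_{a+1}, \ldots, u_b, u_a)$ of length at least $4$ is chordless (gap-$2$ chords of the cycle are excluded by the \tfp{} condition, longer ones by minimality of $b-a$), contradicting chordality of $G$ --- is the standard and natural proof of this fact, and it is complete. The one place requiring care, namely that $a \leq c$, $d \leq b$, and $(c,d) \neq (a,b)$ together force $d - c < b - a$ so that a putative long chord of the cycle would violate minimality, is handled correctly, as is the observation that the vertices of the cycle are distinct because they are vertices of the path $P$.
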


\begin{observation}[Observation 2.30 of \cite{sharma2023fixedparameter}]
\label{obs:undirected-tfp-in-M-is-a-cp}
Let $M$ be an MEC, and $P = (u_1=x, u_2=y, \ldots, u_{l-1}=u, u_l=v)$ be a \tfp{} from $(x,y)$ to $(u,v)$ in $M$. If $u-v \in M$ then  $P$ is an undirected \cp{} in $M$, more specifically $x-y \in M$.
\end{observation}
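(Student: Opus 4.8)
The plan is to argue by downward induction along the path $P$, starting from the terminal edge $u_{l-1}u_l = uv$ and working back toward $(x,y)$. The key structural point — the one that makes the statement both true and easy — is that a ``path'' in $M$, in the sense of the paper's definition, traverses each edge as an ordered pair, so a consecutive edge $u_iu_{i+1}$ of $P$ is either undirected ($u_i-u_{i+1}\in M$) or forward-directed ($u_i\rightarrow u_{i+1}\in M$); it can never be $u_{i+1}\rightarrow u_i$. Hence the only ``bad'' local configuration I must exclude when propagating undirectedness leftward is a forward-directed edge sitting immediately to the left of an already-undirected edge, and this is killed by condition \cref{item-3-theorem-nec-suf-cond-for-MEC} of \cref{thm:nes-and-suf-cond-for-chordal-graph-to-be-an-MEC} (no induced $a\rightarrow b-c$). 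Once all edges of $P$ are shown undirected, chordlessness follows from \cref{lem:tfp-is-cp-for-chordal-graph}.

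Concretely, I would prove by induction on $m$, for $0\le m\le l-2$, the claim: every edge among $u_{l-1-m}u_{l-m},\,u_{l-m}u_{l-m+1},\,\dots,\,u_{l-1}u_l$ is undirected in $M$. The base case $m=0$ is the hypothesis $u-v\in M$. For the inductive step, suppose the edges $u_{k+1}u_{k+2},\dots,u_{l-1}u_l$ are undirected, where $k=l-2-m\ge 1$; since this tail contains at least the edge $u_{l-1}u_l$, we have $k+1\le l-1$, so $u_{k+2}$ exists and $u_{k+1}-u_{k+2}\in M$. By the structural point above, the edge $u_ku_{k+1}$ is either undirected or equals $u_k\rightarrow u_{k+1}$. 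In the latter case, since $P$ is a \tfp{} (\cref{def:tfp}) there is no edge between $u_k$ and $u_{k+2}$, so $\{u_k,u_{k+1},u_{k+2}\}$ induces exactly the configuration $u_k\rightarrow u_{k+1}-u_{k+2}$, contradicting \cref{item-3-theorem-nec-suf-cond-for-MEC} of \cref{thm:nes-and-suf-cond-for-chordal-graph-to-be-an-MEC}. Hence $u_k-u_{k+1}\in M$, completing the induction.

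Taking $m=l-2$ yields that every edge of $P$ is undirected; in particular $x-y=u_1-u_2\in M$, which is the ``more specifically'' clause. Since all edges of $P$ are undirected, all of $u_1,\dots,u_l$ lie in a single chain component $\tau$ of $M$, and by condition \cref{item-2-theorem-nec-suf-cond-for-MEC} of \cref{thm:nes-and-suf-cond-for-chordal-graph-to-be-an-MEC} the induced subgraph $M_\tau$ is an undirected chordal graph containing $P$. The \tfp{} property of $P$ is inherited by $M_\tau$, so \cref{lem:tfp-is-cp-for-chordal-graph} applies and gives that $P$ is a chordless path, finishing the proof. I do not anticipate a real obstacle: the only subtlety is the observation that the definition of a path forbids traversing a directed edge against its orientation, which is exactly what collapses the potentially delicate case ($u_{i+1}\rightarrow u_i$) into a vacuous one; the rest is a short application of the MEC characterization together with the already-established fact that triangle-free paths in chordal graphs are chordless.
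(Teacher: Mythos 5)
Your proof is correct. Note that the paper does not actually prove this statement itself: it is imported verbatim as Observation 2.30 of \cite{sharma2023fixedparameter}, so there is no in-paper proof to compare against. Your argument is a sound, self-contained derivation from the ingredients available here: the definition of a path forces each consecutive edge to be undirected or forward-directed (never $u_{i+1}\rightarrow u_i$), the triangle-free condition makes $u_k\rightarrow u_{k+1}-u_{k+2}$ an \emph{induced} subgraph, and \cref{item-3-theorem-nec-suf-cond-for-MEC} of \cref{thm:nes-and-suf-cond-for-chordal-graph-to-be-an-MEC} then propagates undirectedness leftward by your downward induction, yielding $x-y\in M$. The only step you leave implicit is at the end: chordlessness of $P$ in $M_\tau$ must be upgraded to chordlessness in $M$, which requires observing that any edge of $M$ between two nodes of the same chain component $\tau$ is necessarily undirected (a directed edge there would close a directed cycle with the undirected path joining its endpoints, violating \cref{item-1-theorem-nec-suf-cond-for-MEC}), so every potential chord of $P$ already lives in $M_\tau$ and is excluded by \cref{lem:tfp-is-cp-for-chordal-graph}. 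With that one sentence added, the proof is complete.
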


\begin{lemma}[Corollary 2.50 of \cite{sharma2023fixedparameter}]
\label{obs:cond-for-ud-path-in-M_a-to-be-ud-path-in-M}
Let $M$ be an MEC and $M'$ be a projection of $M$.
Let $P = (u_1,u_2,\ldots, u_l)$ be a \tfp{} in $M'$. If $u_2\rightarrow u_1 \notin M$ then $P$ is also a \tfp{} in $M$.
\end{lemma}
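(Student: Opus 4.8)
The plan is to decompose the claim into the two requirements coming from \cref{def:tfp}: that the vertex sequence $P$ remains a genuine path in $M$ (i.e. $(u_i,u_{i+1})\in E_M$ for every $1\le i<l$), and that $P$ stays triangle-free in $M$ (i.e. $u_i$ and $u_{i+2}$ are non-adjacent in $M$ for every $1\le i\le l-2$). The triangle-free part is essentially free: since $M'$ is the projection of $M$ on $Y=V_{M'}$ (\cref{def:projection}), its skeleton is exactly $G[Y]$, where $G=\skel{M}$. For $u_i,u_{i+2}\in Y$, non-adjacency in $M'$ is non-adjacency in the induced skeleton $G[Y]$, which (being an induced subgraph) is equivalent to non-adjacency in $G$, hence non-adjacency in $M$. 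So the whole difficulty sits in the path condition.

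For the path condition, the only way $(u_i,u_{i+1})$ can fail to lie in $E_M$ while lying in $E_{M'}$ is if the edge is \emph{reversed} in $M$, that is $u_{i+1}\to u_i\in M$ even though $(u_i,u_{i+1})\in E_{M'}$. If instead $u_i\to u_{i+1}\in M'$, then \cref{lem:directed-edge-in-projection-implies-directed-edge-in-MEC} already gives $u_i\to u_{i+1}\in M$, so no reversal occurs there; the risk is only for edges undirected in $M'$. I would therefore prove by induction on $i$ the statement $(u_i,u_{i+1})\in E_M$, equivalently $u_{i+1}\to u_i\notin M$. The base case $i=1$ is exactly the hypothesis $u_2\to u_1\notin M$, together with the observation that $u_1-u_2$ lies in the skeleton $G$ (because $(u_1,u_2)\in E_{M'}$), so the edge in $M$ must be $u_1-u_2$ or $u_1\to u_2$, both giving $(u_1,u_2)\in E_M$.

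For the inductive step, assume $(u_{i-1},u_i)\in E_M$ and, for contradiction, $u_{i+1}\to u_i\in M$. By the triangle-free part already established, $u_{i-1}$ and $u_{i+1}$ are non-adjacent in $M$. I split on the orientation of $u_{i-1}u_i$ in $M$, which by the induction hypothesis is either $u_{i-1}-u_i$ or $u_{i-1}\to u_i$. If it is undirected, then $u_{i+1}\to u_i-u_{i-1}$ is an induced subgraph of the forbidden form $a\to b-c$, contradicting \cref{item-3-theorem-nec-suf-cond-for-MEC} of \cref{thm:nes-and-suf-cond-for-chordal-graph-to-be-an-MEC}. If it is $u_{i-1}\to u_i$, then $u_{i-1}\to u_i\leftarrow u_{i+1}$ is a v-structure of $M$ whose three vertices lie in $Y$ and whose endpoints are non-adjacent; hence it is a v-structure of $M[Y]$, so by $\mathcal{V}(M')=\mathcal{V}(M[Y])$ it is a v-structure of $M'$, forcing $u_{i+1}\to u_i\in M'$ — which contradicts $(u_i,u_{i+1})\in E_{M'}$, the path edge in $M'$. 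Both cases are impossible, so $u_{i+1}\to u_i\notin M$, closing the induction and establishing that $P$ is a path in $M$; combined with the triangle-free transfer, $P$ is a \tfp{} of $M$.

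The main obstacle I expect is the bookkeeping in the inductive step: one must verify that the v-structure produced in the directed case genuinely pulls back to $M'$ — this is precisely where the defining property of projection, $\mathcal{V}(M')=\mathcal{V}(M[Y])$, and the non-adjacency supplied by triangle-freeness are both used — and that the hypothesis $u_2\to u_1\notin M$ is needed only to seed the induction, with every later edge then controlled by the MEC axioms rather than by an extra assumption. Note that no chordality is required; the argument rests only on the generic MEC characterization (\cref{thm:nes-and-suf-cond-for-chordal-graph-to-be-an-MEC}) and the projection identity.
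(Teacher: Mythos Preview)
Your proof is correct. The paper does not actually prove this lemma; it merely cites it as Corollary~2.50 of \cite{sharma2023fixedparameter} and uses it as a black box. Your argument supplies a clean, self-contained proof: the triangle-free transfer via the skeleton identity $\skel{M'}=G[Y]$ is immediate, and the inductive propagation of ``no reversal'' along the path, splitting on whether $u_{i-1}u_i$ is undirected (yielding the forbidden $a\to b-c$) or directed (yielding a v-structure that must pull back to $M'$ by $\mathcal{V}(M')=\mathcal{V}(M[Y])$), is exactly the right mechanism. Nothing is missing.
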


\begin{lemma}[Corollary 2.51 of \cite{sharma2023fixedparameter}]
\label{obs:every-edge-of-triangle-free-path-is-directed}
Let $M$ be an MEC and $M'$ be a projection of $M$.
If $P = (u_1 = x, u_2 = y, u_3, \ldots, u_{l-1} = u, u_l =v)$ is a \tfp{} from $(x,y)$ to $(u,v)$ in $M'$, and $x\rightarrow y \in M$ then for all $1\leq i < l$, $u_i\rightarrow u_{i+1} \in M$.
\end{lemma}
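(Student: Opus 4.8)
The plan is to first transport the path $P$ from the projection $M'$ into $M$ itself, and then run a short induction along $P$ that invokes the forbidden configuration $a\rightarrow b-c$ from \Cref{thm:nes-and-suf-cond-for-chordal-graph-to-be-an-MEC}. The reason to move into $M$ is that the defining constraints of an MEC (in particular, the absence of an induced $a\rightarrow b-c$) apply directly to $M$, which is an actual MEC, whereas the hypotheses are stated in terms of the projection $M'$.

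For the transport step, I observe that $x\rightarrow y = u_1\rightarrow u_2 \in M$ means, by the definition of a directed edge, that $(u_1,u_2)\in E_M$ and $(u_2,u_1)\notin E_M$; in particular $u_2\rightarrow u_1 \notin M$. Since $P$ is a \tfp{} in the projection $M'$, \cref{obs:cond-for-ud-path-in-M_a-to-be-ud-path-in-M} then yields that $P$ is also a \tfp{} in $M$. Concretely, this gives two facts I will use: (a) for every $1\le i< l$ the ordered pair $(u_i,u_{i+1})\in E_M$, so each edge of $P$ is either $u_i\rightarrow u_{i+1}$ or $u_i - u_{i+1}$ (never the backward orientation $u_{i+1}\rightarrow u_i$); and (b) for every $1\le i\le l-2$ there is no edge between $u_i$ and $u_{i+2}$ in $M$.

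With these in hand, I induct on $i$ to prove $u_i\rightarrow u_{i+1}\in M$. The base case $i=1$ is exactly the hypothesis $x\rightarrow y\in M$. For the inductive step, assume $u_i\rightarrow u_{i+1}\in M$. By fact (a), the next edge is either $u_{i+1}\rightarrow u_{i+2}$ (the desired conclusion) or $u_{i+1}-u_{i+2}$. In the latter case, by fact (b) there is no edge between $u_i$ and $u_{i+2}$, so the induced subgraph of $M$ on $\{u_i,u_{i+1},u_{i+2}\}$ is exactly $u_i\rightarrow u_{i+1}-u_{i+2}$, a configuration of the form $a\rightarrow b-c$. This contradicts \cref{item-3-theorem-nec-suf-cond-for-MEC} of \cref{thm:nes-and-suf-cond-for-chordal-graph-to-be-an-MEC}, since $M$ is an MEC. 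Hence $u_{i+1}\rightarrow u_{i+2}\in M$, which completes the induction and the proof.

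The only genuinely delicate point is the transport step: I must verify that the hypotheses of \cref{obs:cond-for-ud-path-in-M_a-to-be-ud-path-in-M} are met (namely that $M'$ is a projection of $M$ and that $u_2\rightarrow u_1\notin M$), so that backward-directed edges along $P$ are excluded and the non-adjacencies defining the \tfp{} survive in $M$. Once $P$ is known to be a \tfp{} in $M$, the induction is routine, being a direct application of the $a\rightarrow b-c$ prohibition, so I expect no obstacle there.
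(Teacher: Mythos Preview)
The paper does not prove this lemma itself; it is cited as Corollary~2.51 of \cite{sharma2023fixedparameter} and simply used as a black box. Your proof is correct and self-contained within the paper's stated results: you invoke \cref{obs:cond-for-ud-path-in-M_a-to-be-ud-path-in-M} (also cited from \cite{sharma2023fixedparameter}) to move the \tfp{} from $M'$ into $M$, and then run the obvious induction using \cref{item-3-theorem-nec-suf-cond-for-MEC} of \cref{thm:nes-and-suf-cond-for-chordal-graph-to-be-an-MEC}. The transport step is handled properly: $x\rightarrow y\in M$ indeed forces $u_2\rightarrow u_1\notin M$, satisfying the hypothesis of \cref{obs:cond-for-ud-path-in-M_a-to-be-ud-path-in-M}; and you correctly extract from ``$P$ is a \tfp{} in $M$'' both that $(u_i,u_{i+1})\in E_M$ (ruling out the backward orientation) and the non-adjacency of $u_i$ and $u_{i+2}$, which are exactly what the induction needs. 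There is nothing to compare against in the present paper, but your argument is the natural one and is sound.
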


\begin{proof}[Proof of \cref{item-3-of-obs:nes-cond-for-chordal-graph} of \cref{obs:nes-cond-for-chordal-graph}]
W.l.o.g., let us assume $i = 1$.
We start with the following observation.
\begin{observation}
    \label{obs:cp-in-M1-is-a-cp-in-O1}
    Let  $P = (u_0 = x, u_1, \ldots, u_l = y)$ be a \cp{} in $G_1$ from $x$ to $y$ such that $l \geq 2$, and $x, y \in r_1\cup N(r_1, G_1)$. Then,  each node of $P$ is in $r_1\cup N(r_1, G_1)$.
\end{observation}
\begin{proof}

There are two possibilities: either $x \in r_1$, or $x \notin r_1$. 
    \begin{enumerate}
       \item Suppose $x \in r_1$. Then, we claim that $y \in N(r_1, G_1)\setminus r_1$. 
    
       Suppose $y\in r_1$. Then, since $r_1$ is a clique of $G_1$, $x-y \in G_1$.  But, this implies $P$ is not a \cp{} in $G_1$, a contradiction. This implies $y \notin r_1$. 
        This further implies $y \in N(r_1, G_1) \setminus r_1$. 

        Since $y \in N(r_1, G_1) \setminus r_1$, there must exist a node $z \in r_1$ such that $y \in N(z, G_1)$, i.e., $z-y \in G_1$. Since $r_1$ is a clique in $G_1$, and $x, z \in r_1$, $x-z \in G_1$. This implies $u_0-z, u_l-z \in G_1$.
        We claim that each node in $P$ is a neighbor of $z$ in $G_1$.

        Suppose there exists a node $u_i$ which is not a neighbor of $z$. Pick the least such $i$. Since $u_0-z \in G_1$, $i > 0$. Pick the least $j$ such that $j> i$ and $u_j-z \in G_1$ (there must exist such $j$ as $u_l-z \in G_1$). Then $(z, u_{i-1}, u_i, \ldots, u_j, z)$ is a \cc{} in $G_1$ (since $P$ is a \cp{}, there cannot be an edge between two non-adjacent nodes $u_a$ and $u_b$ of $P$), a contradiction, as $G_1$ is a chordal graph. This implies each node in $P$ is a neighbor of $z$ in $G_1$. Since $z\in r_1$, this further implies each node of $P$ is in $r_1\cup N(r_1, G_1)$. 
        
       \item Suppose $x \notin r_1$.   Since $x\in r_1\cup N(r_1, G_1)$ and $x\notin r_1$, $x\in N(r_1, G_1)\setminus r_1$.
       Then, there must exist a node $z \in r_1$ such that $x\in N(z, G_1)$, i.e., $u_0-z \in G_1$. There are two possibilities: either $z-y \in G_1$ or $z-y\notin G_1$.

       Suppose $z-y \in G_1$. This implies $u_0-z \in G_1$ and $u_l-z \in G_1$. Then,  similar to the previous case, each node of the \cp{} $P$ is in $r_1\cup N(r_1, G_1)$.

       Suppose $z-y\notin G_1$. This implies $y \notin r_1$. Because, if $y\in r_1$ then since $r_1$ is a clique of $G_1$, we have $z-y \in G_1$, contradicting our assumption. Since $y \in r_1\cup N(r_1, G_1)$ and $y\notin r_1$, $y \in N(r_1, G_1)\setminus r_1$. Then, there must exist a node $z' \in r_1$ such that $y \in N(z', G_1)$. We show that each node of the \cp{} $P$ is either a neighbor of $z$ or a neighbor of $z'$.

       Suppose there exists a node $u_i$ in $P$ which is not a neighbor of $z$ in $G_1$, i.e., $u_i-z\notin G_1$.  Pick the least such $i$. $i > 0$ as $u_0-z \in G_1$. This implies $u_{i-1}-z \in G_1$. We claim that for any $j \geq i$, $z-u_j \notin G_1$. Suppose there exists a $j > i$ such that $z-u_j \in G_1$ ($j\neq i$ as we have assumed $u_i-z\notin G_1$). Pick the least $j$. Then $(z, u_{i-1}, u_i, \ldots, u_j, z)$ is a \cc{} in $G_1$ (there cannot be an edge between two non-adjacent nodes $u_a$ and $u_b$ of $P$, as $P$ is a \cp{}), a contradiction, as $G_1$ is a chordal graph. We further claim that for each $j \geq i-1$, $z'-u_j \in G_1$. 

       Suppose there exists $j \geq i-1$ such that $z'-u_j \notin G_1$. Pick the least such $j$. Now, pick the least $k >j$ such that $z'-u_k \in G_1$ (there must exist such $k$ because $u_l-z' \in G_1$). If $j = i-1$ then $C = (z', z, u_{i-1}, u_i, \ldots, u_k, z')$ is a \cc{} in $G_1$, and if $j>i-1$ then $C = (z', u_{j-1}, u_j, \ldots, u_k, z')$ is a \cc{} in $G_1$. Both give a contradiction, as $G_1$ is a chordal graph. This implies that for each $j \geq i-1$, $z'-u_j \in G_1$. 
       
       This further implies each node of $P$ is either a neighbor of $z$ or a neighbor of $z'$. Since both $z$ and $z'$ are in $r_1$,  nodes of $P$ are in $r_1\cup N(r_1, G_1)$. 
   \end{enumerate}

   Therefore, in all the possibilities, nodes of $P$ are in $r_1\cup N(r_1, G_1)$. This completes the proof of \cref{obs:cp-in-M1-is-a-cp-in-O1}.
\end{proof}
From \cref{item-3-of-lem:sharma2023results} of \cref{lem:sharma2023results}, for $u-v \in O_1$, if $u\rightarrow v \in O$ then either 
(a) $u\rightarrow v$ is strongly protected in $O$ (\cref{item-3a-of-lem:sharma2023results} of \cref{lem:sharma2023results}), or 
(b) there exists $x-y \in O_1$ such that $x\rightarrow y\in O$, and there exists a \tfp{} from $(x,y)$ to $(u,v)$ in $M_1$ with $(x,y) \neq (u,v)$ (\cref{item-3b-of-lem:sharma2023results} of \cref{lem:sharma2023results}), or
(c) there exists $x-y \in O_1$ such that $x\rightarrow y \in O$, and there exist \tfps{} from $(x, y)$ to $v$, and from $(v,u)$ to $x$ with $y\neq v$ and $u\neq x$ (\cref{item-3c-of-lem:sharma2023results} of \cref{lem:sharma2023results}). 
We show that (b) and (c) implies (a). 
\begin{enumerate}
    \item 
    Suppose $u-v \in O_1$ and $u\rightarrow v \in O$ because $u\rightarrow v$ obeys \cref{item-3b-of-lem:sharma2023results} of \cref{lem:sharma2023results}, i.e., there exists $x-y \in O_1$ such that $x\rightarrow y \in O$, and there exists a \tfp{} $P = (a_0 = x, a_1 =y, \ldots, a_{l-1} = u, a_l =v)$ from $(x,y)$ to $(u,v)$ in $M_1$ with $(x,y)\neq (u,v)$. Since $(x,y)\neq (u,v)$, $l \geq 2$. 
 Also, since $O_1$ is an induced subgraph of $M_1$, $u-v, x-y \in M_1$. From \cref{obs:undirected-tfp-in-M-is-a-cp}, $P$ is an undirected \cp{} in $M_1$. Since the skeleton of $M_1$ is $G_1$, $P$ is also a \cp{} in $G_1$. Also, since $u-v, x-y \in O_1$, $u,v,x,y \in V_{O_1} = r_1\cup N(r_1, G_1)$.
 From \cref{obs:cp-in-M1-is-a-cp-in-O1}, all the nodes of $P$ is in $r_1\cup N(r_1, G_1)$. Since $O$ is an induced subgraph of $M$, and $x\rightarrow y \in O$, $x\rightarrow y \in M$.
 Since $M_1$ is a projection of $M$, and $P$ is a \tfp{} in $M_1$ from $(x,y)$ to $(u,v)$ such that $x\rightarrow y \in M$, from \cref{obs:every-edge-of-triangle-free-path-is-directed}, for $0\leq i < l$, $a_i \rightarrow a_{i+1} \in M$. This further implies $a_{l-2}\rightarrow a_{l-1} \rightarrow a_l$ is an induced subgraph of $M$ (as the skeletons of $M[V_{G_1}]$ and $M_1$ are the same). Since nodes of $P$ are in $r_1\cup N(r_1, G_1) \subseteq V_O$, and $O$ is an induced subgraph of $M$, $a_{l-2}\rightarrow a_{l-1} \rightarrow a_l$ is an induced subgraph of $O$.  This implies $u\rightarrow v$ is strongly protected in $O$, as it is part of the induced subgraph  $a_{l-2}\rightarrow a_{l-1} \rightarrow a_l \in O$, as shown in \cref{fig:strongly-protected-edge}.a (recall that $a_{l-1} = u$ and $a_l =v$). 
 \item 
 Suppose $u-v \in O_1$ and $u\rightarrow v \in O$ because $u\rightarrow v$ obeys \cref{item-3c-of-lem:sharma2023results} of \cref{lem:sharma2023results}, i.e., there exists $x-y \in O_1$ such that $x\rightarrow y \in O$, and there exist \tfps{} $P_1 = (a_0 = x, a_1 =y, \ldots,  a_l =v)$ from $(x,y)$ to $v$ in $M_1$ with $y\neq v$, and $P_2 = (b_0 = v, b_1 = u, \ldots, b_m = x)$ from $(v,u)$ to $x$ with $u\neq x$. Since $y\neq v$ and $u \neq x$, $l,m \geq 2$. 
   Both $P_1$ and $P_2$ must be undirected paths in $M_1$, otherwise, concatenation of $P_1$ and $P_2$ will give a directed cycle in $M_1$, contradicting \cref{item-1-theorem-nec-suf-cond-for-MEC} of \cref{thm:nes-and-suf-cond-for-chordal-graph-to-be-an-MEC}. This implies nodes of $P_1$ and $P_2$ belong to the same undirected connected component $\mathcal{C}$ of $M_1$. From \cref{item-2-theorem-nec-suf-cond-for-MEC} of \cref{thm:nes-and-suf-cond-for-chordal-graph-to-be-an-MEC}, $\mathcal{C}$ is chordal. This implies $P_1$ and $P_2$ are \tfp{} in $\mathcal{C}$. From \cref{lem:tfp-is-cp-for-chordal-graph}, $P_1$ and $P_2$ are \cps{} in $\mathcal{C}$, and in $M_1$, as $\mathcal{C}$ is an \ucc{} of $M_1$.
 Since the skeleton of $M_1$ is the skeleton of $G_1$, $P_1$ and $P_2$ are \cps{} in $G_1$.
 Also, since $u-v, x-y \in O_1$, $u,v,x,y \in V_{O_1} = r_1\cup N(r_1, G_1) $. From \cref{obs:cp-in-M1-is-a-cp-in-O1}, all the nodes of $P_1$ and $P_2$ belong to $r_1\cup N(r_1, G_1)$. Since $P_1$ and $P_2$ are \cps{} in $M_1$, and $O_1 = M_1[r_1\cup N(r_1, G_1)]$, $P_1$ and $P_2$ are \cps{} in $O_1$.

 We claim that $u-a_{l-1} \in O_1$. We first show that $u-a_{l-1} \in G_1$.
Suppose $u-a_{l-1} \notin G_1$, Then, we claim that for all $i \leq l-1$, $u-a_i \notin G_1$. 
Suppose for some $j < l-1$, $u-a_j \in G_1$. Pick the highest such $j$. Then, $C = (u, a_j, a_{j+1}, \ldots, a_{l-1}, a_l= v, u)$ is a \cc{} in $G_1$, a contradiction, as $G_1$ is a chordal graph. Then, we claim that $C = (b_m = a_0, a_1, \ldots, a_l = b_0, b_1, \ldots, b_m)$ is a \cc{}. Suppose $C$ is not a \cc{} in $G_1$. Then, since $P_1$ and $P_2$ is a \cp{} in $G_1$, there must exist $i$ and $j$ such that $0 < i < l$, $0 < j < m$ and $a_i-b_j \in G_1$. Pick the highest such $i$ for which there exists a $j$ such that $0 < j < m$ and $a_i-b_j \in G_1$. Then, pick the smallest such $j$. As shown earlier, for all $i \leq l-1$, $u-a_i \notin G_1$. Therefore $j > 1$ (remember $u = b_1$). But, then $C = (a_i, a_{i+1}, \ldots, a_l=b_0, b_1, \ldots, b_j, a_i)$ is a \cc{} in $G_1$. In any way, we get a \cc{} in $G_1$, a contradiction, as $G_1$ is a chordal graph. This implies $u-a_{l-1} \in G_1$.

Since $u, a_{l-1}$ are in the same \ucc{} of $M_1$ (an MEC of $G_1$), and $u-a_{l-1} \in G_1$ (the skeleton of $M_1$), $u-a_{l-1} \in M_1$ (edge between two nodes of the same \ucc{} of an MEC must be undirected, otherwise, it will create a directed cycle and violate \cref{item-1-theorem-nec-suf-cond-for-MEC} of \cref{thm:nes-and-suf-cond-for-chordal-graph-to-be-an-MEC}). Since $O_1$ is an induced subgraph of $M_1$, and $u, a_{l-1} \in V_{O_1}$ (as shown earlier), $u-a_{l-1} \in O_1$.

   We now prove that $u\rightarrow v$ is strongly protected in $O$.
   We first show that for each $0\leq i < l$, $a_i \rightarrow a_{i+1} \in O$.
   Since $O$ is the projection of $M$, and $u\rightarrow v, x\rightarrow y \in O$, $u\rightarrow v, x\rightarrow y \in M$.
 Since $M_1$ is a projection of $M$, and $P_1$ is a \tfp{} in $M_1$ from $(x,y)$ to $v$ such that $x\rightarrow y \in M$, from \cref{obs:every-edge-of-triangle-free-path-is-directed}, for $0\leq i < l$, $a_i \rightarrow a_{i+1} \in M$.
 Since nodes of $P_1$  belong to $r_1\cup N(r_1, G_1) \subseteq V_O$, and $O$ is an induced subgraph of $M$, for $0\leq i < l$, $a_i \rightarrow a_{i+1} \in O$.

Since from the construction, the skeletons of $O[V_{O_1}]$ and $O_1$ are the same, either $u\rightarrow a_{l-1} \in O$, or $u-a_{l-1} \in O$, or $u\leftarrow a_{l-1} \in O$. We show that in each possibility, $u\rightarrow v$ is strongly protected in $O$.
If $u\rightarrow a_{l-1} \in O$ then $u\rightarrow v$ is strongly protected in $O$ as it is part of an induced subgraph $u\rightarrow a_{l-1} \rightarrow v \leftarrow u$, as shown in \cref{fig:strongly-protected-edge}.c. We now show that in other possibilities also $u\rightarrow v$ is strongly protected.

Suppose either $u-a_{l-1} \in O$ or $u\leftarrow a_{l-1} \in O$.
We claim that $b_2\rightarrow b_1 \in O$. 
Suppose $b_2 \rightarrow b_1 \notin O$.  Since $P_2$ is a \tfp{} in $M_1$, the subpath $P_2' = (b_2 = u, b_3, \ldots, b_m)$ of $P$ is a \tfp{} in $M_1$.  
If $b_2\rightarrow b_1 \notin M$ then from \cref{obs:cond-for-ud-path-in-M_a-to-be-ud-path-in-M}, $P_2'$ is a \tfp{} in $M$. As discussed earlier, each edge of the \tfp{} $P_1$ is directed in $O$. This implies a subpath of $P$, $P_1' = (a_0, a_1, \ldots, a_{l-1})$, is a directed \tfp{} in $O$. Also, from our assumption, either $a_{l-1}-b_2 \in O$ or $a_{l-1}\rightarrow b_2 \in O$. 
Since $O$ is an induced subgraph of $M$, concatenaiton of $P_1'$, $a_{l-1}-b_2$ (or $a_{l-1} \rightarrow b_2$), and $P_2'$ makes a directed cycle in $M$, contradicting \cref{item-1-theorem-nec-suf-cond-for-MEC} of \cref{thm:nes-and-suf-cond-for-chordal-graph-to-be-an-MEC}. This implies if $u-a_{l-1} \in O$ or $u\leftarrow a_{l-1} \in O$ then $b_2\rightarrow b_1 \in O$. But, then, $u\rightarrow v$ is strongly protected in $O$ due to part of an induced subgraph of the form $b_2\rightarrow (b_1 = u) \rightarrow (b_0 = v)$. 
\end{enumerate}

Thus, we show that for chordal graph $G$, \cref{item-3b-of-lem:sharma2023results,item-3c-of-lem:sharma2023results} of \cref{lem:sharma2023results} imply \cref{item-3a-of-lem:sharma2023results} of \cref{lem:sharma2023results}. 
This completes the proof \cref{item-3-of-obs:nes-cond-for-chordal-graph} of \cref{obs:nes-cond-for-chordal-graph}.
\end{proof}

\begin{proof}[Proof of \cref{obs:M-is-a-chain-graph}]
We start with the following claim and observations:
\begin{claim}
    \label{claim:no-directed-cycle-with-nodes-in-ucc}
    For $i\in \{1,2\}$, for any three nodes $u$, $v$, and $w$ belonging to an \ucc{} $\mathcal{C}$ of $M_i$, there does not exist a directed cycle $(u, v, w, u)$ in $M$.
\end{claim}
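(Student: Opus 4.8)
The plan is to argue by contradiction: assume $M$ contains a directed cycle $u\to v\to w\to u$ with $u,v,w$ all lying in one \ucc{} $\mathcal{C}$ of $M_i$. First I would observe that $u-v$, $v-w$, $w-u$ are necessarily \emph{undirected} edges of $\mathcal{C}$. Each of them is an edge of $M$, hence of $G$, hence of $G_i=G[V_{G_i}]$ since all three endpoints lie in $V_{G_i}$; and because $G_i$ is the skeleton of the MEC $M_i$, which is a chain graph by \cref{item-1-theorem-nec-suf-cond-for-MEC} of \cref{thm:nes-and-suf-cond-for-chordal-graph-to-be-an-MEC}, no edge of $M_i$ joining two vertices of the same chain component $\mathcal{C}$ can be directed — a directed edge together with the undirected path connecting its endpoints inside $\mathcal{C}$ would close a directed cycle in $M_i$. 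So $u-v,v-w,w-u$ are undirected edges of $\mathcal{C}$, i.e. of $M_i[V_{\mathcal{C}}]$.

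Next I would set up the key dichotomy: an edge $p-q$ of $\mathcal{C}$ can acquire an orientation $p\to q$ in $M$ in only one of two ways. \emph{(a)} It is already directed in the Markov union $U_M(M_1,M_2,O)$ used to initialise $M$; being undirected in $M_i$, \cref{def:Markov-union-of-graphs} forces it to be directed in $O$ or in $M_{3-i}$, and in the latter case both endpoints lie in $V_{G_i}\cap V_{G_{3-i}}=I\subseteq r_{3-i}$, so the same orientation appears in $O_{3-i}$ and hence, by \cref{item-1-of-def:extensions} of \cref{def:extension}, in $O$; thus in either case $p\to q\in O$, and since $V_{M_i}\cap V_O=r_i\cup N(r_i,G_i)$ both endpoints lie in $r_i\cup N(r_i,G_i)$. \emph{(b)} It is oriented during \cref{step-2-of-construction-of-M}, and this can only happen while $\mathcal{C}$ itself is being processed: $\mathcal{C}$ is the unique \ucc{} of $M_i$ whose vertex set contains both $p$ and $q$, and the only \ucc{} of $M_{3-i}$ that could contain both would put $p,q$ into $I\subseteq r_{3-i}$, where neither is eligible to be a head. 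So, writing $\tau$ for the LBFS ordering of $\mathcal{C}$ fixed in \cref{step-2-a-of-construction-of-M}, an edge oriented in \cref{step-2-of-construction-of-M} has $\tau(p)<\tau(q)$ and head $q\in V_{\mathcal{C}}\setminus(r_i\cup N(r_i,G_i))$.

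Now I would exploit $\tau$-monotonicity. Let $z\in\{u,v,w\}$ minimise $\tau$ and write the cycle as $z\to z'\to z''\to z$. Since $\tau(z'')>\tau(z)$, the edge $z''\to z$ cannot have been oriented in \cref{step-2-of-construction-of-M} (that would force $\tau(z'')<\tau(z)$), so by case (a) we have $z''\to z\in O$, whence $z''\in r_i\cup N(r_i,G_i)$. Then the edge $z'\to z''$ has head $z''\in r_i\cup N(r_i,G_i)$, so by case (b) it also cannot be a \cref{step-2-of-construction-of-M} edge; hence $z'\to z''\in O$ and $z'\in r_i\cup N(r_i,G_i)$. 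Finally $z\to z'$ has head $z'\in r_i\cup N(r_i,G_i)$, so by the same token $z\to z'\in O$. Thus the whole cycle $z\to z'\to z''\to z$ lies in $O$, contradicting that the partial MEC $O$ is a chain graph (\cref{item-1-theorem-nec-suf-cond-for-MEC} of \cref{thm:nes-and-suf-cond-for-chordal-graph-to-be-an-MEC}). This would prove the claim.

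I expect the main obstacle to be the careful bookkeeping in the dichotomy of the second paragraph — isolating, for each of the three triangle edges, whether it was oriented in the Markov union (and therefore lives inside the chain graph $O$, with endpoints confined to $r_i\cup N(r_i,G_i)$) or in \cref{step-2-of-construction-of-M} (and is therefore $\tau$-increasing with a head outside $r_i\cup N(r_i,G_i)$). If one reads ``directed cycle $(u,v,w,u)$'' in the weak sense of the paper (a traversable $3$-cycle with at least one directed edge, possibly mixing in undirected edges), then the monotonicity step above still applies to the directed edges, but one additionally has to rule out the partially directed configurations; this is handled by the same dichotomy together with the strong-protectedness of $O$-oriented edges inside $O$ guaranteed by \cref{item-3-of-def:extension} of \cref{def:extension}.
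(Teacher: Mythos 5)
Your argument is correct for the case where all three edges of the triangle are directed in $M$: the dichotomy (either an edge of $\mathcal{C}$ was oriented at \cref{step-1-of-construction-of-M} and hence lies in $O$ with both endpoints in $r_i\cup N(r_i,G_i)$, or it was oriented at \cref{step-2-of-construction-of-M} and is $\tau$-increasing with head outside $r_i\cup N(r_i,G_i)$) matches the paper's \cref{obs:for-u-v-in-C-if-u->v-in-M-then-either-step-1-or-step-2-obeyed,obs:undirected-in-M_1-and-directed-in-M-implies-directed-in-O}, and your $\tau$-minimality propagation around the fully directed triangle is a clean, arguably slicker, way to land the whole cycle inside $O$. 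But this is not the statement being proved. The paper defines a \emph{directed cycle} as a cycle with \emph{at least one} directed edge, and the claim is used (via \cref{obs:no-directed-cycle-in-Mi} and \cref{obs:no-directed-cycle-of-length-three}) precisely to establish that $M$ is a chain graph, which requires excluding triangles of the form $u\rightarrow v - w - u$ and $u - v\rightarrow w - u$. These partially directed configurations are where essentially all of the difficulty lies: the paper devotes \cref{claim:possibility-1,claim:possibility-2} (together with \cref{claim:if-u->v-in-M-then-it-is-sp-using-a-or-c,claim:if-v->w-in-M-then-it-is-sp-using-a-or-c}) to them, and the arguments there require an extremal choice of the directed edge, an analysis of which induced subgraph forces its orientation, and a propagation step showing that the step-\ref{step-2-b-of-construction-of-M} rules would then have oriented one of the two undirected edges of the triangle, contradicting the assumption. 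Your closing sentence defers all of this to ``the same dichotomy together with the strong-protectedness of $O$-oriented edges,'' but that does not suffice: when the single directed edge $u\rightarrow v$ was created at \cref{step-2-of-construction-of-M}, it does not lie in $O$ at all, so strong protection in $O$ gives you nothing, and one must instead chase the LBFS preconditions around the triangle.

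A second, smaller gap: when $V_{\mathcal{C}}\cap r_i=\emptyset$ there is no LBFS ordering $\tau$ of $\mathcal{C}$, and for a partially directed triangle your case (a) only places the endpoints of the one directed edge in $V_O$; the third vertex $w$ need not lie in $V_O$, so you cannot conclude the triangle sits inside $O$. The paper handles this sub-case separately via \cref{obs:for-x-y-in-C-x-y-in-O}, which shows $O[V_{\mathcal{C}}\cap V_O]$ has no directed edge at all when $\mathcal{C}$ misses $r_i$, by checking that no directed edge there could be strongly protected in $O$. In summary: your proposal proves a strictly weaker statement (no fully directed $3$-cycle inside a component) and leaves the essential content of the claim unproved.
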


\begin{observation}
            \label{obs:directed-edge-in-M1-is-directed-in-M}
            For $i\in \{1,2\}$, if $u\rightarrow v \in M_i$ then $u\rightarrow v \in M$.
        \end{observation}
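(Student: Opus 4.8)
\textbf{Proof proposal for Observation \ref{obs:directed-edge-in-M1-is-directed-in-M}.}
The plan is to track the edge $u\rightarrow v$ through the two-step construction of $M$ given in the proof of \cref{lem:suff-cond-for-chordal-graph}. Recall that $M$ is first initialized as the Markov union $U_M(M_1, M_2, O)$ (\cref{step-1-of-construction-of-M}) and then modified only by \cref{step-2-of-construction-of-M}, whose effect on $M$ is, in each iteration, to possibly replace an \emph{undirected} edge $u'-v'$ of $M$ with the directed edge $u'\rightarrow v'$. So the construction never deletes a directed edge, never reorients a directed edge, and never converts a directed edge back to an undirected one.

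First I would invoke \cref{obs:M1-M2-and-O-are-synchronous}, which guarantees that $M_1$, $M_2$, and $O$ are pairwise synchronous, so that the Markov union $U_M(M_1, M_2, O)$ is well defined. Then, by \cref{def:Markov-union-of-graphs}, the set of directed edges of $U_M(M_1, M_2, O)$ is exactly $\{a\rightarrow b : a\rightarrow b \in E_{M_1} \text{ or } a\rightarrow b \in E_{M_2} \text{ or } a\rightarrow b \in E_{O}\}$. Since $u\rightarrow v \in M_i$ for some $i\in\{1,2\}$ by hypothesis, it follows that $u\rightarrow v$ is a directed edge of $U_M(M_1, M_2, O)$, i.e., $u\rightarrow v \in M$ after \cref{step-1-of-construction-of-M}.

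Finally I would observe that \cref{step-2-of-construction-of-M} leaves $u\rightarrow v$ untouched: the only modification it performs (\cref{step-2-b-of-construction-of-M}) is, for some undirected edge $u'-v'$ with $u'-v' \in M$, to replace it with $u'\rightarrow v'$; it is only ever applied to edges that are undirected in the current $M$, so it cannot affect the already-directed edge $u\rightarrow v$. Hence $u\rightarrow v \in M$ at the end of the construction, which is the claim. There is no real obstacle here — the statement is essentially a bookkeeping consequence of the construction of $M$ together with \cref{def:Markov-union-of-graphs} and \cref{obs:M1-M2-and-O-are-synchronous}; the only point to state carefully is that \cref{step-2-of-construction-of-M} acts exclusively on undirected edges.
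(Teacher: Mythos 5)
Your proposal is correct and follows essentially the same argument as the paper: by \cref{def:Markov-union-of-graphs} the directed edge $u\rightarrow v$ of $M_i$ appears in $U_M(M_1,M_2,O)$ at \cref{step-1-of-construction-of-M}, and \cref{step-2-of-construction-of-M} only converts undirected edges to directed ones, so the edge persists. The only cosmetic difference is that you explicitly cite \cref{obs:M1-M2-and-O-are-synchronous} for well-definedness of the Markov union, which the paper leaves implicit.
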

        \begin{proof}
        W.l.o.g., let us assume $a = 1$.
        Suppose $u\rightarrow v \in M_1$. 
            From \cref{def:Markov-union-of-graphs}, at step \ref{step-1-of-construction-of-M} of the construction of $M$, $u\rightarrow v \in M$. At step \ref{step-2-of-construction-of-M}, we replace an undirected edge with a directed one. This implies once an edge is directed in $M$ at step \ref{step-1-of-construction-of-M}, it remains directed in $M$. This implies $u\rightarrow v \in M$. This completes the proof of \cref{obs:directed-edge-in-M1-is-directed-in-M}.
        \end{proof}
        \begin{corollary}
            \label{corr:directed-edge-in-M-is-either-directed-or-undirected-in-M1}
            For $i\in \{1,2\}$, $u,v \in V_{G_i}$, if $u\rightarrow v \in M$ then either $u-v \in M_i$ or $u\rightarrow v \in M_i$.
        \end{corollary}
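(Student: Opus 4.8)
The plan is to obtain \cref{corr:directed-edge-in-M-is-either-directed-or-undirected-in-M1} directly from \cref{obs:directed-edge-in-M1-is-directed-in-M} together with the fact that $\skel{M} = G$ (\cref{obs:skeleton-of-M-is-G}). Fix $i \in \{1,2\}$ and suppose $u, v \in V_{G_i}$ with $u \rightarrow v \in M$. First I would note that since $\skel{M} = G$, the pair $\{u,v\}$ is an edge of $G$; and because $G_i = G[V_{G_i}]$ is an induced subgraph of $G$ with both endpoints in $V_{G_i}$, the pair $\{u,v\}$ is also an edge of $G_i = \skel{M_i}$. Hence in $M_i$ exactly one of $u - v$, $u \rightarrow v$, $v \rightarrow u$ holds.

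The only thing left is to rule out $v \rightarrow u \in M_i$. If $v \rightarrow u \in M_i$, then by \cref{obs:directed-edge-in-M1-is-directed-in-M} we would get $v \rightarrow u \in M$. But by the edge-set convention from the preliminaries, $v \rightarrow u \in M$ means $(v,u) \in E_M$ and $(u,v) \notin E_M$, while the hypothesis $u \rightarrow v \in M$ means $(u,v) \in E_M$ and $(v,u) \notin E_M$; these are contradictory. Therefore $v \rightarrow u \notin M_i$, leaving only $u - v \in M_i$ or $u \rightarrow v \in M_i$, which is exactly the claim.

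I do not anticipate a genuine obstacle: each step is a one-line deduction, and the only point requiring a little care is the bookkeeping of the edge conventions, namely that $u-v$, $u\rightarrow v$, $v\rightarrow u$ are mutually exclusive in any graph and that taking the induced subgraph on $\{u,v\} \subseteq V_{G_i}$ preserves the adjacency, both of which are immediate from \cref{sec:preliminary}. Conceptually, this corollary is the natural companion of \cref{obs:directed-edge-in-M1-is-directed-in-M}: the observation transports an orientation upward from $M_i$ to $M$, and the corollary records the contrapositive-flavoured consequence that an orientation already present in $M$ on an edge of $G_i$ cannot have originated from the opposite orientation in $M_i$, so $M_i$ agrees with $M$ on that edge up to possibly leaving it undirected.
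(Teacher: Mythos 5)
Your proof is correct and follows essentially the same route as the paper: both arguments first observe that $u$ and $v$ are adjacent in $\skel{M_i}$ (the paper via $\skel{M[V_{G_i}]} = \skel{M_i}$, you via $\skel{M}=G$ and $G_i$ being an induced subgraph, which amounts to the same thing), and then rule out $v\rightarrow u \in M_i$ by invoking \cref{obs:directed-edge-in-M1-is-directed-in-M} to derive the contradiction $v\rightarrow u \in M$. No gaps.
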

        \begin{proof}
         W.l.o.g., let us assume $a = 1$.
            Suppose $u,v \in V_{G_1}$ and $u\rightarrow v \in M$.
            From the construction of $M$, $\skel{M[V_{G_1}]} = \skel{M_1}$. This implies either $u-v \in M_1$, or $u\rightarrow v \in M_1$, or $v\rightarrow u \in M_1$. But, if $v\rightarrow u \in M_1$ then from \cref{obs:directed-edge-in-M1-is-directed-in-M}, $v\rightarrow u \in M$, a contradiction, as $u\rightarrow v \in M$. Thus, the options that remain are $u-v \in M_1$ and $u\rightarrow v \in M_1$. This completes the proof of \cref{corr:directed-edge-in-M-is-either-directed-or-undirected-in-M1}.
        \end{proof}
        \begin{corollary}
            \label{corr:undirected-edge-in-M-is-undirected-in-M1}
            For $i\in \{1,2\}$, for $u,v \in V_{G_i}$, if $u-v \in M$ then $u-v \in M_i$.
        \end{corollary}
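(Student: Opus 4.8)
The plan is to argue by the contrapositive of \cref{obs:directed-edge-in-M1-is-directed-in-M}, packaged together with the skeleton identity already established for $M$. Without loss of generality take $i = 1$, and suppose $u, v \in V_{G_1}$ with $u - v \in M$. First I would observe that $u-v$ is then an edge of $\skel{M}$, which equals $G$ by \cref{obs:skeleton-of-M-is-G}. Since $G_1$ is an induced subgraph of $G$, i.e. $G_1 = G[V_{G_1}]$, and both endpoints $u, v$ lie in $V_{G_1}$, this forces $u-v \in E_{G_1}$; and because $M_1$ is an MEC of $G_1$ we have $\skel{M_1} = G_1$, so the edge joining $u$ and $v$ in $M_1$ is exactly one of $u-v$, $u\rightarrow v$, or $v\rightarrow u$.

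It then remains only to rule out the two directed options. If $u\rightarrow v \in M_1$, then \cref{obs:directed-edge-in-M1-is-directed-in-M} gives $u\rightarrow v \in M$, contradicting $u - v \in M$ (a graph cannot carry both orientations of the same pair); symmetrically $v\rightarrow u \in M_1$ would give $v\rightarrow u \in M$, again a contradiction. Hence the edge between $u$ and $v$ in $M_1$ must be undirected, i.e. $u-v \in M_1$, as claimed. I do not expect a genuine obstacle here: the statement is essentially a one-line consequence of \cref{obs:directed-edge-in-M1-is-directed-in-M} and \cref{obs:skeleton-of-M-is-G}, and the only point worth stating carefully is the transition ``$u-v \in M$ with $u,v \in V_{G_1}$ $\Rightarrow$ $u-v \in G_1$'', which is immediate from $G_1$ being the induced subgraph $G[V_{G_1}]$ (the vertex-separator structure of $I$ is not even needed for this particular corollary).
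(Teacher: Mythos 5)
Your proposal is correct and follows essentially the same route as the paper: both arguments reduce to the fact that the skeletons of $M[V_{G_i}]$ and $M_i$ coincide (you derive this via $\skel{M}=G$ and $G_i = G[V_{G_i}]$, the paper states it directly from the construction), and then rule out the two directed orientations in $M_i$ by \cref{obs:directed-edge-in-M1-is-directed-in-M}. No gaps.
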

        \begin{proof}
        W.l.o.g., let us assume $i = 1$.
            Suppose $u,v \in V_{G_1}$ and $u- v \in M$.
            From the construction of $M$, $\skel{M[V_{G_1}]} = \skel{M_1}$. This implies either $u-v \in M_1$, or $u\rightarrow v \in M_1$, or $v\rightarrow u \in M_1$. But, if $u\rightarrow v \in M_1$ or $v\rightarrow u \in M_1$ then from \cref{obs:directed-edge-in-M1-is-directed-in-M}, $u\rightarrow v \in M$ or $v\rightarrow u \in M$, a contradiction, as $u- v \in M$. Thus, the option that remains is $u-v \in M_1$.
            This completes the proof of \cref{corr:undirected-edge-in-M-is-undirected-in-M1}.
        \end{proof}
The above claim, observation, and corollaries imply the following:
\begin{observation}
    \label{obs:no-directed-cycle-in-Mi}
    For $i\in \{1,2\}$, for any three nodes $u$, $v$, and $w$ belonging to $V_{G_i}$, there does not exist a directed cycle  $(u, v, w, u)$ in $M$.
\end{observation}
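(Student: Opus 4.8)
The plan is to fix $i=1$ without loss of generality (the case $i=2$ is identical, with the roles of $G_1,M_1$ and $G_2,M_2$ swapped) and argue by contradiction. Suppose $(u,v,w,u)$ is a directed cycle in $M$ with $u,v,w$ distinct nodes of $V_{G_1}$, so that $u\rightarrow v$, $v\rightarrow w$, $w\rightarrow u$ are all in $M$. The first and only substantive step is to transfer these three orientations to $M_1$: since all three endpoints lie in $V_{G_1}$, \cref{corr:directed-edge-in-M-is-either-directed-or-undirected-in-M1} applies to each edge, so in $M_1$ the pair $(u,v)$ is either the directed edge $u\rightarrow v$ or the undirected edge $u-v$, and likewise for $(v,w)$ and $(w,u)$. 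In particular $(u,v),(v,w),(w,u)\in E_{M_1}$, so $(u,v,w,u)$ is already a cycle in $M_1$ whose edges are oriented (where oriented at all) consistently with the cycle.

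I would then split into two cases according to how many of these three $M_1$-edges are undirected. If all three are undirected in $M_1$, then $u$, $v$, and $w$ lie in a common \ucc{} $\mathcal{C}$ of $M_1$, and \cref{claim:no-directed-cycle-with-nodes-in-ucc} directly forbids the directed cycle $(u,v,w,u)$ in $M$ --- a contradiction. Otherwise at least one of the three edges is directed in $M_1$ (and, by the previous paragraph, oriented so as to match the cycle direction), so the cycle $(u,v,w,u)$ in $M_1$ contains a directed edge; that is, $M_1$ contains a directed cycle, contradicting item~\ref{item-1-theorem-nec-suf-cond-for-MEC} of \cref{thm:nes-and-suf-cond-for-chordal-graph-to-be-an-MEC}, since $M_1$ is an MEC and hence a chain graph. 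Both cases being impossible, no such directed cycle exists.

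The main obstacle is really just the orientation-transfer step, and it is already packaged for us: \cref{corr:directed-edge-in-M-is-either-directed-or-undirected-in-M1} --- which rests in turn on \cref{obs:directed-edge-in-M1-is-directed-in-M} and the equality $\skel{M[V_{G_1}]}=\skel{M_1}$ coming from the construction of $M$ --- is exactly what lets us conclude that an edge directed in $M$ between two nodes of $G_1$ is, in $M_1$, either the same directed edge or undirected, never the reverse. Once that is granted, the remainder is a short case analysis using only the chain-graph property of $M_1$ and the already-stated \cref{claim:no-directed-cycle-with-nodes-in-ucc}; no further information about the construction of $M$ is needed.
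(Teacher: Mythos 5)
Your proof follows essentially the same route as the paper's: transfer the cycle from $M$ to $M_1$ via \cref{corr:directed-edge-in-M-is-either-directed-or-undirected-in-M1} together with \cref{corr:undirected-edge-in-M-is-undirected-in-M1}, use the chain-graph property of the MEC $M_1$ to force all three edges to be undirected in $M_1$, and then invoke \cref{claim:no-directed-cycle-with-nodes-in-ucc}. One small correction: under the paper's definition a directed cycle needs only one directed edge, so you may not assume $u\rightarrow v$, $v\rightarrow w$, $w\rightarrow u$ are all directed in $M$; for any undirected edges of the cycle you must use \cref{corr:undirected-edge-in-M-is-undirected-in-M1} to place them in $M_1$, after which your two-case analysis goes through unchanged.
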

\begin{proof}
    Contrary to \cref{obs:no-directed-cycle-in-Mi}, suppose for some $i\in \{1,2\}$, there exist nodes $u$, $v$, and $w$ belonging to $V_{G_i}$ such that $C = (u, v, w, u)$ is a directed cycle in $M$. From \cref{corr:directed-edge-in-M-is-either-directed-or-undirected-in-M1,corr:undirected-edge-in-M-is-undirected-in-M1}, $C$ is a cycle in $M_i$. Since $M_i$ is an MEC, from \cref{item-1-theorem-nec-suf-cond-for-MEC} of \cref{thm:nes-and-suf-cond-for-chordal-graph-to-be-an-MEC}, none of the edges of the cycle is directed in $M_i$. This implies all the node belongs to the same \ucc{} of $M_i$. This contradicts \cref{claim:no-directed-cycle-with-nodes-in-ucc}. This implies our assumption is wrong. This proves \cref{obs:no-directed-cycle-in-Mi}.
\end{proof}

\Cref{obs:no-directed-cycle-in-Mi} implies the following:
\begin{observation}
    \label{obs:no-directed-cycle-of-length-three}
    There do not exist three nodes $u$, $v$, and $w$ in $M$ such that $(u, v, w, u)$ forms a directed cycle in $M$.
\end{observation}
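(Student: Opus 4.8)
The plan is to reduce \cref{obs:no-directed-cycle-of-length-three} directly to \cref{obs:no-directed-cycle-in-Mi}, which already rules out a directed cycle $(u,v,w,u)$ all of whose vertices lie on a single side $V_{G_i}$. The only thing that remains is to argue that \emph{every} directed triangle of $M$ is in fact confined to one side, and this is where the vertex-separator hypothesis on $I = V_{G_1}\cap V_{G_2}$ does all the work.

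\textbf{Steps.} First I would note that if $(u,v,w,u)$ is a directed cycle in $M$, then, in the skeleton of $M$, the three pairs $u-v$, $v-w$, $w-u$ are all edges, so $\{u,v,w\}$ induces a triangle in $\skel{M}$; by \cref{obs:skeleton-of-M-is-G}, $\skel{M} = G$, hence $\{u,v,w\}$ is a triangle of $G$. Second, I would invoke the structural hypothesis that $G = G_1\cup G_2$ and that $I = V_{G_1}\cap V_{G_2}$ is a vertex separator of $G$ separating $V_{G_1}\setminus I$ from $V_{G_2}\setminus I$; consequently no edge of $G$ has one endpoint in $V_{G_1}\setminus I$ and the other in $V_{G_2}\setminus I$. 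Since a triangle is connected, a short case check then forces all of $u,v,w$ to lie in $V_{G_1}$ or all of them to lie in $V_{G_2}$: if some vertex were in $V_{G_1}\setminus I$ and another in $V_{G_2}\setminus I$, the triangle edge joining them would be a forbidden crossing edge. Finally, with $u,v,w\in V_{G_i}$ for a single $i\in\{1,2\}$, the directed cycle $(u,v,w,u)$ contradicts \cref{obs:no-directed-cycle-in-Mi}, so no such cycle exists in $M$.

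\textbf{Main obstacle.} There is really no serious obstacle here: the argument is a direct appeal to \cref{obs:skeleton-of-M-is-G}, the separator property, and \cref{obs:no-directed-cycle-in-Mi}. The only point requiring a little care is the case analysis confining the triangle to one side, and even that is immediate once one records that the three edges of the triangle are genuine edges of $G$ and that $I$ blocks every $V_{G_1}\setminus I$–$V_{G_2}\setminus I$ edge. No new combinatorial reasoning about the construction of $M$ is needed beyond what has already been established.
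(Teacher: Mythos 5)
Your proposal is correct and follows essentially the same route as the paper's own proof: both reduce to \cref{obs:no-directed-cycle-in-Mi} by using \cref{obs:skeleton-of-M-is-G} to see that $\{u,v,w\}$ is a triangle of $G$ and then using the separator property of $I$ to confine all three vertices to a single $V_{G_i}$. No gaps.
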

\begin{proof}
    Contrary to \cref{obs:no-directed-cycle-of-length-three}, let us assume that there exist $u$, $v$, and $w$ in $M$ such that $(u, v, w, u)$ forms a directed cycle in $M$.    
    We claim that either all the nodes of the cycle belong to $V_{G_1}$ (i.e., $u, v, w \in V_{G_1}$) or all the nodes belong to $V_{G_2}$ (i.e., $u, v, w \in V_{G_2}$).
    
    Suppose neither all the nodes of the cycle are in $V_{G_1}$ nor all the nodes are in $V_{G_2}$. This implies there exists a node in the cycle that belongs to $V_{G_1}\setminus V_{G_2}$ and another node in the cycle that belongs to $V_{G_2}\setminus V_{G_1}$. However, from the construction, there are no edges between $V_{G_1}\setminus V_{G_2}$ and $V_{G_2}\setminus V_{G_1}$ in $G$ because $I = V_{G_1}\cap V_{G_2}$ is a vertex separator that separates $V_{G_1}\setminus I$ and $V_{G_2}\setminus I$ in $G$. According to \cref{obs:skeleton-of-M-is-G}, $G$ is the skeleton of $M$. This implies there cannot be an edge between nodes in $V_{G_1}\setminus V_{G_2}$ and $V_{G_2}\setminus V_{G_1}$. But, in $M$, there exists an edge between each pair of nodes in the cycle. This implies that the scenario where a node of the cycle is in $V_{G_1}\setminus V_{G_2}$ and another in $V_{G_2}\setminus V_{G_1}$ cannot occur. Therefore, all the nodes of the cycle must either be in $V_{G_1}$ or in $V_{G_2}$.
    
    From \cref{obs:no-directed-cycle-in-Mi}, there cannot exist a directed cycle $(u, v, w, u)$ in $M$ such that all the nodes are either in $V_{G_1}$ or $V_{G_2}$. Hence, our assumption is wrong. This completes the proof of \cref{obs:no-directed-cycle-of-length-three}.
\end{proof}
We now demonstrate that \cref{obs:no-directed-cycle-of-length-three} implies the absence of any directed cycle in $M$. Let's assume the existence of a directed cycle in $M$. Select the smallest-length directed cycle $C = (u_0, u_1, \ldots, u_l = u_0)$ in $M$. Our claim is that $l = 2$.

Suppose $l > 2$. As $G$ is the skeleton of $M$, $C = (u_0, u_1, \ldots, u_l, u_{l+1} = u_0)$ forms a cycle in $G$. Since $G$ is chordal, an edge in $G$ must exist between two non-adjacent nodes $u_i$ and $u_j$ of $C$. Without loss of generality, let's assume $i < j$. Then, either $u_i-u_j \in M$ or $u_i\rightarrow u_j \in M$ or $u_j\rightarrow u_i \in M$. Consequently, either $(u_0, u_1, \ldots, u_i, u_j, u_{j+1}, \ldots, u_l, u_{l+1} = u_0)$ or $(u_j, u_i, u_{i+1}, \ldots, u_j)$ forms a directed cycle in $M$ with a smaller length than cycle $C$. This contradicts our choice of picking the smallest directed cycle. Thus, $l = 2$.

From \cref{obs:no-directed-cycle-of-length-three}, there cannot exist a directed cycle of length two in $M$. Therefore, no directed cycle exists in $M$, indicating that $M$ is a chain graph. This completes the proof of \cref{obs:M-is-a-chain-graph}.
\end{proof}

\begin{proof}[Proof of \cref{claim:no-directed-cycle-with-nodes-in-ucc}]
The following two claims prove \cref{claim:no-directed-cycle-with-nodes-in-ucc}.
\begin{claim}
    \label{claim:no-directed-cycle-with-nodes-in-ucc-without-ri}
    For $i\in \{1,2\}$, for any three nodes $u$, $v$, and $w$ belonging to an \ucc{} $\mathcal{C}$ of $M_i$ such that $V_{\mathcal{C}}\cap r_i = \emptyset$, there does not exist a directed cycle $(u, v, w, u)$ in $M$.
\end{claim}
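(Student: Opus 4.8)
The plan is to argue by contradiction and to push everything down to the partial MEC $O$, exploiting that the component $\mathcal{C}$ is ``far'' from $r_i$. So suppose $(u,v,w,u)$ is a directed cycle in $M$ with $u,v,w$ lying in a \ucc{} $\mathcal{C}$ of $M_i$ and $V_{\mathcal{C}}\cap r_i=\emptyset$. First I would record the elementary structural facts. Since $I=V_{G_1}\cap V_{G_2}=r_1\cap r_2\subseteq r_i$, we have $V_{\mathcal{C}}\cap I=\emptyset$, so $\{u,v,w\}\subseteq V_{G_i}\setminus V_{G_j}$ and the three edges among them are not edges of $M_j$; they do lie in $G=\skel{M}$, hence in the induced subgraph $G_i=\skel{M_i}$, and being inside a single \ucc{} of the chain graph $M_i$ they are undirected in $M_i$. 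Because $\mathcal{C}$ does not meet $r_i$, it is not one of the \uccs{} processed in step~\ref{step-2-of-construction-of-M} of the construction of $M$, and that step only ever reorients edges lying inside a processed \ucc{} of $M_1$ or $M_2$; the three edges in question lie inside $\mathcal{C}$ (not processed) and are not edges of $M_j$ at all, so they are untouched. Hence the orientation of each of them in $M$ coincides with its orientation in $U_M(M_1,M_2,O)$, so by \cref{def:Markov-union-of-graphs} any edge of the triangle that is directed in $M$ is already directed in $O$; in particular its endpoints lie in $V_O$, and since $V_O\cap V_{G_i}=r_i\cup N(r_i,G_i)$ and $V_{\mathcal{C}}\cap r_i=\emptyset$, its endpoints lie in $A:=V_{\mathcal{C}}\cap N(r_i,G_i)$.

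Next I would dispose of all cases but one. If at least two of the three edges are directed in $M$, then all of $u,v,w$ are endpoints of directed edges, hence all lie in $V_O$, hence all three edges are edges of $O$; a directed edge of $O$ is directed in $M$, so the edges undirected in $M$ remain undirected in $O$, and therefore $\{u,v,w\}$ spans in $O$ a triangle carrying a directed edge --- a directed cycle in $O$, contradicting that the partial MEC $O$ is a chain graph (\cref{item-1-theorem-nec-suf-cond-for-MEC} of \cref{thm:nes-and-suf-cond-for-chordal-graph-to-be-an-MEC}). The same contradiction follows if exactly one edge is directed but the third vertex still lies in $V_O$. So the only surviving possibility is: exactly one edge, say $u\rightarrow v$ (with $u,v\in A$), is directed in $M$, the edges $v-w$ and $w-u$ are undirected in $M$, and $w\notin V_O$, i.e.\ $w\notin N(r_i,G_i)$.

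For this last case I would combine three ingredients. Since $u,v\in A\subseteq N(r_i,G_i)=V_{O_i}\cap V_{G_i}$ and $u-v\in M_i$, we have $u-v\in O_i$, so the extension clause \cref{item-3-of-def:extension} forces $u\rightarrow v$ to be strongly protected in $O$. Since $w$ is an undirected neighbour in $M_i$ of both $u$ and $v$ while $w$ is adjacent to no vertex of $r_i$ (as $w\notin N(r_i,G_i)$ and $G_i$ is induced), the no-induced-$a\rightarrow b-c$ property of the MEC $M_i$ forces every edge from $u$ to $r_i$ and from $v$ to $r_i$ to be oriented away from $u$, resp.\ $v$, in $M_i$ (otherwise $z\rightarrow u-w$ with $z\not\sim w$ would be induced in $M_i$); feeding the $4$-cycle through an $r_i$-neighbour of $u$, an $r_i$-neighbour of $v$, $u$ and $v$ into chordality of $G_i$ yields a common $r_i$-neighbour $z$ of $u,v$, whence $u\rightarrow z,v\rightarrow z\in M_i$, hence $\in M$, hence (via $O_i$ and \cref{item-1-of-def:extensions}) $\in O$. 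Finally I would run a descent on the set $D$ of edges that are directed in $O$ and lie inside $\mathcal{C}$: $D$ is nonempty ($u\rightarrow v\in D$) and acyclic (it consists of directed edges of the chain graph $O$), and the same no-$a\rightarrow b-c$ argument in $M_i$ shows the strong-protection witness in $O$ of any edge of $D$ is again assembled from edges of $D$; choosing an edge of $D$ whose tail is a source of $D$ rules out protection via \cref{fig:strongly-protected-edge}(a), and tracing the remaining protection types together with the clause $\mathcal{V}(O_a)=\mathcal{V}(O[V_{O_a}])$ of \cref{def:extension} produces either a directed cycle in $O$ or a v-structure present in $O[V_{O_i}]$ but absent from $O_i$ --- a contradiction in either case.

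The main obstacle is exactly this last case: a single directed edge $u\rightarrow v$ sitting inside a \ucc{} of $M_i$ that avoids $r_i$ is not visible as a cycle in $O$ by itself, and ruling it out appears to need the strong-protection clause and the v-structure-preservation clause of \cref{def:extension} together with chordality of $G_i$, used simultaneously; everything preceding it is bookkeeping about which edges of $M$ originate where.
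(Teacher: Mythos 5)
Your first two paragraphs are sound and follow the same route as the paper: since $\mathcal{C}$ misses $r_i$ it is never processed in step~\ref{step-2-of-construction-of-M}, so any edge of the triangle that is directed in $M$ must already be directed in $U_M(M_1,M_2,O)$, hence (the edge being undirected in $M_i$ and absent from $M_j$) directed in $O$ with both endpoints in $V_O\cap V_{\mathcal C}$; this is exactly the paper's chain \cref{obs:for-u-v-in-C-if-u->v-in-M-then-either-step-1-or-step-2-obeyed} $\Rightarrow$ \cref{obs:undirected-in-M_1-and-directed-in-M-implies-directed-in-O}. The paper then short-circuits your case split on how many triangle edges are directed by proving the single stronger statement that $O[V_{\mathcal C}\cap V_O]$ has \emph{no} directed edge at all (\cref{obs:for-x-y-in-C-x-y-in-O}), so that every edge of $\mathcal{C}$ stays undirected in $M$ (\cref{claim:ucc-of-Mi-not-in-r1-is-a-ucc-in-M}) and the cycle evaporates. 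Your reduction to ``exactly one directed edge, third vertex outside $V_O$'' is therefore harmless but unnecessary; what remains in either formulation is to rule out a directed edge $u\rightarrow v\in O$ with $u,v\in V_{\mathcal C}\cap V_O$, and that is where your argument has a real hole.

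The gap is in the final ``descent.'' You claim that after picking an edge of $D$ whose tail is a source, every strong-protection witness leads either to a directed cycle in $O$ or to a v-structure in $O[V_{O_i}]$ absent from $O_i$. That dichotomy misses configuration (c) of \cref{fig:strongly-protected-edge}: a witness $u\rightarrow w\rightarrow v\leftarrow u$ with $w\in V_{\mathcal C}\cap V_O$ consists entirely of edges of $D$ emanating consistently with $u$ being a source, creates no directed cycle, and creates no v-structure violation (the triangle $u,w,v$ is fully oriented). Killing this case needs a second, independent choice: the paper picks $v$ among the out-neighbours $Y$ of $u$ inside $O[V_{\mathcal C}\cap V_O]$ so that no $y\in Y$ satisfies $y\rightarrow v\in O$ (\cref{choice-of-v}), and must separately justify that such a $v$ exists because $O$ is a chain graph; only then does configuration (c) contradict the choice of $v$. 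Your sketch contains neither the refined choice nor the existence argument. Relatedly, your middle construction of a common $r_i$-neighbour $z$ with $u\rightarrow z, v\rightarrow z\in O$ is never used afterwards, and the ``witnesses are assembled from edges of $D$'' assertion silently needs the parallel-transport fact that for $x\notin V_{\mathcal C}$ one has $x\rightarrow u\in O$ iff $x\rightarrow v\in O$ (the paper's \cref{claim:no-x->u-implies-y->u}, resting on \cref{obs:directed-edge-towards-u-implies-directed-edge-towards-v}); without it configuration (a) with an external witness is not excluded. These are exactly the points the paper spends \cref{obs:for-x-y-in-C-x-y-in-O} establishing, so the proposal as written does not yet close the proof.
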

\begin{claim}
    \label{claim:no-directed-cycle-with-nodes-in-ucc-with-ri}
    For $i\in \{1,2\}$, for any three nodes $u$, $v$, and $w$ belonging to an \ucc{} $\mathcal{C}$ of $M_i$ such that $V_{\mathcal{C}}\cap r_i \neq \emptyset$, there does not exist a directed cycle $(u, v, w, u)$ in $M$.
\end{claim}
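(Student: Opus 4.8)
The plan is to prove \cref{claim:no-directed-cycle-with-nodes-in-ucc-with-ri} by classifying every directed edge of $M$ lying inside $\mathcal{C}$ according to \emph{how} it acquired its orientation, and then checking that no such classification is consistent with a directed triangle. Assume without loss of generality $i=1$, write $X_1 = r_1\cup N(r_1,G_1)=V_{O_1}$, and recall that, because $V_{\mathcal{C}}\cap r_1\neq\emptyset$, \cref{step-2-of-construction-of-M} of the construction of $M$ does process $\mathcal{C}$, using an LBFS ordering $\tau$ of $\mathcal{C}$ that starts with $V_{\mathcal{C}}\cap r_1$. First I would record a reduction: if $(u,v,w,u)$ is a directed cycle in $M$, then $u,v,w\in V_{G_1}$ are pairwise adjacent in $\skel{M}=G$ (\cref{obs:skeleton-of-M-is-G}), hence in $G_1$; and since $u,v,w$ all lie in the single \ucc{} $\mathcal{C}$ of the chain graph $M_1$, each of $u-v,\ v-w,\ w-u$ is \emph{undirected} in $M_1$ --- an undirected path inside $\mathcal{C}$ together with a directed chord would be a directed cycle in $M_1$, contradicting \cref{item-1-theorem-nec-suf-cond-for-MEC} of \cref{thm:nes-and-suf-cond-for-chordal-graph-to-be-an-MEC}.

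The heart of the argument is the following dichotomy: \emph{if $a-b$ is an edge of $\mathcal{C}$ directed $a\rightarrow b$ in $M$, then either \textnormal{(I)} $\tau(a)<\tau(b)$ and $b\notin X_1$, or \textnormal{(II)} $a,b\in X_1$ and $a\rightarrow b\in O$.} Since $a-b$ is undirected in $M_1$ and $M$ is obtained by \cref{step-1-of-construction-of-M} followed by orientation-only updates, $a-b$ is either already directed in $U_M(M_1,M_2,O)$ or first becomes directed during \cref{step-2-of-construction-of-M}. In the latter case, the only iteration that can touch $a-b$ is the one processing $\mathcal{C}$ itself: distinct \uccs{} of $M_1$ are vertex-disjoint, and if $a-b$ were an edge of some \ucc{} of $M_2$ then $a,b\in V_{G_1}\cap V_{G_2}=I\subseteq r_2$, so the deep-endpoint test of \cref{step-2-b-of-construction-of-M} would fail there; and \cref{step-2-b-of-construction-of-M}, applied to $\mathcal{C}$, orients $a-b$ as $a\rightarrow b$ only when $\tau(a)<\tau(b)$ and $b\in V_{\mathcal{C}}\setminus X_1$ --- case (I). In the former case, since $a-b$ is undirected in $M_1$, \cref{def:Markov-union-of-graphs} forces $a\rightarrow b\in M_2$ or $a\rightarrow b\in O$; if $a\rightarrow b\in M_2$ then $a,b\in I\subseteq r_2\subseteq V_{O_2}$, so $a\rightarrow b\in O_2$, whence $a\rightarrow b\in O$ by \cref{item-1-of-def:extensions} of \cref{def:extension}; and whenever $a\rightarrow b\in O$ we also have $a,b\in V_O\cap V_{G_1}$. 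To finish case (II) I would invoke the set identity $V_O\cap V_{G_1}=X_1$, which follows from $r_1\cap r_2=I$ and the fact that $I$ is a vertex separator of $G$ between $V_{G_1}\setminus I$ and $V_{G_2}\setminus I$: every vertex of $V_{G_1}$ that lies in, or is adjacent in $G$ to, $r_1\cup r_2$ already lies in $X_1$.

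Finally I would apply the dichotomy to the three cycle edges $u\rightarrow v,\ v\rightarrow w,\ w\rightarrow u$, each of type (I) or (II). If one of them, say $a\rightarrow b$, is of type (I), then $b\notin X_1$, so the cycle edge leaving $b$ cannot be of type (II) (which would force $b\in X_1$); hence it too is of type (I); iterating around the triangle, all three edges are of type (I), which makes $\tau$ strictly increase around a cycle --- impossible. Therefore all three are of type (II), so $u\rightarrow v,\ v\rightarrow w,\ w\rightarrow u\in O$ and $(u,v,w,u)$ is a directed cycle of $O$; but $O$ is a partial MEC and hence a chain graph (\cref{item-1-theorem-nec-suf-cond-for-MEC} of \cref{thm:nes-and-suf-cond-for-chordal-graph-to-be-an-MEC}), a contradiction. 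Thus no directed cycle $(u,v,w,u)$ with $u,v,w$ in $\mathcal{C}$ exists in $M$.

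I expect the dichotomy to be the main obstacle, and within it the delicate point is showing that the ``deep'' vertices of $\mathcal{C}$ (those in $V_{\mathcal{C}}\setminus X_1$) are invisible both to $O$ and to $M_2$ in the Markov union --- precisely the identities $V_O\cap V_{G_1}=X_1$ and $I\cap(V_{\mathcal{C}}\setminus X_1)=\emptyset$. This is where the vertex-separator hypothesis and $r_1\cap r_2=I$ are used, and it is what makes the restriction in \cref{step-2-b-of-construction-of-M} to deep heads dovetail with the orientations already present after \cref{step-1-of-construction-of-M}.
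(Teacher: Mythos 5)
There is a genuine gap, and it concerns the definition of ``directed cycle.'' In this paper a cycle $(u,v,w,u)$ is \emph{directed} as soon as \emph{one} of its edges is directed and the remaining edges are undirected (or directed forward); this is exactly the notion needed downstream, since \cref{obs:M-is-a-chain-graph} must exclude semi-directed cycles such as $u\rightarrow v-w-u$ to conclude that $M$ is a chain graph. Your argument, however, only treats the case in which all three edges $u\rightarrow v$, $v\rightarrow w$, $w\rightarrow u$ are directed in $M$: the type (I)/(II) dichotomy is stated for directed edges of $\mathcal{C}$ only, and the final ``propagate around the triangle'' step presupposes that every cycle edge falls under the dichotomy. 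If, say, $u\rightarrow v$ is of type (I) while $v-w$ and $w-u$ remain undirected in $M$, nothing in your proposal produces a contradiction. These mixed configurations are precisely where the paper does its heavy lifting: after disposing of the all-directed and two-directed cases via the LBFS property (\cref{claim:LBFS-ordering-for-directed-edges}), it must separately rule out $u\rightarrow v-w-u$ and $u-v\rightarrow w-u$ in \cref{claim:possibility-1,claim:possibility-2}, which require choosing extremal counterexamples with respect to $\tau$, analyzing how the single directed edge became oriented (via strong protection in $O$ or the preconditions of \cref{step-2-b-of-construction-of-M}), and showing that each possible witnessing induced subgraph forces an additional orientation that destroys the cycle.

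That said, the part you do prove is sound and in fact cleaner than the paper's treatment of the corresponding cases: your dichotomy (an edge of $\mathcal{C}$ directed in $M$ is either oriented at the Markov-union step, hence lies in $O$ with both endpoints in $V_O\cap V_{G_1}=r_1\cup N(r_1,G_1)$, or is oriented by the LBFS pass with its head outside $r_1\cup N(r_1,G_1)$ and increasing $\tau$) is correct, and your verification of $V_O\cap V_{G_1}=X_1$ via the separator property is exactly the right supporting identity. To complete the proof you would need to extend the case analysis to triangles with one or two undirected edges, which cannot be done by the dichotomy alone because undirected edges of $M$ carry no $\tau$-monotonicity or membership-in-$O$ information.
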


For any $i\in \{1,2\}$, an \ucc{} of $M_i$ can be of two types: either $V_{\mathcal{C}}\cap r_i = \emptyset$ or $V_{\mathcal{C}}\cap r_i \neq \emptyset$. \Cref{claim:no-directed-cycle-with-nodes-in-ucc-without-ri,claim:no-directed-cycle-with-nodes-in-ucc-with-ri} show that in both of the possibilities, \cref{claim:no-directed-cycle-with-nodes-in-ucc} is valid. This completes the proof of \cref{claim:no-directed-cycle-with-nodes-in-ucc}.
\end{proof}

\begin{proof}[Proof of \cref{claim:no-directed-cycle-with-nodes-in-ucc-without-ri}]
Contrary to \cref{claim:no-directed-cycle-with-nodes-in-ucc-without-ri}, let us assume that for some $i \in \{1,2\}$, there exists $u$, $v$, and $w$ belonging to an \ucc{} $\mathcal{C}$ of $M_i$ such that $V_{\mathcal{C}}\cap r_i = \emptyset$, and $(u,v,w,u)$ is a directed cycle in $M$. From the construction of $M$, $\skel{M[V_{\mathcal{C}}]} = \mathcal{C}$. This implies $u-v, v-w, w-c \in \mathcal{C}$.
Before moving ahead, we go through the following claim.

\begin{claim}
\label{claim:ucc-of-Mi-not-in-r1-is-a-ucc-in-M}
For $i\in \{1,2\}$, let $\mathcal{C}$ be an \ucc{} of $M_i$ such that $V_{\mathcal{C}} \cap r_i = \emptyset$. Then, for $u-v \in \mathcal{C}$, $u-v \in M$.
\end{claim}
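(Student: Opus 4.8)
The plan is to show that the edge $u-v$ survives both stages of the construction of $M$ unoriented. First I would check that Step~\ref{step-2-of-construction-of-M} never touches an edge of $\mathcal{C}$: that step only reorients edges lying inside some \ucc{} $\mathcal{C}'$ of some $M_{i'}$ with $V_{\mathcal{C}'}\cap r_{i'}\neq\emptyset$, and the edge set of such a $\mathcal{C}'$ is disjoint from that of $\mathcal{C}$ --- when $i'=i$ because $\mathcal{C}'\neq\mathcal{C}$ (distinct \uccs{} of $M_i$, since $V_{\mathcal{C}}\cap r_i=\emptyset$), and when $i'=3-i$ because $V_{\mathcal{C}}\subseteq V_{G_i}\setminus r_i\subseteq V_{G_i}\setminus I$ is disjoint from $V_{G_{3-i}}$. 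Hence it suffices to show $u-v$ is undirected after Step~\ref{step-1-of-construction-of-M}, i.e.\ in the Markov union $U_M(M_1,M_2,O)$; by \cref{def:Markov-union-of-graphs} this means $u\rightarrow v$ and $v\rightarrow u$ occur in none of $M_1,M_2,O$. In $M_i$ the edge is undirected by hypothesis, and it is not even an edge of $M_{3-i}$ because $u,v\in V_{\mathcal{C}}$ and $V_{\mathcal{C}}\cap I=\emptyset$ (as $I\subseteq r_i$), so $u,v\notin V_{G_{3-i}}$. Thus the whole claim reduces to: \emph{every edge of $\mathcal{C}$ both of whose endpoints lie in $V_O=X$ is undirected in $O$.}

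Before attacking that I would record two facts. (i) Every vertex of $O$ lying in $V_{G_i}$ belongs to $X_i=V_{O_i}$, i.e.\ $V_{G_i}\cap X\subseteq X_i$: since $I$ separates $V_{G_1}\setminus I$ from $V_{G_2}\setminus I$ in $G$, a node $a\in V_{G_i}\cap X$ that is not already in $r_i$ satisfies $a\in V_{G_i}\setminus I$ and is $G$-adjacent to $r_1\cup r_2$, so all its $G$-neighbours lie in $V_{G_i}$ and it is $G_i$-adjacent to $(r_1\cup r_2)\cap V_{G_i}=r_i$, hence $a\in N(r_i,G_i)$. Consequently an edge $a-b$ of $\mathcal{C}$ (undirected in $M_i$) with $a,b\in X$ is an undirected edge of $O_i=M_i[X_i]$, so by \cref{item-3-of-def:extension} of \cref{def:extension} any orientation it receives in $O$ is strongly protected in $O$. (ii) Adjacency transfers faithfully: for $a,b\in X_i$ we have $a\sim_O b\iff a\sim_{G[X]}b\iff a\sim_G b\iff a\sim_{O_i}b$, because $G[X_i]$ is an induced subgraph of $G[X]$ and these are the skeletons of $O_i$ and $O$.

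The core is a minimal-counterexample argument inside $O$. Call a directed edge $a\rightarrow b$ of $O$ a \emph{bad arc} if $a,b\in V_{\mathcal{C}}$; then $a,b\in X_i$ by (i), and every bad arc is strongly protected in $O$. Assume a bad arc exists, fix a topological order $\prec$ of the chain graph $O$ (it is a chain graph, being a partial MEC), and pick a bad arc $u\rightarrow v$ whose head $v$ is $\prec$-minimal among all heads of bad arcs. Strong protection of $u\rightarrow v$ places it in one of the four induced configurations of \cref{fig:strongly-protected-edge}, each of which I would refute. In configuration (b) the protecting v-structure $u\rightarrow v\leftarrow w$ has $w\in V_{G_i}$ (it is $G$-adjacent to $v\notin I$), so $w\in X_i$ and, by \cref{item-2-of-def:extension} of \cref{def:extension}, $u\rightarrow v\leftarrow w$ is a v-structure of $O_i$ --- contradicting that $u-v$ is undirected in $O_i$. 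Configuration (d) is handled the same way: its v-structure $w\rightarrow v\leftarrow w'$ (with $w,w'\in X_i$ as above) forces $w\rightarrow v\in O_i$, hence $w\rightarrow v\in M_i$, while the undirected edge $w-u$ of $O$ forces, by the contrapositive of \cref{item-1-of-def:extensions} of \cref{def:extension}, $w-u$ undirected in $M_i$, putting $w$ in $\mathcal{C}$ together with $v$, whence the $M_i$-edge $w-v$ is undirected --- a contradiction. In configuration (a) the edge $w\rightarrow u$ has $w\in X_i$: if $w\rightarrow u\in M_i$ then $O_i$ contains the induced subgraph $w\rightarrow u-v$ with $w\not\sim v$ (using (ii)), which is forbidden since $O_i$ is a partial MEC; otherwise $w-u$ is undirected in $M_i$, so $w\in\mathcal{C}$ and $w\rightarrow u$ is a bad arc whose head $u$ satisfies $u\prec v$, against minimality. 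In configuration (c), the transitive triangle $u\rightarrow w\rightarrow v$, $u\rightarrow v$: if $u-w$ is undirected in $M_i$ then $w\in\mathcal{C}$, so $u\rightarrow w$ is a bad arc with head $w\prec v$, against minimality; if instead $u\rightarrow w\in M_i$ then $w\notin\mathcal{C}$, the $M_i$-edge $w-v$ cannot be $v\rightarrow w$ (that would force $v\rightarrow w\in O$) nor undirected (that would put $w$ in $\mathcal{C}$), so $w\rightarrow v\in M_i$, and then $\mathcal{C}\to\mathrm{comp}_{M_i}(w)\to\mathcal{C}$ is a cycle in the chain-component order of $M_i$ --- impossible for a chain graph. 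Hence no bad arc exists, every edge of $\mathcal{C}$ with endpoints in $X$ is undirected in $O$, and by the reduction above $u-v\in M$.

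I expect configuration (c) to be the main obstacle, together with choosing a minimality condition sharp enough to dispatch (a) and (c) at once: the delicate point is separating the sub-case in which the triangle's middle vertex lies in $\mathcal{C}$ (where one recurses to a bad arc with a strictly $\prec$-smaller head) from the one in which it does not (where one must argue on the chain-component order of $M_i$, not on $O$). The remaining configurations are comparatively routine, since they either collapse to a v-structure that \cref{item-2-of-def:extension} of \cref{def:extension} pushes down into $O_i$, or produce a bad arc with a $\prec$-smaller head.
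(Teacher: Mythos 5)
Your proof is correct, and its skeleton necessarily parallels the paper's: both reduce the claim to the statement that $O$ restricted to $V_{\mathcal{C}}\cap V_O$ is undirected (your opening reduction is the paper's \cref{obs:for-u-v-in-C-if-u->v-in-M-then-either-step-1-or-step-2-obeyed} together with \cref{obs:undirected-in-M_1-and-directed-in-M-implies-directed-in-O}, and your target statement is exactly \cref{obs:for-x-y-in-C-x-y-in-O}), and both then pick an extremal offending arc and refute each strong-protection configuration of \cref{fig:strongly-protected-edge} against \cref{item-2-of-def:extension,item-3-of-def:extension} of \cref{def:extension}. Where you genuinely diverge is in the extremal choice and the engine driving the case analysis. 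The paper selects a source-like vertex $u$ of $O[V_{\mathcal{C}}\cap V_O]$ and a carefully chosen out-neighbour $v$, and its refutations of configurations (a), (c), (d) all run through the ``parallel parents'' lemma (\cref{claim:no-x->u-implies-y->u}: an external parent of $u$ in $O$ is a parent of $v$), which is itself extracted from a property of $M_i$ (\cref{obs:directed-edge-towards-u-implies-directed-edge-towards-v}). You instead take a topological order of the chain graph $O$, minimise the head of a bad arc, and dispatch (a) and the first sub-case of (c) by descending to a bad arc with a strictly smaller head, handling the remaining sub-case of (c) by exhibiting a directed cycle $u\rightarrow w\rightarrow v - u$ in $M_i$ (your ``chain-component order'' argument), while (b) and (d) collapse into $O_i$ via \cref{item-2-of-def:extension} of \cref{def:extension} much as in the paper. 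Your route avoids proving the parallel-parents lemma altogether, at the cost of carrying the topological order and of one extra structural fact about $M_i$ (acyclicity of its chain-component order); the paper's route localises all the work about $M_i$ into one reusable observation. Both are sound; you correctly identified configuration (c) as the only place where something beyond a v-structure transfer or a descent step is needed.
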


From \cref{claim:ucc-of-Mi-not-in-r1-is-a-ucc-in-M},  $u-v, v-w, w-c \in M$. This implies $(u, v, w, u)$ is an undirected cycle in $M$, contradicting our assumption. This implies that our assumption is wrong. This completes the proof of \cref{claim:no-directed-cycle-with-nodes-in-ucc-without-ri}.
\end{proof}

\begin{proof}[Proof of \cref{claim:ucc-of-Mi-not-in-r1-is-a-ucc-in-M}]
Contrary to  \cref{claim:ucc-of-Mi-not-in-r1-is-a-ucc-in-M}, let us assume that for some $i \in \{1,2\}$, there exists an \ucc{} $\mathcal{C}$ of $M_i$ such that $V_{\mathcal{C}}\cap r_i =\emptyset$, and there exists $u-v \in \mathcal{C}$ such that $u\rightarrow v \in M$. W.l.o.g., let us assume $i = 1$.

Before moving ahead, we go through the following observation:
  \begin{observation}
      \label{obs:for-u-v-in-C-if-u->v-in-M-then-either-step-1-or-step-2-obeyed}
      For $i \in \{1,2\}$, let $\mathcal{C}$ be an \ucc{} of $M_i$, and $u-v$ be an edge of $\mathcal{C}$ such that $u\rightarrow v \in M$. Then, $u\rightarrow v$ has been added to $M$ either at step \ref{step-1-of-construction-of-M} of the construction of $M$, or at step \ref{step-2-of-construction-of-M} of the construction of $M$ when $\mathcal{C}$ has been considered.
  \end{observation}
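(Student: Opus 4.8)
The plan is to trace how any directed edge of $M$ can come into existence during the two-step construction of $M$ given in the proof of \cref{lem:suff-cond-for-chordal-graph}. At \cref{step-1-of-construction-of-M} we set $M = U_M(M_1, M_2, O)$, so the directed edges present after \cref{step-1-of-construction-of-M} are exactly those forced by the Markov union (\cref{def:Markov-union-of-graphs}). At \cref{step-2-of-construction-of-M} the only modification ever performed is to replace an \emph{undirected} edge of the current $M$ by a directed one; no directed edge is ever reversed or turned undirected. Consequently, every directed edge $u\rightarrow v$ of the final $M$ is ``created'' at a well-defined moment: either it is already present after \cref{step-1-of-construction-of-M}, or it is produced by a single application of \cref{step-2-b-of-construction-of-M} while some \ucc{} is being processed. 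So the whole content of the statement is to pin down \emph{which} \ucc{} that can be.

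First I would record the elementary fact that \cref{step-2-b-of-construction-of-M}, when the \ucc{} $\mathcal{C}'$ of $M_j$ is being processed, only touches edges $x-y$ that are edges of $\mathcal{C}'$ itself (i.e.\ undirected edges of $M_j$ lying inside that component). Hence, if $u\rightarrow v\in M$ was produced at \cref{step-2-of-construction-of-M}, then $u-v$, viewed as an undirected edge, lies in some \ucc{} $\mathcal{C}'$ of some $M_j$, and the producing event is the processing of that $\mathcal{C}'$. It then remains to show $\mathcal{C}' = \mathcal{C}$ (and $j=i$). Since $u-v$ is an edge of $\mathcal{C}$, it is an undirected edge of $M_i$ with $u,v\in V_{G_i}$, and I split into two cases. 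If $j=i$, then $u-v$ is an undirected edge of $M_i$, and the \uccs{} of $M_i$ partition its undirected edges, so the \ucc{} of $M_i$ containing $u-v$ is uniquely $\mathcal{C}$; thus $\mathcal{C}' = \mathcal{C}$. If $j=3-i$, then $u-v\in M_{3-i}$ would force $u,v\in V_{G_1}\cap V_{G_2} = I = r_1\cap r_2$, so in particular $v\in r_{3-i} = r_j$; but \cref{step-2-b-of-construction-of-M} only orients an edge $u-v$ when its head $v$ lies in $V_{\mathcal{C}'}\setminus (r_j\cup N(r_j, G_j))$, contradicting $v\in r_j$. Hence the case $j=3-i$ is impossible, and we conclude $\mathcal{C}' = \mathcal{C}$.

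Putting the two pieces together: either $u\rightarrow v$ is already in $M$ after \cref{step-1-of-construction-of-M}, or it is produced at \cref{step-2-of-construction-of-M} precisely when $\mathcal{C}$ is processed, which is exactly the claimed dichotomy. The only mildly delicate point is the case $j=3-i$ above; it is not really an obstacle, but it does use two structural facts from the setup, namely that $V_{G_1}\cap V_{G_2}$ equals the separator $I$ and that $I = r_1\cap r_2\subseteq r_j$, together with the explicit head-vertex restriction built into \cref{step-2-b-of-construction-of-M}. Everything else is bookkeeping about the order in which edges are oriented and about the fact that \cref{step-2-of-construction-of-M} is monotone (it only adds orientations).
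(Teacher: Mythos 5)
Your proposal is correct and follows essentially the same route as the paper's proof: you rule out orientation during the processing of a different \ucc{} of $M_i$ by the vertex-disjointness of \uccs{}, and you rule out \uccs{} of $M_{3-i}$ by observing that the shared endpoints must lie in $I = r_1\cap r_2$, so the head-vertex restriction $v \notin r_j \cup N(r_j, G_j)$ in \cref{step-2-b-of-construction-of-M} blocks the orientation there. The paper phrases the second case via $V_{\mathcal{C}'}\cap V_{\mathcal{C}} \subseteq r_1\cap r_2 \subseteq r_2\cup N(r_2, G_2)$, which is the same argument.
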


From \cref{obs:for-u-v-in-C-if-u->v-in-M-then-either-step-1-or-step-2-obeyed}, if $u\rightarrow v$ has been added to $M$ then it must have been added either at step \ref{step-1-of-construction-of-M} of the construction of $M$, or at step \ref{step-2-of-construction-of-M} of the construction of $M$, when $\mathcal{C}$ has been considered.
Since  $V_{\mathcal{C}}\cap r_1 = \emptyset$, $\mathcal{C}$ has not been considered at step \ref{step-2-of-construction-of-M} of the construction of $M$. This implies $u\rightarrow v$ has been added to $M$ at step \ref{step-1-of-construction-of-M} of the construction of $M$. This implies $u\rightarrow v \in U_M(M_1, M_2, O)$.

Since $\mathcal{C}$ is an \ucc{} of $M_1$, each edge of $\mathcal{C}$ is an undirected edge of $M_1$. This implies $u-v \in M_1$. Since  $u\rightarrow v \in U_M(M_1, M_2, O)$,  the following observation implies $u\rightarrow v\in O$.

  \begin{observation}
        \label{obs:undirected-in-M_1-and-directed-in-M-implies-directed-in-O}
        If $u-v\in M_1$ and $u\rightarrow v \in U_M(M_1, M_2, O)$ (i.e., $u\rightarrow v$ added to $M$ at step \ref{step-1-of-construction-of-M} of the construction of $M$) then $u\rightarrow v \in O$.
    \end{observation}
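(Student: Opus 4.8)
The plan is to unwind the definition of the Markov union and then appeal to the extension property of $O$. The hypothesis states that $u-v\in M_1$ and that $u\rightarrow v$ is one of the directed edges of $M$ immediately after the initialization step $M = U_M(M_1, M_2, O)$, i.e.\ $u\rightarrow v\in U_M(M_1, M_2, O)$. The first step is to recall from \cref{def:Markov-union-of-graphs} that the directed-edge set of $U_M(M_1, M_2, O)$ is exactly $\{a\rightarrow b : a\rightarrow b\in E_{M_1}\cup E_{M_2}\cup E_O\}$, so $u\rightarrow v$ must be a directed edge of at least one of $M_1$, $M_2$, $O$. Since $u-v$ is undirected in $M_1$, neither $u\rightarrow v$ nor $v\rightarrow u$ lies in $M_1$; hence $u\rightarrow v\in M_2$ or $u\rightarrow v\in O$. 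In the latter case there is nothing left to prove, so the remaining work is confined to the case $u\rightarrow v\in M_2$.

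For that case the plan is to pin down where $u$ and $v$ live and then use that $O$ is an extension. From $u-v\in M_1$ we get $u,v\in V_{M_1}=V_{G_1}$, and from $u\rightarrow v\in M_2$ we get $u,v\in V_{M_2}=V_{G_2}$; therefore $u,v\in V_{G_1}\cap V_{G_2}=I=r_1\cap r_2$. In particular $u,v\in r_2\subseteq X_2 = r_2\cup N(r_2,G_2)$, so the directed edge $u\rightarrow v$ is retained in the induced subgraph $O_2 = M_2[X_2]$, i.e.\ $u\rightarrow v\in O_2$. Finally, since $O\in\mathcal{E}(O_1,O_2)$, \cref{item-1-of-def:extensions} of \cref{def:extension} says every directed edge of $O_2$ is a directed edge of $O$; applying this to $u\rightarrow v\in O_2$ yields $u\rightarrow v\in O$, as required.

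I do not anticipate a substantive obstacle here: the statement is essentially a bookkeeping consequence of the definitions. The two points that need care are (i) keeping orientations fixed throughout — the directed-edge set of the Markov union is defined with oriented edges, and because $M_1$, $M_2$, $O$ are pairwise synchronous (\cref{obs:M1-M2-and-O-are-synchronous}) no reversed edge $v\rightarrow u$ can enter from a different constituent graph, so the reduction after the first step is clean; and (ii) noting that there is no conflict between possibly having $u-v\in O_1$ (which is permitted, since $u,v\in r_1\subseteq X_1$) and concluding $u\rightarrow v\in O$ — \cref{def:extension} explicitly allows this, merely forcing via \cref{item-3-of-def:extension} that such a $u\rightarrow v$ be strongly protected in $O$, which is a downstream consequence rather than an obstruction. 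Thus the whole argument reduces to the two displayed steps above.
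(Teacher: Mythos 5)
Your proof is correct and follows essentially the same route as the paper's: unwind the Markov union to reduce to the case $u\rightarrow v\in M_2$, place $u,v$ in $I=r_1\cap r_2\subseteq V_{O_2}$ so that $u\rightarrow v\in O_2$, and conclude via \cref{item-1-of-def:extensions} of \cref{def:extension}. No gaps.
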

From \cref{obs:undirected-in-M_1-and-directed-in-M-implies-directed-in-O},  $u\rightarrow v \in O$.
This further implies that for $u-v \in \mathcal{C}$ if $u\rightarrow v \in M$ then $u\rightarrow v \in O[V_{\mathcal{C}}\cap V_O]$, contradicting \cref{obs:for-x-y-in-C-x-y-in-O}.

\begin{observation}
    \label{obs:for-x-y-in-C-x-y-in-O}
    Let $\mathcal{C}$ be an \ucc{} of $M_1$ such that $r_1\cap V_{\mathcal{C}} = \emptyset$. Then, $O[V_{\mathcal{C}}\cap V_O]$ is an undirected graph.
\end{observation}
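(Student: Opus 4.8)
The plan is to set $D:=V_{\mathcal{C}}\cap V_O$ and prove that no edge of $O$ with both endpoints in $D$ is directed. First I would pin down $D$. Since $V_{\mathcal{C}}\cap r_1=\emptyset$ and $I\subseteq r_1$, every vertex of $\mathcal{C}$ lies in $V_{G_1}\setminus I$; using that $I=V_{G_1}\cap V_{G_2}$ separates $V_{G_1}\setminus I$ from $V_{G_2}\setminus I$ in $G$, a short case check gives $D=V_{\mathcal{C}}\cap N(r_1,G_1)$, so $D\subseteq X_1=V_{O_1}$, and moreover that every $O$-neighbour of a vertex of $D$ again lies in $X_1$. I would also observe that, because $\mathcal{C}$ is an \ucc{} of the chain graph $M_1$, every edge of $M_1$ inside $V_{\mathcal{C}}$ is undirected (an undirected path between the endpoints of a directed edge of $M_1$ would close a directed cycle, contradicting \cref{item-1-theorem-nec-suf-cond-for-MEC} of \cref{thm:nes-and-suf-cond-for-chordal-graph-to-be-an-MEC}); hence $O_1[D]=M_1[D]$ is a fully undirected graph. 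Finally, by \cref{item-2-of-def:extension} of \cref{def:extension} we have $\mathcal{V}(O[X_1])=\mathcal{V}(O_1)$, so $O[D]$ has no v-structure, and I recall that $O$ and $O_1$ are partial MECs, so neither contains an induced $a\rightarrow b-c$ and both are chain graphs.

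Then I would argue by contradiction: suppose $O$ has a directed edge with both endpoints in $D$. As $O$ is a chain graph, the directed edges of $O[D]$ contain no directed cycle, so some vertex $p\in D$ has an outgoing directed edge of $O[D]$ but no incoming one. Fix any $t\in D$ with $p\rightarrow t\in O$; since $p,t\in D$ and $O_1[D]$ is undirected, $p-t\in O_1$, so by \cref{item-3-of-def:extension} of \cref{def:extension} the edge $p\rightarrow t$ is strongly protected in $O$, i.e.\ it lies in one of the four configurations of \cref{fig:strongly-protected-edge}; the crux is to eliminate every configuration except (c) with its middle vertex in $D$. In configurations (b) and (d) the protected edge belongs to a v-structure all of whose vertices are $O$-neighbours of, or equal to, a vertex of $D$, hence all lie in $X_1$ (for (d) one also uses that an undirected $O$-edge at $p$ cannot leave $V_{\mathcal{C}}$, so its other endpoint is in $D$); by \cref{item-2-of-def:extension} this v-structure would then be a v-structure of $O_1$, yet it contains a directed edge with both endpoints in $D$, contradicting that $O_1[D]$ is undirected. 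In configuration (a) there is $x$ with $x\rightarrow p\in O$ and $x\not\sim t$: if $x\in V_{\mathcal{C}}$ then $x\in D$ and $x\rightarrow p$ is an incoming directed edge of $O[D]$ at $p$ --- excluded; if $x\notin V_{\mathcal{C}}$ then $\{x,p\}$ is directed in $M_1$ (distinct \uccs{}) and so, by \cref{item-1-of-def:extensions}, equals $x\rightarrow p$ in $O_1$, whence $x\rightarrow p-t$ is an induced $a\rightarrow b-c$ in the partial MEC $O_1$ --- excluded. So configuration (c) holds: there are $p\rightarrow m,\ m\rightarrow t,\ p\rightarrow t$ in $O$ with $\{p,m,t\}$ a triangle and $m\notin\{p,t\}$; if $m\notin V_{\mathcal{C}}$ then $\{p,m\}$ and $\{m,t\}$ are directed in $M_1$ while $\{p,t\}$ (with $p,t\in V_{\mathcal{C}}$) is undirected in $M_1$, so $(p,m,t,p)$ is a directed cycle in $M_1$, contradicting \cref{item-1-theorem-nec-suf-cond-for-MEC} of \cref{thm:nes-and-suf-cond-for-chordal-graph-to-be-an-MEC}; hence $m\in V_{\mathcal{C}}\cap X_1=D$, and $p\rightarrow m$ is again an outgoing directed edge of $O[D]$ at $p$.

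This last step shows that the finite non-empty set $S:=\{t\in D:\ p\rightarrow t\in O\}$ carries a map $t\mapsto m(t)\in S$ with $m(t)\rightarrow t\in O$, so every vertex of $S$ has in-degree at least one in the acyclic digraph formed by the directed edges of $O[D]$ among $S$ --- impossible for a finite DAG. The contradiction shows $O[V_{\mathcal{C}}\cap V_O]=O[D]$ is undirected.

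I expect the only real obstacle to be configuration (c) with the middle vertex inside $D$: unlike the other three cases it does not close off on its own, and the way out is to have chosen $p$ with no incoming directed edge in $O[D]$, which forces \emph{every} directed out-edge of $p$ inside $D$ back into configuration (c) and then lets the finite-DAG counting argument finish. The remainder --- the separator bookkeeping that confines $D$ and the $O$-neighbourhoods of $D$-vertices to $X_1$, and the back-and-forth between $M_1$, $O_1$, and $O$ via the extension axioms --- should be routine.
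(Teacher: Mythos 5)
Your proof is correct and follows essentially the same route as the paper's: pick a source vertex of the directed part of $O[V_{\mathcal{C}}\cap V_O]$, invoke strong protection via \cref{item-3-of-def:extension} of \cref{def:extension}, and eliminate the four configurations of \cref{fig:strongly-protected-edge} using \cref{item-1-of-def:extensions,item-2-of-def:extension} together with the fact that $O_1$ restricted to $\mathcal{C}$ is undirected. The only cosmetic difference is that the paper additionally fixes a ``minimal'' out-neighbour $v$ up front so that configuration (c) yields an immediate contradiction, whereas you defer this to a finite-DAG in-degree argument over the whole out-neighbourhood; the two devices are interchangeable.
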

 This implies our assumption was wrong that $u\rightarrow v \in M$. 
This completes the proof of
\cref{claim:ucc-of-Mi-not-in-r1-is-a-ucc-in-M}.
\end{proof}

  \begin{proof}[Proof of \cref{obs:for-u-v-in-C-if-u->v-in-M-then-either-step-1-or-step-2-obeyed}]
  W.l.o.g., let us assume $i = 1$.
  If $u\rightarrow v$ has been added to $M$ at step \ref{step-1-of-construction-of-M} of the construction of $M$ then we are done. Suppose $u\rightarrow v$ has not been added to $M$ at step \ref{step-1-of-construction-of-M} of the construction of $M$. Then, it must have added to $M$ at step \ref{step-2-of-construction-of-M} when some \ucc{} $\mathcal{C}'$ of $M_1$ or $M_2$ is considered. We show at step \ref{step-2-of-construction-of-M} of the construction of $M$, when any other \ucc{} is considered, $u\rightarrow v$ has not been added to $M$. 

      Since two different \ucc{} of an MEC does not share any vertices, for any other \ucc{} $\mathcal{C}'$ of $M_1$, $V_{\mathcal{C}'}\cap V_{\mathcal{C}} = \emptyset$. This implies $u-v \notin \mathcal{C}'$ and $u\rightarrow v$ has  not been added to $M$ at step \ref{step-2-of-construction-of-M} when $\mathcal{C}' $ has been considered. 
      
      For any \ucc{} $\mathcal{C}'$ of $M_2$, $V_{\mathcal{C}'}\cap V_{\mathcal{C}} \subseteq r_1\cap r_2 \subseteq r_2\cup N(r_2, G)$. This implies if $u-v \in \mathcal{C}'$ then $u,v \in r_2\cup N(r_2, G_2)$. But, at step \ref{step-2-of-construction-of-M} of the construction of $M$, when $\mathcal{C}'$ has been considered, we can added $u\rightarrow v$ to $M$  only if $v\notin r_2\cup N(r_2, G_2)$. This implies $u\rightarrow v$ has not been added to $M$ at step \ref{step-2-of-construction-of-M} of the construction of $M$, when $\mathcal{C}'$ has been considered.
      
      The above discussion implies that for $u-v \in \mathcal{C}$, if $u\rightarrow v$ has been added to $M$ then it must have been added either at step \ref{step-1-of-construction-of-M} of the construction of $M$ or at step \ref{step-2-of-construction-of-M} of the construction of $M$ when $\mathcal{C}$ has been considered. 
  \end{proof}

\begin{proof}[Proof of \cref{obs:undirected-in-M_1-and-directed-in-M-implies-directed-in-O}]
    From \cref{def:Markov-union-of-graphs},  if $u-v\in M_1$ and $u\rightarrow v \in U_M(M_1, M_2,O)$ then either $u\rightarrow v \in M_2$ or $u\rightarrow v \in O$.
    $u\rightarrow v \in M_2$ implies that $u,v \in V_{M_1} \cap V_{M_2} = I$.
    Since $O_2 = M_2[V_{O_2}]$, for $u,v \in I$,  $u\rightarrow v \in M_2$ implies $u\rightarrow v \in O_2$, as from the construction,  $I \subseteq V_{O_2}$.
    Since $O\in \mathcal{E}(O_1, O_2)$, therefore, from \cref{item-1-of-def:extensions} of \cref{def:extension}, we have $u\rightarrow v \in O$.
    Thus, we can say that if $u-v\in M_1$ and $u\rightarrow v \in U_M(M_1, M_2,O)$ then $u\rightarrow v \in O$.
  \end{proof} 

  \begin{proof}[Proof of \cref{obs:for-x-y-in-C-x-y-in-O}]
    Suppose $\mathcal{C}$ is an \ucc{} of $M_1$ such that $r_1\cap V_{\mathcal{C}} = \emptyset$.  Contrary to \cref{obs:for-x-y-in-C-x-y-in-O}, let us assume that $O[V_{\mathcal{C}}\cap V_O]$ is not an undirected graph, i.e., there exists a directed edge $u\rightarrow v \in O[V_{\mathcal{C}}\cap V_O]$.
    Pick $u$ and $v$ using the following steps:
    \begin{enumerate}
        \item
        \label{choice-of-u}
        Pick $u$ such that there is no incoming edge to $u$ in $O[V_{\mathcal{C}}\cap V_O]$. 
        \item
        \label{choice-of-v}
        Pick $v$ from the set $Y = \{y:  u\rightarrow y  \in O[V_{\mathcal{C}}\cap V_O]\}$ such that for any $y \in Y$, $y\rightarrow v \notin O$.
    \end{enumerate}
    The existence of such $u$ and $v$ is because $O$ is a chain graph.
Since $u\rightarrow v \in O[V_{\mathcal{C}}\cap V_O]$,  $u\rightarrow v \in O$. From \cref{item-3-of-def:extension} of \cref{def:extension}, $u\rightarrow v$ must be strongly protected in $O$. 
    We now show that $u\rightarrow v$ is not strongly protected in $O$. This leads to a contradiction, which further implies that $O[V_{\mathcal{C}}\cap V_O]$ is an undirected graph.
 
    We go through the following claims and observations to prove $u\rightarrow v$ is not strongly protected in $O$.
    \begin{claim}
        \label{claim:no-x->u-exists}
        For any $x \in  V_{\mathcal{C}}$, $x\rightarrow u \notin O$.
    \end{claim}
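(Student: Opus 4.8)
The plan is to read off the claim almost directly from the selection rule for $u$ fixed in step \ref{choice-of-u} of the proof of \cref{obs:for-x-y-in-C-x-y-in-O}, after a case split on whether $x$ is even a vertex of $O$. Recall that $V_O = r_1\cup r_2\cup N(r_1\cup r_2, G)$, so $O$ carries no edge incident to a node outside $V_O$. Hence if $x\in V_{\mathcal{C}}\setminus V_O$, then $x$ is not a node of $O$ and $x\rightarrow u\notin O$ holds trivially; this disposes of the first case.

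For the remaining case $x\in V_{\mathcal{C}}\cap V_O$, the first step I would record is that $u\in V_{\mathcal{C}}\cap V_O$ as well, which is immediate because the edge $u\rightarrow v$ was picked from the induced subgraph $O[V_{\mathcal{C}}\cap V_O]$, so $u$ is one of its vertices. Then I would argue by contradiction: if $x\rightarrow u\in O$, both of its endpoints lie in $V_{\mathcal{C}}\cap V_O$, so $x\rightarrow u$ is an edge of the induced subgraph $O[V_{\mathcal{C}}\cap V_O]$, meaning $u$ has an incoming edge there — contradicting the defining property of $u$ from step \ref{choice-of-u}. Hence $x\rightarrow u\notin O$, which together with the trivial first case proves the claim. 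The degenerate possibility $x = u$ causes no trouble, since $O$ has no self-loops.

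There is no genuine obstacle in this claim: it is little more than a reformulation of the selection rule for $u$, combined with the facts that $O[V_{\mathcal{C}}\cap V_O]$ is an induced subgraph and that $O$ has no edges outside $V_O$. Isolating it here is a matter of bookkeeping — the real content of the surrounding argument, namely that $u\rightarrow v$ fails to be strongly protected in $O$ and therefore violates \cref{item-3-of-def:extension} of \cref{def:extension}, is carried out in the claims and observations that come after this one.
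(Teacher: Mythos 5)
Your proof is correct and is essentially the paper's own argument: the paper likewise assumes $x\rightarrow u \in O$, notes this forces $x \in V_{\mathcal{C}}\cap V_O$ so that $x\rightarrow u$ lies in the induced subgraph $O[V_{\mathcal{C}}\cap V_O]$, and derives a contradiction with the selection rule for $u$ in step \ref{choice-of-u}. Your explicit case split on $x \in V_O$ versus $x \notin V_O$ is just a slightly more verbose packaging of the same observation.
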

    \begin{proof}[Proof of \cref{claim:no-x->u-exists}]
        Contrary to \cref{claim:no-x->u-exists}, suppose there exists an $x \in  V_{\mathcal{C}}$ such that $x\rightarrow u \in O$. This implies $x \in V_{\mathcal{C}}\cap V_O$, and $x\rightarrow u \in O[V_{\mathcal{C}}\cap V_O]$, contradicting our choice of $u$ (recall that from the choice of $u$ (\cref{choice-of-u}), there is no incoming edge to $u$ in $O[V_{\mathcal{C}}\cap V_O]$). This implies that our assumption is wrong. This completes the proof of \cref{claim:no-x->u-exists}. 
    \end{proof}

    \begin{observation}
    \label{obs:neighbors-of-C-with-no-ri-is-in-Mi}
    Let $\mathcal{C}$ be an \ucc{} of $M_1$ such that $V_{\mathcal{C}}\cap r_1 =\emptyset$. Then, for any $x \in V_{\mathcal{C}}$, $N(x, G) \subseteq V_{G_1}$.
\end{observation}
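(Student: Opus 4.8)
The plan is to unwind the definitions and reduce the statement to the vertex-separator property of $I = V_{G_1}\cap V_{G_2}$. First I would pin down the node set of $\mathcal{C}$: since $M_1$ is an MEC of $G_1$, its skeleton is $G_1$, so $V_{M_1} = V_{G_1}$ and hence $V_{\mathcal{C}} \subseteq V_{G_1}$. Moreover $I = r_1\cap r_2 \subseteq r_1$, so the hypothesis $V_{\mathcal{C}}\cap r_1 = \emptyset$ immediately yields $V_{\mathcal{C}}\cap I = \emptyset$. Combining the two, every $x\in V_{\mathcal{C}}$ lies in $V_{G_1}\setminus I$.

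Next I would fix an arbitrary $x\in V_{\mathcal{C}}$ and an arbitrary $y\in N(x,G)$. Since $V_G = V_{G_1}\cup V_{G_2}$, either $y\in V_{G_1}$, in which case we are done, or $y\in V_{G_2}\setminus V_{G_1} = V_{G_2}\setminus I$. In the latter case the edge $x-y$ of $G$ would join a vertex of $V_{G_1}\setminus I$ to a vertex of $V_{G_2}\setminus I$; but this contradicts the standing assumption that $I$ is a vertex separator of $G$ separating $V_{G_1}\setminus I$ and $V_{G_2}\setminus I$ (such a separator in particular forbids a direct edge between the two sides). Hence $y\in V_{G_1}$, and since $y$ was an arbitrary neighbor of $x$, we conclude $N(x,G)\subseteq V_{G_1}$.

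I do not expect any real obstacle here: the statement is essentially a one-line consequence of the separator hypothesis once the relevant vertex sets are identified. The only bookkeeping worth doing carefully is the chain $V_{\mathcal{C}}\subseteq V_{G_1}$ together with $I\subseteq r_1$, which is what lets us place $V_{\mathcal{C}}$ inside $V_{G_1}\setminus I$; after that, chordality of $G$ is not used at all (it is needed elsewhere in the surrounding argument, e.g.\ in \cref{obs:for-x-y-in-C-x-y-in-O}, but not for this observation).
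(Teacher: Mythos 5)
Your proof is correct and follows essentially the same route as the paper's: place $x$ in $V_{G_1}\setminus I$ using $V_{\mathcal{C}}\subseteq V_{G_1}$, $V_{\mathcal{C}}\cap r_1=\emptyset$, and $I=r_1\cap r_2\subseteq r_1$, then invoke the separator property of $I$ to exclude neighbors in $V_{G_2}\setminus I$. No differences worth noting.
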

\begin{proof}[Proof of \cref{obs:neighbors-of-C-with-no-ri-is-in-Mi}]
    Suppose $\mathcal{C}$ is an \ucc{} of $M_1$ such that $V_{\mathcal{C}}\cap r_1 =\emptyset$, $x$ is a node in $V_{\mathcal{C}}$, and $y$ is a neighbor of $x$ in $G$. 
    We show that $y \in V_{G_1}$.

    Since $V_{\mathcal{C}}\cap r_1 =\emptyset$, and $\mathcal{C}$ is an \ucc{} of $M_1$, $x \in V_{G_1}\setminus r_1$.
    Since $I = r_1\cap r_2$ is a vertex separator of $G$ that separates $V_{G_1}\setminus I$ and $V_{G_2}\setminus I$, $x \in V_{G_1}\setminus I$. This implies $y\notin V_{G_2}\setminus I$. Since $V_G$ is a disjoint union of $V_{G_1}$ and $V_{G_2}\setminus I$, this further implies $y \in V_{G_1}$. This implies $N(x, G) \subseteq V_{G_1}$. This completes the proof of \cref{obs:neighbors-of-C-with-no-ri-is-in-Mi}.
\end{proof}

    \begin{observation}
    \label{obs:directed-edge-towards-u-implies-directed-edge-towards-v}
    Let $\mathcal{C}$ be an \ucc{} of $M_1$. For $u-v \in \mathcal{C}$, if $x\rightarrow u \in M_1$ then $x\rightarrow v \in M_1$.
\end{observation}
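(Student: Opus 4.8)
The plan is to derive everything from two structural facts about the MEC $M_1$: it is a chain graph (\cref{item-1-theorem-nec-suf-cond-for-MEC} of \cref{thm:nes-and-suf-cond-for-chordal-graph-to-be-an-MEC}), hence has no directed cycle, and it does not contain an induced configuration $a\rightarrow b - c$ (\cref{item-3-theorem-nec-suf-cond-for-MEC}). First I would record that $x\notin V_{\mathcal{C}}$: since $u\in V_{\mathcal{C}}$, if $x$ also lay in $V_{\mathcal{C}}$ then, $\mathcal{C}$ being an \ucc{} of $M_1$, there would be a path $x=q_0-q_1-\cdots-q_m=u$ with $m\geq 1$ all of whose edges are undirected edges of $M_1$; traversing the directed edge $x\rightarrow u$ and then returning along this path gives the cycle $(x,u,q_{m-1},\ldots,q_1,x)$ of $M_1$, which contains the directed edge $x\rightarrow u$ and is therefore a directed cycle -- impossible in a chain graph. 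In particular $x,u,v$ are three distinct vertices.

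Next I would show that $x$ and $v$ are adjacent in $M_1$. If they were not, then $M_1[\{x,u,v\}]$ would consist of exactly the edges $x\rightarrow u$ and $u-v$, i.e.\ an induced subgraph of the form $a\rightarrow b - c$, contradicting \cref{item-3-theorem-nec-suf-cond-for-MEC} of \cref{thm:nes-and-suf-cond-for-chordal-graph-to-be-an-MEC}. Hence there is an edge between $x$ and $v$, which is $x-v$, $x\rightarrow v$, or $v\rightarrow x$. I would rule out the first by noting that an undirected edge $x-v$ with $v\in V_{\mathcal{C}}$ would put $x$ into the same \ucc{} as $v$, namely $\mathcal{C}$, contradicting $x\notin V_{\mathcal{C}}$. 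I would rule out the second by observing that $v\rightarrow x$, $x\rightarrow u$, and $u-v$ together form the $3$-cycle $(v,x,u,v)$ of $M_1$ (each consecutive ordered pair is an edge of $M_1$, the undirected edge $u-v$ supplying the pair $(u,v)$), which contains the directed edge $v\rightarrow x$ and is again a directed cycle, impossible in a chain graph. The only remaining possibility is $x\rightarrow v\in M_1$, which is the claim.

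I do not anticipate a genuine difficulty here; the one point that needs care is respecting the paper's definitions of \emph{cycle} and \emph{directed cycle} (a cycle requires each consecutive ordered pair to lie in the edge set, so an undirected edge contributes both orientations while a directed edge contributes only one), which is exactly why in the two contradiction arguments the cycles must be traversed in the directions indicated and not in the opposite ones.
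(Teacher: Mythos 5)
Your proof is correct and follows essentially the same route as the paper's: adjacency of $x$ and $v$ is forced by the prohibition of induced $a\rightarrow b-c$ (\cref{item-3-theorem-nec-suf-cond-for-MEC} of \cref{thm:nes-and-suf-cond-for-chordal-graph-to-be-an-MEC}), and the orientation $x\rightarrow v$ is then forced by the absence of directed cycles. The only cosmetic difference is that you rule out an undirected edge $x-v$ via the preliminary observation $x\notin V_{\mathcal{C}}$, whereas the paper folds that case into the same directed-cycle argument, since an undirected $x-v$ also makes $(x,u,v,x)$ a directed cycle.
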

\begin{proof}[Proof of \cref{obs:directed-edge-towards-u-implies-directed-edge-towards-v}]
    Suppose $\mathcal{C}$ is an \ucc{} of $M_1$, there exists an undirected edge $u-v \in \mathcal{C}$, and there exists a directed edge $x\rightarrow u \in M_1$. There are two possibilities: either $x-v \in G_1$ or $x-v \notin G_1$ (remember $G_1$ is the skeleton of $M_1$). If $x-v \notin G_1$ then  $x\rightarrow u-v$ is an induced subgraph of $M_1$, contradicting \cref{item-3-theorem-nec-suf-cond-for-MEC} of \cref{thm:nes-and-suf-cond-for-chordal-graph-to-be-an-MEC}. This implies $x-v \in G_1$. Then, there must be $x\rightarrow v \in M_1$, otherwise, $(x, u, v, x)$ is a directed cycle in $M$, contradicting \cref{item-1-theorem-nec-suf-cond-for-MEC} of \cref{thm:nes-and-suf-cond-for-chordal-graph-to-be-an-MEC}. This completes the proof of \cref{obs:directed-edge-towards-u-implies-directed-edge-towards-v}.
\end{proof}
    
    \begin{claim}
        \label{claim:no-x->u-implies-y->u}
        For any $x \notin V_{\mathcal{C}}$, $x\rightarrow u \in O$ if, and only if, $x\rightarrow v \in O$.
    \end{claim}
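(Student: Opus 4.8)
The plan is to prove the equivalence by analysing the orientation of the edge joining $x$ to the relevant endpoint inside the MEC $M_1$, and then transporting the conclusion to $O$ using the extension axioms (\cref{def:extension}). Throughout, recall that $u, v \in V_{\mathcal{C}} \cap V_O$ by the choices in \cref{choice-of-u,choice-of-v}, that by construction $V_{G_1} \cap V_O = r_1 \cup N(r_1, G_1) = V_{O_1}$, and that $M_1[V_{O_1}] = O_1$ since $M_1 \in \setofMECs{G_1, O_1}$. Also note that $u-v \in \mathcal{C}$: the edge $u-v$ exists in $\skel{O} \subseteq G$, both endpoints lie in $V_{G_1}$, hence $u-v \in G_1 = \skel{M_1}$, and it cannot be oriented in the chain graph $M_1$ because $u$ and $v$ lie in a common \ucc{}. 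Finally, since $u, v \in V_{\mathcal{C}}$ with $V_{\mathcal{C}} \cap r_1 = \emptyset$, \cref{obs:neighbors-of-C-with-no-ri-is-in-Mi} tells us every neighbour in $G$ of $u$ (or of $v$) lies in $V_{G_1}$.

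For the forward implication, suppose $x \notin V_{\mathcal{C}}$ and $x \rightarrow u \in O$. Then $x$ is a neighbour of $u$ in $G$, so $x \in V_{G_1}$, and since $x \in V_O$ we get $x \in V_{O_1}$. Consider the edge between $x$ and $u$ in $M_1$: it is $x-u$, $u\rightarrow x$, or $x\rightarrow u$. The case $x-u \in M_1$ is impossible, since it would place $x$ in the same \ucc{} as $u$, i.e. $x \in V_{\mathcal{C}}$. The case $u \rightarrow x \in M_1$ is impossible, because $u,x \in V_{O_1}$ would then give $u\rightarrow x \in M_1[V_{O_1}] = O_1$, and \cref{item-1-of-def:extensions} of \cref{def:extension} would force $u\rightarrow x \in O$, contradicting $x\rightarrow u \in O$. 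Hence $x\rightarrow u \in M_1$; applying \cref{obs:directed-edge-towards-u-implies-directed-edge-towards-v} to the undirected edge $u-v \in \mathcal{C}$ yields $x\rightarrow v \in M_1$, and since $x, v \in V_{O_1}$ this gives $x\rightarrow v \in O_1$, whence $x\rightarrow v \in O$ by \cref{item-1-of-def:extensions} of \cref{def:extension}.

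The reverse implication is symmetric: assuming $x \notin V_{\mathcal{C}}$ and $x\rightarrow v \in O$, the same argument with the roles of $u$ and $v$ exchanged shows $x \in V_{O_1}$ and rules out $x-v \in M_1$ and $v\rightarrow x \in M_1$, so $x\rightarrow v \in M_1$; applying \cref{obs:directed-edge-towards-u-implies-directed-edge-towards-v} to the (same) undirected edge $v-u \in \mathcal{C}$ gives $x\rightarrow u \in M_1$, hence $x\rightarrow u \in O_1$ and $x\rightarrow u \in O$.

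The step I expect to require the most care is the membership bookkeeping: ensuring that $x$ genuinely lands in $V_{O_1}$ (so that the extension axiom is applicable to the $M_1$-edges we produce) and that the two applications of \cref{obs:directed-edge-towards-u-implies-directed-edge-towards-v} are legitimate because $u-v$ and $v-u$ denote the same undirected edge of $\mathcal{C}$. Once those points are pinned down, the argument is just a case split on the orientation of a single edge in $M_1$.
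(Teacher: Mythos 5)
Your proof is correct and follows essentially the same route as the paper: reduce the orientation question to $M_1$ by ruling out $x-u\in M_1$ (which would force $x\in V_{\mathcal{C}}$), apply \cref{obs:directed-edge-towards-u-implies-directed-edge-towards-v} along the undirected edge $u-v\in\mathcal{C}$, and transport the result back to $O$. The only (harmless) difference is the transport mechanism: the paper moves between $O$ and $M_1$ via the constructed graph $M$ (using \cref{obs:O-is-an-induced-subgraph-of-M}, \cref{corr:directed-edge-in-M-is-either-directed-or-undirected-in-M1}, and \cref{obs:directed-edge-in-M1-is-directed-in-M}), whereas you go via $O_1=M_1[V_{O_1}]$ and \cref{item-1-of-def:extensions} of \cref{def:extension}, which works equally well given $V_{G_1}\cap V_O=V_{O_1}$.
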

    \begin{proof}[Proof of \cref{claim:no-x->u-implies-y->u}]
    From \cref{obs:neighbors-of-C-with-no-ri-is-in-Mi}, the neighbors of $\mathcal{C}$ in $M$ is in $V_{G_1}$. Since $u \in V_{\mathcal{C}}$, $v \in V_{G_1}$.
        For any $x \notin V_{\mathcal{C}}$, from  \cref{corr:directed-edge-in-M-is-either-directed-or-undirected-in-M1}, if $x\rightarrow u \in M$ then $x\rightarrow v \in M_1$ (if $x-u \in M_1$ then $x$ and $u$ are in the same \ucc{} $\mathcal{C}$ of $M_1$). From \cref{obs:directed-edge-towards-u-implies-directed-edge-towards-v}, for any $x \notin V_{\mathcal{C}}$, if $x\rightarrow u \in M_1$ then $x\rightarrow v \in M_1$. 
    From \cref{obs:directed-edge-in-M1-is-directed-in-M}, if $x\rightarrow v \in M_1$ then $x\rightarrow v \in M$. This implies that for $x \notin V_{\mathcal{C}}$,  if $x\rightarrow u \in M$ then $x\rightarrow v \in M$. 
    Since $O$ is an induced subgraph of $M$, and $v\in V_O$,  for $x \notin V_{\mathcal{C}}$, if $x\rightarrow u \in O$ then $x\rightarrow v \in O$.
    Similarly, we can show that for $x \notin V_{\mathcal{C}}$,  if $x\rightarrow v \in O$ then $x\rightarrow u \in O$. This implies that for $x \notin V_{\mathcal{C}}$,  $x\rightarrow u \in O$ if, and only if,  $x\rightarrow v \in O$.
    \end{proof}

    \begin{claim}
        \label{claim:u->v-does-not-obey-fig-sp-a}
        $u\rightarrow v$ cannot be part of an induced subgraph $w\rightarrow u\rightarrow v$ in $O$ (as shown in \cref{fig:strongly-protected-edge}.a).
    \end{claim}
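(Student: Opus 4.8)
The plan is to argue by contradiction, splitting into the two cases according to whether the hypothetical node $w$ lies in the \ucc{} $\mathcal{C}$ or not, and invoking \cref{claim:no-x->u-exists} and \cref{claim:no-x->u-implies-y->u} respectively.

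First I would suppose, toward a contradiction, that $u\rightarrow v$ is part of an induced subgraph of $O$ of the form $w\rightarrow u\rightarrow v$ as in \cref{fig:strongly-protected-edge}.a. By the shape of that configuration this means $w\rightarrow u \in O$, $u\rightarrow v \in O$, the nodes $w$, $u$, $v$ are distinct, and $w$ and $v$ are \emph{not} adjacent in $O$. The argument then divides on whether $w \in V_{\mathcal{C}}$.

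If $w \in V_{\mathcal{C}}$, then since $w\rightarrow u \in O$ and $w, u \in V_{\mathcal{C}}$, both endpoints lie in $V_{\mathcal{C}}\cap V_O$, so $w\rightarrow u \in O[V_{\mathcal{C}}\cap V_O]$. This directly contradicts \cref{claim:no-x->u-exists}, which asserts that no node of $V_{\mathcal{C}}$ has a directed edge into $u$ in $O$. If instead $w \notin V_{\mathcal{C}}$, then I would apply \cref{claim:no-x->u-implies-y->u} with this $w$ in the role of $x$: from $w\rightarrow u \in O$ we obtain $w\rightarrow v \in O$. But then $w$ and $v$ are adjacent in $O$, contradicting the requirement that $w\rightarrow u\rightarrow v$ be an induced subgraph of the form in \cref{fig:strongly-protected-edge}.a. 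Since both cases are impossible, no such $w$ exists, which establishes \cref{claim:u->v-does-not-obey-fig-sp-a}.

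I do not expect a genuine obstacle here: the substantive content has already been isolated in \cref{claim:no-x->u-exists} and \cref{claim:no-x->u-implies-y->u}, and the present claim is essentially bookkeeping. The only point that needs a little care is to read off from \cref{fig:strongly-protected-edge}.a that ``being part of an induced subgraph $w\rightarrow u\rightarrow v$'' forces $w$ and $v$ to be non-adjacent, so that the conclusion $w\rightarrow v \in O$ in the second case is indeed a contradiction.
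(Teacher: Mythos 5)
Your proof is correct and follows exactly the paper's argument: the paper also splits on whether $w \in V_{\mathcal{C}}$, ruling out the first case via \cref{claim:no-x->u-exists} and the second via \cref{claim:no-x->u-implies-y->u}, since $w\rightarrow v \in O$ would make $w$ and $v$ adjacent and destroy the induced configuration of \cref{fig:strongly-protected-edge}.a.
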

    \begin{proof}[Proof of \cref{claim:u->v-does-not-obey-fig-sp-a}]
    From \cref{claim:no-x->u-exists}, if $w\in \mathcal{C}$ then $w\rightarrow u \notin O$. And, from \cref{claim:no-x->u-implies-y->u}, if $w\notin \mathcal{C}$ and $w\rightarrow u \in O$ then $w\rightarrow v \in O$. This implies  $w\rightarrow u\rightarrow v$ is not an induced subgraph in $O$. This completes the proof of \cref{claim:u->v-does-not-obey-fig-sp-a}.
    \end{proof}

    \begin{claim}
        \label{claim:u->v-does-not-obey-fig-sp-b}
        $u\rightarrow v$ cannot be part of an induced subgraph $w\rightarrow v\leftarrow u$ in $O$ (as shown in \cref{fig:strongly-protected-edge}.b).
    \end{claim}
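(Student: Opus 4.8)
The plan is to argue by contradiction. Suppose $w\to v\leftarrow u$ occurs as an induced subgraph of $O$; that is, $w\to v\in O$, $w\neq u$, and $u$ and $w$ are non-adjacent in $O$. I will derive a contradiction, splitting on whether $w\in V_{\mathcal{C}}$.

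If $w\notin V_{\mathcal{C}}$, the case is immediate: \Cref{claim:no-x->u-implies-y->u} applied to $x=w$ gives $w\to u\in O$ from $w\to v\in O$, so $u$ and $w$ are adjacent in $O$, contradicting that $w\to v\leftarrow u$ is induced.

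The substantive case is $w\in V_{\mathcal{C}}$. I first record the local edge structure. From $w\to v\in O\subseteq M$, together with $v\in V_{\mathcal{C}}\subseteq V_{G_1}$ and \Cref{obs:neighbors-of-C-with-no-ri-is-in-Mi}, we get $w\in V_{G_1}$, so \Cref{corr:directed-edge-in-M-is-either-directed-or-undirected-in-M1} gives $w-v\in M_1$ or $w\to v\in M_1$; the latter is impossible since $w,v$ lie in the same \ucc{} $\mathcal{C}$ of $M_1$, so $w-v\in M_1$, and likewise $u-v\in M_1$. Since $u,w\in V_O$ are non-adjacent in $O$ and $\skel{O}=G[V_O]$, we get $u-w\notin G$, hence $u-w\notin M_1$. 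Next — and this is the crucial step — I will show $u,v,w\in V_{O_1}=r_1\cup N(r_1,G_1)$. Each of the three nodes lies in $V_{\mathcal{C}}\cap V_O$ (they are in $V_{\mathcal{C}}$, and in $V_O$ because $u\to v,w\to v\in O$); moreover $V_{\mathcal{C}}$ is disjoint from $r_1\cup r_2$ (disjoint from $r_1$ by hypothesis, and a node of $V_{\mathcal{C}}\cap r_2\subseteq V_{G_1}\cap V_{G_2}=I\subseteq r_1$ would contradict that), so each node lies in $N(r_1\cup r_2,G)\setminus(r_1\cup r_2)$; by \Cref{obs:neighbors-of-C-with-no-ri-is-in-Mi} its witnessing neighbour in $r_1\cup r_2$ lies in $V_{G_1}$, hence in $(r_1\cup r_2)\cap V_{G_1}=r_1$, and therefore the node lies in $N(r_1,G_1)\subseteq V_{O_1}$. (This membership is also implicit in the step of the proof of \Cref{obs:for-x-y-in-C-x-y-in-O} that invokes \Cref{item-3-of-def:extension} of \Cref{def:extension} for the edge $u\to v$.)

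With $u,v,w\in V_{O_1}$ in hand the contradiction falls out: in $O_1=M_1[V_{O_1}]$ the edges $u-v$ and $w-v$ are undirected and $u,w$ are non-adjacent, so $\{u,v,w\}$ induces the chordless path $u-v-w$ in $O_1$ and in particular $u\to v\leftarrow w\notin\mathcal{V}(O_1)$; but $u\to v\leftarrow w$ is an induced subgraph of $O$ whose vertices all lie in $V_{O_1}$, so $u\to v\leftarrow w\in\mathcal{V}(O[V_{O_1}])$, contradicting $\mathcal{V}(O_1)=\mathcal{V}(O[V_{O_1}])$ from \Cref{item-2-of-def:extension} of \Cref{def:extension}. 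I expect the membership claim $V_{\mathcal{C}}\cap V_O\subseteq V_{O_1}$ to be the main obstacle, since it is the only place that needs the vertex-separator geometry together with \Cref{obs:neighbors-of-C-with-no-ri-is-in-Mi}; the surrounding case split and the v-structure bookkeeping through the extension axioms are routine.
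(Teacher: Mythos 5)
Your proof is correct and follows essentially the same route as the paper's: use \Cref{claim:no-x->u-implies-y->u} to force $w\in V_{\mathcal{C}}$, deduce that $u-v-w$ is an induced undirected path in $O_1$ while $u\rightarrow v\leftarrow w$ is a v-structure of $O[V_{O_1}]$, and contradict \cref{item-2-of-def:extension} of \cref{def:extension}. The only difference is cosmetic: you re-derive the membership $u,v,w\in V_{O_1}$ from the separator geometry, whereas the paper simply invokes the identity $V_{O_1}=V_{M_1}\cap V_O$.
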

    \begin{proof}[Proof of \cref{claim:u->v-does-not-obey-fig-sp-b}]
    Suppose $u\rightarrow v$ is part of an induced subgraph $w\rightarrow v\leftarrow u$ in $O$.
       Since $w \rightarrow v \in O$ and $w\rightarrow u \notin O$, from \cref{claim:no-x->u-implies-y->u}, $w \in \mathcal{C}$. This implies $u,v, w \in V_{M_1}$, as $\mathcal{C}$ is an \ucc{} of $M_1$. Since $u,v,w \in \mathcal{C}$, and from the construction, $\skel{M[V_{G_1}]} = \skel{M_1}$, we have $u-v-w$ is an induced subgraph in $M_1$. This further implies that $u-v-w$ is an induced subgraph in $O_1$ as $V_{O_1} = V_{M_1}\cap V_{O}$, and $O_1$ is an induced subgraph of $M_1$. This contradicts \cref{item-2-of-def:extension} of \cref{def:extension}. This implies that our assumption is wrong. This completes the proof of \cref{claim:u->v-does-not-obey-fig-sp-b}.
    \end{proof}

    \begin{claim}
        \label{claim:u->v-does-not-obey-fig-sp-c}
         $u\rightarrow v$ cannot be part of an induced subgraph $u\rightarrow w \rightarrow v\leftarrow u$ in $O$ (as shown in \cref{fig:strongly-protected-edge}.c).
    \end{claim}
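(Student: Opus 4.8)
The plan is to assume, for contradiction, that $u\rightarrow w\rightarrow v\leftarrow u$ occurs as an induced subgraph of $O$ — so $u\rightarrow w\in O$, $w\rightarrow v\in O$, and $w\in V_O$ — and then to split according to whether $w$ belongs to the \ucc{} $\mathcal{C}$ of $M_1$.

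If $w\in V_{\mathcal{C}}$: since $u\in V_{\mathcal{C}}$ and $u,w\in V_O$, the edge $u\rightarrow w$ lies in $O[V_{\mathcal{C}}\cap V_O]$, so $w$ is a member of the set $Y=\{y: u\rightarrow y\in O[V_{\mathcal{C}}\cap V_O]\}$ out of which $v$ was chosen in \cref{choice-of-v}. That choice forces $y\rightarrow v\notin O$ for every $y\in Y$, hence $w\rightarrow v\notin O$, contradicting $w\rightarrow v\in O$.

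If $w\notin V_{\mathcal{C}}$: applying \cref{claim:no-x->u-implies-y->u} with $x=w$, the hypothesis $w\rightarrow v\in O$ yields $w\rightarrow u\in O$, which is incompatible with $u\rightarrow w\in O$. (Should a proof avoiding \cref{claim:no-x->u-implies-y->u} be preferred, the same contradiction is reached by: \cref{obs:neighbors-of-C-with-no-ri-is-in-Mi} gives $w\in V_{G_1}$, since $w$ is a neighbor of $u\in V_{\mathcal{C}}$ in $\skel{M}=G$; then \cref{corr:directed-edge-in-M-is-either-directed-or-undirected-in-M1} applied to $w\rightarrow v\in M$ gives $w-v\in M_1$ or $w\rightarrow v\in M_1$, the former being impossible since it would place $w$ in $v$'s \ucc{} $\mathcal{C}$; then \cref{obs:directed-edge-towards-u-implies-directed-edge-towards-v} applied to the undirected edge $v-u$ of $\mathcal{C}$ and $w\rightarrow v\in M_1$ gives $w\rightarrow u\in M_1$; and finally \cref{obs:directed-edge-in-M1-is-directed-in-M} together with $O$ being an induced subgraph of $M$ on a vertex set containing $u$ and $w$ gives $w\rightarrow u\in O$.)

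The two cases are exhaustive, so the claim follows. I do not anticipate a genuine obstacle: this argument is shorter than, and in the same spirit as, the proofs of \cref{claim:u->v-does-not-obey-fig-sp-a,claim:u->v-does-not-obey-fig-sp-b} given immediately above. The only point needing care is bookkeeping with the borrowed statements — in particular, if \cref{obs:directed-edge-towards-u-implies-directed-edge-towards-v} is used it must be invoked with the undirected edge written as $v-u$ so that its conclusion concerns an incoming edge at $u$, and one must use that $V_{\mathcal{C}}\cap r_1=\emptyset$ (the standing hypothesis of \cref{obs:for-x-y-in-C-x-y-in-O}) in order to apply \cref{obs:neighbors-of-C-with-no-ri-is-in-Mi}.
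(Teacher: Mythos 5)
Your proposal is correct and is essentially the paper's own argument: the paper likewise deduces from \cref{claim:no-x->u-implies-y->u} that $w\rightarrow v\in O$ together with $w\rightarrow u\notin O$ forces $w\in V_{\mathcal{C}}$, and then invokes the choice of $u$ and $v$ in \cref{choice-of-u,choice-of-v} (your membership $w\in Y$) to rule out $w\rightarrow v\in O$. Your explicit two-case split and the alternative route in the parenthetical are just reorganizations of the same reasoning.
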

    \begin{proof}[Proof of \cref{claim:u->v-does-not-obey-fig-sp-c}]
        Suppose $u\rightarrow v$ is part of an induced subgraph $u\rightarrow w \rightarrow v\leftarrow u$ in $O$.
        Since $w\rightarrow v \in O$ and $w\rightarrow u \notin O$, from \cref{claim:no-x->u-implies-y->u}, $w \in \mathcal{C}$. This implies $u,v,w \in V_O \cap V_{\mathcal{C}}$. From the choice of $u$ and $v$ that we have (\cref{choice-of-u,choice-of-v}) there cannot exist any $w$ such that $u\rightarrow w \rightarrow v\leftarrow u\in O$. This implies that our assumption is wrong. This completes the proof of \cref{claim:u->v-does-not-obey-fig-sp-c}.
    \end{proof}

    \begin{claim}
        \label{claim:u->v-does-not-obey-fig-sp-d}
         $u\rightarrow v$ cannot be part of an induced subgraph $u\rightarrow v \leftarrow w -u-w'\rightarrow v$ in $O$ (as shown in \cref{fig:strongly-protected-edge}.d).
    \end{claim}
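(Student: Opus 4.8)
The plan is to argue by contradiction: assume $u\rightarrow v$ sits inside an induced subgraph of $O$ of the shape $u\rightarrow v\leftarrow w-u-w'\rightarrow v$ (so $w-u,\ u-w'\in O$ are undirected, $w\rightarrow v,\ w'\rightarrow v\in O$, and $w,w'$ are non‑adjacent in $O$, as in \cref{fig:strongly-protected-edge}.d), and then produce a v‑structure that is present in $O$ restricted to $V_{O_1}$ but absent in $O_1$, which contradicts \cref{item-2-of-def:extension} of \cref{def:extension}.

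First I would locate $w$ and $w'$ inside the \ucc{} $\mathcal{C}$. Because $w-u\in O$ we have $w\rightarrow u\notin O$, and since $w\rightarrow v\in O$, \cref{claim:no-x->u-implies-y->u} forces $w\in V_{\mathcal{C}}$ (otherwise $w\rightarrow v\in O$ would yield $w\rightarrow u\in O$); alternatively, $w-u\in M$ together with \cref{obs:neighbors-of-C-with-no-ri-is-in-Mi} puts $w\in V_{G_1}$, then \cref{corr:undirected-edge-in-M-is-undirected-in-M1} gives $w-u\in M_1$, so $w$ lies in the same \ucc{} of $M_1$ as $u$, i.e.\ $w\in V_{\mathcal{C}}$. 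The identical argument gives $w'\in V_{\mathcal{C}}$. Combined with $v\in V_{\mathcal{C}}\cap V_O$ from the setup of \cref{obs:for-x-y-in-C-x-y-in-O}, all of $u,v,w,w'$ lie in $V_{\mathcal{C}}\cap V_O$. Next I would check that $v,w,w'\in V_{O_1}=r_1\cup N(r_1,G_1)$: each lies in $V_{\mathcal{C}}\subseteq V_{G_1}$ and in $V_O=r_1\cup r_2\cup N(r_1\cup r_2,G)$; since $V_{\mathcal{C}}\cap r_1=\emptyset$ none of them is in $r_1$, none is in $r_2$ (a vertex of $V_{G_1}\setminus I$ cannot lie in $r_2\subseteq V_{G_2}$ except through $I\subseteq r_1$), and since $I$ separates $V_{G_1}\setminus I$ from $V_{G_2}\setminus I$ in $G$, each must be adjacent in $G_1$ to a vertex of $r_1$, hence lies in $N(r_1,G_1)$.

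Finally I would compare $O$ and $O_1$ on $\{w,v,w'\}$. Since $w,v,w'$ all lie in the \ucc{} $\mathcal{C}$ of the MEC $M_1$, the edges $w-v$ and $w'-v$ are undirected in $M_1$, hence undirected in $O_1=M_1[r_1\cup N(r_1,G_1)]$; moreover $w$ and $w'$ are non‑adjacent in $O_1$ because they are non‑adjacent in $O$, hence in $M$, hence in $G_1$ (the skeleton of $M_1$). Thus $O_1[\{w,v,w'\}]$ is the undirected path $w-v-w'$, which has no v‑structure, whereas $O[\{w,v,w'\}]$ is $w\rightarrow v\leftarrow w'$, a v‑structure precisely because $w\not\sim w'$. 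Hence $w\rightarrow v\leftarrow w'\in\mathcal{V}(O[V_{O_1}])\setminus\mathcal{V}(O_1)$, contradicting $\mathcal{V}(O_1)=\mathcal{V}(O[V_{O_1}])$ (\cref{item-2-of-def:extension} of \cref{def:extension}); therefore configuration \cref{fig:strongly-protected-edge}.d cannot arise for our chosen $u\rightarrow v$. The only genuinely delicate point is the bookkeeping in the middle step — showing the three "tail" vertices $w,v,w'$ of the putative v‑structure really land inside $V_{O_1}$ so that the extension condition can be invoked — and this is exactly where the separator property of $I$ and the assumption $V_{\mathcal{C}}\cap r_1=\emptyset$ are used.
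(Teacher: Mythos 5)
Your proposal is correct and follows essentially the same route as the paper: both use \cref{claim:no-x->u-implies-y->u} to force $w,w'\in V_{\mathcal{C}}$, observe that $w-v-w'$ is then an undirected induced path in $M_1$ and hence in $O_1$, and derive a contradiction with \cref{item-2-of-def:extension} of \cref{def:extension} from the v-structure $w\rightarrow v\leftarrow w'$ in $O[V_{O_1}]$. The only difference is that you spell out via the separator property why $w,v,w'\in V_{O_1}$, where the paper simply invokes $V_{O_1}=V_{M_1}\cap V_O$; this is a matter of detail, not of substance.
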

    \begin{proof}[Proof of \cref{claim:u->v-does-not-obey-fig-sp-d}]
         Suppose $u\rightarrow v$ is part of an induced subgraph $u\rightarrow v \leftarrow w -u-w'\rightarrow v$ in $O$.
        Since $w\rightarrow v, w' \rightarrow v\in O$ and $w\rightarrow u, w'\rightarrow u \notin O$, from \cref{claim:no-x->u-implies-y->u}, $w, w' \in \mathcal{C}$. This implies $u,v, w,w' \in V_{M_1}$, as $\mathcal{C}$ is an \ucc{} of $M_1$. Since $w,v,w' \in \mathcal{C}$, and from the construction, $\skel{M[V_{G_1}]} = \skel{M_1}$, we have $w-v-w'$ is an induced subgraph in $M_1$. This further implies that $w-v-w'$ is an induced subgraph in $O_1$ as $V_{O_1} = V_{M_1}\cap V_{O}$, and $O_1$ is an induced subgraph of $M_1$. This contradicts \cref{item-2-of-def:extension} of \cref{def:extension}. This implies that our assumption is wrong. This completes the proof of \cref{claim:u->v-does-not-obey-fig-sp-d}.
    \end{proof}

    \Cref{claim:u->v-does-not-obey-fig-sp-a,claim:u->v-does-not-obey-fig-sp-b,claim:u->v-does-not-obey-fig-sp-c,claim:u->v-does-not-obey-fig-sp-d} imply that $u\rightarrow v$ is not strongly protected in $O$, contradicting \cref{item-3-of-def:extension} of \cref{def:extension}. This implies our assumption that there exists a directed edge in $O[V_{\mathcal{C}}\cap V_O]$ is wrong. This further implies $O[V_{\mathcal{C}}\cap V_O]$ is an undirected graph. This completes the proof of \cref{obs:for-x-y-in-C-x-y-in-O}.
\end{proof}

\begin{proof}[Proof of \cref{claim:no-directed-cycle-with-nodes-in-ucc-with-ri}]
Contrary to \cref{claim:no-directed-cycle-with-nodes-in-ucc-with-ri}, suppose for some $i\in \{1,2\}$, there exist nodes $u$, $v$, and $w$ belonging to an \ucc{} $\mathcal{C}$ of $M_i$ such that $V_{\mathcal{C}}\cap r_i \neq \emptyset$, and  $(u, v, w, u)$ is a directed cycle in $M$. W.l.o.g., let us assume $i = 1$.

\begin{claim}
    \label{claim:no-node-of-cycle-is-in-r_1}
    No node of the cycle belongs to $r_1$.
\end{claim}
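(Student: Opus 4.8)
The plan is to reduce the statement to the fact, already available at this point of the construction, that $O$ is a chain graph. Suppose for contradiction that some node of the triangle $u,v,w$, say $a$, lies in $r_1$. First I would observe that then the whole triangle sits inside $V_O$. By \cref{obs:skeleton-of-M-is-G} the skeleton of $M$ is $G$, so the two edges of the cycle incident to $a$ are edges of $G$; hence the two nodes of the cycle other than $a$ are neighbours of $a$ in $G$ and therefore lie in $N(r_1,G)$. Since also $a\in r_1$, all three nodes lie in $r_1\cup N(r_1,G)$, which is contained in $X=r_1\cup r_2\cup N(r_1\cup r_2,G)=V_O$ because a neighbour of a node of $r_1$ is in particular a neighbour of a node of $r_1\cup r_2$.

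Next I would transfer the directed cycle from $M$ to $O$. By \cref{obs:O-is-an-induced-subgraph-of-M} we have $M[V_O]=O$, and since $\{u,v,w\}\subseteq V_O$ this gives $O[\{u,v,w\}]=M[\{u,v,w\}]$. By the hypothesis of the current proof, $M[\{u,v,w\}]$ contains the directed cycle $(u,v,w,u)$, so $O$ contains a directed cycle as well. But $O$ is a partial MEC of $G[X]$ by the hypothesis of \cref{lem:suff-cond-for-chordal-graph}, so by \cref{def:partial-MEC} together with \cref{item-1-theorem-nec-suf-cond-for-MEC} of \cref{thm:nes-and-suf-cond-for-chordal-graph-to-be-an-MEC} it is a chain graph and therefore has no directed cycle. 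This contradiction establishes that no node of the cycle belongs to $r_1$.

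I do not anticipate any genuine obstacle here: the whole argument is this two-step reduction, and it uses only \cref{obs:skeleton-of-M-is-G} and \cref{obs:O-is-an-induced-subgraph-of-M}, both proved before this point, together with the defining equality $V_O=r_1\cup r_2\cup N(r_1\cup r_2,G)$. In particular, for this claim one needs nothing about the component $\mathcal{C}$, the LBFS ordering, or step~2 of the construction of $M$; the only point requiring a moment's care is the inclusion $r_1\cup N(r_1,G)\subseteq X$, which is immediate from the definition of $X$.
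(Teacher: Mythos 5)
Your proposal is correct and follows essentially the same route as the paper: place all three nodes of the cycle in $r_1\cup N(r_1,G)\subseteq V_O$, transfer the directed cycle into $O$ via $O=M[V_O]$ (\cref{obs:O-is-an-induced-subgraph-of-M}), and contradict the fact that a partial MEC is a chain graph. The only difference is that you spell out the inclusion of the neighbours via \cref{obs:skeleton-of-M-is-G}, which the paper leaves implicit.
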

\begin{proof}
    If a node of the cycle belongs to $r_1$ then all the nodes of the cycle are in $r_1\cup N(r_1, G_1) \subseteq V_O$. From \cref{obs:O-is-an-induced-subgraph-of-M}, $O$ is an induced subgraph of $M$. This implies $O$ contains the directed cycle. But, this is a contradiction as $O$ is a partial MEC (\cref{def:partial-MEC}), and a partial MEC is a chain graph. This implies no node of the cycle belongs to $r_1$.
\end{proof}

\begin{claim}
    \label{claim:one-node-of-the-cycle-is-not-in-r1-and-its-neighbors}
    At least one node of the cycle is not in $r_1\cup N(r_1, G_1)$.
\end{claim}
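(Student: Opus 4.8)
The plan is to argue by contradiction, reusing the chain-graph argument from the proof of \cref{claim:no-node-of-cycle-is-in-r_1}. Suppose, contrary to \cref{claim:one-node-of-the-cycle-is-not-in-r1-and-its-neighbors}, that every node of the directed cycle $(u,v,w,u)$ lies in $r_1 \cup N(r_1, G_1)$. The first step is to place these nodes inside $V_O$. Since $G_1$ is an induced subgraph of $G$, we have $N(r_1, G_1) \subseteq N(r_1, G)$, and therefore $r_1 \cup N(r_1, G_1) \subseteq r_1 \cup N(r_1, G) \subseteq r_1 \cup r_2 \cup N(r_1 \cup r_2, G) = V_O$; hence $u, v, w \in V_O$.

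Next I would invoke \cref{obs:O-is-an-induced-subgraph-of-M}, which states $M[V_O] = O$. Because $u, v, w \in V_O$, the edges of $M$ on $\{u,v,w\}$ coincide with the edges of $O$ on $\{u,v,w\}$, so the directed cycle $(u,v,w,u)$ of $M$ is also a directed cycle of $O$. But $O = M[X] \in \setofpartialMECs{G[X]}$ is a partial MEC, so by \cref{def:partial-MEC} it satisfies \cref{item-1-theorem-nec-suf-cond-for-MEC} of \cref{thm:nes-and-suf-cond-for-chordal-graph-to-be-an-MEC}; that is, $O$ is a chain graph and has no directed cycle. This contradiction shows that at least one node of the cycle lies outside $r_1 \cup N(r_1, G_1)$, proving \cref{claim:one-node-of-the-cycle-is-not-in-r1-and-its-neighbors}.

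I do not expect any real obstacle here. The only points needing care are the set inclusion $r_1 \cup N(r_1, G_1) \subseteq V_O$, which is a one-line unwinding of the definitions of $N(\cdot,\cdot)$ and of $X = V_O$, and the observation that passing to the induced subgraph $O$ preserves both the triangle $\{u,v,w\}$ and the orientation of its edges. Both are immediate from the constructions set up earlier, so the claim follows in a few lines; it is then used, together with \cref{claim:no-node-of-cycle-is-in-r_1}, to drive the remainder of the proof of \cref{claim:no-directed-cycle-with-nodes-in-ucc-with-ri}.
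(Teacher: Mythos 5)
Your proposal is correct and follows essentially the same route as the paper: assume all three nodes lie in $r_1\cup N(r_1,G_1)\subseteq V_O$, use \cref{obs:O-is-an-induced-subgraph-of-M} to transfer the directed cycle into $O$, and contradict the fact that a partial MEC is a chain graph. The only difference is that you spell out the inclusion $r_1\cup N(r_1,G_1)\subseteq V_O$ explicitly, which the paper leaves implicit.
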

\begin{proof}
    Contrary to \cref{claim:one-node-of-the-cycle-is-not-in-r1-and-its-neighbors}, suppose all the nodes of the cycle belong to $r_1 \cup N(r_1, G_1)$. Since $r_1 \cup N(r_1, G_1) \subseteq V_O$. From \cref{obs:O-is-an-induced-subgraph-of-M}, $O$ is an induced subgraph of $M$. This implies $O$ contains the directed cycle. But, this is a contradiction as $O$ is a partial MEC (\cref{def:partial-MEC}), and a partial MEC is a chain graph. This implies our assumption is incorrect. This validates \cref{claim:one-node-of-the-cycle-is-not-in-r1-and-its-neighbors}.
\end{proof}

Since $V_{\mathcal{C}}\cap r_1 \neq \emptyset$, while running step \ref{step-2-of-construction-of-M} of the construction of $M$, at some iteration, $\mathcal{C}$ gets considered at step \ref{step-2-of-construction-of-M} of the construction of $M$. Let $\tau$ be the LBFS ordering of $\mathcal{C}$ constructed at step \ref{step-2-a-of-construction-of-M} of the construction of $M$. Since $\tau$ starts with $V_{\mathcal{C}}\cap r_1$ (from step \ref{step-2-a-of-construction-of-M} of the construction of $M$), the following claims are intuitive:
\begin{claim}
    \label{claim:nodes-of-r1-comes-before-others-in-tau}
    For two nodes $x,y \in \mathcal{C}$, if $x \in V_{\mathcal{C}}\cap r_1$ and $y \notin V_{\mathcal{C}}\cap r_1$ then $\tau(x) < \tau(y)$. 
\end{claim}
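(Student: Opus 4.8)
\textbf{Proof proposal for \cref{claim:nodes-of-r1-comes-before-others-in-tau}.}

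The plan is to read off the claim directly from the construction of $\tau$ together with the meaning of an LBFS ordering that ``starts with'' a clique. First I would recall that, by step \ref{step-2-a-of-construction-of-M} of the construction of $M$, the ordering $\tau$ is an LBFS ordering of $\mathcal{C}$ that \emph{starts with} the set $S \defeq V_{\mathcal{C}} \cap r_1$. By the lemma following \cref{def:LBFS} (\cite{rose1976algorithmic}), for a chordal graph and any clique $C$ there is an LBFS ordering starting with $C$; ``starting with $C$'' means that the first $\abs{C}$ vertices in the ordering are exactly the vertices of $C$ (listed in some order). So I first need to record that $S$ is indeed a clique of $\mathcal{C}$, which is where this lemma applies.

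To see that $S = V_{\mathcal{C}}\cap r_1$ is a clique of $\mathcal{C}$: take $x, y \in S$ with $x \neq y$. Since $r_1$ is a clique of $G_1$ and $x, y \in r_1$, we have $x - y \in E_{G_1}$; as $G_1 = \skel{M_1}$, there is an edge between $x$ and $y$ in $M_1$. Both $x$ and $y$ lie in the \ucc{} $\mathcal{C}$ of $M_1$, so this edge cannot be directed (a directed edge inside a chain component would create a directed cycle, violating \cref{item-1-theorem-nec-suf-cond-for-MEC} of \cref{thm:nes-and-suf-cond-for-chordal-graph-to-be-an-MEC}), hence $x - y \in \mathcal{C}$. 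Thus every two distinct vertices of $S$ are adjacent in $\mathcal{C}$, i.e.\ $S$ is a clique of $\mathcal{C}$, and the LBFS-starting-with-$S$ construction in step \ref{step-2-a-of-construction-of-M} is well defined.

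Now the conclusion is immediate. Let $m = \abs{S}$. Since $\tau$ starts with $S$, every $x \in S$ satisfies $\tau(x) \le m$, while every $z \in V_{\mathcal{C}} \setminus S$ satisfies $\tau(z) > m$. In particular, for $x \in V_{\mathcal{C}}\cap r_1$ and $y \in \mathcal{C}$ with $y \notin V_{\mathcal{C}}\cap r_1$ (so $y \in V_{\mathcal{C}} \setminus S$), we get $\tau(x) \le m < \tau(y)$, hence $\tau(x) < \tau(y)$, as claimed. I do not expect any genuine obstacle here: the only non-bookkeeping point is the one-line argument that $V_{\mathcal{C}}\cap r_1$ is a clique of $\mathcal{C}$, and everything else follows from unwinding the definition of an LBFS ordering that starts with a given clique.
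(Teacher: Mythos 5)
Your proof is correct and matches the paper's intent: the paper simply declares this claim ``intuitive'' from the fact that $\tau$ starts with $V_{\mathcal{C}}\cap r_1$, and your argument is just the careful unwinding of that fact. Your extra check that $V_{\mathcal{C}}\cap r_1$ is a clique of $\mathcal{C}$ (so that the LBFS-starting-with-a-clique construction in step 2(a) is well defined) is a worthwhile detail the paper leaves implicit.
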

\begin{claim}
    \label{claim:nodes-not-in-r1-and-its-neighbor-comes-after-others-in-tau}
    For two nodes $x,y \in \mathcal{C}$, if $x \in V_{\mathcal{C}}\cap (r_1 \cup N(r_1, G_1))$ and $y \notin V_{\mathcal{C}}\cap (r_1\cup N(r_1, G_1))$ then $\tau(x) < \tau(y)$. 
\end{claim}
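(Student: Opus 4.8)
The plan is to show that $\tau$ visits the vertices of $\mathcal{C}$ in non-decreasing order of their distance (inside $\mathcal{C}$) from the set $W := V_{\mathcal{C}}\cap r_1$, and that $V_{\mathcal{C}}\cap(r_1\cup N(r_1,G_1))$ is exactly the set of vertices of $\mathcal{C}$ at distance at most $1$ from $W$. Recall that $W\neq\emptyset$ by the standing hypothesis on $\mathcal{C}$, that $W$ is a clique (being a subset of the clique $r_1$), that $\mathcal{C}=G_1[V_{\mathcal{C}}]$ is chordal, that every edge of $\mathcal{C}$ is undirected in the chain graph $M_1$, and that $\tau$ is an LBFS ordering of $\mathcal{C}$ beginning with $W$. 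Since an LBFS ordering that begins with a set $W$ is in particular a BFS ordering whose first layer is $W$, it visits every vertex at distance $\le 1$ from $W$ before any vertex at distance $\ge 2$; hence it suffices to prove two things: (a) if $y\in V_{\mathcal{C}}$ and $y\notin r_1\cup N(r_1,G_1)$, then $y$ is at distance $\ge 2$ from $W$ in $\mathcal{C}$; and (b) every $x\in V_{\mathcal{C}}\cap(r_1\cup N(r_1,G_1))$ is at distance $\le 1$ from $W$ in $\mathcal{C}$.

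Part (a) is immediate: $y\notin r_1\cup N(r_1,G_1)$ means $y$ has no neighbour of $r_1$ in $G_1$, in particular none in $W$, and $y\notin W$, so its distance from $W$ is at least $2$ (and finite, since $\mathcal{C}$ is connected and $W\neq\emptyset$). For part (b), the only nontrivial case is $x\in N(r_1,G_1)\setminus r_1$ with $x\in V_{\mathcal{C}}$, and I would argue by contradiction: assume $x$ has no neighbour in $W$. Take a shortest path $Q=(q_0,q_1,\dots,q_k)$ in $\mathcal{C}$ from $W$ to $x$, so $q_0\in W$, $q_k=x$, and $k\ge 2$ (there is no length-$1$ such path by assumption). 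A shortest path is chordless, and since all its vertices lie in $V_{\mathcal{C}}$, it is a \cp{} of $G_1$ all of whose edges are undirected in $M_1$. As $x=q_k\notin r_1$, there is $z\in r_1$ with $z-q_k\in G_1$; moreover $z-q_0\in G_1$ because $z,q_0\in r_1$ and $r_1$ is a clique (note $z\neq q_0$, otherwise $q_0\in W$ would be a neighbour of $x$). In a chordal graph a vertex adjacent to both endpoints of a \cp{} is adjacent to every vertex of that path (this is precisely the sub-argument used in the proof of \cref{obs:cp-in-M1-is-a-cp-in-O1}), so $z-q_j\in G_1$ for every $j$. Also $z\notin V_{\mathcal{C}}$, since $z\in V_{\mathcal{C}}$ would give $z\in W$ with $z-x\in G_1$, contradicting our assumption; hence each edge $z-q_j$ is directed in $M_1$, as $z$ and $q_j$ lie in different \uccs{} of the chain graph $M_1$. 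Finally, since each $q_j-q_{j+1}$ is undirected, the orientations of $z-q_j$ and $z-q_{j+1}$ must agree for every $j$ — otherwise one of $q_j\to z\to q_{j+1}$ or $q_{j+1}\to z\to q_j$ holds, which together with the undirected edge $q_j-q_{j+1}$ forms a directed cycle, contradicting \cref{item-1-theorem-nec-suf-cond-for-MEC} of \cref{thm:nes-and-suf-cond-for-chordal-graph-to-be-an-MEC}. Thus either $z\to q_j\in M_1$ for all $j$, or $q_j\to z\in M_1$ for all $j$.

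What remains is to rule out both of these orientations, and I expect this to be the main obstacle. The tools are the remaining conditions of \cref{thm:nes-and-suf-cond-for-chordal-graph-to-be-an-MEC}: the directed edges $z-q_0$ and $z-q_1$ must be strongly protected in $M_1$, $M_1$ has no induced subgraph $a\to b-c$, and — since $r_1$ is a \emph{maximal} clique while $q_1\notin r_1$ (otherwise $q_1\in W$ and $Q$ would not be shortest) — some vertex of $r_1$ is nonadjacent to $q_1$. Analysing the induced configuration on $\{z,q_0,q_1\}$ together with such a vertex should show that neither the ``common source'' nor the ``common sink'' orientation of $z$ over $q_0,\dots,q_k$ is compatible with all four conditions of \cref{thm:nes-and-suf-cond-for-chordal-graph-to-be-an-MEC}; this contradiction establishes (b). Combining (a) and (b): every $x\in V_{\mathcal{C}}\cap(r_1\cup N(r_1,G_1))$ lies in an earlier layer of the LBFS started at $W$ than every $y\in V_{\mathcal{C}}\setminus(r_1\cup N(r_1,G_1))$, so $\tau(x)<\tau(y)$, which is the claim. (The companion \cref{claim:nodes-of-r1-comes-before-others-in-tau} is easier, using only that $\tau$ begins with $W$, which is literally a prefix of $\tau$.)
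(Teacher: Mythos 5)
The paper gives no proof of \cref{claim:nodes-not-in-r1-and-its-neighbor-comes-after-others-in-tau} (it is declared ``intuitive''), so your attempt must be judged on its own. The written part of your argument is sound as far as it goes: part (a) is immediate, the adjacency of $z$ to every vertex of the chordless path $Q$ is exactly the sub-argument of \cref{obs:cp-in-M1-is-a-cp-in-O1}, and the consistency of the orientations of the edges between $z$ and the $q_j$ follows from the chain-graph condition as you say. (A secondary issue: the assertion that an LBFS ordering beginning with $W$ visits vertices in non-decreasing distance from $W$ is a property of genuine lexicographic BFS, not of the weaker \cref{def:LBFS}, which only requires earlier neighbours to form a clique; your layering argument needs the stronger property.)

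The fatal problem is that your step (b) --- and therefore the contradiction you defer to the end --- is false, so the gap you flag cannot be closed. Take $G_1$ with vertices $p,z,q_0,q_1,x$ and edges $p-z$, $p-q_0$, $z-q_0$, $z-q_1$, $z-x$, $q_0-q_1$, $q_1-x$; this is chordal and $r_1=\{p,z,q_0\}$ is a maximal clique. Let $M_1$ have directed edges $q_0\rightarrow z$, $q_1\rightarrow z$, $x\rightarrow z$, $z\rightarrow p$, $q_0\rightarrow p$ and undirected edges $q_0-q_1$, $q_1-x$. All four conditions of \cref{thm:nes-and-suf-cond-for-chordal-graph-to-be-an-MEC} hold: the \uccs{} are $\{q_0,q_1,x\}$, $\{z\}$, $\{p\}$; neither $z$ nor $p$ has an undirected edge, so no induced $a\rightarrow b-c$ occurs; $q_0\rightarrow z$ and $x\rightarrow z$ are protected by the v-structure $q_0\rightarrow z\leftarrow x$ (\cref{fig:strongly-protected-edge}.b), $q_1\rightarrow z$ by \cref{fig:strongly-protected-edge}.d (with $q_0$ and $x$ as the two non-adjacent undirected neighbours of $q_1$ pointing into $z$), $z\rightarrow p$ by $q_1\rightarrow z\rightarrow p$ (\cref{fig:strongly-protected-edge}.a), and $q_0\rightarrow p$ by $q_0\rightarrow z\rightarrow p\leftarrow q_0$ (\cref{fig:strongly-protected-edge}.c). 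Indeed $M_1$ is the MEC of the DAG $q_0\rightarrow q_1\rightarrow x$, $q_0\rightarrow z$, $q_1\rightarrow z$, $x\rightarrow z$, $z\rightarrow p$, $q_0\rightarrow p$, whose unique v-structure is $q_0\rightarrow z\leftarrow x$. Here $\mathcal{C}$ is the undirected path $q_0-q_1-x$, $W=V_{\mathcal{C}}\cap r_1=\{q_0\}$, and $x\in N(r_1,G_1)$ because $x-z\in G_1$, yet $x$ is at distance $2$ from $W$ in $\mathcal{C}$: this is precisely your ``common sink'' case $q_j\rightarrow z$ for all $j$, realized by a legitimate MEC, so no contradiction with \cref{thm:nes-and-suf-cond-for-chordal-graph-to-be-an-MEC} is available. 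Worse, appending a pendant vertex $y$ with the single undirected edge $q_1-y$ keeps $M_1$ an MEC, places $x$ and $y$ in the same LBFS layer with identical labels, and the valid LBFS ordering $(q_0,q_1,y,x)$ then has $\tau(y)<\tau(x)$ while $x\in N(r_1,G_1)$ and $y\notin r_1\cup N(r_1,G_1)$. So not only does your route fail: the claim as stated is not a consequence of the stated hypotheses on $\tau$ at all, and any repair must either constrain the tie-breaking of the LBFS or argue differently.
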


The following claim comes from the implementation of step \ref{step-2-b-of-construction-of-M} of the construction of $M$.

\begin{claim}
    \label{claim:LBFS-ordering-for-directed-edges}
    For $u-v \in \mathcal{C}$, if $v \notin r_1\cup N(r_1, G_1)$, and $\tau(u) < \tau(v)$ then $v\rightarrow u\notin M$.
\end{claim}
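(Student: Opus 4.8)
The plan is to follow the edge $\{u,v\}$ through the two stages of the construction of $M$ (\cref{step-1-of-construction-of-M,step-2-of-construction-of-M}) and check that it never receives the orientation $v\rightarrow u$. The first thing I would establish is a list of structural facts about $v$ that follow from the hypothesis $v\notin r_1\cup N(r_1,G_1)$ together with the separator assumption $I=r_1\cap r_2$. Since $\mathcal{C}$ is an \ucc{} of $M_1$, we have $v\in V_{G_1}$; since $v\notin r_1\supseteq I$, we get $v\notin I$, hence $v\notin V_{G_2}$ (otherwise $v\in V_{G_1}\cap V_{G_2}=I$). I would then note $v\notin V_O=r_1\cup r_2\cup N(r_1\cup r_2,G)$: indeed $v\notin r_1$; $v\notin r_2\subseteq V_{G_2}$; $v\notin N(r_1,G)$ (since $v\in V_{G_1}$, being a neighbour of $r_1$ in $G$ would make $v$ a neighbour of $r_1$ in the induced subgraph $G_1$, contrary to hypothesis); and $v\notin N(r_2,G)$ (a neighbour of $v$ in $r_2$ would lie either in $r_2\setminus I\subseteq V_{G_2}\setminus I$, impossible since $v\in V_{G_1}\setminus I$ and $I$ is a separator, or in $I\subseteq r_1$, again forcing $v\in N(r_1,G)$).

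Next I would handle the initialization \cref{step-1-of-construction-of-M}: there $v\rightarrow u$ can enter $M=U_M(M_1,M_2,O)$ only if $v\rightarrow u$ is a directed edge of $M_1$, of $M_2$, or of $O$ (\cref{def:Markov-union-of-graphs}). It is not a directed edge of $M_1$, because $u-v\in\mathcal{C}$ is an undirected edge of $M_1$; and it is not an edge of $M_2$ or of $O$ at all, since $v\notin V_{M_2}=V_{G_2}$ and $v\notin V_O$ by the previous paragraph. So after \cref{step-1-of-construction-of-M} the edge joining $u$ and $v$ in $M$ is the undirected edge $u-v$.

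It remains to argue that \cref{step-2-of-construction-of-M} never adds $v\rightarrow u$. By inspection, \cref{step-2-b-of-construction-of-M} only ever replaces an undirected edge $u'-v'$ by $u'\rightarrow v'$, for the ordering $\tau'$ of the component currently being processed with $\tau'(u')<\tau'(v')$; it never orients an edge from the larger $\tau'$-value to the smaller one. The edge $\{u,v\}$ is an undirected edge of $\mathcal{C}$; since distinct \uccs{} of an MEC are vertex-disjoint it belongs to no other \ucc{} of $M_1$, and it belongs to no \ucc{} of $M_2$ either, as that would force $v\in V_{G_2}$. Hence the only iteration of \cref{step-2-of-construction-of-M} that can touch $\{u,v\}$ is the one that processes $\mathcal{C}$ with the fixed ordering $\tau$; there, because $\tau(u)<\tau(v)$ and $v\in V_{\mathcal{C}}\setminus(r_1\cup N(r_1,G_1))$, the only orientation that can be added to $\{u,v\}$ is $u\rightarrow v$. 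Combining the two stages, $v\rightarrow u$ is never added to $M$, so $v\rightarrow u\notin M$. The only part that needs care is the bookkeeping in the first paragraph — in particular verifying $v\notin V_O$ — while the remaining steps are a direct reading of the construction of $M$.
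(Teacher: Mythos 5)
Your proof is correct and follows essentially the same route as the paper: rule out the orientation $v\rightarrow u$ at the Markov-union step and then at the LBFS-orientation step, using $\tau(u)<\tau(v)$ for the latter. The only cosmetic difference is that you re-derive inline (via $v\notin V_O$ and $v\notin V_{G_2}$) what the paper obtains by citing \cref{obs:for-u-v-in-C-if-u->v-in-M-then-either-step-1-or-step-2-obeyed} and \cref{obs:undirected-in-M_1-and-directed-in-M-implies-directed-in-O}.
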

\begin{proof}
Contrary to \cref{claim:LBFS-ordering-for-directed-edges}, suppose $v\rightarrow u \in M$.
From \cref{obs:for-u-v-in-C-if-u->v-in-M-then-either-step-1-or-step-2-obeyed}, either $v\rightarrow u$ has been added to $M$ at step \ref{step-1-of-construction-of-M} of the construction of $M$, or at step \ref{step-2-of-construction-of-M} of the construction of $M$ when $\mathcal{C}$ has been considered.

If $v\rightarrow u$ is added to $M$ at step \ref{step-1-of-construction-of-M} of the construction of $M$ then $v\rightarrow u \in U_M(M_1, M_2, O)$. From \cref{obs:undirected-in-M_1-and-directed-in-M-implies-directed-in-O}, $v\rightarrow u \in O$. This implies $v,u \in V_{M_1} \cap V_O = r_1\cup N(r_1, G_1)$. But, this is a contradiction, as we have given $v \notin r_1\cup N(r_1, G_1)$. This implies $v\rightarrow u$ is not added to $M$ at step \ref{step-1-of-construction-of-M} of the construction of $M$.

Since $\tau(u) < \tau(v)$, the precondition of step \ref{step-2-b-of-construction-of-M} implies $v\rightarrow u$ cannot be added to $M$ at step \ref{step-2-of-construction-of-M}. 

The above discussion implies $v\rightarrow u \notin M$. This completes the proof of \cref{claim:LBFS-ordering-for-directed-edges}.
\end{proof}
W.l.o.g., let us assume that $w$ has the highest rank in $\tau$ among the nodes in the directed cycle $(u,v,w,u)$, i.e., $\tau(u)< \tau(w)$ and $\tau(v) < \tau(w)$.

We first show that all the edges of the cycle $(u,v,w,u)$ are not directed in $M$. All the edges of the cycle $(u,v,w,u)$ are directed in $M$ implies $u\rightarrow v, v\rightarrow w, w\rightarrow u\in M$.
Since $\tau(u) < \tau(w)$, from \cref{claim:LBFS-ordering-for-directed-edges}, $w\rightarrow u \notin M$. This implies all three edges of the cycle are not directed in $M$. 

We now show that no two edges of the cycle $(u,v,w,u)$ are directed.
Since $w\rightarrow v \notin M$ (discussed in the previous paragraph), if two edges are directed then it must be $u\rightarrow v$ and $v\rightarrow w$. 
From \cref{obs:for-u-v-in-C-if-u->v-in-M-then-either-step-1-or-step-2-obeyed}, the two directed edges must have been added to $M$ either at step \ref{step-1-of-construction-of-M} or at step \ref{step-2-of-construction-of-M} when $\mathcal{C}$ has been considered.
Since $w$ is of highest rank in $\tau$ among the nodes in the directed cycle, from
\Cref{claim:one-node-of-the-cycle-is-not-in-r1-and-its-neighbors,claim:nodes-not-in-r1-and-its-neighbor-comes-after-others-in-tau}, $w \in V_{\mathcal{C}}\setminus{r_1\cup N(r_1, G_1)}$.
This implies that while running step \ref{step-2-b-of-construction-of-M}, $u\rightarrow w$ gets added to $M$ due to the preconditions of step \ref{step-2-b-b-of-construction-of-M} get satisfied. This implies $(u,v,w,u)$ is not a cycle in $M$, a contradiction. Therefore, no two edges of the cycle $(u,v,w,u)$ are directed. 

We now show that not even a single edge of cycle $(u,v,w,u)$ of $M$ is directed. Since $w\rightarrow v \notin M$, if an edge of the cycle is directed then there are two possibilities: either $u\rightarrow v \in M$ (that means $u\rightarrow v-w-u$ is a directed cycle in $M$) or $v\rightarrow w \in M$ (that means $u- v\rightarrow w-u$ is a directed cycle in $M$). The following claims show that both possibilities do not occur.

\begin{claim}
    \label{claim:possibility-1}
    There do not exist nodes $u,v,w \in \mathcal{C}$ such that $u\rightarrow v -w-u\in M$, $w \notin r_1\cup N(r_1, G_1)$, and rank of $w$ in $\tau$ is highest among the nodes in the cycle, where $\tau$ is the LBFS ordering of $\mathcal{C}$ that we get at step \ref{step-2-a-of-construction-of-M} of the construction of $M$.
\end{claim}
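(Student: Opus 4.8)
The plan is a proof by contradiction. Suppose nodes $u, v, w \in \mathcal{C}$ exist with $u \to v$, $v - w$, and $w - u$ all in $M$, with $w \notin r_1 \cup N(r_1, G_1)$ and $\tau(w)$ the largest of $\tau(u),\tau(v),\tau(w)$. Since $u,v,w$ lie in one undirected connected component $\mathcal{C}$ of $M_1$, all three edges are undirected in $M_1$; in particular $u-v \in M_1$, so the directed edge $u \to v$ was introduced by the construction of $M$. By \cref{claim:no-node-of-cycle-is-in-r_1} none of $u,v,w$ lies in $r_1$, hence $u-v$ is not an edge of $M_2$, so $u \to v$ could only have been created at step~1 (the Markov union) or at step~2b while $\mathcal{C}$ was being processed. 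I would split on these two cases; in both, the goal is to force a directed orientation of $v-w$ or of $w-u$, which contradicts the fact already established in the proof of \cref{claim:no-directed-cycle-with-nodes-in-ucc-with-ri} that at most one edge of the cycle $(u,v,w,u)$ is directed in $M$.

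In the step~2b case, $v$ must be the target of the orientation, so $\tau(u)<\tau(v)$, $v \notin r_1 \cup N(r_1, G_1)$, and one of conditions (i)/(ii) of step~2b is witnessed by some $x$ with $\tau(x) < \tau(v) < \tau(w)$: either $x \to u \in M$ with $x$ not adjacent to $v$, or $u \to x \to v \in M$ forming a triangle with $u-v$. If $x$ is not adjacent to $w$, then the edge $w-u$ (respectively $w-v$) is eligible for step~2b — its higher endpoint $w$ lies outside $r_1 \cup N(r_1,G_1)$ and $\tau(x)<\tau(w)$ — and $x \to u - w$ (respectively $x \to v - w$) is exactly the induced configuration of condition (i) for that edge, so step~2b must orient it $u \to w$ (respectively $v \to w$), a contradiction. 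If instead $x$ is adjacent to $w$, then $\{u,v,w,x\}$ induces a $K_4$ minus the single edge $xv$; here one cannot conclude immediately, and I would instead trace back how $x \to u$ itself was oriented (again by step~1 or by a step~2b witness of strictly smaller $\tau$-rank), inducting on $\tau$-rank, and combine the resulting structure with chordality of $\mathcal{C}$ and \cref{claim:LBFS-ordering-for-directed-edges} to again force an orientation of $w-u$ or $v-w$. In the step~1 case, $u \to v \in U_M(M_1,M_2,O)$, so \cref{obs:undirected-in-M_1-and-directed-in-M-implies-directed-in-O} gives $u \to v \in O$; hence $u,v$ lie in $V_{O_1} = r_1 \cup N(r_1,G_1)$ and $u-v \in O_1$, and since $O \in \mathcal{E}(O_1,O_2)$, \cref{item-3-of-def:extension} of \cref{def:extension} forces $u \to v$ to be strongly protected in $O$. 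I would then read off a witness $z$ of one of the four configurations of \cref{fig:strongly-protected-edge} in $O$ (so $z \in r_1 \cup N(r_1,G_1)$, in particular $z \ne w$) and run the same ``non-adjacent witness forces an orientation / adjacent witness handled by chordality and $\tau$-induction'' dichotomy with $z \to u$ or $z \to v$ in place of the step~2b witness.

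The routine parts are the bookkeeping: that a neighbour of $u$ or of $v$ lying in $V_{G_2}\setminus I$ cannot occur (since $I$ separates and $u,v \notin I$), that $w \notin V_O$, and that a step~1-directed edge incident to $\mathcal{C}$ lands in $O$. The delicate part, which I expect to be the main obstacle, is the ``adjacent witness'' sub-cases: there chordality alone does not close the argument, and one must argue carefully along the LBFS ordering $\tau$ and the semantics of step~2b (applying its rules to a fixed point over the edges of $\mathcal{C}$) that the chain of orientations producing $u \to v$ genuinely propagates to an orientation of $w-u$ or $v-w$, and in particular that the triggering configuration is never dismantled. Making this cascade rigorous, consistently with \cref{claim:nodes-not-in-r1-and-its-neighbor-comes-after-others-in-tau} and \cref{claim:LBFS-ordering-for-directed-edges}, is where the real work lies.
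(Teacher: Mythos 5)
Your overall strategy (contradiction, case split on whether $u\rightarrow v$ arose at step \ref{step-1-of-construction-of-M} or step \ref{step-2-b-of-construction-of-M}, reduce to a witness configuration, then force an orientation of $v-w$ or $w-u$) matches the paper's, and your handling of the ``non-adjacent witness'' sub-case is essentially the paper's argument. But the ``adjacent witness'' sub-case, which you yourself flag as the main obstacle and leave open with a sketch of an induction on $\tau$-rank, is a genuine gap, and the paper closes it with two ideas that your proposal is missing. First, for the configuration $x\rightarrow u\rightarrow v$ (an induced subgraph, so $x$ is \emph{not} adjacent to $v$), the adjacent-witness case cannot occur at all: if $x-w\in\mathcal{C}$, then since $\tau(x)<\tau(w)$ and $\tau(v)<\tau(w)$ and both $x$ and $v$ are $\tau$-earlier neighbours of $w$, the LBFS clique property (\cref{def:LBFS}) forces $x-v\in\mathcal{C}$, contradicting inducedness. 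So your ``$K_4$ minus $xv$'' scenario is impossible for that configuration, and no trace-back is needed.

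Second, for the configuration $u\rightarrow x\rightarrow v\leftarrow u$, the witness $x$ \emph{is} adjacent to $u$ and $v$, and the paper shows $x-w\in M$ and $\tau(x)<\tau(w)$ must hold; the contradiction then comes not from chordality or from tracing how $x\rightarrow v$ was oriented, but from a minimality choice made at the outset: among all $y\in V_{\mathcal{C}}$ with $\tau(y)<\tau(w)$ and $u\rightarrow y-w-u\in M$ (the set $Y$), one first replaces $v$ by an element of $Y$ that has no incoming $M$-edge from any other element of $Y$ (such an element exists, else $Y$ would contain a directed cycle that descends to a directed cycle in the partial MEC $O$). The witness $x$ then lies in $Y$ and satisfies $x\rightarrow v\in M$, contradicting that choice. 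Without this preliminary normalisation of $v$, your proposed $\tau$-rank induction has no clear decreasing quantity (the witness $x$ of the $u\rightarrow x\rightarrow v\leftarrow u$ configuration need not have smaller rank than $u$ or be ``earlier'' in any sense that the step-\ref{step-2-b-of-construction-of-M} semantics respects), so as written the argument does not close.
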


\begin{claim}
    \label{claim:possibility-2}
    There do not exist nodes $u,v,w \in \mathcal{C}$ such that  $u-v\rightarrow w-u \in M$, and rank of $w$ in $\tau$ is highest among the nodes in the cycle, where $\tau$ is the LBFS ordering of $\mathcal{C}$ that we get at step \ref{step-2-a-of-construction-of-M} of the construction of $M$. 
\end{claim}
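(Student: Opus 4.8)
The plan is to argue by contradiction, assuming $u,v,w\in\mathcal{C}$ with $u-v\rightarrow w-u\in M$ and $\tau(w)$ maximal among $\{u,v,w\}$; together with \cref{claim:possibility-1} (handled by the mirror-image argument sketched below), this rules out the only surviving shape of a length-three directed cycle and thereby completes the proof of \cref{claim:no-directed-cycle-with-nodes-in-ucc-with-ri}, hence that $M$ is a chain graph. Throughout I take $i=1$, so $\mathcal{C}$ is the \ucc{} of $M_1$ being processed at \cref{step-2-of-construction-of-M} of the construction of $M$, and $\tau$ is the LBFS ordering produced at \cref{step-2-a-of-construction-of-M}.

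First I would pin down the position of $w$. By \cref{claim:one-node-of-the-cycle-is-not-in-r1-and-its-neighbors} at least one of $u,v,w$ lies outside $r_1\cup N(r_1,G_1)$, and by \cref{claim:nodes-not-in-r1-and-its-neighbor-comes-after-others-in-tau} every such node follows, in $\tau$, every node of the cycle that lies in $r_1\cup N(r_1,G_1)$; since $w$ has the largest $\tau$-value this forces $w\notin r_1\cup N(r_1,G_1)$, and of course $\tau(u)<\tau(w)$ and $\tau(v)<\tau(w)$. Next I would locate where the arrow $v\rightarrow w$ was created. As $v$ and $w$ lie in the \ucc{} $\mathcal{C}$ of $M_1$, the edge $v-w$ is undirected in $M_1$, so by \cref{obs:for-u-v-in-C-if-u->v-in-M-then-either-step-1-or-step-2-obeyed} the arrow $v\rightarrow w$ was added either at \cref{step-1-of-construction-of-M} or at \cref{step-2-b-of-construction-of-M} during the processing of $\mathcal{C}$. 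If it had been added at \cref{step-1-of-construction-of-M}, then $v\rightarrow w\in U_M(M_1,M_2,O)$, and since $v-w\in M_1$, \cref{obs:undirected-in-M_1-and-directed-in-M-implies-directed-in-O} gives $v\rightarrow w\in O$, whence $w\in V_{M_1}\cap V_O=r_1\cup N(r_1,G_1)$ --- contradicting the previous step. Hence $v\rightarrow w$ was produced by \cref{step-2-b-of-construction-of-M} applied to the edge $\{v,w\}$ (legitimately, since $\tau(v)<\tau(w)$ and $w\notin r_1\cup N(r_1,G_1)$), so at that moment one of its preconditions \cref{step-2-b-a-of-construction-of-M} or \cref{step-2-b-b-of-construction-of-M} held, witnessed by some $z\in\mathcal{C}$ with $\tau(z)<\tau(w)$: either $z\rightarrow v$ with $z$ non-adjacent to $w$, or $v\rightarrow z\rightarrow w$ with $\{v,z,w\}$ a triangle.

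The heart of the argument is then a case analysis on this witness. In each case I would examine the adjacency of $z$ to $u$ and use that $\mathcal{C}$ is chordal (an \ucc{} of the MEC $M_1$, by \cref{item-2-theorem-nec-suf-cond-for-MEC} of \cref{thm:nes-and-suf-cond-for-chordal-graph-to-be-an-MEC}) to control the induced configuration on $\{u,v,w,z\}$: these four nodes already carry the edges $u-v$, $w-u$, $v-w$ together with $z-v$ (and, in the second case, $z-w$ as well), so chordality restricts which chords among them can be absent. The goal is to show that, no later than when the edge $\{u,w\}$ (or $\{u,v\}$) is reached inside the same pass over $\mathcal{C}$, one of the preconditions \cref{step-2-b-a-of-construction-of-M}, \cref{step-2-b-b-of-construction-of-M} becomes satisfied for that edge --- taking $v$ or $z$ as the new witness and using $\tau(u)<\tau(w)$, $\tau(v)<\tau(w)$, $\tau(z)<\tau(w)$ and $w\notin r_1\cup N(r_1,G_1)$ --- so that \cref{step-2-b-of-construction-of-M} replaces $u-w$ (or $u-v$) by an arrow. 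This contradicts the hypothesis that $u-v$ and $w-u$ remain undirected in $M$, because \cref{step-2-of-construction-of-M} only ever turns undirected edges into arrows. In the sub-cases where a bad adjacency of the witness ($z$ adjacent to $u$, or to $w$ with the wrong orientation) blocks the direct transfer, I would use a chord supplied by chordality of $\mathcal{C}$ to reroute the same bookkeeping to a different edge of the configuration, or to expose an earlier-created arrow contradicting $\tau(w)$ being maximal. \Cref{claim:possibility-1}, whose hypothesis instead places the single arrow at $u\rightarrow v$, follows from the reflected version of this argument, now transferring a \cref{step-2-b-of-construction-of-M} witness onto the edge $\{u,w\}$ through $v$.

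The main obstacle, I expect, is not any single deduction but the stateful character of \cref{step-2-of-construction-of-M}: the preconditions \cref{step-2-b-a-of-construction-of-M}/\cref{step-2-b-b-of-construction-of-M} speak about $M$ as it stands when a given edge is processed, not about the final $M$, so the argument must order the orientations by $\tau$ and check that the orienting configuration for $\{u,w\}$ (or $\{u,v\}$) is already present when that edge comes up. Keeping this sequencing consistent while simultaneously ruling out the witness's unwanted chords via chordality of $\mathcal{C}$ is the delicate part; once the bookkeeping is in place, each individual case closes by the short contradiction ``\cref{step-2-b-of-construction-of-M} would have oriented an edge that the hypothesis keeps undirected''.
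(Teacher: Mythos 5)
Your setup matches the paper's: you correctly place $w$ outside $r_1\cup N(r_1,G_1)$ with $\tau(u),\tau(v)<\tau(w)$, correctly rule out that $v\rightarrow w$ was created at \cref{step-1-of-construction-of-M} (via \cref{obs:for-u-v-in-C-if-u->v-in-M-then-either-step-1-or-step-2-obeyed,obs:undirected-in-M_1-and-directed-in-M-implies-directed-in-O}), and correctly extract the two witness configurations from the preconditions of \cref{step-2-b-of-construction-of-M}; this is exactly the paper's \cref{claim:if-v->w-in-M-then-it-is-sp-using-a-or-c}. The endgame, however, is left as a plan, and the plan as stated has a genuine gap. In each witness case the paper first forces the witness $x$ (your $z$) to be adjacent to $u$ in $\mathcal{C}$ and then splits on the orientation of that edge in $M$. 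Two of the three orientations close the way you describe: one makes a precondition of \cref{step-2-b-a-of-construction-of-M} or \cref{step-2-b-b-of-construction-of-M} fire for the edge $\{u,w\}$, and one produces a length-three cycle with \emph{two} directed edges, which is excluded by an argument given earlier inside the proof of \cref{claim:no-directed-cycle-with-nodes-in-ucc-with-ri} (a fact you never invoke). But the third orientation --- the chord stays \emph{undirected} --- does not let you orient anything and does not contradict the maximality of $\tau(w)$ among $\{u,v,w\}$; it merely hands you a new length-three cycle with exactly one directed edge whose highest-rank node has a strictly smaller $\tau$-value than $w$. The paper closes this branch only because it chose the counterexample with \emph{least} $\tau(w)$ at the outset and because \cref{claim:possibility-1} (the other single-arrow shape) is already available: the new cycle either has the \cref{claim:possibility-1} shape, which is impossible, or has the \cref{claim:possibility-2} shape with a smaller apex, contradicting minimality. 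Your proposal contains neither the minimal-counterexample choice nor the use of \cref{claim:possibility-1} inside this proof; ``exposing an earlier-created arrow contradicting $\tau(w)$ being maximal'' is not a mechanism available here, since nothing about the new cycle violates the maximality of $\tau(w)$ within the original triple.

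A secondary inaccuracy: the chords you need are supplied by the LBFS clique property of $\tau$ (\cref{def:LBFS}: two $\tau$-earlier neighbours of $w$ in $\mathcal{C}$ are adjacent), not by chordality of $\mathcal{C}$ alone; and \cref{claim:possibility-1} is not simply the ``reflected version'' of this argument --- the paper proves it by a different extremal device (choosing $v$ from a set $Y$ so that no $y\in Y$ satisfies $y\rightarrow v\in M$) and then uses it as an ingredient in the present proof. So while your skeleton is the right one, the argument does not terminate without the well-founded descent on $\tau(w)$ and the prior availability of \cref{claim:possibility-1}.
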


\Cref{claim:possibility-1,claim:possibility-2} imply that none of the edges of the cycle $(u,v,w,u)$ in $M$ is directed. This achieves our goal of ensuring that $(u, v, u, v)$ does not form a directed cycle in $M$. This completes the proof of \cref{claim:no-directed-cycle-with-nodes-in-ucc-with-ri}.
\end{proof}

\begin{proof}[Proof of \cref{claim:possibility-1}]
    Suppose there exist nodes $u,v,w \in \mathcal{C}$ such that $u\rightarrow v -w-u\in M$, $w \notin r_1\cup N(r_1, G_1)$, and rank of $w$ in $\tau$ is highest among the nodes in the cycle, where $\tau$ is the LBFS ordering of $\mathcal{C}$ that we get at step \ref{step-2-a-of-construction-of-M} of the construction of $M$. Let $Y = \{y: y\in V_{\mathcal{C}}, \tau(y) < \tau(w)$, and $u\rightarrow y -w-u \in M \}$. We pick a $v \in Y$ such that for all $y \in Y$, $y\rightarrow v \notin M$.
    We first show that such $v$ always exists. 

    Suppose there does not exist a $v \in Y$ such that for all $y\in Y$, $y\rightarrow v \notin M$. Then, there exists a cycle $C_1 = (v_1, v_2, \ldots, v_m, v_{m+1} = v_1)$ in $M$ such that for $1\leq i \leq m$, $v_i \in Y$, and $v_i\rightarrow v_{i+1} \in M$. Since nodes in $Y$ are in $V_{\mathcal{C}}$, we pick the node $v_j$ among the nodes in the cycle $C_1$ which has the highest rank in $\tau$. 
    This implies $\tau(v_j) > \tau(v_{j+1})$. But, then, from \cref{claim:LBFS-ordering-for-directed-edges}, $v_j \in r_1\cup N(r_1, G_1)$, otherwise, we do not have $v_j \rightarrow v_{j+1} \in M$. Since $v_j$ is the highest rank node among the nodes in $C_1$, from \cref{claim:nodes-not-in-r1-and-its-neighbor-comes-after-others-in-tau}, each node of the cycle is in $r_1\cup N(r_1, G_1)$. This implies for $1\leq i \leq m$, $u_i-u_{i+1} \in \mathcal{C}$ does not get directed at step \ref{step-2-of-construction-of-M} of the construction of $M$. Then, from \cref{obs:for-u-v-in-C-if-u->v-in-M-then-either-step-1-or-step-2-obeyed,obs:undirected-in-M_1-and-directed-in-M-implies-directed-in-O}, for $1\leq i \leq m$, $u_i\rightarrow u_{i+1} \in O$. This implies $O$ has a directed cycle, a contradiction, as $O$ is a partial MEC (\cref{def:partial-MEC}). This implies our assumption is not correct, and $v$ always exists. 
    
    The following claim provides us the potential induced subgraph of $M$ that contains  $u\rightarrow v$.
    \begin{claim}
    \label{claim:if-u->v-in-M-then-it-is-sp-using-a-or-c}
For some $x \in V_{\mathcal{C}}$, $u\rightarrow v$ is part of an induced subgraph $x\rightarrow u\rightarrow v$ of $M$, or $u\rightarrow v$ is part of an induced subgraph $u\rightarrow x \rightarrow v \leftarrow u$ of $M$.
\end{claim}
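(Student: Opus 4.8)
The plan is to trace back how the edge $u\rightarrow v$ acquired its orientation in $M$. Since $u-v$ is an (undirected) edge of the \ucc{} $\mathcal{C}$ of $M_1$ and $u\rightarrow v\in M$, \cref{obs:for-u-v-in-C-if-u->v-in-M-then-either-step-1-or-step-2-obeyed} tells us $u-v$ was turned into $u\rightarrow v$ either at \cref{step-1-of-construction-of-M} of the construction of $M$, or at \cref{step-2-of-construction-of-M} during the iteration in which $\mathcal{C}$ itself is processed. I would handle these two cases separately, the second being essentially immediate and the first being the real work.

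If $u-v$ was oriented while $\mathcal{C}$ was being processed at \cref{step-2-b-of-construction-of-M}, then, since the outcome is $u\rightarrow v$ (not $v\rightarrow u$), we must have had $\tau(u)<\tau(v)$ and $v\in V_{\mathcal{C}}\setminus(r_1\cup N(r_1,G_1))$, and one of the two triggering preconditions held in the then-current $M$: either (\cref{step-2-b-a-of-construction-of-M}) there is a vertex $x$ with $x\rightarrow u-v$ an induced subgraph of $M$ and $\tau(x)<\tau(v)$, or (\cref{step-2-b-b-of-construction-of-M}) there is a vertex $x$ with $u\rightarrow x\rightarrow v-u$ an induced subgraph of $M$ and $\tau(x)<\tau(v)$. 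In both sub-cases $\tau(x)$ is defined, so $x\in V_{\mathcal{C}}$. The key observation is that the rest of the construction only adds orientations to undirected edges, never reverses a directed edge, and never creates a new adjacency; hence the non-adjacency of $x$ and $v$ and the directions $x\rightarrow u$, $u\rightarrow x$, $x\rightarrow v$ all survive to the final $M$. Replacing $u-v$ by $u\rightarrow v$ therefore upgrades the first configuration to the induced subgraph $x\rightarrow u\rightarrow v$ (\cref{fig:strongly-protected-edge}.a) and the second to $u\rightarrow x\rightarrow v\leftarrow u$ (\cref{fig:strongly-protected-edge}.c), with $x\in V_{\mathcal{C}}$, which is exactly the claim.

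In the remaining case $u-v$ was oriented at \cref{step-1-of-construction-of-M}, so $u\rightarrow v\in U_M(M_1,M_2,O)$; since $u-v$ is an undirected edge of $\mathcal{C}\subseteq M_1$, \cref{obs:undirected-in-M_1-and-directed-in-M-implies-directed-in-O} gives $u\rightarrow v\in O$, whence $u,v\in V_{M_1}\cap V_O=r_1\cup N(r_1,G_1)=V_{O_1}$ and so $u-v\in O_1$. Then \cref{item-3-of-def:extension} of \cref{def:extension} applies and yields that $u\rightarrow v$ is strongly protected in $O$, and since $O=M[V_O]$ is an induced subgraph of $M$ (\cref{obs:O-is-an-induced-subgraph-of-M}) any protecting configuration in $O$ is also one in $M$. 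The main obstacle is that ``strongly protected in $O$'' only hands us one of the four patterns of \cref{fig:strongly-protected-edge}, whereas we need one of the two transitive-triangle patterns (\cref{fig:strongly-protected-edge}.a or \cref{fig:strongly-protected-edge}.c) with the extra vertex inside $V_{\mathcal{C}}$. I expect to resolve this by re-running the chordal-graph argument of \cref{item-3-of-obs:nes-cond-for-chordal-graph} of \cref{obs:nes-cond-for-chordal-graph}: chordality of $G_1$ lets us replace patterns \cref{fig:strongly-protected-edge}.b and \cref{fig:strongly-protected-edge}.d by a witnessing chordless path all of whose edges lie in $M_1$, hence inside a single \ucc{} of $M_1$; that \ucc{} must be $\mathcal{C}$ because it contains $u$ and $v$, so the penultimate vertex of the path lies in $V_{\mathcal{C}}$ and already realizes pattern \cref{fig:strongly-protected-edge}.a (when the path sits on one side of $u\rightarrow v$) or pattern \cref{fig:strongly-protected-edge}.c (when it closes a chordless cycle through $u$ and $v$), which finishes the proof.
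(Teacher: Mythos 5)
Your overall skeleton matches the paper's: split on whether $u\rightarrow v$ was oriented at \cref{step-1-of-construction-of-M} or at \cref{step-2-of-construction-of-M} via \cref{obs:for-u-v-in-C-if-u->v-in-M-then-either-step-1-or-step-2-obeyed}, and your handling of the step-\ref{step-2-of-construction-of-M} case is correct and is exactly what the paper does. The gap is in the step-\ref{step-1-of-construction-of-M} case, precisely at the point you flag as ``the main obstacle.'' Your proposed fix --- re-running the chordless-path argument of \cref{item-3-of-obs:nes-cond-for-chordal-graph} to ``replace'' patterns \cref{fig:strongly-protected-edge}.b and \cref{fig:strongly-protected-edge}.d by a transitive triangle --- does not apply here. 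That argument converts the triangle-free-path conditions \cref{item-3b-of-lem:sharma2023results} and \cref{item-3c-of-lem:sharma2023results} of \cref{lem:sharma2023results} into strong protection; it says nothing about converting the v-structure-based protection patterns (b) and (d) into patterns (a) or (c), and there is no chordless path to extract from a configuration $u\rightarrow v\leftarrow w$ with $u,w$ non-adjacent. What the paper actually does is show that patterns (b) and (d) \emph{cannot occur}: since $u-v\in M_1$ and the extra witness $w$ (resp.\ $w,w'$) is adjacent to $v$ but not to $u$, \cref{item-3-theorem-nec-suf-cond-for-MEC} of \cref{thm:nes-and-suf-cond-for-chordal-graph-to-be-an-MEC} applied to the MEC $M_1$ forces $w-v\in M_1$, so $w-v-u$ is an undirected induced path in $O_1$ while $w\rightarrow v\leftarrow u$ is a v-structure in $O[V_{O_1}]$, contradicting \cref{item-2-of-def:extension} of \cref{def:extension}. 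The decisive tool is the v-structure-preservation clause of the extension, not chordality.

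A secondary omission: even in the surviving patterns (a) and (c) you still must place the witness $x$ inside $V_{\mathcal{C}}$, which is part of the claim. This again needs an argument in $M_1$ --- for (a), that $x\rightarrow u\in M_1$ is impossible (it would give the forbidden induced subgraph $x\rightarrow u-v$ in $M_1$), hence $x-u\in M_1$ and $x\in V_{\mathcal{C}}$; for (c), that $u-x,x-v\in M_1$ (else a directed cycle of length three in $M_1$). Your sketch treats (a) and (c) as immediately done, so this step is missing as well.
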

\Cref{claim:if-u->v-in-M-then-it-is-sp-using-a-or-c} provides two possible induced subgraphs of $M$ that can contain $u\rightarrow v$. We go through both of them.
\begin{enumerate}
    \item Suppose there exists $x\in V_{\mathcal{C}}$ such that $u\rightarrow v$ is part of an induced subgraph $x\rightarrow u\rightarrow v$ of $M$. We first claim that $\tau(x) < \tau(w)$.

    Suppose $\tau(w) < \tau(x)$. Since $w \in V_{\mathcal{C}}\setminus{r_1\cap N(r_1, G_1)}$, from \cref{claim:nodes-not-in-r1-and-its-neighbor-comes-after-others-in-tau}, $x\in V_{\mathcal{C}}\setminus{r_1\cap N(r_1, G_1)}$. Also, since $\tau(u) < 
    \tau(w)$, we have $\tau(u) < \tau(x)$. But, then, from \cref{claim:LBFS-ordering-for-directed-edges}, $x\rightarrow u \notin M$, a contradiction, as we have assumed $x\rightarrow u\rightarrow v \in M$. This implies $\tau(x) < \tau(w)$.

    We now claim that $x-w\notin \mathcal{C}$. Suppose $x-w \in \mathcal{C}$. As $\tau(x) < \tau(w)$, $\tau(v) < \tau(w)$, and $x-w, v-w \in \mathcal{C}$, from \cref{def:LBFS}, $x-v \in \mathcal{C}$. From the construction of $M$, $\skel{M[V_{\mathcal{C}}]} = \mathcal{C}$. This implies there is an edge between $x$ and $v$ in $M$. This further implies $x\rightarrow u \rightarrow v$ is not an induced subgraph of $M$, contradicting our assumption. Therefore, $x-w \notin 
    \mathcal{C}$.

    As discussed above, $\tau(x) < \tau(w)$, $\tau(u) < \tau(w)$, and $x\rightarrow u \in M$. This satisfies the precondition of step \ref{step-2-b-a-of-construction-of-M} of the construction of $M$ to make $u\rightarrow w \in M$. But, from our assumption $u-w \in M$, a contradiction. This implies this possibility cannot occur.

    \item Suppose there exists $x\in V_{\mathcal{C}}$ such that $u\rightarrow v$ is part of an induced subgraph $u\rightarrow x \rightarrow v \leftarrow u$ of $M$. We first show that $x \in Y$ (Recall $Y = \{y: y\in V_{\mathcal{C}}, \tau(y) < \tau(w)$, and $u\rightarrow y -w-u \in M \}$). To prove that $x\in Y$, we have to show that $\tau(x) < \tau(w)$, and $x-w \in M$.

    Suppose $\tau(w) < \tau(x)$. Since $\tau(v) < \tau(w)$, we have $\tau(v) < \tau(x)$. Also, since $w\notin V_{\mathcal{C}}\cap (r_1\cup N(r_1, G_1))$, and $\tau(w) < \tau(x)$, from \cref{claim:nodes-not-in-r1-and-its-neighbor-comes-after-others-in-tau}, $x \notin V_{\mathcal{C}}\cap (r_1\cup N(r_1, G_1))$. But, then, from \cref{claim:LBFS-ordering-for-directed-edges}, $x\rightarrow v \notin M$, contradicting our assumption that $u\rightarrow x \rightarrow v \leftarrow u \in M$. This implies $\tau(x) < \tau(w)$.

    Suppose $x-w \notin \skel{M}$. Then, $x\rightarrow v-w$ is an induced subgraph of $M$ such that $\tau(x) < \tau(w)$, and $\tau(v) < \tau(w)$. Then, from step \ref{step-2-b-a-of-construction-of-M} of the construction of $M$, we have $v\rightarrow w \in M$, a contradiction, as we have assumed $u\rightarrow v-w-u$ is a directed cycle in $M$. This implies $x-w \in \skel{M}$. This implies either $x\rightarrow w \in M$ or $w\rightarrow x \in M$ or $x-w \in M$. We show that the first two possibility does not occur. If $x\rightarrow w \in M$ then we have a directed cycle $(u, x, w, u)$ with two directed edges. But, we have shown earlier that there cannot exist a directed cycle of length two with two directed edges. This implies $x\rightarrow w \notin M$. Since $\tau(x) < \tau(w)$, and $w \notin r_1\cup N(r_1, G_1)$, from \cref{claim:LBFS-ordering-for-directed-edges}, $w\rightarrow x \notin M$. Thus, the only option that remains is $x-w \in M$.

    The above discussion implies that $x\in Y$. But, from the choice of $v$, there cannot exist any $x \in Y$ such that $x\rightarrow v \in M$, a contradiction. This implies our assumption is wrong, and there does not any $x\in V_{\mathcal{C}}$ such that $u\rightarrow v$ is part of an induced subgraph $u\rightarrow x\rightarrow v\leftarrow u$ in $M$.
\end{enumerate}

The above discussion implies that $u\rightarrow v$ does not obey \cref{claim:if-u->v-in-M-then-it-is-sp-using-a-or-c}, a contradiction.
This implies our assumption is wrong, and there do not exist nodes $u,v,w \in \mathcal{C}$ such that $u\rightarrow v -w-u\in M$. This completes the proof of \cref{claim:possibility-1}.
\end{proof}

\begin{proof}[Proof of \cref{claim:possibility-2}]
    Suppose there exist $u,v,w \in \mathcal{C}$ such that  $u-v\rightarrow w-u \in M$, and rank of $w$ in $\tau$ is highest among the nodes in the cycle, where $\tau$ is the LBFS ordering of $\mathcal{C}$ that we get at step \ref{step-2-a-of-construction-of-M} of the construction of $M$. We pick $u,v, w$ with least $\tau(w)$, i.e., for any other triplet $u',v',w' \in \mathcal{C}$ such that  $u'-v'\rightarrow w'-u' \in M$, and rank of $w'$ in $\tau$ is highest among the nodes in the cycle $(u',v',w',u')$ then $\tau(w) < \tau(w')$.

The following claim provides the potential induced subgraphs of $M$ that contain  $v\rightarrow w$.  
\begin{claim}
    \label{claim:if-v->w-in-M-then-it-is-sp-using-a-or-c}
    For some $x\in V_{\mathcal{C}}$ such that $\tau(x) < \tau(w)$, $v\rightarrow w$ is part of an induced subgraph $x\rightarrow v\rightarrow w$ or $v\rightarrow x \rightarrow w \leftarrow v$ in $M$.
\end{claim}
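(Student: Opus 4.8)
The plan is to trace how the directed edge $v\rightarrow w$ was created during the construction of $M$. First I would record that, since $\skel{M[V_{\mathcal{C}}]}=\mathcal{C}$ (from the construction of $M$) and $v,w\in V_{\mathcal{C}}$, the pair $v-w$ is an edge of $\mathcal{C}$; hence by \cref{obs:for-u-v-in-C-if-u->v-in-M-then-either-step-1-or-step-2-obeyed} the orientation $v\rightarrow w$ entered $M$ either at step~\ref{step-1-of-construction-of-M} or at step~\ref{step-2-of-construction-of-M} during the iteration in which $\mathcal{C}$ is considered. I would also establish up front that $w\notin r_1\cup N(r_1,G_1)$: by \cref{claim:one-node-of-the-cycle-is-not-in-r1-and-its-neighbors} some node $z$ of the directed cycle $(u,v,w,u)$ lies outside $r_1\cup N(r_1,G_1)$, so $z\in V_{\mathcal{C}}\setminus(r_1\cup N(r_1,G_1))$; were $w\in r_1\cup N(r_1,G_1)$, \cref{claim:nodes-not-in-r1-and-its-neighbor-comes-after-others-in-tau} would give $\tau(w)<\tau(z)$, contradicting the hypothesis that $w$ has the largest $\tau$-rank among the cycle's nodes.

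Next I would eliminate step~\ref{step-1-of-construction-of-M}. If $v\rightarrow w$ were added there, then $v\rightarrow w\in U_M(M_1,M_2,O)$; since $\mathcal{C}$ is an \ucc{} of $M_1$ we have $v-w\in M_1$, so \cref{obs:undirected-in-M_1-and-directed-in-M-implies-directed-in-O} yields $v\rightarrow w\in O$, whence $w\in V_{M_1}\cap V_O=r_1\cup N(r_1,G_1)$, contradicting the previous paragraph. Therefore $v\rightarrow w$ was oriented at step~\ref{step-2-of-construction-of-M} while handling $\mathcal{C}$, and the only sub-step that orients an edge is step~\ref{step-2-b-of-construction-of-M}. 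Since that step always directs the processed undirected edge from its lower-$\tau$ endpoint to its higher-$\tau$ endpoint, and $\tau(v)<\tau(w)$ (as $w$ has the highest rank), the edge $v-w$ is handled with $u'=v$, $v'=w$ in the notation of step~\ref{step-2-b-of-construction-of-M}; this also requires $w\in V_{\mathcal{C}}\setminus(r_1\cup N(r_1,G_1))$, which we verified. Consequently at least one of the preconditions \ref{step-2-b-a-of-construction-of-M} or \ref{step-2-b-b-of-construction-of-M} held at that moment: \ref{step-2-b-a-of-construction-of-M} supplies an induced subgraph $x\rightarrow v-w$ in the then-current $M$ with $x\in V_{\mathcal{C}}$ (since $\tau(x)$ is referenced) and $\tau(x)<\tau(w)$; \ref{step-2-b-b-of-construction-of-M} supplies an induced subgraph $v\rightarrow x\rightarrow w-v$ with $x\in V_{\mathcal{C}}$ and $\tau(x)<\tau(w)$.

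Finally I would argue persistence: after $v-w$ is replaced by $v\rightarrow w$, every subsequent action of step~\ref{step-2-of-construction-of-M} only orients an already-present undirected edge, so no edge is added and no directed edge is flipped (in particular $\skel{M}$ is unchanged, cf.\ \cref{obs:skeleton-of-M-is-G}). Hence in the first case $x\rightarrow v\rightarrow w$ remains an induced subgraph of the final $M$ ($x,w$ stay non-adjacent), and in the second case $v\rightarrow x\rightarrow w\leftarrow v$ remains an induced subgraph of the final $M$; in both cases $x\in V_{\mathcal{C}}$ with $\tau(x)<\tau(w)$, which is exactly the conclusion of \cref{claim:if-v->w-in-M-then-it-is-sp-using-a-or-c}. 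The main obstacle I anticipate is not logical depth but careful bookkeeping: aligning the roles of $u,v,w$ in the directed cycle with the roles $u',v',w'$ in step~\ref{step-2-b-of-construction-of-M}, and rigorously justifying that the "induced" status of the two witness subgraphs (especially the non-adjacency of $x$ and $w$) is preserved by the remaining orientation steps.
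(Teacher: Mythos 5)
Your proof is correct and takes essentially the same route as the paper's: it rules out step \ref{step-1-of-construction-of-M} via \cref{obs:for-u-v-in-C-if-u->v-in-M-then-either-step-1-or-step-2-obeyed} and \cref{obs:undirected-in-M_1-and-directed-in-M-implies-directed-in-O} together with $w\notin r_1\cup N(r_1,G_1)$, and then reads the witness $x$ directly off the preconditions \ref{step-2-b-a-of-construction-of-M} and \ref{step-2-b-b-of-construction-of-M} of step \ref{step-2-b-of-construction-of-M}. The paper leaves the inspection of step \ref{step-2-of-construction-of-M} and the persistence of the witnessing induced subgraph implicit, which you spell out explicitly.
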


\Cref{claim:if-v->w-in-M-then-it-is-sp-using-a-or-c} provides two possible induced subgraphs of $M$ that contains $v\rightarrow w$. We go through both of them.
\begin{enumerate}
    \item Suppose there exists $x\in V_{\mathcal{C}}$ such that $\tau(x) < \tau(w)$, and $v\rightarrow w$ is part of an induced subgraph $x\rightarrow v\rightarrow w$ of $M$. We claim that $x-u \in \mathcal{C}$.

    Suppose $x-u \notin \mathcal{C}$. From \cref{obs:for-u-v-in-C-if-u->v-in-M-then-either-step-1-or-step-2-obeyed}, $x\rightarrow v$ has been added to $M$ either at step \ref{step-1-of-construction-of-M} of the construction of $M$, or at step \ref{step-2-of-construction-of-M} of the construction of $M$ when $\mathcal{C}$ has been considered.
    
    Suppose $x\rightarrow v$ has been added to $M$ at step \ref{step-1-of-construction-of-M} of the construction of $M$.
    This implies $x\rightarrow v \in U_M(M_1, M_2, O)$. From \cref{obs:undirected-in-M_1-and-directed-in-M-implies-directed-in-O}, $x\rightarrow v \in O$. Now, if $u \in V_O$ then $x\rightarrow v-u$ is an induced subgraph in $O$, contradicting $O$ as a partial MEC (\cref{def:partial-MEC}). And, if $u\notin V_O$ then at step \ref{step-2-b-a-of-construction-of-M} of the construction of $M$ when $\mathcal{C}$ has been considered, we get $v\rightarrow u \in M$, contradicting $(u,v,w,u)$ is a cycle. Thus, in each case of this possibility, we get a contradiction. We move to the second possibility.

    Suppose $x\rightarrow v$ has been added to $M$ at step \ref{step-2-of-construction-of-M} of the construction of $M$ when $\mathcal{C}$ has been considered. Then, from the preconditions of steps \ref{step-2-b-of-construction-of-M} of the construction of $M$, we have $\tau(x) < \tau(v)$. Then, $\tau(v) < \tau(u)$, otherwise, if $\tau(u) < \tau(v)$ then from \cref{def:LBFS}, $x-u \in \mathcal{C}$, contradicting our assumption. But, if $\tau(v)< \tau(u)$, then at step \ref{step-2-b-a-of-construction-of-M} of the construction of $M$ when $\mathcal{C}$ has been considered, we get $v\rightarrow u \in M$, contradicting $(u,v,w,u)$ is a cycle. 
    
    Thus, in each case of both possibilities, we get a contradiction. This implies $x-u \in \mathcal{C}$.

    From the construction of $M$, $\skel{M[V_{\mathcal{C}}]} = \mathcal{C}$. Therefore, we have three possibilities: either $u\rightarrow x \in M$ or $x\rightarrow u \in M$ or $x-u \in M$. We show that none of the possibilities occurs. 
    
    If $u\rightarrow x \in M$ then we get a directed cycle $(u,x,v,u)$ in $M$ with two directed edges. We have shown earlier that $M$ does not have a directed cycle of length three with two directed edges. This implies $u\rightarrow x\notin M$. 

    Suppose $x\rightarrow u \in M$. Since $x\rightarrow v\rightarrow w$ is an induced subgraph in $M$, $x-w \notin \skel{M}$. This implies $x\rightarrow u-w$ is an induced subgraph in $M$ with $w \notin r_1\cup N(r_1, G)$, $\tau(x)< \tau(w)$, and $\tau(u) < \tau(w)$. This satisfies the precondition of step \ref{step-2-b-a-of-construction-of-M} of the construction of $M$ for adding $u\rightarrow w$. And, therefore, while running step \ref{step-2-of-construction-of-M} of the construction of $M$ for the \ucc{} $\mathcal{C}$, we add $u\rightarrow w$ at step \ref{step-2-b-a-of-construction-of-M}. But, this creates a contradiction, as from our assumption $u-w \in M$ (as $u-v\rightarrow w-u$ is a cycle in $M$).

    If $x-u \in M$ then we get a directed cycle $(x, v, u, x)$ in $M$ such that the rank of all the three nodes of the cycle in $\tau$ is less than $\tau(w)$. From \cref{claim:one-node-of-the-cycle-is-not-in-r1-and-its-neighbors}, one node of the cycle does not belong to $r_1\cup N(r_1, G_1)$. From \cref{claim:nodes-not-in-r1-and-its-neighbor-comes-after-others-in-tau}, among the nodes of the cycle, the node with the highest rank does not belong to $r_1\cup N(r_1, G)$. From \cref{claim:LBFS-ordering-for-directed-edges}, $x$ cannot be the highest rank node. This implies either $u$ or $v$ is the highest rank node. If $u$ is the highest rank node then $M$ contains a cycle $x\rightarrow v-u-x$ such that the rank of all the three nodes of the cycle in $\tau$ is less than $\tau(u)$. \Cref{claim:possibility-1} implies that such a cycle does not exist. 
    And, if $v$ is the highest rank node then we get a directed cycle $u-x\rightarrow v-x$ in $M$ with the highest rank node is $\tau(v)$. Since $\tau(v) < \tau(w)$, this contradicts our choice of $w$. This implies this case also cannot occur. 

    The above discussion implies that in each case of this possibility, we get a contradiction. This implies our assumption is wrong.  We now move to the second possibility of \cref{claim:if-v->w-in-M-then-it-is-sp-using-a-or-c}.

    \item Suppose there exists $x\in V_{\mathcal{C}}$ such that $\tau(x) < \tau(w)$, and $v\rightarrow w$ is part of an induced subgraph $v\rightarrow x \rightarrow w \leftarrow v$ of $M$. 

    Given $\tau(x) < \tau(w)$, $\tau(u) < \tau(w)$, and $x-w, u-w \in \mathcal{C}$, from \cref{def:LBFS}, $x-u \in \mathcal{C}$.
    From the construction of $M$, $\skel{M[V_{\mathcal{C}}]} = \mathcal{C}$. Therefore, we have three possibilities: either $u\rightarrow x \in M$ or $x\rightarrow u \in M$ or $u-x \in M$. We show that none of the possibilities occurs.

    Suppose $u\rightarrow x \in M$. Then, while running step \ref{step-2-of-construction-of-M} when $\mathcal{C}$ has been considered, we add $u\rightarrow w$ in $M$ as all the preconditions of step \cref{step-2-b-b-of-construction-of-M} is obeyed. But, this contradicts our assumption that $(u,v,w,u)$ is a cycle in $M$. This implies $u\rightarrow x \notin M$.

    Suppose $x\rightarrow u \in M$. Then, $M$ has a directed cycle $(v, x, u, v)$ with two directed edges. As discussed earlier, there cannot exist a directed cycle of length three with two directed edges. This implies $x\rightarrow u \notin M$.

    Suppose $u-x \in M$. Then, we get a directed cycle $(u, v, x, u)$ in $M$ such that the rank of all the three nodes of the cycle in $\tau$ is less than $\tau(w)$.
     From \cref{claim:one-node-of-the-cycle-is-not-in-r1-and-its-neighbors}, one node of the cycle does not belong to $r_1\cup N(r_1, G_1)$. 
     From \cref{claim:nodes-not-in-r1-and-its-neighbor-comes-after-others-in-tau}, among the nodes of the cycle, the node with the highest rank does not belong to $r_1\cup N(r_1, G)$.
     From \cref{claim:LBFS-ordering-for-directed-edges}, $u$ cannot be the highest rank node.
     This implies either $u$ or $x$ is the highest rank node.
     If $u$ is the highest rank node then $M$ contains a cycle $v\rightarrow x-u-v$ such that $u$ has the highest rank in $\tau$ among nodes in the cycle. \Cref{claim:possibility-1} implies that such a cycle does not exist. And, if $x$ is the highest rank node then we get a directed cycle $u-v\rightarrow x-u$ in $M$ with the highest rank node being $x$. Since $\tau(x) < \tau(w)$, this contradicts our choice of $w$. This implies this cannot occur.
\end{enumerate}

The above discussion implies $v\rightarrow w$ is part of none of the induced subgraphs mentioned in \cref{claim:if-v->w-in-M-then-it-is-sp-using-a-or-c}. This contradicts \cref{claim:if-v->w-in-M-then-it-is-sp-using-a-or-c}. This implies that our assumption is wrong. This validates \cref{claim:possibility-2}.
\end{proof}

\begin{proof}[Proof of \cref{claim:if-u->v-in-M-then-it-is-sp-using-a-or-c}]
From \cref{obs:for-u-v-in-C-if-u->v-in-M-then-either-step-1-or-step-2-obeyed}, $u\rightarrow v$ has been added to $M$ either at 
step \ref{step-1-of-construction-of-M} of the construction of $M$, or it has been added to $M$ at step \ref{step-2-of-construction-of-M} of the construction of $M$, when $\mathcal{C}$ has been considered.

    If $u\rightarrow v$ is added to $M$ at step \ref{step-2-of-construction-of-M} of the construction of $M$ then $u\rightarrow v$ obeys \cref{claim:if-u->v-in-M-then-it-is-sp-using-a-or-c} (see step \ref{step-2-of-construction-of-M} of the construction of $M$).

    Suppose $u\rightarrow v$ is added to $M$ at step \ref{step-1-of-construction-of-M} of the construction of $M$. This implies $u\rightarrow v \in U_M(M_1, M_2, O)$. From \cref{obs:undirected-in-M_1-and-directed-in-M-implies-directed-in-O}, $u\rightarrow v \in O$. This implies $u,v \in V_{M_1}\cap V_O = V_{O_1} =r_1\cup N(r_1, G_1)$. From \cref{claim:no-node-of-cycle-is-in-r_1}, $u,v \in N(r_1, G_1)\setminus r_1$. Since $u,v \in N(r_1, G_1)\setminus r_1$, and $I = r_1\cap r_2$ is a vertex separator of $G$ that separates $V_{G_1}\setminus I$ and $V_{G_2}\setminus I$, neighbors of $u$ and $v$ belong to $V_{G_1}$.
    Since $u,v \in V_{O_1}$, and $O_1$ is an induced subgraph of $M_1$, $u-v \in O_1$. From \cref{item-3-of-def:extension} of \cref{def:extension}, $u\rightarrow v$ is strongly protected in $O$.  This implies $u\rightarrow v$ is part of one of the induced subgraphs of $O$ as shown in \cref{fig:strongly-protected-edge}.
\begin{enumerate}
    \item Suppose $u\rightarrow v$ is strongly protected in $O$ because it is part of an induced subgraph $x\rightarrow u\rightarrow v$ of $O$ (as shown in \cref{fig:strongly-protected-edge}.a).  From \cref{obs:O-is-an-induced-subgraph-of-M}, $x\rightarrow u \in M$.
    As discussed earlier, the neighbors of $u$ and $v$ are in $V_{G_1}$. This implies $x \in V_{G_1}$.
    From \cref{corr:directed-edge-in-M-is-either-directed-or-undirected-in-M1}, either $x-u \in M_1$ or $x\rightarrow u \in M$.
     From \cref{item-3-theorem-nec-suf-cond-for-MEC} of \cref{thm:nes-and-suf-cond-for-chordal-graph-to-be-an-MEC}, $x\rightarrow u-v$ cannot be an induced subgraph of $M$. Therefore, $x-u \in M_1$. This implies $x \in \mathcal{C}$. This implies $u\rightarrow v$ obeys \cref{claim:if-u->v-in-M-then-it-is-sp-using-a-or-c}.

     \item Suppose $u\rightarrow v$ is strongly protected in $O$ because it is part of an induced subgraph $x\rightarrow v\leftarrow u$ of $O$ (as shown in \cref{fig:strongly-protected-edge}.b). From \cref{obs:O-is-an-induced-subgraph-of-M}, $x\rightarrow v \in M$.
     As discussed earlier, the neighbors of $u$ and $v$ are in $V_{G_1}$. This implies $x \in V_{G_1}$. 
     From \cref{corr:directed-edge-in-M-is-either-directed-or-undirected-in-M1}, either $x-v \in M_1$ or $x\rightarrow v \in M$.
     From \cref{item-3-theorem-nec-suf-cond-for-MEC} of \cref{thm:nes-and-suf-cond-for-chordal-graph-to-be-an-MEC}, $x\rightarrow u-v$ cannot be an induced subgraph of $M$. Therefore, $x-u \in M_1$. Since $u,v, x \in V_{O} \cap V_{M_1} = V_{O_1}$, and $\skel{O_1} = \skel{O[V_{O_1}]}$, $x-v-u$ is an induced subgraph of $O_1$. This implies $\mathcal{V}(O_1)\neq \mathcal{V}(O[V_{O_1}])$ (as $x\rightarrow v \leftarrow u \in O$ and $x-v-u \in O_1$). This contradicts \cref{item-2-of-def:extension} of \cref{def:extension}. This implies this case cannot occur.

     \item Suppose $u\rightarrow v$ is strongly protected in $O$ because it is part of an induced subgraph $u\rightarrow x \rightarrow v\leftarrow u$ of $O$ (as shown in \cref{fig:strongly-protected-edge}.c). From \cref{obs:O-is-an-induced-subgraph-of-M}, $u\rightarrow x \rightarrow v\leftarrow u\in M$.
     As discussed earlier, the neighbors of $u$ and $v$ are in $V_{G_1}$. This implies $x \in V_{G_1}$. 
     From \cref{corr:directed-edge-in-M-is-either-directed-or-undirected-in-M1}, (a) either $u-x \in M_1$ or $u\rightarrow x \in M$, and (b) either $x-v \in M_1$ or $x\rightarrow v \in M_1$. If either $u\rightarrow x \in M_1$ or $x\rightarrow v \in M_1$ then $(u, x, v, u)$ is a directed cycle in $M_1$, contradicting \cref{item-1-theorem-nec-suf-cond-for-MEC} of \cref{thm:nes-and-suf-cond-for-chordal-graph-to-be-an-MEC}. This implies $u-x, x-v \in M_1$, and $x \in \mathcal{C}$. This further implies $u\rightarrow v$ obeys \cref{claim:if-u->v-in-M-then-it-is-sp-using-a-or-c}.

     \item Suppose $u\rightarrow v$ is strongly protected in $O$ because it is part of an induced subgraph $u-x\rightarrow v\leftarrow u-x'\rightarrow v$ of $O$ (as shown in \cref{fig:strongly-protected-edge}.d). From \cref{obs:O-is-an-induced-subgraph-of-M}, $x\rightarrow v \in M$.
     As discussed earlier, the neighbors of $u$ and $v$ are in $V_{G_1}$. This implies $x \in V_{G_1}$. 
     From \cref{corr:directed-edge-in-M-is-either-directed-or-undirected-in-M1}, either $x-v \in M_1$ or $x\rightarrow v \in M$. From \cref{corr:undirected-edge-in-M-is-undirected-in-M1}, $u-x \in M_1$. If $x\rightarrow v \in M_1$ then $(u,x,v,u)$ is a directed cycle in $M_1$, contradicting \cref{item-1-theorem-nec-suf-cond-for-MEC} of \cref{thm:nes-and-suf-cond-for-chordal-graph-to-be-an-MEC}. This implies $x-v \in M_1$. Similarly, $x'-v \in M_1$. Since $u,v, x,x' \in V_{O} \cap V_{M_1} = V_{O_1}$, and $\skel{O_1} = \skel{O[V_{O_1}]}$, $x-v-x'$ is an induced subgraph of $O_1$. This implies $\mathcal{V}(O_1)\neq \mathcal{V}(O[V_{O_1}])$ (as $x\rightarrow v \leftarrow x' \in O$ and $x-v-x' \in O_1$). This contradicts \cref{item-2-of-def:extension} of \cref{def:extension}. This implies this case cannot occur.
\end{enumerate}
The above discussion implies that $u\rightarrow v$ obeys \cref{claim:if-u->v-in-M-then-it-is-sp-using-a-or-c}. This completes the proof of \cref{claim:if-u->v-in-M-then-it-is-sp-using-a-or-c}.    
\end{proof}

\begin{proof}[Proof of \cref{claim:if-v->w-in-M-then-it-is-sp-using-a-or-c}]
    From \cref{obs:for-u-v-in-C-if-u->v-in-M-then-either-step-1-or-step-2-obeyed}, either $v\rightarrow w$ has been added to $M$ at step \ref{step-1-of-construction-of-M} of the construction of $M$, or it has been added to $M$ at step \ref{step-2-of-construction-of-M} of the construction of $M$ when $\mathcal{C}$ has been considered.

    Suppose $v\rightarrow w$ has been added to $M$ at step \ref{step-1-of-construction-of-M} of the construction of $M$. This implies $v\rightarrow w \in U_M(M_1, M_2, O)$. From \cref{obs:undirected-in-M_1-and-directed-in-M-implies-directed-in-O}, $v\rightarrow w \in O$. This implies $w \in V_O\cap V_{M_1} = r_1\cup N(r_1, G_1)$, a contradiction, as we have assumed $w\notin r_1\cup N(r_1, G_1)$. This implies $v\rightarrow w$ has not been added to $M$ at step \ref{step-1-of-construction-of-M} of the construction of $M$.

    If $v\rightarrow w$ is added to $M$ at step \ref{step-2-of-construction-of-M} of the construction of $M$ then $v\rightarrow w$ obeys \cref{claim:if-v->w-in-M-then-it-is-sp-using-a-or-c} (see step \ref{step-2-of-construction-of-M} of the construction of $M$).

    This implies $v\rightarrow w$ obeys \cref{claim:if-v->w-in-M-then-it-is-sp-using-a-or-c}. This completes the proof of \cref{claim:if-v->w-in-M-then-it-is-sp-using-a-or-c}.
\end{proof}

    \begin{proof}[Proof of \cref{obs:ucc-of-M-is-chordal}]
    Suppose there exists an \ucc{} $\mathcal{C}$ of $M$ such that $\mathcal{C}$ is not chordal. Then, there must exist a chordless cycle $C = (u_0, u_1, \ldots, u_l, u_{l+1} = u_0)$ in $\mathcal{C}$. Since $G$ is a chordal graph, $C$ cannot be a chordless cycle in $G$. This implies there must exist an edge $u_i-u_j$ between two non-adjacent nodes of $C$ in $G$. W.l.o.g., we assume $i<j$. 
    Since the skeleton of $M$ is $G$, either $u_i\rightarrow u_j \in M$, or $u_j\rightarrow u_i \in M$, or $u_i-u_j \in M$. 

 If $u_i\rightarrow u_j \in M$ then $(u_0, u_1, \ldots, u_i, u_j, u_{j+1}, \ldots, u_l, u_{l+1} = u_0)$ is a directed cycle in $M$. If $u_j\rightarrow u_i \in M$ then $(u_j, u_i, u_{i+1}, \ldots, u_j)$ is a directed cycle in $M$. Both of them contradict \cref{obs:M-is-a-chain-graph}, which asserts that $M$ is a chain graph.
 
Thus, the only option that remains is $u_i-u_j \in M$. But, then, $C$ is not a chordless cycle in $\mathcal{C}$, contradicting our assumption that $\mathcal{C}$ is not chordal. This completes the proof of \cref{obs:ucc-of-M-is-chordal}.
\end{proof}

\begin{proof}[Proof of \cref{obs:M-does-not-have-a->b-c}]
    Suppose $M$ has an induced subgraph $a\rightarrow b-c$. 
    Since $I = V_{G_1}\cap V_{G_2}$ is a vertex separator of $G$, either $a,b,c \in V_{G_1}$, or $a,b,c \in V_{G_2}$, or $a\in V_{G_1}\setminus I$, $b\in I$ and $c \in V_{G_2}\setminus I$.
    We one by one show that none of the possibilities occurs.

    Suppose $a\rightarrow b-c$ is an induced subgraph of $M$ such that $a,b,c \in V_{G_1}$.
   Since the skeleton of $M_1$ and the skeleton of $M[V_{G_1}]$ both are the same, from \cref{corr:directed-edge-in-M-is-either-directed-or-undirected-in-M1,corr:undirected-edge-in-M-is-undirected-in-M1}, either $a\rightarrow b-c$ or $a-b-c$ is an induced subgraph of $M_1$. Since $M_1$ is an MEC, from \cref{item-3-theorem-nec-suf-cond-for-MEC} of \cref{thm:nes-and-suf-cond-for-chordal-graph-to-be-an-MEC}, $a\rightarrow b-c$ is cannot be an induced subgraph of $M_1$. Thus, the only option that remains is $a-b-c$ is an induced subgraph of $M_1$. This implies $a,b$ and $c$ belong to the same \ucc{} $\mathcal{C}$ of $M_1$. From the construction of $M$, $\skel{M[V_{\mathcal{C}}]}= \mathcal{C}$. This implies $a-b, b-c \in \mathcal{C}$ (there cannot exist a directed edge between two nodes of the same \ucc{} of an MEC, otherwise, we get a directed cycle, contradicting \cref{item-1-theorem-nec-suf-cond-for-MEC} of \cref{thm:nes-and-suf-cond-for-chordal-graph-to-be-an-MEC}). Since $a-b \in \mathcal{C}$, and $a\rightarrow b \in M$, from \cref{claim:ucc-of-Mi-not-in-r1-is-a-ucc-in-M}, $V_{\mathcal{C}} \cap r_1 \neq \emptyset$. From \cref{obs:for-u-v-in-C-if-u->v-in-M-then-either-step-1-or-step-2-obeyed}, $a\rightarrow b$ is added to $M$ either at step \ref{step-1-of-construction-of-M} of the construction of $M$, or at step \ref{step-2-of-construction-of-M} of the construction of $M$ when $\mathcal{C}$ has been considered.

    Suppose $a\rightarrow b$ is added to $M$ at step \ref{step-1-of-construction-of-M} of the construction of $M$. This implies $a\rightarrow b \in U_M(M_1, M_2, O)$. Then, from \cref{obs:undirected-in-M_1-and-directed-in-M-implies-directed-in-O}, $a\rightarrow b \in O$. This implies $a,b \in r_1\cup N(r_1, G_1)$. This further implies $c \notin r_1\cup N(r_1, G_1)$. Otherwise, $a\rightarrow b -c$ is an induced subgraph of $O$, contradicting \cref{def:partial-MEC} (recall that $O$ is a partial MEC). This implies $a,b \in V_{\mathcal{C}}\cap V_{O} = V_{\mathcal{C}}\cap (r_1\cup N(r_1, G_1))$, and $c \in V_{\mathcal{C}}\setminus{r_1\cup N(r_1, G_1)}$. From \cref{claim:nodes-not-in-r1-and-its-neighbor-comes-after-others-in-tau}, $\tau(a) < \tau(c)$, and $\tau(b) < \tau(c)$. Then, at step \ref{step-2-of-construction-of-M} of the construction of $M$, when $\mathcal{C}$ has been considered,  while running step \ref{step-2-b-a-of-construction-of-M}, we get $b\rightarrow c \in M$, a contradiction, since $b-c \in M$. 
    
    Suppose $a\rightarrow b$ is added to $M$ at step \ref{step-2-of-construction-of-M} of the construction of $M$.
    Let $\tau$ be the LBFS ordering of $\mathcal{C}$ returned at step \ref{step-2-a-of-construction-of-M} of the construction of $M$. Then, $\tau(a) < \tau(b)$. There cannot be $\tau(c) < \tau(b)$, otherwise, from \cref{def:LBFS}, we have $a-c \in \mathcal{C}$, which implies $a-b-c$ is not an induced subgraph of $M$, a contradiction. This further implies $\tau(b) < \tau(c)$. Then, while running step \ref{step-2-b-a-of-construction-of-M}, we get $b\rightarrow c \in M$, a contradiction, since $b-c \in M$. In every case, we get a contradiction. This implies that the possibility that $a,b,c, \in V_{G_1}$ cannot occur. 

    Similarly, we can claim that $a,b,c \notin V_{G_2}$. And, if $a=V_{G_1}\setminus{I}$, $b \in I$, and $c\in V_{G_2}\setminus I$ then $a,b,c \in V_O$. From \cref{obs:O-is-an-induced-subgraph-of-M}, $O$ is an induced subgraph of $M$. This implies $a\rightarrow b-c \in O$. But, this is a contradiction, as $O$ is a partial MEC (\cref{def:partial-MEC}). This implies this case also cannot occur. This further implies that there cannot exist an induced subgraph $a\rightarrow b-c$ in $M$. This completes the proof of \cref{obs:M-does-not-have-a->b-c}.
\end{proof}

\begin{proof}[Proof of \cref{obs:directed-edges-of-M-are-sp}]
    Suppose $u\rightarrow v$ is a directed edge in $M$. From the construction of $M$, there are two possibilities: either $u,v \in V_{G_1}$, or $u,v \in V_{G_2}$. W.l.o.g., let us assume $u,v \in V_{G_1}$.

    Suppose $u,v \in V_{G_1}$ and $u\rightarrow v \in M$. From \cref{corr:directed-edge-in-M-is-either-directed-or-undirected-in-M1}, either $u\rightarrow v \in M_1$ or $u-v \in M_1$. We show that in both cases $u\rightarrow v$ is strongly protected in $M$.

    \begin{enumerate}
        \item Suppose $u\rightarrow v \in M_1$. According to \cref{item-4-theorem-nec-suf-cond-for-MEC} of \cref{thm:nes-and-suf-cond-for-chordal-graph-to-be-an-MEC}, $u\rightarrow v$ is strongly protected in $M_1$. From \cref{obs:directed-edge-in-M1-is-directed-in-M}, a directed edge in $M_1$ is a directed edge in $M$. Now, let's consider the induced subgraphs shown in (a), (b), (c), and (d) of \cref{fig:strongly-protected-edge}.

Suppose $u\rightarrow v$ is strongly protected in $M_1$ due to its inclusion in an induced subgraph shown in (a), (b), or (c) of \cref{fig:strongly-protected-edge}. From the construction of $M$,  $\skel{M_1} = \skel{M[V_{M_1}]}$. And, from \cref{obs:directed-edge-in-M1-is-directed-in-M}, a directed edge in $M_1$ is a directed edge in $M$. This implies that the induced subgraph is also an induced subgraph of $M$. This further implies $u\rightarrow v$ is also strongly protected in $M$.

Now, let's focus on the case where $u\rightarrow v$ is strongly protected in $M_1$ because it is part of the induced subgraph shown in \cref{fig:strongly-protected-edge}-(d). In this induced subgraph, the edges $w\rightarrow v$, $w'\rightarrow v$, and $u\rightarrow v$ are directed, indicating that $w\rightarrow v$, $w'\rightarrow v$, and $u\rightarrow v$ are also present in $M$ (from \cref{obs:directed-edge-in-M1-is-directed-in-M}).
To establish the strong protection of $u\rightarrow v$ in $M$, we consider different cases. If $w-u$ and $w'-u$ are present in $M$, then $u\rightarrow v$ is part of the induced subgraph shown in \cref{fig:strongly-protected-edge}-(d), making it strongly protected in $M$. If $u\rightarrow w$ or $u\rightarrow w'$ exists in $M$, then again $u\rightarrow v$ is part of an induced subgraph, either as shown in \cref{fig:strongly-protected-edge}-(c) with $u\rightarrow w\rightarrow v\leftarrow u$, or as shown in \cref{fig:strongly-protected-edge}-(c) with $u\rightarrow w'\rightarrow v\leftarrow u$. In both cases, $u\rightarrow v$ is strongly protected in $M$.

The remaining possibility is that both $w\rightarrow u$ and $w'\rightarrow u$ are present in $M$. We show that this case cannot occur.
Since $w-u-w'$ is present in $M_1$, it implies that $w$, $u$, and $w'$ belong to the same undirected connected component $\mathcal{C}$ of $M_1$. 

Suppose $w,u,w' \in r_1\cup N(r_1, G_1) = V_{O_1}$. Since $O_1 = M_1[r_1\cup N(r_1, G_1)]$, $w-u-w'$ is an induced subgraph in $O_1$. Also, $r_1\cup N(r_1, G_1) \subseteq V_O$, and from \cref{obs:O-is-an-induced-subgraph-of-M}, $O$ is an induced subgraph of $M$. This implies $w \rightarrow u\leftarrow w'$ is an induced subgraph of $O$.  But, this is a contradiction, as since $O \in \mathcal{E}(O_1, O_2)$, from \cref{item-2-of-def:extension},   $\mathcal{V}(O_1)= \mathcal{V}(O[V_{O_1}])$. This implies $w,u, w'$ all are not in $r_1\cup N(r_1, G_1)$.

Suppose all the nodes $w,u,w'$ are not in $r_1\cup N(r_1, G_1)$. If for an edge $x-y$ of $M_1$, $x\rightarrow y$ is added to $M$ at step \ref{step-1-of-construction-of-M} of the construction of $M$ then $x\rightarrow y \in U_M(M_1, M_2, O)$. From \cref{obs:undirected-in-M_1-and-directed-in-M-implies-directed-in-O}, $x\rightarrow y \in O$. This implies $x,y \in V_O\cap V_{M_1} = r_1\cup N(r_1, G_1)$. As per our assumption, not all the nodes $w,u,w'$ are in $r_1\cup N(r_1, G_1)$. This implies at least one edge among $w\rightarrow u$ and $w'\rightarrow u$ added to $M$ at step \ref{step-2-of-construction-of-M} of the construction of $M$.
W.l.o.g., let us assume $w\rightarrow u$ is added to $M$ at step \ref{step-2-of-construction-of-M} of the construction of $M$. Then, from the precondition of \cref{step-2-b-of-construction-of-M}, $\tau(w)<\tau(u)$, and $u \in V_{\mathcal{C}}\setminus{r_1\cup N(r_1, G_1)}$. And, since $w'\rightarrow u \in M$, from \cref{claim:LBFS-ordering-for-directed-edges}, $\tau(w') < \tau(u)$. But, then, from \cref{def:LBFS}, $w-w' \in \mathcal{C}$, a contradiction, as $w-u-w'$ is an induced subgraph in $M_1$. This implies this case cannot occur.

In conclusion, we have shown that in all possible cases, if $u\rightarrow v$ is strongly protected in $M_1$, then it is also strongly protected in $M$. Hence, the claim is verified.

\item Suppose $u-v \in M_1$. This implies $u$ and $v$ belong to the same \ucc{} $\mathcal{C}$ of $M_1$.  
$u\rightarrow v$ has been added to $M$ either at step \ref{step-1-of-construction-of-M} of the construction of $M$, or at step \ref{step-2-of-construction-of-M} of the construction of $M$.

If $u\rightarrow v$ is added to $M$ at step \ref{step-1-of-construction-of-M} of the construction of $M$ then $u\rightarrow v \in U_M(M_1, M_2, O)$.  From \cref{obs:undirected-in-M_1-and-directed-in-M-implies-directed-in-O}, $u\rightarrow v \in O$. This implies $u,v \in V_O\cap V_{M_1} = V_{O_1}$. Since $O_1$ is an induced subgraph of $M_1$, this implies $u-v \in O_1$ and $u\rightarrow v \in O$. Then, from \cref{item-3-of-def:extension} of \cref{def:extension}, $u\rightarrow v$ is strongly protected in $O$. From \cref{obs:O-is-an-induced-subgraph-of-M}, $O$ is an induced subgraph of $M$. This implies $u\rightarrow v$ is strongly protected in $M$ as well. 

Suppose $u\rightarrow v$ is added to $M$ at step \ref{step-2-of-construction-of-M} of the construction of $M$. 
Then either it is part of an induced subgraph $w\rightarrow u\rightarrow v$ (from step \ref{step-2-b-a-of-construction-of-M}), as shown in \cref{fig:strongly-protected-edge}.a, or it is part of an induced subgraph $u\rightarrow w\rightarrow v \leftarrow u$ in $M$ (from step \ref{step-2-b-b-of-construction-of-M}), as shown in \cref{fig:strongly-protected-edge}.c. In both cases, $u\rightarrow v$ is strongly protected in $M$.
    \end{enumerate}

    We show that in all the possibilities, if $u\rightarrow v \in M$ then it is strongly protected in $M$. This completes the proof of \cref{obs:directed-edges-of-M-are-sp}.
\end{proof}

\begin{proof}[Proof of \cref{obs:M1-and-M2-are-projections-of-M}]
        We first prove that $\mathcal{P}(M, V_{G_1}) = M_1$. 
        Since $M_1$ is an MEC of $G_1$,  from \cref{def:projection}, to prove $\mathcal{P}(M, V_{G_1}) = M_1$, we only need to show $\mathcal{V}(M_1) = \mathcal{V}(M[V_{G_1}])$. 

        We first show that $\mathcal{V}(M_1) \subseteq \mathcal{V}(M[V_{G_1}])$. From step \ref{step-1-of-construction-of-M} of the construction of $M$, all directed edges of $M_1$ are also directed edges of $M$. This implies $\mathcal{V}(M_1) \subseteq \mathcal{V}(M[V_{G_1}])$, as from the construction of $M$, $\skel{M_1} = \skel{M[V_{G_1}]}$. 

        We now show that $\mathcal{V}(M[V_{G_1}]) \subseteq \mathcal{V}(M_1)$. Suppose $\mathcal{V}(M[V_{G_1}]) \nsubseteq \mathcal{V}(M_1)$. Then, there must exist $u,v,w \in V_{G_1}$ such that $u\rightarrow v \leftarrow w$ is a v-structure in $M[V_{G_1}]$, and $u\rightarrow v\leftarrow w
 \notin M_1$. From \cref{corr:directed-edge-in-M-is-either-directed-or-undirected-in-M1}, either $u\rightarrow v \in M_1$ or $u-v \in M_1$. Similarly, either $v\leftarrow w \in M_1$, or $v-w \in M_1$. Since  $u\rightarrow v\leftarrow w \notin M_1$ (from our assumption), either $u\rightarrow v-w \in M_1$, or $u-v\leftarrow w \in M_1$, or $u-v-w \in M_1$. Since $M_1$ is an MEC, from \cref{item-3-theorem-nec-suf-cond-for-MEC} of \cref{thm:nes-and-suf-cond-for-chordal-graph-to-be-an-MEC}, the first two possibilities are not possible. Thus, the only option that remains is $u-v-w \in M_1$. This implies $u,v,w$ belongs to the same \ucc{} (say $\mathcal{C}$) of $M_1$. 
 
 $u\rightarrow v$ and $w\rightarrow v$ have not been added to $M$ at step \ref{step-1-of-construction-of-M} of the construction of $M$. Otherwise, $u\rightarrow v, w\rightarrow v \in U_M(M_1,M_2, O)$. But, then, from \cref{obs:undirected-in-M_1-and-directed-in-M-implies-directed-in-O}, $u\rightarrow v, w\rightarrow v \in O$. This further implies $u,v,w \in V_{G_1}\cap V_O = V_{O_1}$. Since $O_1$ is an induced subgraph of $M_1$, this implies $u\rightarrow v\leftarrow w$ is a v-structure in $O[V_{O_1}]$ and not in $O_1$. This contradicts \cref{item-2-of-def:extension} of \cref{def:extension}, as $O$ is an extension of $O_1$ and $O_2$. This implies that out of $u\rightarrow v$ and $w\rightarrow v$, at least one has been added to $M$ at step \ref{step-2-of-construction-of-M} of the construction of $M$. 

 W.l.o.g., let us assume $u\rightarrow v$ is added to $M$ at step \ref{step-2-of-construction-of-M} of the construction of $M$. Then, from the precondition of \cref{step-2-b-of-construction-of-M}, $\tau(u)<\tau(v)$, and $v \in V_{\mathcal{C}}\setminus{r_1\cup N(r_1, G_1)}$. And, since $w\rightarrow v \in M$, from \cref{claim:LBFS-ordering-for-directed-edges}, $\tau(w) < \tau(v)$. But, then, from \cref{def:LBFS}, $u-w \in \mathcal{C}$. From the construction of $M$, $\skel{M[V_{\mathcal{C}}]} = \mathcal{C}$. This further implies there cannot be an induced subgraph $u\rightarrow v \leftarrow w$ in $M$, contradicting our assumption. This implies our assumption is wrong. This shows $\mathcal{V}(M[V_{G_1}]) \subseteq \mathcal{V}(M_1)$. This further implies $\mathcal{V}(M[V_{G_1}]) = \mathcal{V}(M_1)$. 
 
 Similarly, we can show that $\mathcal{P}(M, V_{G_2}) = M_2$. This completes the proof of \cref{obs:M1-and-M2-are-projections-of-M}.
    \end{proof}

    \begin{proof}[Proof of \cref{obs:M-is-unique}]
        Since $M \in \setofMECs{G, O}$ and $\mathcal{P}(M, V_{G_1}, V_{G_2}) = (M_1, M_2)$, $\mathcal{V}(M_1) \cup \mathcal{V}(M_2) \cup \mathcal{V}(O) \subseteq \mathcal{V}(M)$.
    Suppose $u\rightarrow v\leftarrow w$ is a v-structure in $M$.
    Since $I = r_1\cap r_2$ is a vertex separator of $G$,  either $u,v,w \in V_{G_1}$ or $u,v,w \in V_{G_2}$, or $u,v, w \in V_O$. Since $\mathcal{P}(M, V_{G_1}, V_{G_2}) = (M_1, M_2)$, if $u,v,w \in V_{G_1}$ then $u\rightarrow v\leftarrow w$ is a v-structure in $M_1$, if $u,v,w \in V_{G_2}$ then $u\rightarrow v\leftarrow w$ is a v-structure in $M_2$, and  if $u,v,w \in V_{O}$ then $u\rightarrow v\leftarrow w$ is a v-structure in $O$. This implies $\mathcal{V}(M) \subseteq \mathcal{V}(M_1) \cup \mathcal{V}(M_2) \cup \mathcal{V}(O)$. This further implies $\mathcal{V}(M) = \mathcal{V}(M_1) \cup \mathcal{V}(M_2) \cup \mathcal{V}(O)$. As discussed in the introduction, given a skeleton and a set of v-structures, there exists a unique MEC with that skeleton and v-structures. This implies that $M$ is unique.
    \end{proof}

    \begin{proof}[Proof of \cref{lem:formula-to-compute-MEC-G-O}]
We use \cref{def:extension} to define a bijective function $f_{O}$. We use the function for the computation of $|\setofMECs{G, O}|$. 
\begin{definition}[Function $f_O$]
\label{def:one-to-one-function-for-counting-MEC-H-O-P1-P2}
Let $G$ be a chordal graph, and $G_1$ and $G_2$ be two induced subgraphs of $G$ such that $G = G_1\cup G_2$, and $I = V_{G_1}\cap V_{G_2}$ is a vertex separator of $G$ that separates $V_{G_1}\setminus I$ and $V_{G_2}\setminus I$. Let $r_1$ and $r_2$ be cliques of, respectively, $G_1$ and $G_2$ such that $r_1\cap r_2 = I$. Let $X_1 = r_1 \cup N(r_1, G_1)$, $X_2 = r_2 \cup N(r_2, G_2)$, and $X = r_1 \cup r_2 \cup N(r_1\cup r_2, G)$.
 Consider $O \in \setofpartialMECs{G[X]}$. Define $T_1 = \setofMECs{G, O}$ and $T_2 = \{(M_1, M_2) :$ $\exists (O_1, O_2) \in O_1 \in \setofpartialMECs{G_1[X_1]} \times \setofpartialMECs{G_2[X_2]}$ such that $O \in \mathcal{E}(O_1,O_2)$, $M_1 \in \setofMECs{G_1, O_1}$, and $M_2 \in \setofMECs{G_2, O_2}\}$. The function $f_{O} : T_1 \rightarrow T_2$ is defined as $f_{O}(M) = (M_1, M_2)$ if $\mathcal{P}(M, V_{G_1}, V_{G_2}) = (M_1, M_2)$.
\end{definition}

 \begin{lemma}
 \label{lem:f-is-one-to-one}
 $f_{O}$ is bijective.
 \end{lemma}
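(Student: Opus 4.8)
The plan is to show that $f_O$ is both well-defined and bijective by exhibiting an explicit inverse, using the structural results already established above. First I would verify that $f_O$ is well-defined: given $M \in T_1 = \setofMECs{G,O}$, the pair $(M_1,M_2) = \mathcal{P}(M,V_{G_1},V_{G_2})$ exists and is unique by \cref{obs:there-exists-a-unique-projection}. To see that $(M_1,M_2)$ actually lands in $T_2$, I set $O_1 = M_1[X_1]$ and $O_2 = M_2[X_2]$; by \cref{corr:image-is-a-PMEC} these are partial MECs of $G_1[X_1]$ and $G_2[X_2]$, and $M_1 \in \setofMECs{G_1,O_1}$, $M_2 \in \setofMECs{G_2,O_2}$ by construction. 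Since $M \in \setofMECs{G,O}$ means $O = M[X]$, and $(M_1,M_2) = \mathcal{P}(M,V_{G_1},V_{G_2})$, \cref{lem:nes-cond-for-chordal-graph} gives $O \in \mathcal{E}(O_1,O_2)$. Hence $(M_1,M_2) \in T_2$, so $f_O$ maps $T_1$ into $T_2$.

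Next I would establish injectivity. Suppose $f_O(M) = f_O(M') = (M_1,M_2)$. Then both $M$ and $M'$ lie in $\setofMECs{G,O}$ with $\mathcal{P}(M,V_{G_1},V_{G_2}) = \mathcal{P}(M',V_{G_1},V_{G_2}) = (M_1,M_2)$. By the uniqueness part of \cref{lem:suff-cond-for-chordal-graph} (equivalently \cref{obs:M-is-unique}), there is at most one MEC $M \in \setofMECs{G,O}$ with a given projection pair $(M_1,M_2)$; therefore $M = M'$, and $f_O$ is injective.

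For surjectivity, take any $(M_1,M_2) \in T_2$, so there exist $O_1 \in \setofpartialMECs{G_1[X_1]}$, $O_2 \in \setofpartialMECs{G_2[X_2]}$ with $O \in \mathcal{E}(O_1,O_2)$, $M_1 \in \setofMECs{G_1,O_1}$, $M_2 \in \setofMECs{G_2,O_2}$. Now \cref{lem:suff-cond-for-chordal-graph} applies directly: since $O$ is an extension of $O_1$ and $O_2$, there exists an MEC $M \in \setofMECs{G,O}$ with $\mathcal{P}(M,V_{G_1},V_{G_2}) = (M_1,M_2)$. This $M$ satisfies $f_O(M) = (M_1,M_2)$, giving surjectivity. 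Combining the three parts, $f_O$ is a bijection.

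The main obstacle is purely bookkeeping rather than conceptual: one must be careful that $T_2$ as defined quantifies over the \emph{existence} of a valid $(O_1,O_2)$ pair, and that when we run the surjectivity argument the particular $(O_1,O_2)$ witnessing membership in $T_2$ is exactly the pair fed into \cref{lem:suff-cond-for-chordal-graph}; one should also check that the $O_1,O_2$ recovered in the well-definedness step (namely $M_1[X_1]$ and $M_2[X_2]$) are forced, i.e., that $\mathcal{P}(M,V_{G_1}) = M_1$ together with $O = M[X]$ pins down $M_1[X_1]$ via \cref{item-2-of-obs:nes-cond-for-chordal-graph} of \cref{obs:nes-cond-for-chordal-graph}. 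All of this follows from the already-proved lemmas, so no new combinatorial work is needed; the only real content has been absorbed into \cref{lem:nes-cond-for-chordal-graph} and \cref{lem:suff-cond-for-chordal-graph}.
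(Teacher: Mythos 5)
Your proposal is correct and follows essentially the same route as the paper: well-definedness via \cref{obs:there-exists-a-unique-projection}, \cref{corr:image-is-a-PMEC}, and \cref{lem:nes-cond-for-chordal-graph}, and the inverse direction (your injectivity plus surjectivity) via the existence-and-uniqueness statement of \cref{lem:suff-cond-for-chordal-graph}. The only difference is presentational — the paper packages injectivity and surjectivity together as a single "for each $(M_1,M_2)$ there is a unique preimage" claim rather than splitting them.
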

 \begin{proof}
 We first prove $\rightarrow$ of \cref{lem:f-is-one-to-one}.
 \begin{claim}
 \label{claim:to-part-of-lem:f-is-one-to-one}
 For each $M \in \setofMECs{G, O}$, there exists a unique $(M_1, M_2)$ such that $f_{O}(M) = (M_1, M_2)$.
\end{claim}
 \begin{proof}
 Let $M \in \setofMECs{G, O)}$. This implies $M$ is an MEC of $G$, and $M[V_O] = O$. Let $\mathcal{P}(M, V_{G_1}, V_{G_2}) = (M_1, M_2)$.   From \cref{obs:there-exists-a-unique-projection}, there exists a unique $(M_1, M_2)$ such that $\mathcal{P}(M, V_{G_1}, V_{G_2}) = (M_1, M_2)$. Thus, we only have to show that $(M_1, M_2) \in T_2$. 
 Let $O_1 = M_1[X_1]$ and $O_2 = M_2[X_2]$. From \cref{corr:image-is-a-PMEC}, $O_1 \in \setofpartialMECs{G_1[X_1]}$, and $O_2 \in \setofpartialMECs{G_2[X_2]}$.
 This implies $M_1 \in \setofMECs{G_1, O_1}$ and $M_2 \in \setofMECs{G_2, O_2}$.
 From \cref{lem:nes-cond-for-chordal-graph}, $O \in \mathcal{E}(O_1, O_2)$. From the definition of $T_2$ (given at \cref{def:one-to-one-function-for-counting-MEC-H-O-P1-P2}), $(M_1, M_2) \in T_2$. Thus, from \cref{def:one-to-one-function-for-counting-MEC-H-O-P1-P2}, $f_{O}(M) = (M_1, M_2)$.
 
\end{proof}
 We now prove $\leftarrow$ of \cref{lem:f-is-one-to-one}.
 
\begin{claim}
\label{claim:from-part-of-lem:f-is-one-to-one}
Let $O_1$ be a partial MEC of $G_1[X_1]$ and $O_2$ be a partial MEC of $G_2[X_2]$ such that $O$ is an extension of $(O_1, O_2)$. For each $M_1 \in \setofMECs{G_1, O_1}$ and $M_2 \in \setofMECs{G_2, O_2}$, there exists a unique $M \in \setofMECs{G, O}$ such that $f_{O}(M) = (M_1, M_2)$.
\end{claim}

\begin{proof}
 \Cref{lem:suff-cond-for-chordal-graph,def:one-to-one-function-for-counting-MEC-H-O-P1-P2} imply \cref{claim:from-part-of-lem:f-is-one-to-one}.
\end{proof}

\Cref{claim:to-part-of-lem:f-is-one-to-one,claim:from-part-of-lem:f-is-one-to-one} prove \cref{lem:f-is-one-to-one}.
\end{proof}

From \cref{lem:f-is-one-to-one}, $|T_1| = |T_2|$. Since $T_1 = \setofMECs{G, O}$, therefore, $|T_1| = | \setofMECs{G, O}|$. Since $T_2$ contains tuples $(M_1, M_2)$ such that for some tuple $(O_1, O_2) \in \setofpartialMECs{G_1[X_1]} \times \setofpartialMECs{G_2[X_2]}$,  and $O \in \mathcal{E}(O_1,  O_2)$, $M_1 \in \setofMECs{G_1, O_1}$ and $M_2 \in \setofMECs{G_2, O_2}$. This implies
\begin{equation}
    T_2 = \bigcup_{\substack{O_1 \in \setofpartialMECs{G_1[X_1]} \\ O_2 \in \setofpartialMECs{G_2[X_2]}\\O \in \mathcal{E}(O_1, O_2) }}{\setofMECs{G_1, O_1} \times \setofMECs{G_2, O_2}}.
\end{equation}
We show that sets $\setofMECs{G_1, O_1} \times \setofMECs{G_2, O_2}$ are disjoint. Suppose  sets $\setofMECs{G_1, O_1} \times \setofMECs{G_2, O_2}$ are not disjoint. Then there exists $(M_1, M_2)$ such that  $(M_1, M_2) \in  \setofMECs{G_1, O_1} \times \setofMECs{G_2, O_2}$ as well as $(M_1, M_2) \in  \setofMECs{G_1, O_1'} \times \setofMECs{G_2, O_2'}$. 
This implies $M_1 \in \setofMECs{G_1, O_1}$ as well as $M_1 \in \setofMECs{G_1, O_1'}$. But, from \cref{corr:MEC-has-a-unique-MEC-on-X}, image of $M$ on $X_1$ is unique, i.e., $O_1 =  O_1'$. Similarly, we can say that  $O_2 =  O_2'$. This shows $(O_1, O_2) = (O_1', O_2')$. This implies that the sets $\setofMECs{G_1, O_1} \times \setofMECs{G_2, O_2}$ are disjoint. This further implies
\begin{equation}
    |T_2| = \bigcup_{\substack{O_1 \in \setofpartialMECs{G_1[X_1]} \\ O_2 \in \setofpartialMECs{G_2[X_2]}\\O \in \mathcal{E}(O_1, O_2) }}{|\setofMECs{G_1, O_1}| \times |\setofMECs{G_2, O_2}|}.
\end{equation}
This proves \cref{lem:formula-to-compute-MEC-G-O}.
\end{proof} 

\begin{proof}[Proof of \cref{lem:computation-of-MEC-G-O'}]
    We start with the following:
    \begin{equation}
        \label{eq:lem:formula-to-compute-MEC-G-O'-1}
        \setofMECs{G, O'} = \bigcup_{\substack{O \in \setofpartialMECs{G[X]}\\ O' = O[X']}}{\setofMECs{G, O}}.
    \end{equation}
    We first show that the L.H.S. of \cref{eq:lem:formula-to-compute-MEC-G-O'-1} is a subset of the R.H.S. of \cref{eq:lem:formula-to-compute-MEC-G-O'-1}.
   \begin{lemma}
    \label{lem:if-M-with-its-projection-then-M-with-O}
    Let $G$ be a chordal graph, $X'\subseteq X \subseteq V_G$, and $O' \in \setofpartialMECs{G[X]}$. 
    Let $M \in \setofMECs{G, O'}$. Then, there exists a partial MEC $O \in \setofpartialMECs{G[X]}$ such that $O' = O[X]$ and $M \in \setofMECs{G, O}$. 
\end{lemma}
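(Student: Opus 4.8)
The plan is to use $O := M[X]$ as the witness and verify the three claimed properties by direct bookkeeping. First I would unwind the hypothesis: $M \in \setofMECs{G, O'}$ means, by the definition following \cref{def:image-of-an-MEC}, that $M$ is an MEC of $G$ and $M[V_{O'}] = O'$; since $O'$ is a partial MEC of $G[X']$ we have $V_{O'} = X'$, hence $M[X'] = O'$. Now set $O := M[X]$, so that $V_O = X$ by construction.

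Second, I would argue that $O$ is a partial MEC of $G[X]$, i.e.\ $O \in \setofpartialMECs{G[X]}$. This is immediate from \cref{corr:image-is-a-PMEC}: $O = M[X]$ is an image of the MEC $M$ on $X \subseteq V_G$, and every image of an MEC is a partial MEC of the corresponding induced subgraph. Third, I would check the nesting identity $O[X'] = O'$: since $X' \subseteq X$, taking the induced subgraph of $M$ on $X$ and then restricting to $X'$ gives the same graph as restricting $M$ directly to $X'$, so $O[X'] = (M[X])[X'] = M[X'] = O'$. Finally, $M \in \setofMECs{G, O}$ holds because $M$ is an MEC of $G$ and $M[V_O] = M[X] = O$, which is exactly the defining condition for membership in $\setofMECs{G, O}$.

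There is no real obstacle here: the lemma is a purely structural statement, and the only points requiring a line of justification are (i) the identification $V_{O'} = X'$ coming from $O'$ being a partial MEC of $G[X']$, (ii) the invocation of \cref{corr:image-is-a-PMEC} to see that $O$ is a partial MEC, and (iii) the transitivity of induced-subgraph restriction $(M[X])[X'] = M[X']$ when $X' \subseteq X$. I would present the argument in two or three sentences in the final write-up, and then in the proof of \cref{lem:computation-of-MEC-G-O} combine it with the (routine) reverse inclusion — namely that for any $O \in \setofpartialMECs{G[X]}$ with $O' = O[X']$, every $M \in \setofMECs{G, O}$ also lies in $\setofMECs{G, O'}$, since an induced subgraph of an induced subgraph of $M$ is an induced subgraph of $M$ — to obtain the set identity \cref{eq:lem:formula-to-compute-MEC-G-O'-1}, and then the disjointness of the sets $\setofMECs{G, O}$ over distinct $O$ (each MEC has a unique image on $X$ by \cref{corr:MEC-has-a-unique-MEC-on-X}) yields the counting formula.
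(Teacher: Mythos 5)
Your proof is correct and follows essentially the same route as the paper's: take $O := M[X]$ as the witness, invoke \cref{corr:image-is-a-PMEC} to see it is a partial MEC of $G[X]$, and use the transitivity of induced-subgraph restriction to get $O[X'] = M[X'] = O'$ together with $M \in \setofMECs{G, O}$. No gaps; your reading of the statement (with $O' \in \setofpartialMECs{G[X']}$ and $O' = O[X']$, correcting two apparent typos in the lemma as printed) matches the intended meaning used in \cref{lem:computation-of-MEC-G-O'}.
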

\begin{proof}
    Since $O'$ is an image of $M$, from \cref{def:image-of-an-MEC}, $M[X'] = O'$. Let $O = M[X]$. From \cref{def:image-of-an-MEC}, $O$ is an image of $M$, and $M \in \setofMECs{G, O}$. From \cref{corr:image-is-a-PMEC}, $O \in \setofpartialMECs{G[X]}$. Also, since $X'\subseteq X$, and both $O$ and $O'$ are induced subgraph of $M$, $O[X'] = M[X'] = O'$. This proves \cref{lem:if-M-with-its-projection-then-M-with-O}.
\end{proof}
We now show that the R.H.S. of \cref{eq:lem:formula-to-compute-MEC-G-O'-1} is a subset of the L.H.S. of \cref{eq:lem:formula-to-compute-MEC-G-O'-1}.
   \begin{lemma}
    \label{lem:if-M-with-O-then-M-with-its-projection}
    Let $G$ be a chordal graph, $X'\subseteq X \subseteq V_G$, $O \in \setofpartialMECs{G[X]}$, and $O' \in \setofpartialMECs{G[X']}$ such that $O' = O[X']$. 
    Then,  $M \in \setofMECs{G, O'}$. 
\end{lemma}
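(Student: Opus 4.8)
The plan is to prove the inclusion ``R.H.S. $\subseteq$ L.H.S.'' of \cref{eq:lem:formula-to-compute-MEC-G-O'-1} by showing directly that $O'$ is an image of $M$ on $X'$, which by definition of $\setofMECs{G,O'}$ places $M$ in that set. First I would record the bookkeeping facts about vertex sets: since $O \in \setofpartialMECs{G[X]}$ we have $\skel{O} = G[X]$, hence $V_O = X$; likewise $V_{O'} = X'$. In particular the hypothesis $M \in \setofMECs{G, O}$ unpacks (via \cref{def:image-of-an-MEC}) to $M[X] = M[V_O] = O$, and $M$ is an MEC of $G$; and the hypothesis $O' = O[X']$ is meaningful because $X' \subseteq X = V_O$.

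Next I would use the transitivity of the induced-subgraph operation: for any graph $H$ and any $X' \subseteq X \subseteq V_H$ one has $(H[X])[X'] = H[X']$, since an edge (directed or undirected) survives restriction to $X$ and then to $X'$ exactly when both its endpoints lie in $X'$. Applying this with $H = M$ and combining with $M[X] = O$ and $O' = O[X']$ gives
\[
  M[X'] = (M[X])[X'] = O[X'] = O'.
\]
Since $V_{O'} = X'$, the equality $M[V_{O'}] = O'$ is precisely the statement that $O'$ is an image of $M$ on $X'$ (\cref{def:image-of-an-MEC}), and therefore $M \in \setofMECs{G, O'}$, as required.

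There is essentially no difficult step here; the only point needing care is that the index sets $V_O$ and $V_{O'}$ really are $X$ and $X'$, so that ``$O$ is an image of $M$ on $X$'' and ``$O'$ is an image of $M$ on $X'$'' are literally the identities $M[X]=O$ and $M[X']=O'$ — this follows at once from $O$, $O'$ being partial MECs of $G[X]$, $G[X']$. Together with \cref{lem:if-M-with-its-projection-then-M-with-O} (which gives the reverse inclusion) this establishes \cref{eq:lem:formula-to-compute-MEC-G-O'-1}; since the sets $\setofMECs{G,O}$ on the right are pairwise disjoint (by uniqueness of the image, \cref{corr:MEC-has-a-unique-MEC-on-X}), taking cardinalities yields \cref{eq:lem:formula-to-compute-MEC-G-O'} and hence \cref{lem:computation-of-MEC-G-O'}.
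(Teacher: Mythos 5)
Your proof is correct and follows essentially the same route as the paper's: unpack $M \in \setofMECs{G,O}$ (the hypothesis omitted from the lemma statement, which both you and the paper silently supply) to $M[X]=O$, compose induced subgraphs to get $M[X'] = O[X'] = O'$, and conclude $M \in \setofMECs{G,O'}$ by \cref{def:image-of-an-MEC}. Your version is slightly more careful in justifying $V_O = X$ and the transitivity of restriction, but the argument is the same.
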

\begin{proof}
    Since $M \in \setofMECs{G, O}$, and $O = M[V_O] = M[X]$. Also, since $O' = O[X']$, and $X' \subseteq X$, $O' = M[X']$. This implies $O'$ is an image of $M$, i.e., $M \in \setofMECs{G, O'}$. This proves \cref{lem:if-M-with-O-then-M-with-its-projection}.
\end{proof}
\cref{lem:if-M-with-its-projection-then-M-with-O,lem:if-M-with-O-then-M-with-its-projection} imply \cref{eq:lem:formula-to-compute-MEC-G-O'-1}.

From \cref{corr:MEC-has-a-unique-MEC-on-X}, each MEC has a unique image on $X$. This implies the R.H.S. of \cref{eq:lem:formula-to-compute-MEC-G-O'-1} is a disjoint union of sets $\setofMECs{G, O}$ for $O\in \setofpartialMECs{G[X]}$. This further implies \cref{eq:lem:formula-to-compute-MEC-G-O'}. This completes the proof of \cref{lem:computation-of-MEC-G-O'}.
\end{proof}

\begin{proof}[Proof of \cref{lem:computation-of-MEC-G-O'-2}]
    From \cref{lem:computation-of-MEC-G-O'}, since $X' \subseteq X \subseteq V_G$, we have
    \begin{equation}
        \label{eq:corr:computation-of-MEC-G-O'-1}
        |\setofMECs{G, O'}| = \sum_{\substack{O \in \setofpartialMECs{G[X]}\\ O' = O[X']}}{|\setofMECs{G, O}| }.
    \end{equation}
    Now, from \cref{lem:formula-to-compute-MEC-G-O},
    \begin{equation}
       \label{eq:corr:computation-of-MEC-G-O'-2}
        |\setofMECs{G, O}| = \sum_{\substack{O_1 \in \setofpartialMECs{G_1[X_1]}\\ O_2 \in \setofpartialMECs{G_2[X_2]}\\O \in \mathcal{E}(O_1, O_2)}}{|\setofMECs{G_1, O_1}| \times |\setofMECs{G_2, O_2}|}.
    \end{equation}
    Combining \cref{eq:corr:computation-of-MEC-G-O'-1,eq:corr:computation-of-MEC-G-O'-2}, we get \cref{eq:lem:computation-of-MEC-G-O'-2}. This completes the proof of \cref{lem:computation-of-MEC-G-O'-2}.
\end{proof}

\begin{proof}[Proof of \cref{lem:proof-of-correctness-of-count-MEC-algorithm}]
    At line \ref{alg:counting-MEC-chordal-tree-decommposition-creating-function-f}, we construct the required function $f$, and for each partial MEC $O \in \setofpartialMECs{G[r_1\cup N(r_1, G)]}$, we initialized it (at lines \ref{alg:counting-MEC-chordal-tree-decommposition-foreach-1-start}-\ref{alg:counting-MEC-chordal-tree-decommposition-for-each-1-end}) with $f(O) = 0$. Lines \ref{alg:counting-MEC-chordal-tree-decommposition-if-degree-is-1}-\ref{alg:counting-MEC-chordal-tree-decommposition-if-end} deal with the base case, when the degree of $r_1$ in $T$ is zero.
    
    If the degree of $r_1$ in $T$ is zero then $V_G  = r_1$, and $G$ is a clique. For a clique graph $G$, there exists only one MEC of $G$ that is $G$ itself. 
    
    Suppose $M$ is an MEC of a clique graph $G$, and there exists a directed edge in $M$. Pick an edge $u\rightarrow v \in M$ such that $u$ does not have any incoming edge, and there does not exist any incoming edge to $v$ from the neighbors of $u$. Since $M$ is a chain graph (from \cref{item-1-theorem-nec-suf-cond-for-MEC} of \cref{thm:nes-and-suf-cond-for-chordal-graph-to-be-an-MEC}), we must get such $u\rightarrow v$. But, then $u\rightarrow v$ is not strongly protected. Since $M$ is an MEC of a clique graph $G$, $u\rightarrow v$ cannot be part of any induced subgraph as shown  \cref{fig:strongly-protected-edge}.a, \cref{fig:strongly-protected-edge}.b, and \cref{fig:strongly-protected-edge}.d. And, $u\rightarrow v$ cannot be part of any induced subgraph as shown  \cref{fig:strongly-protected-edge}.c because of the choice of $u$ and $v$. This implies $M$ cannot have any directed edge. This further implies each edge of $M$ is undirected, i.e., $M$ is $G$ itself (since $G$ is chordal, $G$ is an MEC as it obeys \cref{item-1-theorem-nec-suf-cond-for-MEC,item-2-theorem-nec-suf-cond-for-MEC,item-3-theorem-nec-suf-cond-for-MEC,item-4-theorem-nec-suf-cond-for-MEC} \cref{thm:nes-and-suf-cond-for-tree-graph-to-be-an-MEC}), and $|\setofMECs{G}| = 1$. This also implies that the induced subgraph of the only MEC $M$ of $G$ on $V_{G}$ is $G$, i.e.,  $|\setofMECs{G, G}| = 1$. This is why we update $f$ at line \ref{alg:counting-MEC-chordal-tree-decommposition:inside-if-case} with $f(G) = 1$. This implies that when the degree of $r_1$ is zero,  \cref{alg:counting-MEC-chordal-tree-decommposition} outputs the required function.

    If the degree of $r_1$ is not zero then \cref{alg:counting-MEC-chordal-tree-decommposition} implements \cref{lem:computation-of-MEC-G-O'-2} at lines \ref{alg:counting-MEC-chordal-tree-decommposition-pick-edge-r1-r2} - \ref{alg:counting-MEC-chordal-tree-decommposition-final-return-statement}. \Cref{lem:computation-of-MEC-G-O'-2} validates the function $f$ that is returned at line \ref{alg:counting-MEC-chordal-tree-decommposition-final-return-statement}. This completes the proof of \cref{lem:proof-of-correctness-of-count-MEC-algorithm}. 
\end{proof}

\begin{proof}[Proof of \cref{lem:proof-of-correctness-of-count-MEC-algorithm-2}]
    At line \ref{alg:counting-MEC-chordal-construct-clique-tree}, the algorithm constructs a clique tree $T$ of $G$. Then, it picks an arbitrary node $r_1$ of $T$ (line \ref{alg:counting-MEC-chordal-pick-r_1}). 
    At line \ref{alg:counting-MEC-chordal-f-returns}, it calls \cref{alg:counting-MEC-chordal-construct-clique-tree} that returns a function $f: \setofpartialMECs{G[r_1\cup N(r_1, G_1)]}\rightarrow \mathbb{Z}$ such that $f(O) = |\setofMECs{G, O}|$. Lines \ref{alg:counting-MEC-chordal-for-each-start}-\ref{alg:counting-MEC-chordal-for-each-end} implement \cref{eq:summation-of-PMECs} to compute $|\setofMECs{G}|$. Line \ref{alg:counting-MEC-chordal-return-statement} returns the required result. This completes the proof of \cref{lem:proof-of-correctness-of-count-MEC-algorithm-2}.
\end{proof}

\begin{algorithm}
\caption{Is\textunderscore Extension}
\label{alg:is-extension}
\SetAlgoLined
\SetKwInOut{KwIn}{Input}
\SetKwInOut{KwOut}{Output}
\SetKwFunction{Is_extension}{Is_extension}
\KwIn{Three partial MECs $O$, $O_1$, and $O_2$ such that
$G$ is a chordal graph, and $G_1$ and $G_2$ are two induced subgraphs of $G$ such that $G = G_1\cup G_2$, and $I = V_{G_1}\cap V_{G_2}$ is a vertex separator of $G$ that separates $V_{G_1}\setminus I$ and $V_{G_2}\setminus I$. $r_1$ and $r_2$ are cliques of, respectively, $G_1$ and $G_2$ such that $r_1\cap r_2 = I$.  $O \in \setofpartialMECs{G[r_1 \cup r_2 \cup N(r_1\cup r_2, G)]}$, $O_1 \in \setofpartialMECs{G_1[r_1 \cup N(r_1, G_1)]}$, and $O_2 \in \setofpartialMECs{G_2[r_2 \cup N(r_2, G_2)]}$.
}
    \KwOut{ 1 : if $O \in \mathcal{E}(O_1, O_2)$,\\
     \hspace{4pt}0 : otherwise.
    }

    \ForEach{$a\in \{1,2\}$\label{alg:is-extension:first-for-each-start}}
    {
        \If{$\exists u\rightarrow v \in O_a$ such that $u\rightarrow v\notin O$}
        {
            \KwRet 0.
        }
    }\label{alg:is-extension:first-for-each-end}
    
    \ForEach{$a\in \{1,2\}$\label{alg:is-extension:second-for-each-start}}
    {
        \ForEach{$u,v,w \in V_{O_a}$\label{alg:is-extension:third-for-each-start}}
        {
\If{$u\rightarrow v\leftarrow w$ is a v-structure in $O$ but not a v-structure in $O_a$}
            {
                \KwRet 0.
            }
            
        }\label{alg:is-extension:third-for-each-end}
    }\label{alg:is-extension:second-for-each-end}

    \ForEach{$a\in \{1,2\}$\label{alg:is-extension:forth-for-each-start}}
    {
        \ForEach{$u-v \in O_a$\label{alg:is-extension:fifth-for-each-start}}
        {
            \If{$u\rightarrow v \in O$\label{alg:is-extension:first-if-start}}
            {
                \If{$u\rightarrow v$ is not strongly protected in $O$ \label{alg:is-extension:u-v-is-sp}}
                {
                    \KwRet 0.
                }
            }\label{alg:is-extension:first-if-end}

        }\label{alg:is-extension:fifth-for-each-end}
    }\label{alg:is-extension:forth-for-each-end}

    \KwRet 1.
\end{algorithm} 
\Cref{alg:is-extension} is constructed to verify whether for any partial MECs $O$, $O_1$, and $O_2$, $O$ is an extension (\cref{def:extension}) of $O_1$ and $O_2$ or not. \Cref{lem:validation-of-alg-is-extension} validates \cref{alg:is-extension}.

\begin{lemma}
    \label{lem:validation-of-alg-is-extension}
    Let $G$ be a chordal graph, and $G_1$ and $G_2$ be two induced subgraphs of $G$ such that $G = G_1\cup G_2$, and $I = V_{G_1}\cap V_{G_2}$ be a vertex separator of $G$ that separates $V_{G_1}\setminus I$ and $V_{G_2}\setminus I$. Let $r_1$ and $r_2$ be cliques of, respectively, $G_1$ and $G_2$ such that $r_1\cap r_2 = I$.  $O \in \setofpartialMECs{G[r_1 \cup r_2 \cup N(r_1\cup r_2, G)]}$, $O_1 \in \setofpartialMECs{G_1[r_1 \cup N(r_1, G_1)]}$, and $O_2 \in \setofpartialMECs{G_2[r_2 \cup N(r_2, G_2)]}$. \Cref{alg:is-extension} outputs 1 if $O \in \mathcal{E}(O_1, O_2)$, otherwise, 0.    
\end{lemma}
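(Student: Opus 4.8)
The plan is to show that \Cref{alg:is-extension} is a direct, line-by-line translation of \Cref{def:extension}, so that its correctness follows from a straightforward case analysis of the three defining conditions of an extension. First I would observe that the algorithm returns $0$ exactly when it detects a violation of one of \cref{item-1-of-def:extensions,item-2-of-def:extension,item-3-of-def:extension} of \cref{def:extension}, and returns $1$ only after all checks pass; so the proof splits naturally into two directions.

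For the ``only if'' direction (if the algorithm outputs $0$ then $O \notin \mathcal{E}(O_1,O_2)$): I would walk through the three blocks of loops. The block at lines \ref{alg:is-extension:first-for-each-start}--\ref{alg:is-extension:first-for-each-end} returns $0$ precisely when some $u\rightarrow v \in O_a$ has $u\rightarrow v \notin O$, which is the negation of \cref{item-1-of-def:extensions}. The block at lines \ref{alg:is-extension:second-for-each-start}--\ref{alg:is-extension:second-for-each-end} returns $0$ when some triple $u,v,w \in V_{O_a}$ forms a v-structure in $O$ (equivalently in $O[V_{O_a}]$, since $V_{O_a} \subseteq V_O$) but not in $O_a$; together with the fact that every v-structure of $O_a$ is a v-structure of $O[V_{O_a}]$ whenever \cref{item-1-of-def:extensions} already holds (a directed edge of $O_a$ is a directed edge of $O$), this loop checking one containment suffices to witness $\mathcal{V}(O_a) \neq \mathcal{V}(O[V_{O_a}])$, negating \cref{item-2-of-def:extension}. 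I should be slightly careful here: strictly, if the first block passed then $\mathcal{V}(O_a) \subseteq \mathcal{V}(O[V_{O_a}])$ automatically, so detecting a v-structure in $O[V_{O_a}]$ not in $O_a$ is the only way \cref{item-2-of-def:extension} can fail, and that is what the second block tests. The block at lines \ref{alg:is-extension:forth-for-each-start}--\ref{alg:is-extension:forth-for-each-end} returns $0$ when some $u-v \in O_a$ has $u\rightarrow v \in O$ but $u\rightarrow v$ not strongly protected in $O$, negating \cref{item-3-of-def:extension}. In each case a violation of \cref{def:extension} is exhibited, so $O \notin \mathcal{E}(O_1,O_2)$.

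For the ``if'' direction (if the algorithm outputs $1$ then $O \in \mathcal{E}(O_1,O_2)$): passing the first block gives \cref{item-1-of-def:extensions} directly; passing the second block gives $\mathcal{V}(O[V_{O_a}]) \subseteq \mathcal{V}(O_a)$, and the reverse inclusion follows from \cref{item-1-of-def:extensions} (each directed edge of $O_a$ survives in $O$, hence each v-structure of $O_a$ appears in $O[V_{O_a}]$), yielding \cref{item-2-of-def:extension}; passing the third block gives \cref{item-3-of-def:extension}. Since $O$, $O_1$, $O_2$ already satisfy the typing hypotheses of \cref{def:extension} by the input specification, all three conditions hold and $O \in \mathcal{E}(O_1,O_2)$.

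The only genuinely non-routine point, and the one I would spell out carefully, is the reduction of \cref{item-2-of-def:extension} (an equality of v-structure sets) to the single inclusion that the algorithm actually tests; this hinges on \cref{item-1-of-def:extensions} having been verified first, which is why the loop order in the algorithm matters. Everything else is bookkeeping: matching each \texttt{\textbackslash KwRet 0} to a clause of \cref{def:extension} and confirming the loops range over all relevant $u,v,w$ and all edges $u-v \in O_a$, which they visibly do. I would also note in passing that ``$u\rightarrow v$ is strongly protected in $O$'' at line \ref{alg:is-extension:u-v-is-sp} is checked against the fixed list of configurations in \Cref{fig:strongly-protected-edge}, matching \cref{item-4-theorem-nec-suf-cond-for-MEC} of \cref{thm:nes-and-suf-cond-for-chordal-graph-to-be-an-MEC}, so no ambiguity arises there.
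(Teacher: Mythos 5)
Your proposal is correct and follows essentially the same route as the paper: match each block of \cref{alg:is-extension} to one clause of \cref{def:extension}, with the one non-trivial point being that the equality $\mathcal{V}(O_a) = \mathcal{V}(O[V_{O_a}])$ in \cref{item-2-of-def:extension} reduces to the single inclusion the algorithm tests because the other inclusion follows from \cref{item-1-of-def:extensions} together with $\skel{O_a} = \skel{O[V_{O_a}]}$. In fact you state the two inclusions the right way around (\cref{item-1-of-def:extensions} yields $\mathcal{V}(O_a) \subseteq \mathcal{V}(O[V_{O_a}])$, and lines \ref{alg:is-extension:second-for-each-start}--\ref{alg:is-extension:second-for-each-end} test the reverse containment), whereas the paper's proof of \cref{lem:validation-of-alg-is-extension} appears to transpose them.
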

\begin{proof}
    Lines \ref{alg:is-extension:first-for-each-start}-\ref{alg:is-extension:first-for-each-end} verifies \cref{item-1-of-def:extensions} of \cref{def:extension}. 
    Since from the construction, for each $a\in \{1,2\}$, $\skel{O_a} = \skel{O[V_{O_a}]}$, \cref{item-1-of-def:extensions} of \cref{def:extension} implies that $\mathcal{V}(O[V_{O_a}]) \subseteq \mathcal{V}(O_a)$.
    Lines \ref{alg:is-extension:second-for-each-start}-\ref{alg:is-extension:second-for-each-end} verifies $\mathcal{V}(O_a) \subseteq \mathcal{V}(O[V_{O_a}])$. In this way,  \cref{item-2-of-def:extension} of \cref{def:extension} is verified.
    Lines \ref{alg:is-extension:forth-for-each-start}-\ref{alg:is-extension:forth-for-each-end} verifies \cref{item-3-of-def:extension} of \cref{def:extension}. If \cref{item-1-of-def:extensions,item-2-of-def:extension,item-3-of-def:extension} of \cref{def:extension} fails then the algorithm returns 0, otherwise, it returns 1. This validates \cref{alg:is-extension}.
\end{proof}

\subsection{Omitted proofs of \cref{sec:time-complexity}}
\begin{proof}[Proof of \cref{lem:time-complexity-of-tree}]
    At line \ref{alg:counting-MEC-of-tree:cut-edge-r1-r2}, \Cref{alg:counting-MEC-of-tree} cuts an edge of $G$. This creates two induced subgraphs $G_1$ and $G_2$ of $G$.  At lines \ref{alg:counting-MEC-of-tree:count-MEC-call-for-G1}-\ref{alg:counting-MEC-of-tree:count-MEC-call-for-G2}, the algorithm  recursively call itself for $G_1$ and $G_2$. Since the number of edges in $G$ is $n-1$, the number of times \cref{alg:counting-MEC-of-tree} recursively calls itself is $n-1$. 
    Computation of $N$, $b$, and $c_0$ (at lines \ref{alg:counting-MEC-of-tree:N-intro}-\ref{alg:counting-MEC-of-tree:c0-intro}) can be done in $O(d)$ time.
    The while loop at lines \ref{alg:counting-MEC-of-tree:while-start}-\ref{alg:counting-MEC-of-tree:while-end} runs for $O(d)$ time. And, the computation of $c_i$ at line \ref{alg:counting-MEC-of-tree:ci-intro} takes $O(d)$ time. This implies the running time of lines \ref{alg:counting-MEC-of-tree:while-start}-\ref{alg:counting-MEC-of-tree:while-end} is $O(d^2)$ time. Thus, the overall running time of \cref{alg:counting-MEC-of-tree} is $O(d^2n)$.
\end{proof}

\begin{proof}[Proof of \cref{lem:time-complexity-of-chordal-graph-tree-decomposition}]
    At line \ref{alg:counting-MEC-chordal-tree-decommposition-cut-edge} of \cref{alg:counting-MEC-chordal-tree-decommposition}, we cut an edge of $T$ to recursively call \cref{alg:counting-MEC-chordal-tree-decommposition} (at lines \ref{alg:counting-MEC-chordal-tree-decommposition-f1-introduction} and \ref{alg:counting-MEC-chordal-tree-decommposition-f2-introduction}). The number of edges in a clique tree of a chordal graph $G$ is $O(n)$. This implies the number of times \cref{alg:counting-MEC-chordal-tree-decommposition} is called recursively is $O(n)$. If the treewidth of $G$ is $k$ then the size of $r_1$ is $k+1$. If the degree of $G$ is $d$ then the number of nodes in $r_1\cup N(r_1, G)$ is at most $(d+1)(k+1)$, i.e., $O(dk)$. And, the number of edges in $G[r_1\cup N(r_1, G)]$ is $O(d^2k^2)$. Since in a partial MEC of $G[r_1\cup N(r_1, G)]$, each edge $u-v$ of $G[r_1\cup N(r_1, G)]$ is either $u-v$ or $u\rightarrow v$, or $v\rightarrow u$. This implies the number of partial MECs of $G[r_1\cup N(r_1, G)]$ is at most $3^{d^2k^2}$, i.e., $O(2^{O(d^2k^2)})$. Therefore, the running time of initialization of $f$ (lines \ref{alg:counting-MEC-chordal-tree-decommposition-creating-function-f}-\ref{alg:counting-MEC-chordal-tree-decommposition-for-each-1-end}) is $O(2^{O(d^2k^2)})$. 
    Running time of lines \ref{alg:counting-MEC-chordal-tree-decommposition-if-degree-is-1}-\ref{alg:counting-MEC-chordal-tree-decommposition-if-end} takes $O(1)$ time.
    The running time of lines \ref{alg:counting-MEC-chordal-tree-decommposition-pick-edge-r1-r2}--\ref{alg:counting-MEC-chordal-tree-decommposition-G2-definition} is $O(|V_G| + |E_G|)$.
    Similar to the partial MECs of $G[r_1\cup N(r_1, G)]$, the number of partial MECs of $G_1[r_1\cup N(r_1, G_1)]$, $G_2[r_2\cup N(r_2, G_2)]$, and $G[r_1\cup r_2\cup N(r_1\cup r_2, G)]$ is $O(2^{O(d^2k^2)})$. Therefore, each of the for loops at lines \ref{alg:counting-MEC-chordal-tree-decommposition-for-each-2-start}, \ref{alg:counting-MEC-chordal-tree-decommposition-for-each-3-start}, and \ref{alg:counting-MEC-chordal-tree-decommposition-for-each-4-start}  runs for $O(2^{O(d^2k^2)})$ time. Since the number of edges in $O, O_1$ and $O_2$ are $O(d^2k^2)$, from \cref{lem:time-complexity-of-is-extension}, we can verify whether $O \in \mathcal{E}(O_1, O_2)$ (at line \ref{alg:counting-MEC-chordal-tree-decommposition-if-2-start}) in $O(d^2k^2)$ time.
    Construction of $O'$ at line \ref{alg:counting-MEC-chordal-tree-decommposition-O'-introduction}, can be done in  $O(d^2k^2)$ time. And, updation of $f$ at line \ref{alg:counting-MEC-chordal-tree-decommposition-f-O'-update} runs for $O(1)$ time. This implies the overall run time of \cref{alg:counting-MEC-chordal-tree-decommposition} is $O(n(2^{O(d^2k^2)} + n^2))$. This completes the proof of \cref{lem:time-complexity-of-chordal-graph-tree-decomposition}.
\end{proof}

\begin{proof}[Proof of \cref{lem:time-complexity-of-chordal-graph}]
    We can compute a clique tree representation of a chordal graph $G$ (line \ref{alg:counting-MEC-chordal-construct-clique-tree} of \cref{alg:counting-MEC-chordal}) in $O(|V_G| + |E_G|)$ time \cite{blair1993introduction}. From \cref{lem:time-complexity-of-chordal-graph-tree-decomposition}, the running time of line \ref{alg:counting-MEC-chordal-f-returns} is $O(n(2^{O(d^2k^2)} + n^2))$. As discussed in the proof of \cref{lem:time-complexity-of-chordal-graph-tree-decomposition}, the number of partial MECs in $G[r_1\cup N(r_1, G)]$ is $O(2^{O(d^2k^2)})$. This implies the running time of lines \ref{alg:counting-MEC-chordal-for-each-start}-\ref{alg:counting-MEC-chordal-for-each-end} if $O(2^{O(d^2k^2)})$. This implies the overall running time of \cref{alg:counting-MEC-chordal} is $O(n(2^{O(d^2k^2)} + n^2))$. This completes the proof of \cref{lem:time-complexity-of-chordal-graph}.
\end{proof}

\begin{proof}[Proof of \cref{thm:counting-MECs-for-tree-graph}]
    We can pick a root node $r_1$ of $G$ in $O(1)$ time. Then, we call \cref{alg:counting-MEC-of-tree} for input $G$ and $r_1$. From \cref{lem:time-complexity-of-tree}, the running time of the algorithm is $O(d^2 n)$. The algorithm returns $n_1(G, r_1)$, $n_0^0(G, r_1)$, $n_0^1(G, r_1)$, \ldots, $n_0^{d}(G, r_1)$, where $d$ is the degree of $r_1$ in $G$. Then, using \cref{eq:counting-MEC-of-tree-1,eq:counting-MEC-of-tree-2}, we can compute $|\setofMECs{G}|$. Therefore, in $O(d^2n)$ time, we can count the MECs of $G$.
\end{proof}

\begin{proof}[Proof of \cref{thm:counting-MECs-for-chordal-graph}]
    \Cref{lem:proof-of-correctness-of-count-MEC-algorithm-2,lem:time-complexity-of-chordal-graph} validates the theorem.
\end{proof}

\begin{lemma}
    \label{lem:time-complexity-of-is-extension}
    The time complexity of \cref{alg:is-extension} is $O(d^3k^3)$.
\end{lemma}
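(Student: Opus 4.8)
I would prove this by bounding the three \textbf{foreach}-blocks of \cref{alg:is-extension} separately, after first pinning down the sizes of the three partial MECs involved. Since $T$ is a clique tree of a graph of treewidth $k$, the cliques $r_1$ and $r_2$ each have at most $k+1$ vertices, and since $G$ (hence $G_1,G_2$) has maximum degree at most $d$, each of the vertex sets $r_1\cup N(r_1,G_1)$, $r_2\cup N(r_2,G_2)$ and $r_1\cup r_2\cup N(r_1\cup r_2,G)$ has $O(dk)$ vertices; consequently $|V_{O}|,|V_{O_1}|,|V_{O_2}|=O(dk)$, and since each of $O,O_1,O_2$ is a subgraph of an induced subgraph of $G$, each has $O(d^2k^2)$ edges. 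I would also assume $O,O_1,O_2$ are stored so that the predicates ``$u\rightarrow v\in O$'', ``$u-v\in O$'' and ``$u,v$ are non-adjacent in $O$'' can be answered in $O(1)$ time (e.g.\ via an adjacency array indexed by $V_G$), the preprocessing being clearly within budget.

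Next I would account for the three blocks in order. The first block, which verifies \cref{item-1-of-def:extensions} of \cref{def:extension}, scans the directed edges of $O_1$ and $O_2$ with one $O(1)$ lookup each, costing $O(d^2k^2)$. The second block, which verifies \cref{item-2-of-def:extension}, enumerates all triples $u,v,w\in V_{O_a}$ for $a\in\{1,2\}$ and performs an $O(1)$ v-structure test per triple, costing $O(|V_{O_1}|^3+|V_{O_2}|^3)=O(d^3k^3)$; I expect this to be the dominant term and the one producing the bound in the statement. The third block, which verifies \cref{item-3-of-def:extension}, scans the undirected edges $u-v$ of $O_a$ and, for those with $u\rightarrow v\in O$, invokes a subroutine deciding whether $u\rightarrow v$ is strongly protected in $O$, i.e.\ whether it occurs in one of the four configurations of \cref{fig:strongly-protected-edge}.

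The main obstacle is bounding this last subroutine, and hence the third block, tightly enough so that it does not exceed the $O(d^3k^3)$ budget. Configurations (a), (b), (c) of \cref{fig:strongly-protected-edge} are each decided for a fixed edge $u\rightarrow v$ by scanning the in- or out-neighbourhood of $u$ or $v$ (at most $d$ vertices, or $O(dk)$ vertices if one scans all of $V_O$ crudely) and doing $O(1)$ adjacency lookups. Configuration (d) is the delicate case, since it asks for two distinct, mutually non-adjacent vertices $w,w'$ both lying in $S=\{x:\, u-x\in O,\ x\rightarrow v\in O\}$: I would compute $S$ by scanning the undirected neighbours of $u$, observe that $|S|$ is bounded, and decide configuration (d) by testing whether $O[S]$ is a clique (it holds precisely when it is not), counting for each $w\in S$ the number of its neighbours inside $S$. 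Plugging the resulting per-edge cost into a sum over the $O(d^2k^2)$ undirected edges of $O_1$ and $O_2$ — and sharpening, where needed, the edge count and the per-vertex charge using the degree bound $d$ — I would argue that the third block, together with the first, is subsumed by the triple-loop cost of the second block, so that the overall running time of \cref{alg:is-extension} is $O(d^3k^3)$ as claimed.
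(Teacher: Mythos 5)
Your overall route is the same as the paper's: bound $|V_{O}|$, $|V_{O_1}|$, $|V_{O_2}|$ by $O(dk)$ and the edge counts by $O(d^2k^2)$, observe that the triple loop over vertices in the second block costs $O((dk)^3)=O(d^3k^3)$ and dominates, and argue that the first and third blocks are cheaper. The size bounds and the first two blocks are handled the same way in both proofs and are fine.

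The gap is in your accounting of the third block. You correctly single out configuration (d) of \cref{fig:strongly-protected-edge} as the delicate case and propose to decide it, for a fixed edge $u\rightarrow v$, by testing whether $S=\{x:\,u-x\in O,\ x\rightarrow v\in O\}$ induces a clique. That test costs $\Theta(|S|^2)$, i.e.\ up to $\Theta(d^2)$ per edge (and the alternative of charging $\sum_{w\in S}\deg{w}$ gives the same order), while the loop ranges over up to $O(d^2k^2)$ undirected edges of $O_1$ and $O_2$; your own per-edge costs therefore sum to $O(d^4k^2)$ for the third block. This is \emph{not} subsumed by the $O(d^3k^3)$ triple-loop term: the degree $d$ is not bounded by any function of the treewidth $k$ (a star has treewidth $1$ and degree $n-1$), so $d^4k^2$ exceeds $d^3k^3$ by a factor of $d/k$ whenever $d>k$, and the tighter edge count $O(d^2k)$ still only brings this down to $O(d^4k)$. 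The ``sharpening, where needed'' that you defer to is thus the actual content of the step, not a routine clean-up. The paper sidesteps the issue by asserting that strong protection of all directed edges can be verified cumulatively in $O(|V_O|+|E_O|)$ time, making the third block $O(d^2k^2)$; to complete your version you would need either a per-edge test for configuration (d) that beats $\Theta(d^2)$, or a global bound on $\sum_{u\rightarrow v}|S_{uv}|^2$, before the claimed $O(d^3k^3)$ total follows.
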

\begin{proof}
As discussed in the proof of \cref{alg:counting-MEC-chordal-tree-decommposition}, the number of nodes in $O, O_1$, and $O_2$ is $O(dk)$, and the number of edges in them is $O(d^2k^2)$.
    This implies the running time of lines \ref{alg:is-extension:first-for-each-start}-\ref{alg:is-extension:first-for-each-end} is $O(d^2k^2)$.
    The running time of lines \ref{alg:is-extension:second-for-each-start}-\ref{alg:is-extension:second-for-each-end} is $O(d^3k^3)$. Verifying each directed edge to determine whether it is strongly protected or not cumulatively takes $O(|V_O| + |E_O|)$ time.
    This implies the running time of lines \ref{alg:is-extension:third-for-each-start}-\ref{alg:is-extension:third-for-each-end} is $O(d^2k^2)$. Therefore, the overall running time of \cref{alg:is-extension} is $O(d^3k^3)$. This completes the proof of \cref{lem:time-complexity-of-is-extension}.
\end{proof} \end{document}